\def\>{\ensuremath{\rangle}}
\def\<{\ensuremath{\langle}}
\def\h{\ensuremath{\mathcal{H}}}
\newcommand {\supp } {{\rm supp}}
\newcommand {\spans } {{\rm span}}
\newcommand {\proj } {{\rm proj}}
\newcommand {\B } {{\mathcal{B}}}
\newcommand {\E } {{\mathcal{E}}}
\newcommand {\F } {{\mathcal{F}}}
\newcommand {\D } {{\mathcal{D}}}
\newcommand{\hs}{\mathcal{H}}
\newcommand {\cM} {{\mathcal{M}}}
\newcommand {\cN} {{\mathcal{N}}}
\newcommand {\tr} {{\mathrm{tr}}}
\newcommand{\sm}[1]{\ensuremath{\llbracket #1\rrbracket}}
\newcommand{\cp}[2]{\ensuremath{\left\langle #1, #2\right\rangle}}
\newtheorem{thm}{Theorem}[section]
\newtheorem{lem}{Lemma}[section]
\newtheorem{defn}{Definition}[section]
\newtheorem{prop}{Proposition}[section]
\newtheorem{exam}{Example}[section]
\newtheorem{rem}{Remark}[section]
\newtheorem{fact}{Fact}[section]
\begin{document}
	
	\title{Relational Proofs for Quantum Programs}         %% [Short Title] is optional;
	
	\author{Gilles Barthe}
	
	\affiliation{
		\institution{Max Planck Institute for Security and Privacy}            %% \institution is required
		\country{Germany}
	}
	\affiliation{
		\institution{IMDEA Software Institute}            %% \institution is required
		\country{Spain}
	}
	\author{Justin Hsu}
	\affiliation{
		%\position{Position2a}
		\department{Department of Computer Sciences}             %% \department is recommended
		\institution{University of Wisconsin--Madison}           %% \institution is required
		\country{USA}
	}
	%\email{first2.last2@inst2a.com}         %% \email is recommended
	\author{Mingsheng Ying}
	\affiliation{
		\department{Centre for Quantum Software and Information}
		\institution{University of Technology Sydney}           %% \institution is required
		\country{Australia}
	}
	\affiliation{
		\department{State Key Laboratory of Computer Science}
		\institution{Institute of Software, Chinese Academy of Sciences}           %% \institution is required
		\country{China}
	}
	\affiliation{
		\institution{Tsinghua University}           %% \institution is required
		\country{China}
	}
	%\email{first2.last2@inst2b.org}         %% \email is recommended
	\author{Nengkun Yu}
	\affiliation{
		\department{Centre for Quantum Software and Information}             %% \department is recommended
		\institution{University of Technology Sydney}           %% \institution is required
		\country{Australia}
	}
	%\email{first2.last2@inst2b.org}         %% \email is recommended
	\author{Li Zhou}
	\affiliation{
		\institution{Max Planck Institute for Security and Privacy}            %% \institution is required
		\country{Germany}
	}
	\affiliation{
		\department{Department of Computer Science and Technology}             %% \department is recommended
		\institution{Tsinghua University}           %% \institution is required
		\country{China}
	}
	%\authorsaddresses{}
	\begin{abstract}
		Relational verification of quantum programs has many potential
		applications in quantum and post-quantum security and other domains. We propose a relational
		program logic for quantum programs. The interpretation of our logic is
		based on a quantum analogue of probabilistic couplings. We use our
		logic to verify non-trivial relational properties of
		quantum programs, including uniformity for samples generated by the
		quantum Bernoulli factory, reliability of quantum teleportation against noise
		(bit and phase flip), security of quantum one-time pad and equivalence of quantum walks. 
	\end{abstract}
	
\begin{CCSXML}
	<ccs2012>
	<concept>
	<concept_id>10003752.10003753.10003758</concept_id>
	<concept_desc>Theory of computation~Quantum computation theory</concept_desc>
	<concept_significance>500</concept_significance>
	</concept>
	<concept>
	<concept_id>10003752.10003790.10011741</concept_id>
	<concept_desc>Theory of computation~Hoare logic</concept_desc>
	<concept_significance>500</concept_significance>
	</concept>
	<concept>
	<concept_id>10003752.10010124.10010138.10010142</concept_id>
	<concept_desc>Theory of computation~Program verification</concept_desc>
	<concept_significance>500</concept_significance>
	</concept>
	</ccs2012>
\end{CCSXML}

\ccsdesc[500]{Theory of computation~Quantum computation theory}
\ccsdesc[500]{Theory of computation~Hoare logic}
\ccsdesc[500]{Theory of computation~Program verification}
	
	\keywords{quantum programming, verification, relational properties, coupling}  %% \keywords are mandatory in final camera-ready submission
	
	\maketitle
	
	\section{Introduction}\label{sec:intro}
	Program verification is traditionally focused on proving properties of
	a single program execution. In contrast, relational verification aims
	to prove properties about two program executions. In some cases, such
	as program refinement and program equivalence, the goal is to relate
	executions of two different programs on equal or related
	inputs. However, some properties consider two executions of
	the same program (with related inputs); examples include
	information flow policies (\emph{non-interference}: two runs of a
	program on states that only differ in their secret have equal visible
	effects) and robustness ($k$-\emph{Lipschitz continuity}:
	running a program on two initial states at distance $d$
	yields two final states at distance at most $k \cdot d$). In the probabilistic
	setting, relational verification can also show that a
	program outputs a uniform distribution, or that two programs yield
	\lq\lq approximately equal\rq\rq\ distributions. By taking suitable
	instantiations of approximate equality, relational verification has
	found success in
	cryptography~\cite{BartheGZ09}, machine learning~\cite{BartheEGHS18} and
	differential privacy~\cite{BartheKOZ12,BartheGGHS16}.
	
	This paper develops a relational program logic, called rqPD, for a
	core quantum programming language. Our logic is based on the
	interpretation of predicates as physical observables, mathematically
	modelled as Hermitian operators~\cite{DP06}, and is inspired by the
	qPD program logic \cite{Ying11,Ying16} for quantum
	programs. Concretely, our judgments have the form:
	\begin{equation*}
		P_1\sim P_2: A\Rightarrow B
	\end{equation*}
	where $P_1$ and $P_2$ are quantum programs, and precondition $A$ and
	postcondition $B$ are Hermitian operators over the tensor product Hilbert
	spaces of $P_1$ and $P_2$. We define an interpretation of these judgments, develop a rich
	set of sound proof rules, and show how these rules can be used to
	verify relational properties for quantum programs.
	
	\paragraph*{Technical challenges and solutions.} The central challenge for building a useful relational logic is to
	find an interpretation of judgments that captures properties of
	interest, while guaranteeing soundness of a convenient set of proof
	rules.  This challenge is not unique to quantum programs. In the probabilistic setting~\cite{BartheGZ09,BartheEGHSS15},
	one solution is to interpret judgments in terms of
	probabilistic couplings, a standard abstraction from probability
	theory~\cite{Lindvall02,Thorisson00,Villani08}.  The connection with
	probabilistic couplings has many advantages: (i) it builds the logic
	on an abstraction that has proven to be useful for probabilistic
	reasoning; (ii) it identifies natural extensions of the logic; and
	(iii) it suggests other applications and properties that can be
	handled by similar techniques.  Unfortunately, the quantum setting
	raises additional challenges. Notably, we may need to reason about
	entangled quantum states. 
	There are some existing proposals of analogue of probabilistic couplings in
	the quantum setting (see \cite{KS16, Winter16}). In particular, 
	\citet{quantumstrassen} addressed this issue by developing a
	notion of quantum coupling, and validated their
	definition by showing an analogue of Strassen's
	theorem~\cite{strassen1965existence}.\footnote{Informally, Strassen's theorem states that there exists a
		$B$-coupling between two distributions $\mu$ and $\mu'$ over sets
		$X$ and $X'$ respectively iff for every subset $Y$ in the support of
		$\mu$, $\mu(Y)\leq\mu(B(Y))$, where $B(Y)$ is the set-theoretic
		image of $Y$ under $B$.}  In this work, we base our notion of
	valid judgment on this definition of quantum coupling.
	
	Once the interpretation of the logic is fixed, the next challenge is
	to define a useful set of proof rules. As in other relational
	logics, we need structural rules and three kinds of construct-specific rules. \emph{Synchronous} rules apply when
	$P_1$ and $P_2$ have the same top-level construct and operates on both
	programs, whereas the \emph{left} and \emph{right} rules only operate on one of the
	two programs. In the quantum setting, the main difficulties are:
	\begin{itemize}
		\item \textit{Structural rules}: many useful rules are not sound in the quantum
		setting or require further hypotheses; in particular, in the presence of entanglement --- an indispensable resource in quantum computation and communication.
		We generalise the core judgment to
		track and enforce the hypotheses required to preserve soundness.
		\item \textit{Construct-specific rules}: all proof rules of quantum Hoare logic qPD can be directly generalised into quantum relational logic rqPD (see Subsection \ref{subsec-basic}). Although these directly generalised rules are useful, they do not fully capture the essence of quantum relational reasoning (see Subsection \ref{subsec-mea}). Synchronous proof rules for
		classical control-flow constructs, i.e.\, conditionals and loops,
		generally require that the two programs follow the same control flow
		path, so that they execute in lockstep. In order to retain soundness
		in our setting, we introduce a measurement condition ensuring that corresponding branches in the control flow are taken with equal probabilities.
	\end{itemize}
	
\paragraph*{Simplifying side conditions.} Checking measurement conditions is
often challenging. To make this step easier, we introduce a simplified version
of rqPD where assertions are modelled as projective predicates, or
equivalently, (closed) subspaces of the state Hilbert space --- a special case
of Hermitian operators. The restriction to projective predicates leads to
simpler inference rules and easier program verification, at the cost of
expressiveness \cite{ZYY19}. In particular, checking measurement conditions reduces to
showing that a program condition lies in a subspace (with probability $1$), a
task that is often simpler. We provide a formal comparison between the
original logic system rqPD and its simplified version, and leverage this
comparison to relate our work with a recent proposal for a quantum relational
Hoare logic with projective predicates~\cite{Unruh18}.
	
	\paragraph*{Applications.}
	To test its effectiveness, we apply rqPD to verify non-trivial relational properties of several
	quantum programs, including uniformity for samples generated by the
	quantum Bernoulli factory, reliability of quantum teleportation against noise
	(bit and phase flip), equivalence of quantum walks and security of quantum one-time pad.
	Using our simplified rqPD, we are able to verify the relational properties of
some more sophisticated quantum programs, for example, equivalence of quantum walks with different coin tossing operators.
	
	\paragraph*{Working Example.} We will use the following pair of simple quantum programs as our working example to illustrate our basic ideas along the way:   
	\begin{exam}\label{exam-1}Let $q$ be a qubit (quantum bit). Consider two programs:
		$$P_1\equiv q:=|0\rangle;q:=H[q];Q_1,\qquad\qquad P_2\equiv q:=|0\rangle; Q_2; q:=H[q].$$ 
		In both, $q$ is initialised in a basis state $|0\rangle$. $P_1$
applies the Hadamard gate $H$ to $q$ and executes $Q_1$, while $P_2$
first executes $Q_2$ and then applies $H$ to $q$.
The subprograms $Q_1, Q_2$ are as follows:
		\begin{align*}
			Q_1&\equiv\mathbf{if}\ (\cM[q]=0\rightarrow q:=X[q]\ \square\ 1\rightarrow q:=H[q])\ \mathbf{fi}\ \\
			Q_2&\equiv \mathbf{if}\ (\cM^\prime [q]=0\rightarrow q:=Z[q]\ \square\ 1\rightarrow q:=H[q])\ \mathbf{fi}
		\end{align*} where $\cM, \cM^\prime$ are the measurement in the computational basis $|0\rangle, |1\rangle$ and the measurement in basis $|\pm\rangle=\frac{1}{\sqrt{2}}(|0\rangle\pm |1\rangle)$ respectively.  
		Intuitively, $Q_1$ first performs $\cM$ on $q$, then applies either the Pauli gate $X$ or Hadamard gate $H$, depending on whether the measurement outcome is $0$ or $1$. But $Q_2$ uses the outcomes of a different measurement $\cM$ to choose between the Pauli gate $Z$ and Hadamard gate $H$.\end{exam}
	Obviously, programs $P_1,P_2$ have similar structures. The logic rqPD developed in this paper will enable us to specify and prove some interesting symmetry between them. 
	
	\section{Mathematical Preliminaries}
	We assume basic familiarity with Hilbert spaces, see~\citet{NC00} for an introduction.
\paragraph*{Quantum states.}
	The state space of a quantum system is a Hilbert space
	$\mathcal{H}$. In this paper, we only consider finite-dimensional
	$\mathcal{H}$. A \emph{pure state} of the quantum system is modelled by a (column) vector in
	$\mathcal{H}$ of length $1$; we use the Dirac notation ($|\varphi\rangle, |\psi\rangle$) to denote pure states. For example, qubit $q$ in Example \ref{exam-1} 
	has the $2$-dimensional Hilbert space as its state space; it can be in basis states $|0\rangle, |1\rangle$ as well as in their superpositions  $|+\rangle, |-\rangle=\frac{1}{\sqrt{2}}(|0\rangle\pm |1\rangle)$. 
	An operator $A$ in an $d$-dimensional Hilbert space $\mathcal{H}$ is represented as an $d\times d$ matrix. Its \emph{trace} is defined as
	$\mathit{tr}(A)=\sum_i\langle i|A|i\rangle$, where $\{|i\rangle\}$ is
	an orthonormal basis of $\mathcal{H}$. A positive operator $\rho$ in
	$\mathcal{H}$ is called a \emph{partial density operator} if its trace
	satisfies $\mathit{tr}(\rho)\leq 1$; if $tr(\rho)=1$,
	then $\rho$ is called a \emph{density operator}. A \emph{mixed state} of a
	quantum system is a distribution over pure states. If state $|\psi_i\rangle$ has probability $p_i$, the mixed state can be represented by a density operator $\rho=\sum_ip_i|\psi_i\rangle\langle\psi_i|$, where row vector $\langle\psi_i|$ stands for the conjugate transpose of $|\psi_i\rangle$. 
	For example, if qubit $q$ is in state $|0\rangle$ with probability $\frac{2}{3}$ and in $|+\rangle$ with probability $\frac{1}{3}$, then its state can be described by density operator
	\begin{equation}\label{ex-mix}\rho=\frac{2}{3}|0\rangle\langle 0|+\frac{1}{3}|+\rangle\langle +|=\frac{1}{6}\left (\begin{array}{cc}5 & 1\\ 1&1\end{array}\right ).\end{equation} We write $\mathcal{D}^\leq(\mathcal{H})$ and
	$\mathcal{D}(\mathcal{H})$ for the set of partial density operators
	and the set of density operators in $\mathcal{H}$, respectively. For
	any $\rho\in \mathcal{D}^\leq(\mathcal{H})$, the \emph{support} $\supp(\rho)$
	of $\rho$ is defined as the span of the eigenvectors of $\rho$ with
	nonzero eigenvalues.
	
\paragraph*{Operations on states.}
	A basic operation on a (closed) quantum system is modelled as a \textit{unitary operator} $U$ such that $U^{\dag}U=I_{\mathcal{H}}$, where $^\dag$ stands for conjugate and
	transpose. For example, the Pauli and Hadamard gates used in Example \ref{exam-1} are: 
$$X=\left(\begin{array}{cc}0&1\\1&0\end{array}\right),\qquad Z=\left(\begin{array}{cc}1&0\\0&-1\end{array}\right),\qquad H=\frac{1}{\sqrt{2}}
\left(\begin{array}{cc}1&1\\1&-1\end{array}\right),$$ and $X, Z, H$ transform states $|0\rangle, |1\rangle$ to $X|0\rangle =|1\rangle,\ X|1\rangle=|0\rangle;\ Z|0\rangle =|0\rangle,\ Z|1\rangle=-|1\rangle;\ H|0\rangle =|+\rangle,\ H|1\rangle=|-\rangle$, respectively. 
	
	Another basic operation is \emph{measurement}.
	A \emph{physical observable} is modelled by an operator $A$ in
	$\mathcal{H}$ that is Hermitian, i.e., $A^\dag =A$. An operator $P$ is a \emph{projection} onto a (closed)
	subspace of $\mathcal{H}$ if and only if it is Hermitian (i.e. $P^\dag=P$) and
	idempotent (i.e. $P^2=P$). Quantum measurements are constructed
	from observables $A$. An \emph{eigenvector} of $A$ is a non-zero vector
	$|\psi\rangle\in\mathcal{H}$ such that $A|\psi\rangle=\lambda |\psi\rangle$ for
	some complex number $\lambda$ (indeed, $\lambda$ must be real when $A$ is
	Hermitian). In this case, $\lambda$ is called an \emph{eigenvalue} of $A$. For
	each eigenvalue $\lambda$, the set
	$\{|\psi\rangle:A|\psi\rangle=\lambda|\psi\rangle\}$ of eigenvectors
	corresponding to $\lambda$ and zero vector is a (closed) subspace of
	$\mathcal{H}$. We write $P_\lambda$ for the projection onto this subspace. By
	the spectral decomposition \citep[Theorem 2.1]{NC00}, $A$ can be decomposed as a sum $A=\sum_\lambda\lambda P_\lambda$ where
	$\lambda$ ranges over all eigenvalues of $A$. Moreover,
	$\cM=\{P_\lambda\}_\lambda$ is a (projective) measurement.

If we perform $\cM$ on
	the quantum system in state $\rho$, then outcome $\lambda$ is obtained with
	probability $p_\lambda=\mathit{tr}(P_\lambda^\dag
	P_\lambda\rho)=\mathit{tr}(P_\lambda\rho),$ and after that, the system will be in state $(P_\lambda\rho P_\lambda)/p_\lambda$. Therefore, the expectation of $A$ in
	state $\rho$ is $\llbracket A\rrbracket_\rho=\sum_\lambda
	p_\lambda\cdot\lambda=\sum_\lambda\lambda\mathit{tr}(P_\lambda\rho)=\mathit{tr}(A\rho).$
	For instance, the measurements in Example \ref{exam-1} are defined as $\cM=\{M_0,M_1\}, \cM^\prime=\{M_0^\prime, M_1^\prime\}$ 
	with \begin{align*}&M_0=|0\rangle\langle 0|=\left(\begin{array}{cc}1&0\\0&0\end{array}\right),\qquad\qquad\qquad M_1=|1\rangle\langle 1|=\left(\begin{array}{cc}0&0\\0&1\end{array}\right),\\ &M_0^\prime =|+\rangle\langle +|=\frac{1}{2}\left(\begin{array}{cc}1&1\\1&1\end{array}\right),\qquad\qquad\ \ \ M_1^\prime=|-\rangle\langle -|=\frac{1}{2}\left(\begin{array}{cc}1&-1\\-1&1\end{array}\right).\end{align*}
If we perform $\cM^\prime$ on a qubit in (mixed) state $\rho$ given in equation~(\ref{ex-mix}), then the probability that we get outcome \textquotedblleft$1$\textquotedblright\ is $$p(1)=\mathit{tr}(M_1^\prime\rho)=tr\left[\frac{1}{2}\left(\begin{array}{cc}1&-1\\-1&1\end{array}\right)
\cdot\frac{1}{6}\left(\begin{array}{cc}5& 1\\ 1&1\end{array}\right)
\right]=\frac{1}{12}\cdot tr\left(\begin{array}{cc}4&0\\-4&0\end{array}\right)=\frac{1}{3}$$ and after that, the qubit's state will change to 
$$M_1^\prime\rho M_1^\prime/p(1) = \frac{1}{2}\left(\begin{array}{cc}1&-1\\-1&1\end{array}\right)\cdot\frac{1}{6}\left(\begin{array}{cc}5& 1\\ 1&1\end{array}\right)\cdot\frac{1}{2}\left(\begin{array}{cc}1&-1\\-1&1\end{array}\right)\div\frac{1}{3} = \frac{1}{2}\left(\begin{array}{cc}1&-1\\-1&1\end{array}\right).$$
Similarly, the probability of outcome \textquotedblleft$0$\textquotedblright\ is $p(0)=\frac{2}{3}$, and then the state changes to $|+\>\<+|$. 
	
	We will use observables as predicates in our logic.  To compare two operators
$A$ and $B$ in a Hilbert space $\mathcal{H}$, we will use the \emph{L\"{o}wner
order} between operators defined as follows: $A\sqsubseteq B$ if and only if
$B-A$ is positive. A quantum predicate \cite{DP06} (or an \emph{effect}) in a
Hilbert space $\mathcal{H}$ is an observable (a Hermitian operator) $A$ in
$\mathcal{H}$ with $0\sqsubseteq A\sqsubseteq I$, where $0$ and $I$ are the
zero operator and the identity operator in $\mathcal{H}$, respectively.
	
\paragraph*{Tensor Products of quantum states.}
	Let $\mathcal{H}_1, \mathcal{H}_2$ be the state Hilbert spaces of two
	quantum systems considered in isolation. Then the composite system has state
	space modeled by the tensor product
	$\mathcal{H}_1\otimes\mathcal{H}_2$. The notion of \emph{partial trace} is
	needed to extract the state of a subsystem. Formally, the partial trace over
	$\mathcal{H}_1$ is a mapping $\mathit{tr}_1(\cdot)$ from operators on
	$\mathcal{H}_1\otimes\mathcal{H}_2$ to operators in $\mathcal{H}_2$ defined by the following equation:
	$\mathit{tr}_1(|\varphi_1\rangle\langle\psi_1|\otimes
	|\varphi_2\rangle\langle\psi_2|)=\langle\psi_1|\varphi_1\rangle\cdot
	|\varphi_2\rangle\langle\psi_2|$ for all
	$|\varphi_1\rangle,|\psi_1\rangle\in \mathcal{H}_1$ and
	$|\varphi_2\rangle,|\psi_2\rangle\in\mathcal{H}_2$ together with
	linearity.  The partial trace $\mathit{tr}_2(\cdot)$ over
	$\mathcal{H}_2$ can be defined symmetrically. Suppose that we have a composite system of two subsystems with state spaces $\mathcal{H}_1,\mathcal{H}_2$, respectively, and it is in (mixed) state $\rho$. Then the states of the first and second subsystems can be described by $\mathit{tr}_2(\rho), \mathit{tr}_1(\rho)$, respectively. For example, if the subsystems are both qubits, and they are maximally entangled; i.e. in state $
	|\Phi\rangle=\frac{1}{\sqrt{2}}(|00\rangle+|11\rangle)$ or equivalently \begin{equation}\label{max-ent}|\Phi\>\<\Phi| = \frac{1}{2}(|0\>\<0|\otimes|0\>\<0|+ |0\>\<1|\otimes|0\>\<1|+|1\>\<0|\otimes|1\>\<0|+|1\>\<1|\otimes|1\>\<1|)\end{equation} then the partial traces $\tr_1(|\Phi\>\<\Phi|) = \frac{1}{2}(|0\>\<0|+|1\>\<1|) \text{\ and\ } \tr_2(|\Phi\>\<\Phi|) = \frac{1}{2}(|0\>\<0|+|1\>\<1|)$
 describe states of the second and first subsystems, respectively.
	
	\section{Quantum Couplings and Liftings}\label{Coupling}
	
	\subsection{Quantum Couplings}
	To relate pairs of quantum programs, our program logic will rely on a 
	quantum version of probabilistic coupling.
	In the probabilistic world, a
	coupling for two discrete distributions $\mu_1$ and
	$\mu_2$ over sets $A_1$ and $A_2$ is a discrete distribution $\mu$
	over $A_1\times A_2$ such that the first and second marginals of $\mu$
	are equal to $\mu_1$ and $\mu_2$ respectively. A coupling $\mu$ is an $R$-lifting for $\mu_1$
	and $\mu_2$ if additionally its support is included in $R$, i.e.\,
	every element outside $R$ has probability zero.

	In order to define the quantum analogue of couplings, we apply a
	correspondence between the probabilistic and quantum worlds~\cite{NC00}:
	\begin{align*}
		\mathit{probability\ distributions} \Leftrightarrow \mathit{density\ operators}
\qquad\qquad
		\mathit{marginal\ distributions} \Leftrightarrow \mathit{partial\ traces}
	\end{align*}
	This leads to the following definition of quantum coupling.
	\begin{defn}[Coupling] Let $\rho_1\in\mathcal{D}^\le(\mathcal{H}_1)$ and
		$\rho_2\in\mathcal{D}^\le(\mathcal{H}_2)$. Then
		$\rho\in\mathcal{D}^\le(\mathcal{H}_1\otimes\mathcal{H}_2)$ is
		called a coupling for $\cp{\rho_1}{\rho_2}$ if
		$\mathit{tr}_1(\rho)=\rho_2$ and $\mathit{tr}_2(\rho)=\rho_1$.
	\end{defn}
	
	\begin{prop}[Trace equivalence]\label{prop-treq}
		If $\rho$ is a coupling for $\cp{\rho_1}{\rho_2}$, then they have the same trace:
		$\tr(\rho) = \tr(\rho_1) = \tr(\rho_2)$.
	\end{prop}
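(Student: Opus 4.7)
The plan is to reduce the claim to the standard identity $\tr(\rho) = \tr(\tr_i(\rho))$ for $i=1,2$, and then apply the defining equations of a coupling.

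First I would verify the identity $\tr = \tr \circ \tr_1 = \tr \circ \tr_2$ on operators on $\mathcal{H}_1 \otimes \mathcal{H}_2$. Since both sides are linear, it suffices to check equality on a spanning set of rank-one tensor product operators of the form $|\varphi_1\rangle\langle\psi_1| \otimes |\varphi_2\rangle\langle\psi_2|$. For such an operator, the full trace equals $\langle\psi_1|\varphi_1\rangle \cdot \langle\psi_2|\varphi_2\rangle$. By the definitional equation stated in the preliminaries, $\tr_1(|\varphi_1\rangle\langle\psi_1| \otimes |\varphi_2\rangle\langle\psi_2|) = \langle\psi_1|\varphi_1\rangle \cdot |\varphi_2\rangle\langle\psi_2|$, whose trace is $\langle\psi_1|\varphi_1\rangle \cdot \langle\psi_2|\varphi_2\rangle$, matching the full trace. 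The argument for $\tr_2$ is symmetric. Linearly extending to the whole operator space on $\mathcal{H}_1 \otimes \mathcal{H}_2$ yields the identity.

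Now I would finish by combining this identity with the definition of a coupling. Since $\tr_2(\rho) = \rho_1$, applying $\tr$ to both sides gives $\tr(\rho) = \tr(\tr_2(\rho)) = \tr(\rho_1)$. Similarly, from $\tr_1(\rho) = \rho_2$ we obtain $\tr(\rho) = \tr(\rho_2)$. Chaining these two equalities yields the three-way equality claimed in the proposition.

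There is essentially no obstacle here: the entire content is the interaction between partial trace and full trace, which is immediate from the Dirac-notation definition of $\tr_1$ given just above the statement. If anything requires care, it is simply making the linearity step explicit, since the defining formula for $\tr_1$ is only given on elementary tensors and must be extended by linearity to cover an arbitrary $\rho$ on $\mathcal{H}_1 \otimes \mathcal{H}_2$.
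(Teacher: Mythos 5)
Your proof is correct, and it is exactly the standard argument the authors evidently had in mind: the paper states this proposition without proof, treating the identity $\tr = \tr\circ\tr_1 = \tr\circ\tr_2$ as immediate from the Dirac-notation definition of the partial trace. Your explicit verification on elementary tensors followed by linear extension, and then substitution of the coupling conditions $\tr_2(\rho)=\rho_1$ and $\tr_1(\rho)=\rho_2$, fills in precisely the steps being elided.
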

	
	The following are examples of quantum couplings. They are quantum
generalisations of several typical examples of (discrete) probabilistic
couplings (see \cite{fullversion}). From these simple examples, we can see a
close and natural correspondence as well as some essential differences between
probabilistic coupling and their quantum counterparts.  Our first example shows
that couplings always exist.
	
	\begin{exam}\label{exam-qtriv} Let $\rho_1\in\mathcal{D}(\mathcal{H}_1)$ and $\rho_2\in\mathcal{D}(\mathcal{H}_2)$ be density operators. The
		tensor product $\rho_\otimes =\rho_1\otimes\rho_2$ is a coupling for
		$\cp{\rho_1}{\rho_2}$.
	\end{exam}
	
Just like the case for probabilistic couplings, there
can be more than one quantum coupling between two operators.
	
	\begin{exam}\label{exam-qunif} Let $\mathcal{H}$ be a $d$-dimensional Hilbert space. Let
		$\mathcal{B}=\{|i\rangle\}$ be an orthonormal basis of
		$\mathcal{H}$. Then the uniform density operator over $\mathcal{H}$
		is $\mathbf{Unif}_\mathcal{H}=\frac{1}{d}\sum_i|i\rangle\langle i|.$
		For each unitary operator $U$ in $\mathcal{H}$, we write
		$U(\mathcal{B})=\{U|i\rangle\}$, which is also an orthonormal basis of
		$\mathcal{H}$. Then $\rho_U=\frac{1}{d}\sum_i(|i\rangle
		U|i\rangle)(\langle i|\langle i|U^\dag)$ is a coupling for
		$\cp{\mathbf{Unif}_\mathcal{H}}{\mathbf{Unif}_{\mathcal{H}}}$. Indeed, the arbitrariness of $U$ shows that there are (uncountably) infinitely many couplings for $\cp{\mathbf{Unif}_\mathcal{H}}{\mathbf{Unif}_{\mathcal{H}}}$.
		For instance, the maximally entangled state $|\Phi\>\<\Phi|$ in equation (\ref{max-ent}) is such a coupling for
		\vspace{-0.1cm}
		$$\cp{\mathbf{Unif}_{\mathcal{H}_1} = \frac{1}{2}(|0\>\<0|+|1\>\<1|)}{\mathbf{Unif}_{\mathcal{H}_2} = \frac{1}{2}(|0\>\<0|+|1\>\<1|)}.$$
	\end{exam}
	
	\begin{exam}\label{exam-qid} Let $\rho$ be a partial density operator in $\mathcal{H}$. Then by the spectral decomposition theorem \citep[Theorem 2.1]{NC00}, $\rho$ can be written as $\rho=\sum_i p_i|i\rangle\langle i|$ for some orthonormal basis $\mathcal{B}=\{|i\rangle\}$ and $p_i\geq 0$ with $\sum_i p_i\leq 1$. We define $\rho_{{\rm id}(\mathcal{B})}=\sum_i p_i|ii\rangle\langle ii|.$ Then it is easy to see that $\rho_{{\rm id}(\mathcal{B})}$ is a coupling for $\cp{\rho}{\rho}$. An essential difference between this example and its classical counterpart (see \cite{Hsu17} Example 2.1.5) is that $\rho$ might be decomposed with other orthonormal bases, say $\mathcal{D}=\{|j\rangle\}$: $\rho=\sum_j q_j|j\rangle\langle j|.$ In general, $\rho_{{\rm id}(\mathcal{B})}\neq\rho_{{\rm id}(\mathcal{D})}$, and we can define a different coupling for $\cp{\rho}{\rho}$: $\rho_{{\rm id}(\mathcal{D})}=\sum_jq_j|jj\rangle\langle jj|.$
		
	\end{exam}
	
	\subsection{Quantum Lifting}
	
	Although there can be many couplings for two operators, it is usually not simple to find one suited to our application. As said at the beginning of this section, lifting can help for this purpose.
	The definition of liftings smoothly generalises to the quantum case.
	\begin{defn}\label{lift-def} Let $\rho_1\in\mathcal{D}^\le(\mathcal{H}_1)$ and $\rho_2\in\mathcal{D}^\le(\mathcal{H}_2)$, and let $\mathcal{X}$ be (the projection onto) a (closed) subspace of $\mathcal{H}_1\otimes\mathcal{H}_2$. Then $\rho\in\mathcal{D}^\le(\mathcal{H}_1\otimes\mathcal{H}_2)$ is called a witness of the lifting $\rho_1\mathcal{X}^\#\rho_2$ if: \begin{enumerate}\item $\rho$ is a coupling for $\cp{\rho_1}{\rho_2}$;
			\item $\supp(\rho)\subseteq\mathcal{X}$.
		\end{enumerate}
	\end{defn}
	\begin{exam}\label{exam-lifting}
	The following are examples of quantum liftings.
	\begin{enumerate}\item The coupling $\rho_U$ for the uniform density operator and itself in
			Example \ref{exam-qunif} is a witness for the lifting $\mathbf{Unif}_\mathcal{H} \mathrel{\mathcal{X}(\mathcal{B},U)^\#} \mathbf{Unif}_{\mathcal{H}}$, 
			where $\mathcal{X}(\mathcal{B},U)=\spans\{|i\rangle U|i\rangle\}$ is a subspace of $\mathcal{H}\otimes\mathcal{H}.$
			\item The coupling $\rho_{{\rm id}(\mathcal{B})}$ in Example
			\ref{exam-qid} is a witness of the lifting $\rho (=_\mathcal{B})^\#
			\rho$, where $(=_\mathcal{B}) \equiv\spans\{|ii\rangle\}$ defined by
			the orthonormal basis $\mathcal{B}=\{|i\rangle\}$ is a subspace of
			$\mathcal{H}\otimes\mathcal{H}$.  It is interesting to note that the
			maximally entangled state $|\Psi\rangle=\frac{1}{\sqrt{d}}\sum_i
			|ii\rangle$ is in $=_\mathcal{B}$.
			
			\item The coupling $\rho_{{\rm id}(\mathcal{B})}$ in Example \ref{exam-qid} is a
			witness of the lifting $\rho (=_\mathit{sym})^\# \rho$, defining
			$=_\mathit{sym}$ to be the symmetrisation operator, i.e. $(=_\mathit{sym}) \equiv \frac{1}{2}(I\otimes I+S),$ where $S$ is the SWAP
			operator defined by $S|\varphi,\psi\rangle=|\psi,\varphi\rangle$ for any
			$|\varphi\rangle, |\psi\rangle\in\mathcal{H}$ together with linearity.
			Operator $S$ is independent of the basis, and given any orthonormal basis
			$\{|i\>\}$ of $\mathcal{H}$, $S$ has the following form:
			$\mathrm{S}= \sum_{ij}|i\>\<j|\otimes|j\>\<i|.$
			\item The coupling $\rho_1\otimes\rho_2$ in Example \ref{exam-qtriv} is a
			witness of the lifting $\rho_1
			\mathrel{(\mathcal{H}_1\otimes\mathcal{H}_2)^\#} \rho_2$.
	\end{enumerate}\end{exam}
	
	The two operators $=_\mathcal{B}$ and $=_\mathit{sym}$ in the above example represents two different kind of symmetry between two quantum systems with the same state Hilbert space $\mathcal{H}$. They will be used to describe relational properties of the two quantum programs $P_1,P_2$ in Example \ref{exam-1}.   
	Liftings of equality are especially interesting for verification, since
	they can be interpreted as relating equivalent quantum systems. The following
	proposition characterizes these liftings.
	\begin{prop}\label{prop-equal} Let $\rho_1,\rho_2\in
		\mathcal{D}^\le(\mathcal{H})$. The following statements are equivalent:
		\begin{enumerate}
			\item[1.] $\rho_1=\rho_2$;
			\item[2.] there exists an orthonormal basis $\mathcal{B}$ s.t.
			$\rho_1 (=_\mathcal{B})^\# \rho_2$;
			\item[3.] $\rho_1 (=_\mathit{sym})^\# \rho_2$.
		\end{enumerate}
	\end{prop}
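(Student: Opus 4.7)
The natural strategy is to prove the cycle $(1)\Rightarrow(2)\Rightarrow(3)\Rightarrow(1)$.

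For $(1)\Rightarrow(2)$, I would simply invoke the construction of Example~\ref{exam-qid}. Given $\rho_1=\rho_2=\rho$, apply the spectral decomposition theorem to write $\rho=\sum_i p_i|i\>\<i|$ in some orthonormal basis $\mathcal{B}=\{|i\>\}$. Then the operator $\rho_{\rm id(\mathcal{B})}=\sum_i p_i|ii\>\<ii|$ is supported in $\spans\{|ii\>\}=(=_{\mathcal{B}})$, and its marginals both equal $\rho$ by the definition of the partial trace, so it witnesses $\rho_1(=_{\mathcal{B}})^\#\rho_2$.

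For $(2)\Rightarrow(3)$, I would observe that the same witness transfers. The key fact is that every basis vector $|ii\>$ is fixed by the SWAP operator $S$, hence lies in the symmetric subspace. Thus as subspaces (equivalently, as projections) $(=_{\mathcal{B}})\sqsubseteq(=_\mathit{sym})$, so any $\rho$ with $\supp(\rho)\subseteq(=_{\mathcal{B}})$ automatically satisfies $\supp(\rho)\subseteq(=_\mathit{sym})$, and the marginals are unchanged.

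The substantive step is $(3)\Rightarrow(1)$. Let $\rho$ witness $\rho_1(=_\mathit{sym})^\#\rho_2$. Since $(=_\mathit{sym})$ is the projection onto the $+1$-eigenspace of $S$, every vector in $\supp(\rho)$ is fixed by $S$; writing $\rho=\sum_k \lambda_k|\psi_k\>\<\psi_k|$ with $S|\psi_k\>=|\psi_k\>$, I get $S\rho S^\dag = \rho$. I would then use the general identity $\tr_2(S\sigma S^\dag)=\tr_1(\sigma)$ for any operator $\sigma$ on $\mathcal{H}\otimes\mathcal{H}$ (which follows from the defining equation of the partial trace applied to rank-one tensors $|\varphi_1\varphi_2\>\<\psi_1\psi_2|$ and extended by linearity). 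Combining these,
\begin{equation*}
\rho_1=\tr_2(\rho)=\tr_2(S\rho S^\dag)=\tr_1(\rho)=\rho_2,
\end{equation*}
which completes the cycle.

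The main obstacle I anticipate is the SWAP-invariance step in $(3)\Rightarrow(1)$: one has to verify carefully that a support condition $\supp(\rho)\subseteq(=_\mathit{sym})$ actually implies the operator identity $S\rho S^\dag=\rho$, and then deploy the right partial-trace identity to swap the two marginals. Once those two ingredients are in place, the rest is bookkeeping.
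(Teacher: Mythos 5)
Your proposal is correct and follows essentially the same route as the paper: (1)$\Rightarrow$(2) via the $\rho_{\mathrm{id}(\mathcal{B})}$ witness of Example~\ref{exam-qid}, (2)$\Rightarrow$(3) via the inclusion $(=_\mathcal{B})\subseteq(=_\mathit{sym})$, and (3)$\Rightarrow$(1) by establishing $S\sigma S=\sigma$ for the witness and applying $\tr_2(S\sigma S)=\tr_1(\sigma)$. The only cosmetic difference is that you derive $S\sigma S=\sigma$ from an eigendecomposition supported in the $+1$-eigenspace of $S$, while the paper gets it from the projection identities $(=_\mathit{sym})\sigma=\sigma=\sigma(=_\mathit{sym})$; these are equivalent.
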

	
	We see from Example \ref{exam-lifting} and Proposition \ref{prop-equal} that in the quantum world, equality has different generalisations $=_\mathcal{B}$ and $=_\mathit{sym}$.
	Our logic will establish the existence of a lifting of equality, which then
	implies equality of density operators, i.e., equivalence of quantum states.
	
	The notion of quantum lifting can be further generalised to a quantitative version, which will be more convenient in defining the semantics of our logic. We first recall a notation introduced in \citet{DP06}: $\rho \models_\lambda A\ {\rm means}\ \mathit{tr}(A\rho)\geq\lambda$. It can be understood as a quantitative satisfaction relation between a state $\rho$ and an observable $A$ with a real number $\lambda>0$ as a threshold.
	
	\begin{defn} Let $\rho_1\in\mathcal{D}^\le(\mathcal{H}_1)$ and $\rho_2\in\mathcal{D}^\le(\mathcal{H}_2)$, let $A$ be an observable in $\mathcal{H}_1\otimes\mathcal{H}_2$, and let $\lambda>0$. Then $\rho\in\mathcal{D}^\le(\mathcal{H}_1\otimes\mathcal{H}_2)$ is called a witness of the $\lambda$-lifting $\rho_1 A^\#\rho_2$ if: \begin{enumerate}\item $\rho$ is a coupling for $\cp{\rho_1}{\rho_2}$;
			\item $\rho\models_\lambda A$.
		\end{enumerate}
	\end{defn}
	
	It is obvious that whenever $A$ is the projection onto subspace $\mathcal{X}$ and $\lambda =\mathit{tr}(\rho)$, then the above definition degenerates to Definition \ref{lift-def}.
	
	\subsection{Separable versus Entangled Liftings}\label{sec-entangle}
	
	Entanglement presents a major difference between classical and quantum systems and is responsible for most of the advantages of quantum computing and communication over their classical counterparts.
	A partial density operator $\rho\in\mathcal{D}^\leq(\mathcal{H}_1\otimes\mathcal{H}_2)$ is said to be
	\emph{separable} if there exist $\rho_{mi}\in\mathcal{D}^\le(\mathcal{H}_i)$ $(i=1,2)$ such that $\rho=\sum_m  \left(\rho_{m1}\otimes\rho_{m2}\right).$
	A (mixed) state $\rho$ in $\mathcal{H}_1\otimes\mathcal{H}_2$ is said to be
	\emph{entangled} if it is not separable.
	Indeed, the notions of separability and entanglement can be defined for a general positive operator (rather than density operator).
	
	The following proposition shows that entanglement can provide a stronger witness of lifting even with respect to a \textit{separable} observable $A$; that is, sometimes an entangled witness is possible but separable witness does not exist (the proof is given in \cite{fullversion}). 
	
	\begin{prop}\label{entangled-witness} There are states $\rho_i$ in $\mathcal{H}_i$ $(i=1,2)$, separable observable $A$ over $\mathcal{H}_1\otimes\mathcal{H}_2$, entangled state $\rho$ in $\mathcal{H}_1\otimes\mathcal{H}_2$, and $\lambda>0$ such that:
		\begin{enumerate}
			\item $\rho$ is a witness of $\lambda$-lifting $\rho_1 A^\# \rho_2$; and
			\item any separable state $\sigma$ in $\mathcal{H}_1\otimes\mathcal{H}_2$ is not a witness of $\lambda$-lifting $\rho_1 A^\# \rho_2$.
		\end{enumerate}
	\end{prop}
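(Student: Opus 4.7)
The plan is to exhibit a concrete example in the simplest nontrivial setting: two qubits, with $\mathcal{H}_1=\mathcal{H}_2=\mathbb{C}^2$ and marginals $\rho_1=\rho_2=I/2$ (the maximally mixed qubit state). For the separable observable I take the positive operator
\begin{equation*}
A \;=\; |0\>\<0|\otimes|0\>\<0| + |1\>\<1|\otimes|1\>\<1| + |+\>\<+|\otimes|+\>\<+| + |-\>\<-|\otimes|-\>\<-|,
\end{equation*}
which is manifestly separable, being a sum of product projectors drawn from two mutually unbiased bases. For the entangled coupling I take $\rho=|\Phi\>\<\Phi|$ with $|\Phi\>=\tfrac{1}{\sqrt{2}}(|00\>+|11\>)$, the maximally entangled state already introduced in the paper.

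It is immediate from equation~(\ref{max-ent}) that $\rho$ is a coupling for $\cp{\rho_1}{\rho_2}$, and a short calculation using $|\Phi\>=\tfrac{1}{\sqrt{2}}(|++\>+|--\>)$ gives $\tr(A\rho)=2$: each of the two bases contributes $\tfrac{1}{2}+\tfrac{1}{2}=1$. So $\rho$ will witness any $\lambda$-lifting with $\lambda\le 2$.

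The main work is to bound $\tr(A\sigma)$ uniformly over all separable couplings $\sigma$. By the definition of separability, any such $\sigma$ admits a decomposition $\sigma=\sum_k p_k\,|a_k\>\<a_k|\otimes|b_k\>\<b_k|$ with $\sum_k p_k=\tr(\sigma)=1$, so it suffices to bound $f(a,b):=\<ab|A|ab\>$ over pure product states. Parametrising each qubit state by its Bloch vector, $|a\>\<a|=\tfrac{1}{2}(I+\vec a\cdot\vec\sigma)$ with $|\vec a|=1$, the standard identity $|\<n|a\>|^2=\tfrac{1}{2}(1+\vec n\cdot\vec a)$ applied to the four Bloch vectors $\vec n=\pm\hat z,\pm\hat x$ collapses the sum, with the linear cross-terms cancelling, to the compact form
\begin{equation*}
f(a,b)\;=\;1+\tfrac{1}{2}(a_xb_x+a_zb_z).
\end{equation*}
Cauchy--Schwarz together with $a_x^2+a_z^2\le 1$ and $b_x^2+b_z^2\le 1$ then yields $a_xb_x+a_zb_z\le 1$, hence $f(a,b)\le \tfrac{3}{2}$, and therefore $\tr(A\sigma)\le \tfrac{3}{2}$ for every separable coupling.

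Choosing any $\lambda\in(3/2,2]$, for instance $\lambda=7/4$, completes the argument: $\tr(A\rho)=2\ge\lambda$ shows that $\rho$ witnesses the $\lambda$-lifting $\rho_1 A^\#\rho_2$, while $\tr(A\sigma)\le 3/2<\lambda$ rules out every separable witness. The conceptual crux is choosing $A$ so that the product terms lie in two mutually unbiased bases, forcing a strict gap between the entangled and separable optima for couplings with maximally mixed marginals; the only real calculation is the Bloch-vector collapse to $f(a,b)=1+\tfrac{1}{2}(a_xb_x+a_zb_z)$, which I expect to be the step most prone to a sign or factor slip on a first pass.
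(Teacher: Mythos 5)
Your construction is correct, but it is genuinely different from the one in the paper. The paper takes $\rho_1=\rho_2=I/2$ as you do and the same entangled witness $|\Phi\rangle\langle\Phi|$, but its observable is $A=\tfrac{1}{3}(I\otimes I)+\tfrac{2}{3}|\Phi\rangle\langle\Phi|$, whose separability is not manifest and is verified via the Peres--Horodecki (PPT) criterion; the bound $\tr(A\sigma)\le\tfrac{2}{3}$ over separable couplings is then obtained by writing out the matrix of $\sigma$ under the marginal constraints and using positivity of the partial transpose $\sigma^{T_2}$. Your observable $A=|0\rangle\langle 0|\otimes|0\rangle\langle 0|+|1\rangle\langle 1|\otimes|1\rangle\langle 1|+|+\rangle\langle +|\otimes|+\rangle\langle +|+|-\rangle\langle -|\otimes|-\rangle\langle -|$ is separable by inspection, and your bound $\tr(A\sigma)\le\tfrac{3}{2}$ comes from an elementary optimisation over pure product states (reduction to $f(a,b)=1+\tfrac{1}{2}(a_xb_x+a_zb_z)$ plus Cauchy--Schwarz), with no appeal to PPT at either step --- I checked the Bloch-vector collapse and the values $\tr(A\rho)=2$ and they are right; your reduction of a separable coupling to a convex combination of pure product states of total weight $\tr(\sigma)=1$ is also legitimate. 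What the paper's version buys in exchange for invoking PPT twice is that its $A$ satisfies $0\sqsubseteq A\sqsubseteq I$, i.e.\ it is a quantum predicate in the sense used by the logic, whereas your $A$ has largest eigenvalue $2$ and so is only an observable (which is all the proposition literally demands); rescaling your $A$ by $\tfrac{1}{2}$ restores the effect property while keeping the gap ($1$ versus $\le\tfrac{3}{4}$), so this is a cosmetic rather than substantive difference.
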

	
	\section{Quantum Programming Language}\label{synt}
	We recall the syntax and semantics for a quantum programming language given in \cite{Ying11, Ying16}. Let
	$\mathit{Var}$ be a set of quantum variables. For each $q\in\mathit{Var}$, we write 
	$\mathcal{H}_q$ for its state Hilbert space.
	
	\begin{defn}[Syntax]\label{syntax} Quantum programs are defined by
		the following syntax:
		\begin{align*}
			P ::={} \mathbf{skip}
			&\mid P_1;P_2
			\mid q:=|0\rangle
			\mid \overline{q}:=U[\overline{q}]
			\mid \mathbf{if}\ \left(\square m\cdot \cM[\overline{q}] =m\rightarrow P_m\right)\ \mathbf{fi} \mid \mathbf{while}\ \cM[\overline{q}]=1\ \mathbf{do}\ P\ \mathbf{od}
		\end{align*}
	\end{defn}
	
	The initialisation $q:=|0\rangle$ sets quantum variable $q$ to a basis state
	$|0\rangle$.  The statement $\overline{q}:=U[\overline{q}]$ means that unitary
	transformation $U$ is applied to register $\overline{q}$.  The
	$\mathbf{if}$-statement is a quantum generalisation of a classical case
	statement. In executing it, measurement $\cM=\{M_{m}\}$ is performed on
	$\overline{q}$, and then a subprogram $P_m$ is selected to be executed next
	according to the outcome $m$ of measurement. The $\mathbf{while}$-statement is a
	quantum generalisation of the classical while loop. The measurement in it has
	only two possible outcomes $0$, $1$; if the outcome $0$ is observed then the
	program terminates, otherwise the program executes the subprogram $P$ and
	continues.
	
	We write $\mathit{var}(P)$ for the set of quantum variables occurring in a
	quantum program $P$. Then tensor product $\mathcal{H}_P=\bigotimes_{q\in
		\mathit{var}(P)}\mathcal{H}_q$ is the state Hilbert space of $P$. A
	\emph{configuration} is a pair $C=\langle P,\rho\rangle,$ where $P$ is a program
	or the termination symbol $\downarrow$, and
	$\rho\in\mathcal{D}^\leq(\mathcal{H}_P)$ is a partial density operator modeling
	the state of quantum variables.
	
	\begin{defn}[Operational Semantics]\label{def-op-sem} The operational semantics of quantum programs is defined as a transition relation $\rightarrow$ by the transition rules in Fig. \ref{fig 3.1}. \begin{figure}[h]\centering
			\begin{equation*}
				\begin{split}&({\rm Sk})\ \langle\mathbf{skip},\rho\rangle\rightarrow\langle \downarrow,\rho\rangle\ \ \ \ \ \ \ \ \ \ \ \ \ \ \ \ \ \ \ \ \ \ \ \ \ \ \ \ \ \ \ \ \ ({\rm In})\ \langle
					q:=|0\rangle,\rho\rangle\rightarrow\langle \downarrow,\rho^{q}_0\rangle\\
					&({\rm UT})\ \ \langle\overline{q}:=U[\overline{q}],\rho\rangle\rightarrow\langle
					\downarrow,U\rho U^{\dag}\rangle\ \ \ \ \ \ \ \ \ \ \ \ \ \ \ ({\rm SC})\ \frac{\langle P_1,\rho\rangle\rightarrow\langle
						P_1^{\prime},\rho^{\prime}\rangle} {\langle
						P_1;P_2,\rho\rangle\rightarrow\langle
						P_1^{\prime};P_2,\rho^\prime\rangle}\\
					&({\rm IF})\ \langle\mathbf{if}\ (\square m\cdot
					\cM[\overline{q}]=m\rightarrow P_m)\ \mathbf{fi},\rho\rangle\rightarrow\langle
					P_m,M_m\rho M_m^{\dag}\rangle\\
					&({\rm L}0)\ \langle\mathbf{while}\
					\cM[\overline{q}]=1\ \mathbf{do}\
					P\ \mathbf{od},\rho\rangle\rightarrow\langle \downarrow, M_0\rho M_0^{\dag}\rangle\\
					&({\rm L}1)\ \langle\mathbf{while}\
					\cM[\overline{q}]=1\ \mathbf{do}\ P\ \mathbf{od},\rho\rangle  \rightarrow \langle
					P;\mathbf{while}\ \cM[\overline{q}]=1\ \mathbf{do}\ P\ \mathbf{od}, M_1\rho
					M_1^{\dag}\rangle\end{split}\end{equation*}
			\caption{Transition Rules. Symbol $\downarrow$ stands for termination. In rule (In), $\rho^{q}_0=\sum_{i}|0\rangle_q\langle i|\rho|i\rangle_q\langle
				0|$ for a given orthonormal basis $\{|i\rangle\}$ of $\mathcal{H}_q$.
				In (IF), $m$ ranges over all possible outcomes of measurement $\cM.$}\label{fig 3.1}
	\end{figure}\end{defn}
	
	The transitions in rules (IF), (L0) and (L1) are essentially
	probabilistic. In both {\bf if} and {\bf while} statements, a measurement is performed at the beginning, and then the program enters different branches based on the measurement outcome.
	For each outcome $m$, the transition in (IF)
	happens with probability $p_m=\mathit{tr}(M_m^\dag M_m\rho),$ and the
	program state $\rho$ is changed to $\rho_m=M_m\rho M_m^\dag /p_m.$ 
	In rule (L0) and (L1) the outcome ``$0$''  occurs with the probability $p_0=\tr(M_0\rho M_0^\dag)$, and the program terminates in state $M_0\rho M_0^\dag/p_0$; otherwise, with the probability $p_1=\tr(M_1\rho M_1^\dag)$, the outcome ``$1$'' occurs, the program state is changed to $M_1\rho M_1^\dag/p_1$, and then the program executes the loop body $P$ and goes back to the beginning of the loop.
	We follow a convention suggested by \citet{Se04} to combine
	probability $p_m$ and density operator $\rho_m$ into a partial density
	operator $M_m\rho M_m^\dag=p_m\rho_m.$ This convention is useful for
	presenting the operational semantics as a non-probabilistic transition
	system, simplifying the presentation.
	
	\begin{defn}[Denotational Semantics]\label{den-sem-def} For any quantum program $P$, its semantic function is the mapping $\llbracket P\rrbracket:\mathcal{D}^\leq(\mathcal{H}_P)\rightarrow \mathcal{D}^\leq(\mathcal{H}_P)$ defined as follows: for every $\rho\in\mathcal{D}^\leq(\mathcal{H}_P)$, \begin{equation}\llbracket P\rrbracket(\rho)=\sum\left\{\!|\rho^\prime: \langle P,\rho\rangle\rightarrow^\ast\langle \downarrow,\rho^\prime\rangle|\!\right\},\end{equation} where $\rightarrow^\ast$ is the reflexive and transitive closure of $\rightarrow$, and $\left\{\!|\cdot|\!\right\}$ denotes a multi-set.
	\end{defn}
	
	For instance, let us consider program $Q_2$ in our working example \ref{exam-1} with input $\rho$ given in equation (\ref{ex-mix}). According to Definition \ref{def-op-sem}, it has two transitions:
	\begin{equation*}\langle Q_2,\rho\rangle\rightarrow\langle q:=Z[q],\rho_0\rangle\rightarrow\langle\downarrow,\rho_0^\prime\rangle,\qquad \langle Q_2,\rho\rangle\rightarrow\langle q:=H[q],\rho_1\rangle\rightarrow\langle\downarrow,\rho_1^\prime\rangle, 
	\end{equation*} where: 
	\begin{align*}&\rho_0=M_0^\prime\rho M_0^\prime=\frac{1}{3}\left(\begin{array}{cc}1&1\\1&1\end{array}\right),\qquad &\rho_0^\prime=Z\rho_0 Z=\frac{1}{3}\left(\begin{array}{cc}1&-1\\-1&1\end{array}\right),\\ &\rho_1=M_1^\prime\rho M_1^\prime=\frac{1}{6}\left(\begin{array}{cc}1&-1\\-1&1\end{array}\right),\qquad &\rho_1^\prime=H\rho_1 H=\frac{1}{3}\left(\begin{array}{cc}0&0\\0&1\end{array}\right).\end{align*} According to Definition \ref{den-sem-def}, the output is
$\llbracket Q_2\rrbracket(\rho)=\rho_0^\prime+\rho_1^\prime=\frac{1}{3}\left(\begin{array}{cc}1&-1\\-1&2\end{array}\right).$
	Furthermore, one can show that for any possible input $\rho$ with trace one, programs $P_1,P_2$ in Example \ref{exam-1} have the same output: $\sm{P_1}(\rho) = \sm{P_2}(\rho) = \frac{1}{4}\left(\begin{array}{cc}1 & -1\\ -1 & 3\end{array}\right).$

	The soundness of some of the proof rules in probabilistic relational Hoare logic requires programs to terminate \cite{BartheGZ09}. The same is true in the quantum setting.
	\begin{defn} A quantum program $P$ is called \emph{lossless}, written $\models
		P\ {\rm lossless}$, if its semantics function $\llbracket P\rrbracket$
		is \emph{trace-preserving}; that is, $\mathit{tr}(\llbracket
		P\rrbracket(\rho))=\mathit{tr}(\rho)$ for all
		$\rho\in\mathcal{D}^\leq(\mathcal{H}_P)$.\end{defn}
	For example, programs $P_1$ and $P_2$ in Example \ref{exam-1} are both lossless.
	\begin{rem} The lossless property of quantum loop $\mathbf{while}\
		\cM[\overline{q}]=1\ \mathbf{do}\ P\ \mathbf{od}$ was previously studied \cite{YYFD13}.
		Let the semantic function of loop body $P$ be given in Kraus
		operator-sum form: $\llbracket P\rrbracket (\rho)=\sum_iE_i\rho
		E_i^\dag.$ We define (super-)operator $\mathcal{E}$ by $\mathcal{E}
		(\rho)=\sum_i(M_1^{\dag}E_i^\dag)\rho (E_iM_1)$ for every $\rho$. A square matrix $X$
		is called an eigenvector of $\mathcal{E}$ corresponding to
		an eigenvalue $\lambda$ if $\mathcal{E}(X)=\lambda X$. It
		was shown that the loop is lossless if and only if
		any eigenvector of $\mathcal{E}$ corresponding to an
		eigenvalue with modulus $1$ is traceless.
	\end{rem}
	
	\section{Relational program logic}\label{sec:qrhl}
	We adopt standard conventions and notations for relational program
	logics. For each quantum variable $q\in\mathit{Var}$, we assume two
	tagged copies $q\langle 1\rangle$ and $q\langle 2\rangle$, and their
	state Hilbert spaces are the same as that of $q$:
	$\mathcal{H}_{q\langle 1\rangle}=\mathcal{H}_{q\langle
		2\rangle}=\mathcal{H}_q$. For $i=1,2$, if $X\subseteq\mathit{Var}$,
	then we write $X\langle i\rangle=\{q\langle i\rangle \mid q\in
	X\}$. Furthermore, for every quantum program $P$ with
	$\mathit{var}(P)\subseteq\mathit{Var}$, we write $P\langle i\rangle$
	for the program obtained by replacing each quantum variable $q$ in $P$
	with $q\langle i\rangle$. Also, for each operator $A$ in
	$\mathcal{H}_X=\bigotimes_{q\in X}\mathcal{H}_q$, we write $A\langle
	i\rangle$ for the corresponding operator of $A$ in
	$\mathcal{H}_{X\langle i\rangle}=\bigotimes_{q\in
		X}\mathcal{H}_{q\langle i\rangle}$. For simplicity, we will drop the
	tags whenever they can be understood from the context; for example, we
	often simply write $A\otimes B$ instead of $A\langle 1\rangle\otimes B\langle 2\rangle$.
	
	\subsection{Judgments and Satisfaction}
	Judgments in our logic are of the form \begin{equation}\label{syntax-judge}\Gamma\vdash P_1\sim P_2:A \Rightarrow B\end{equation}
	where $P_1$ and $P_2$ are quantum programs, $A$ and $B$ are quantum predicates in $\mathcal{H}_{P_1\langle 1\rangle}\otimes
	\mathcal{H}_{P_2\langle 2\rangle}$, and $\Gamma$ is a set of
	measurement or separability conditions.
	If $\Gamma=\{\Sigma_1,...,\Sigma_n\}$, then for any $\rho\in\mathcal{D}^\le
	\left(\mathcal{H}_{P_1\langle 1\rangle}\otimes \mathcal{H}_{P_2\langle
		2\rangle}\right)$, $\rho\models\Gamma$ means $\rho\models\Sigma_i$ for all $i=1,...,n$.
	We defer the definition of measurement or separability condition
	$\Sigma_i$ for now, simply assuming a given notion of satisfaction
	$\rho\models \Sigma_i$. In particular, if $\Gamma=\emptyset$, then we simply write $\vdash P_1\sim P_2:A \Rightarrow B$ for $\Gamma\vdash P_1\sim P_2:A \Rightarrow B$. 
	
	\begin{defn}\label{valid-ju}
		The judgment $\Gamma\vdash P_1\sim P_2:A \Rightarrow B$ is valid,
		written: $$\Gamma \models P_1\sim P_2: A\Rightarrow B$$ if for every
		$\rho\in\mathcal{D}^\le\left(\mathcal{H}_{P_1\langle 1\rangle}\otimes
		\mathcal{H}_{P_2\langle 2\rangle}\right)$ such that $\rho\models
		\Gamma$, there exists a quantum coupling $\sigma$ for $\cp{\sm{P_1}(\mathit{tr}_{\langle 2\rangle}(\rho))}{\sm{P_2}(\mathit{tr}_{\langle 1\rangle}(\rho))}$ such
		that \begin{equation}\label{eqe-judge}
			\mathit{tr}(A\rho)\leq\mathit{tr}(B\sigma)+\mathit{tr}(\rho)-\mathit{tr}(\sigma).
		\end{equation}
		We will often use $\rho\models P_1\sim P_2: A\Rightarrow B$ as shorthand.
	\end{defn}
	The above definition differs from validity in probabilistic relational
	Hoare logic in several ways. Except the set $\Gamma$ of measurability and separability
	conditions (explained below), lifting does not appear explicitly.
	However, the existence of a lifting can be established from inequality
	(\ref{eqe-judge}) under mild conditions, as we now explain.  First, we
	note that $\mathit{tr}(\rho)- \mathit{tr}(\sigma)$ captures the
	non-termination probability of the programs, as in the
	(non-relational) quantum program logic qPD. To see a clearer
	probabilistic-quantum correspondence, let us consider the simple case
	where both $P_1$ and $P_2$ are lossless. Then
	$\mathit{tr}(\rho)-\mathit{tr}(\sigma)=0$ and inequality
	(\ref{eqe-judge}) is simplified to $\mathit{tr}(A\rho)\leq
	\mathit{tr}(B\sigma)$, or equivalently: ${\rm for\ any}\ \lambda>0,\ \rho\models_\lambda A\Rightarrow\sigma\models_\lambda B.$ This is a real number-valued analogue of
	boolean-valued proposition \textquotedblleft$\rho\in
	A\Rightarrow\sigma\in B$\textquotedblright. Therefore, for any $\lambda>0$, if $\rho\models_\lambda A$, then $\sigma\models_\lambda B$ and combined with
	the assumption that $\sigma$ is a coupling for $\cp{\sm{P_1}(\mathit{tr}_{\langle 2\rangle}(\rho))}{\sm{P_2}(\mathit{tr}_{\langle 1\rangle}(\rho))}$, we see
	that $\sigma$ is a witness for $\lambda$-lifting $\llbracket
	P_1\rrbracket (\mathit{tr}_{\langle 2\rangle}(\rho)) B^\#\llbracket
	P_2\rrbracket (\mathit{tr}_{\langle 1\rangle}(\rho))$.
	
	An interesting symmetry between programs $P_1,P_2$ in our working example \ref{exam-1} can be expressed as the following judgment: 
	\begin{equation}\label{ex-1.0}\vdash P_1\sim P_2:\ 
	(=_\mathcal{B}) \Rightarrow (=_\mathit{sym}).\end{equation}
	where precondition $=_\mathcal{B}$ is the equality defined by
	the computational basis $\mathcal{B} =\{|0\rangle, |1\rangle\}$ of a
	qubit; i.e. $(=_\mathcal{B}) =\mathit{span}\{|00\rangle,
	|11\rangle\}=|00\rangle\langle 00|+|11\rangle\langle 11|$ [see Example
	\ref{exam-lifting} 2)], and postcondition $=_\mathit{sym}$ is the
	projector onto the symmetric space [see Example \ref{exam-lifting}
	3)]. 
	The validity of this judgment can be checked by the denotational semantics of $P_1,P_2$. 
	We first observe that for any $\rho\in\mathcal{D}^\le\left(\mathcal{H}_{P_1}\otimes
	\mathcal{H}_{P_2}\right)$, $\tr(=_\mathcal{B}\rho)\le\tr(\rho)$. Moreover, we have:
	$$\sm{P_1}(\mathit{tr}_{\langle 2\rangle}(\rho)) = \sm{P_2}(\mathit{tr}_{\langle 1\rangle}(\rho)) = \frac{1}{4}\left(\begin{array}{cc}1 & -1\\ -1 & 3\end{array}\right)\times \tr(\rho)$$
	by noting that $\tr(\mathit{tr}_{\langle 2\rangle}(\rho)) = \tr(\mathit{tr}_{\langle 1\rangle}(\rho)) = \tr(\rho)$. As shown in Example \ref{exam-lifting} (3), lifting $\sm{P_1}(\mathit{tr}_{\langle 2\rangle}(\rho))$ $(=_\mathit{sym})^\#\sm{P_2}(\mathit{tr}_{\langle 1\rangle}(\rho))$ holds and, suppose $\sigma$ is a witness, then we have $\tr(=_{sym}\sigma) = \tr(\sigma)$ and therefore, 
	$\tr(=_\mathcal{B}\rho) \le \tr(=_{sym}\sigma)$ as $\tr(\rho) = \tr(\sigma)$
	according to Proposition \ref{prop-treq}, which actually implies the validity of judgment (\ref{ex-1.0}).
	
	In the remainder of this section, we gradually develop the proof system for our logic rqPD. At the same time, we will see how the proof rules in rqPD can be used to verify judgment (\ref{ex-1.0}). For readability, we first give a proof outline of judgment (\ref{ex-1.0}) in Fig. \ref{proofoutline-workingexample}, where a judgment $\Gamma\vdash P_1\sim P_2:A\Rightarrow B$ derived by an inference rule $R$ in rqPD is displayed as
	\begin{equation*}\begin{split}
	&\{A\}\ \{{\rm SC:}\ \Gamma\} \\
	&\qquad\qquad P_1\sim P_2 \qquad\qquad (R)\\
	&\{B\}\end{split}\end{equation*}
	
	\begin{figure}
		\small
		\centering
		\begin{align*}
		&\{=_{\B}\}\{I\otimes I\} &&\hspace{-1cm}{\rm(Conseq)}&&\text{Derivation\ of}\ Q_1\sim Q_2\qquad\quad {\rm(IF1)}\\
		&\qquad\qquad q:=|0\>; \sim q:=|0\>;  &&\hspace{-1cm}{\rm(Init)} &&\cM\approx \cM^\prime \models I\otimes I\Rightarrow \{A_{00},A_{11}\}\\
		&\{I\otimes I\}\ \ \{{\rm SC:}\ \cM^\prime\approx \cM^\prime\}&& &&\overline{\left\{A_{00}=\frac{1}{2}[I\otimes I+(X\otimes ZH)S(X\otimes HZ)]\right\}}\\
		&\qquad\qquad q:=H[q]; \sim \mathbf{skip}; &&\hspace{-1cm}{\rm(UT\text{-}L)} &&\qquad\qquad q:=X[q]; \sim q:=Z[q];\quad{\rm(UT)}\\
		&\{I\otimes I\}\ \ \{{\rm SC:}\ \cM\approx \cM^\prime\} && && \left\{B=\frac{1}{2}[I\otimes I+(I\otimes H)S(I\otimes H)]\right\} \\
		&\qquad\qquad Q_1; \sim Q_2; &&\hspace{-1cm}{\rm(IF1)} && \overline{\left\{A_{11}=\frac{1}{2}[I\otimes I+(H\otimes HH)S(H\otimes HH)]\right\}}\\
		& \left\{B = \frac{1}{2}[I\otimes I+(I\otimes H)S(I\otimes H)]\right\}\!\!\!\!\!\!\!\!&& &&\qquad\qquad q:=H[q]; \sim q:=H[q];\quad{\rm(UT)}\\
		&\qquad\qquad \mathbf{skip};\sim q:=H[q]; &&\hspace{-1cm}{\rm(UT\text{-}R)}&&\left\{B=\frac{1}{2}[I\otimes I+(I\otimes H)S(I\otimes H)]\right\} \\
		& \left\{(=_{sym}) = \frac{1}{2}[I\otimes I+S]\right\}&& &&
		\end{align*}
		\caption{Verification of working example \ref{exam-1} in rqPD: ${\rm P_1}\sim{\rm P_2}$. The proof outline is shown in the left column with side-condition labeled by SC, and the derivation of $Q_1\sim Q_2$ is displayed in the right column with measurement condition.
		}
		\label{proofoutline-workingexample}
	\end{figure}
	
	\subsection{Basic Construct-Specific Rules}\label{subsec-basic}
	As usual, the proof system consists of two categories of rules: \textit{construct-specific rules} and \textit{structural rules}. Let us start from a set of construct-specific rules that can be directly adapted from quantum Hoare logic qPD~\cite{Ying11,Ying16}. They include two-side rules and one-side rules given in Figs. \ref{fig 4.4-0} and \ref{fig 4.5-0}, respectively. It is worth noting that all of these rules do not introduce any measurement or separability condition. 
	These rules are easy to understand if compared with the corresponding rules of qPD, which are displayed in \cite{fullversion}, and those of probabilistic logic pRHL.  
	For the working example \ref{exam-1}, these rules are used to prove the following judgments in Fig. \ref{proofoutline-workingexample}: 
	\begin{enumerate}
		\item $\vdash  q:=|0\> \sim q:=|0\>:  (=_{\B}) \Rightarrow I\otimes I$ by rule (Conseq) and (Init);
		\item $\vdash q:=H[q] \sim \mathbf{skip}: I\otimes I \Rightarrow I\otimes I$ by rule (UT-L);
		\item $\vdash \mathbf{skip}\sim q:=H[q]: B\Rightarrow (=_{sym})$ by rule (UT-R); and 
		\item $\vdash q:=X[q] \sim q:=Z[q]: A_{00} \Rightarrow B$ and $\vdash q:=H[q] \sim q:=H[q]: A_{11} \Rightarrow B$ by rule (UT).
	\end{enumerate}
As we will see in Section \ref{sec-examples}, these basic rules are already enough to verify interesting relational properties of quantum programs, including security of quantum one-time pad. 
	\begin{figure*}[!h]
		\centering
		\begin{equation*}\begin{split}
				&({\rm Skip})\ \ \vdash \mathbf{Skip}\sim \mathbf{Skip}: A\Rightarrow A
				\\ &({\rm Init}) \ \ \vdash q_1:=|0\rangle\sim q_2:=|0\rangle:\sum_{i,j}(|i\rangle_{q_1\langle 1\rangle}\langle 0| \otimes |j\rangle_{q_2\langle 2\rangle}\langle 0|) A( |0\rangle_{q_1\langle 1\rangle}\langle
				i|\otimes |0\rangle_{q_2\langle 2\rangle}\langle
				j|) \Rightarrow A\\
				&({\rm UT})\ \ \vdash \overline{q_1}:=U_1\left[\overline{q_1}\right]\sim \overline{q_2}:=U_2\left[\overline{q_2}\right]: \left(U_1^\dag\otimes U_2^\dag\right) A\left( U_1 \otimes U_2\right)\Rightarrow A
				\\ &({\rm SC})\ \ \ \frac{\vdash P_1\sim P_2:A\Rightarrow B\qquad \vdash P_1^\prime\sim P_2^\prime:B\Rightarrow C}{\vdash P_1;P_1^\prime\sim P_2;P_2^\prime:A\Rightarrow C}
				\\ &({\rm IF})\ \ \
				\frac{\vdash P_{1m}\sim P_{2n}: B_{mn}\Rightarrow C\ {\rm for\ every}\ (m,n)\in S\qquad \forall\ m,n:\ \ \models P_{1m}, P_{2n}\ {\rm lossless}}{\begin{array}{cc}\vdash \mathbf{if}\ (\square m\cdot
						\mathcal{M}_1[\overline{q}]=m\rightarrow P_{1m})\ \mathbf{fi}\sim \mathbf{if}\ (\square n\cdot
						\mathcal{M}_2[\overline{q}]=n\rightarrow P_{2n})\ \mathbf{fi}:\\
						\sum_{(m,n)\in S}
						\left(M_{1m}^{\dag}\otimes M_{2n}^\dag\right)B_{mn}\left(M_{1m}\otimes M_{2n}\right)\Rightarrow C\end{array}}\\
				&({\rm LP}) \
				\frac{
					\begin{array}{c}
						\models \mathbf{while}\ \cM_i[\overline{q}]=1\ \mathbf{do}\ P_i\ \mathbf{od} \text{\ lossless}\ (i = 1,2) \\
						\vdash P_1\sim P_2: B\Rightarrow (M_{10}\otimes M_{20})^\dag A(M_{10}\otimes M_{20})
						+ (M_{11}\otimes M_{21})^\dag B(M_{11}\otimes M_{21})
				\end{array}}{
					\begin{array}{l}
						\vdash \mathbf{while}\
						\cM_1[\overline{q}]=1\ \mathbf{do}\ P_1\ \mathbf{od}\sim    \mathbf{while}\
						\cM_2[\overline{q}]=1\ \mathbf{do}\ P_2\ \mathbf{od}:  \\
						\qquad\qquad\qquad(M_{10}\otimes M_{20})^\dag A(M_{10}\otimes M_{20})
						+ (M_{11}\otimes M_{21})^\dag B(M_{11}\otimes M_{21})\Rightarrow A
				\end{array}}
		\end{split}\end{equation*}
		\caption{Two-sided rqPD rules. The set $S$ in rule (IF) is a subset of the Cartesian product of the possible outcomes of measurements $\mathcal{M}_1$ and $\mathcal{M}_2$.}\label{fig 4.4-0}
	\end{figure*}
	
	\begin{figure*}\centering
		\begin{equation*}\begin{split}
				&(\mathrm{Init\text{-}L}) \ \ \ \vdash q_1:=|0\rangle\sim \mathbf{skip}:
				\sum_{i}\left(|i\rangle_{q_1\langle 1\rangle}\langle 0|\right) A\left
				( |0\rangle_{q_1\langle 1\rangle}\langle i|\right) \Rightarrow A\\
				&({\rm UT\text{-}L})\ \ \ \vdash \overline{q_1}:=U_1\left[\overline{q_1}
				\right] \sim \mathbf{skip}: U_1^\dag A  U_1\Rightarrow A\\
				&({\rm IF\text{-}L})\ \ \
				\frac{\vdash P_{1m}\sim P: B_m\Rightarrow C\ {\rm for\ every}\ m}{\vdash \mathbf{if}\ (\square m\cdot
					M_1[\overline{q}]=m\rightarrow P_{1m})\ \mathbf{fi}\sim P:
					\sum_m M_{1m}^{\dag}B_mM_{1m}\Rightarrow C}\\
				&({\rm LP\text{-}L}) \
				\frac{
					\begin{array}{c}
						\models \mathbf{while}\ \cM_1[\overline{q}]=1\ \mathbf{do}\ P_1\ \mathbf{od} \text{\ lossless}\quad
						\vdash P_1\sim \mathbf{skip}: B\Rightarrow M_{10}^\dag AM_{10} + M_{11}^\dag BM_{11}
				\end{array}}{
					\vdash \mathbf{while}\
					\cM_1[\overline{q}]=1\ \mathbf{do}\ P_1 \mathbf{od}\sim  \mathbf{skip}:
					M_{10}^\dag AM_{10} + M_{11}^\dag BM_{11}\Rightarrow A
				}
		\end{split}\end{equation*}
		\caption{One-sided rqPD rules. We omitted the right-sides rules, which are symmetric to the ones here.}\label{fig 4.5-0}
	\end{figure*}
	
	\begin{rem} Note that in rule (IF) the branches of two case statements are not required to match exactly. Whenever there is an one-to-one correspondence between the outcomes of measurement $\mathcal{M}_1$ and $\mathcal{M}_2$, then (IF) can be simplified to (IF-w) in Fig. \ref{fig 4.4-00}.
		\begin{figure*}[!h]
			\centering
			\begin{equation*}\begin{split}
					({\rm IF\text{-}w})\ \ \
					\frac{\vdash P_{1m}\sim P_{2m}: B_m\Rightarrow C\ {\rm for\ every}\ m\qquad \forall\ m:\ \ \models P_{1m}, P_{2m}\ {\rm lossless}}{\begin{array}{cc}\vdash \mathbf{if}\ (\square m\cdot
							M_1[\overline{q}]=m\rightarrow P_{1m})\ \mathbf{fi}\sim \mathbf{if}\ (\square m\cdot
							M_2[\overline{q}]=m\rightarrow P_{2m})\ \mathbf{fi}:\\
							\sum_m
							\left(M_{1m}^{\dag}\otimes M_{2m}^\dag\right)B_m\left(M_{1m}\otimes M_{2m}\right)\Rightarrow C\end{array}}
			\end{split}\end{equation*}
			\caption{A weak rule for case statements.}\label{fig 4.4-00}
		\end{figure*}
	\end{rem}
	
	\subsection{Measurement Conditions}\label{subsec-mea}
	
	The straightforward generalisations of the proof rules for case statements and loops in qPD given in the above subsection are not strong enough for more complicated applications of rqPD. In particular, they do not reveal the subtle differences between the relational and non-relational properties of quantum programs. 
	To understand this point, let us take a closer look at derivation of the judgment about two {\bf if} statements $Q_1$ and $Q_2$ in Fig. \ref{proofoutline-workingexample}. Let us first list all derivable judgments of possible combinations of branches as follows:
	\begin{enumerate}
		\item $\vdash q:=X[q] \sim q:=Z[q]: A_{00} \Rightarrow B$ and $\vdash q:=H[q] \sim q:=H[q]: A_{11} \Rightarrow B$;
		\item $\vdash q:=X[q]\sim q:=H[q]: A_{01}\Rightarrow B$ and $\vdash q:=H[q]\sim q:=Z[q]: A_{10}\Rightarrow B$
	\end{enumerate}
	where $A_{00}, A_{11}$ and $B$ are given as in Fig. \ref{proofoutline-workingexample} and 
	\begin{align*}
	A_{01}=\frac{1}{2}[I\otimes I+(X\otimes HH)S(X\otimes HH)],\quad \ A_{10}=\frac{1}{2}[I\otimes I+(H\otimes ZH)S(H\otimes HZ)].
	\end{align*}
	Applying rule (IF) directly we obtain:
	$
	\vdash Q_1 \sim Q_2:\ A \Rightarrow B,
	$
	where $$A = \sum_{i,j=0}^1(M_i\otimes M_j^\prime)^\dag A_{ij}(M_i\otimes M_j^\prime) = \left(
	\begin{array}{cccc}
	7/8 & 1/8 & 0 & 0 \\
	1/8 & 7/8 & 0 & 0 \\
	0 & 0 & 7/8 & -1/8 \\
	0 & 0 & -1/8 & 7/8 \\
	\end{array}
	\right).$$
	Then, using rule (UT-L), (UT-R) and (Init) for the rest parts of the programs, we are only able to derive 
	$\vdash P_1\sim P_2:\ \frac{7}{8}I\otimes I\Rightarrow\ =_{sym}.$
	However, $=_\B\ \not\sqsubseteq \frac{7}{8}I\otimes I$, so the rule (IF)
is too weak to derive judgment (\ref{ex-1.0}) as we desire. A similar argument
shows that rules (IF-L) and (IF-R) are also too weak.
%	let us consider a very simple example. 
%	
%	\begin{exam} Let $S$ be a trivial ${\bf while}$-loop defined as 
%		\begin{align*}
%			S \equiv {\bf while}\ M[q] = 1\ {\bf do\ skip\ od}
%		\end{align*}
%		where the measurement $M = \{M_0 = \frac{1}{\sqrt{2}}I,\ M_1 = \frac{1}{\sqrt{2}}I\}$. 
%		It is easy to show that $S$ is lossless, i.e., $S$ almost surely terminates with all possible input. Moreover, the output is exactly the same as the input, and therefore, $\models S\ \sim\ S: \ =_{sym} \Rightarrow\ =_{sym},$ where $=_\mathit{sym}$ is the projector onto the symmetric space [see Example \ref{exam-lifting} 3)]. However, such a simple relational property cannot be verified using the rules presented in the above subsection. 
%		In fact, if we use rule (LP) to prove the following judgment: 
%		$\vdash S\ \sim\ S: \ A \Rightarrow\ =_{sym},$ then it must hold that 
%		\begin{equation}\label{impossible}
%			A = (M_0\otimes M_0)^\dag =_{sym} (M_0\otimes M_0) + (M_1\otimes M_1)^\dag B (M_1\otimes M_1) 
%			\sqsubseteq  \frac{1}{4} =_{sym} + \frac{1}{4} I
%		\end{equation}
%		where $B\sqsubseteq I$ is some predicate used in the rule (LP). This implies that $A\neq\ =_{sym}$ because $=_{sym}\not\sqsubseteq\frac{1}{4} =_{sym} + \frac{1}{4} I.$ A similar argument shows that using rule (LP-L) and (LP-R) is also unable to prove (\ref{impossible}).
%	\end{exam}
	
	We have more examples (e.g., Example \ref{Bernoulli}) showing that some important relational properties cannot be verified simply using rules (IF) and (LP). The reason can be seen from the soundness proof of (IF) and (LP) \cite{fullversion}, where we only use a part of the output states to construct the coupling, so for a given postcondition, the derivable preconditions are sometimes too weak. To resolve this issue, we need to capture more relational information between two programs.  A crucial issue in developing inference rules for relational reasoning is to guarantee that two programs $P_1$ and $P_2$ under comparison execute in lockstep. In probabilistic relational Hoare logic, a side-condition $\Theta \Rightarrow e_1 = e_2$ is introduced for this purpose, where $\Theta$ is the precondition, $e_1$ and $e_2$ are the guards in $P_1$ and $P_2$, respectively. In the quantum case, branching (control flow) is determined by the measurement outcomes.
	So, more sophisticated rules for case analysis, loops, and conditionals in rqPD involve
	\emph{measurement conditions}.
	
	\begin{defn}\label{meas-eq} Let $\cM_1 = \{M_{1m}\}$ and $\cM_2 = \{M_{2m}\}$ be
		two measurements with the same set $\{m\}$ of possible outcomes in $\hs_{P_1}$
		and $\hs_{P_2}$, respectively, and let
		$\rho\in\mathcal{D}^\le\left(\mathcal{H}_{P_1\langle 1\rangle}\otimes
		\mathcal{H}_{P_2\langle 2\rangle}\right)$. Then we say that $\rho$ satisfies
		$\cM_1\approx \cM_2$, written $\rho\models \cM_1\approx \cM_2$, if
		$\cM_1$ and $\cM_2$ produce equal probability distributions
		when applied to $\rho$. That is,  for all $m$, we have: $\mathit{tr}(M_{1m}\mathit{tr}_{\langle 2\rangle}(\rho) M_{1m}^\dag) = \mathit{tr}(M_{2m}\mathit{tr}_{\langle 1\rangle}(\rho) M_{2m}^\dag).$
	\end{defn}
	
	Intuitively, the above measurement conditions mean that $P_1$ and $P_2$ enter the corresponding branches with the same probability (and thus execute in lockstep).
	
	The above definition is enough for relating measurements in case
	statements. But when dealing with loops, we have to consider the
	measurements in the loop guards together with the loop bodies. To address this issue, we further introduce the following definition:
	
	\begin{defn}\label{meas-p-eq} Let $P_1$ and $P_2$ be two programs, and let $\cM_1=\{M_{10},M_{11}\}$, $\cM_2=\{M_{20},M_{21}\}$ be boolean-valued measurements in $\hs_{P_1},\hs_{P_2}$, respectively. Then for any $\rho\in\mathcal{D}^\le\left(\mathcal{H}_{P_1\langle 1\rangle}\otimes \mathcal{H}_{P_2\langle 2\rangle}\right)$, we say that $\rho$ satisfies $(\cM_1,P_1)\approx (\cM_2,P_2)$, written $$\rho\models (\cM_1,P_1)\approx (\cM_2,P_2),$$
		if $\cM_1$ and $\cM_2$ produce equal probability distributions in iterations of $P_1$ and $P_2$, respectively; that is, for all $n\geq 0$:
		\begin{equation}\label{eq-prob}
			\mathit{tr}[\mathcal{E}_{10}\circ(\llbracket P_1\rrbracket\circ\mathcal{E}_{11})^n(\mathit{tr}_{\langle 2\rangle}(\rho))] = \mathit{tr}[\mathcal{E}_{20}\circ(\llbracket P_2\rrbracket\circ\mathcal{E}_{21})^n(\mathit{tr}_{\langle 1\rangle}(\rho))]
		\end{equation}
		where $\mathcal{E}_{ij}(\cdot) = M_{ij}(\cdot) M_{ij}^\dag$ for $i = 1,2$ and $j = 0,1$.
	\end{defn}
	
	In the above definition, equation (\ref{eq-prob}) is required to hold
	for all $n\ge0$ (and thus, for infinitely many $n$). But the next
	lemma shows that it can be actually checked within a finite number of
	steps when the state Hilbert spaces are finite-dimensional, as in our
	setting.  Therefore, an algorithm for checking the
	measurement condition $(\cM_1,P_1)\approx (\cM_2,P_2)$ can be
	designed and incorporated into the tools (e.g. theorem prover)
	implementing our logic in the future.
	
	\begin{lem}\label{prop-finite}
		\label{eq-fi} Let $d_i=\dim \mathcal{H}_{P_i}$ $(i=1,2)$. If (\ref{eq-prob}) holds for any $0\le n\le d_1^2+d_2^2-1$, then it holds for all $n\ge0$.
	\end{lem}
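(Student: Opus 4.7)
The plan is to observe that both sides of equation~(\ref{eq-prob}) are scalar sequences of the form ``a fixed linear functional applied to the iterates of a fixed superoperator'', and so each individually satisfies a finite-order linear recurrence. Their difference then satisfies a recurrence of order at most $d_1^2 + d_2^2$, and vanishing of that many consecutive terms forces vanishing for all~$n$.

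Concretely, write $\sigma_1 = \tr_{\langle 2\rangle}(\rho)$, $\sigma_2 = \tr_{\langle 1\rangle}(\rho)$, and set $\mathcal{T}_i = \sm{P_i}\circ\mathcal{E}_{i1}$ for $i=1,2$. The two quantities compared in~(\ref{eq-prob}) are then
$$a_n = \tr\bigl[\mathcal{E}_{10}(\mathcal{T}_1^n(\sigma_1))\bigr], \qquad b_n = \tr\bigl[\mathcal{E}_{20}(\mathcal{T}_2^n(\sigma_2))\bigr].$$
The first step is to note that each $\mathcal{T}_i$ is a linear endomorphism of the operator space $\mathcal{L}(\mathcal{H}_{P_i})$, whose complex dimension is $d_i^2$. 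Applying the Cayley--Hamilton theorem to a matrix representation of $\mathcal{T}_i$ yields a monic polynomial $p_i$ of degree $d_i^2$ with $p_i(\mathcal{T}_i) = 0$. Evaluating this operator identity at the argument $\sigma_i$, applying the linear map $\mathcal{E}_{i0}$, and taking the trace converts it into a scalar identity on the sequence $(a_n)$, showing that $a_n$ satisfies a linear recurrence with characteristic polynomial $p_1$; similarly $b_n$ satisfies one with characteristic polynomial $p_2$.

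Next, letting $\tau$ denote the shift $(\tau x)_n := x_{n+1}$ on scalar sequences, we have $p_1(\tau) a = 0$ and $p_2(\tau) b = 0$, hence $p_1(\tau)\, p_2(\tau)\, (a - b) = 0$. The polynomial $p_1 p_2$ has degree $d_1^2 + d_2^2$, so $c_n := a_n - b_n$ obeys a linear recurrence of that order. Any such sequence is uniquely determined by its first $d_1^2 + d_2^2$ values, so if $c_0 = c_1 = \cdots = c_{d_1^2+d_2^2-1} = 0$, which is exactly the hypothesis of the lemma, then $c_n = 0$ for every $n \ge 0$, yielding the conclusion.

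The only real subtlety I anticipate is cleanly transferring the operator identity $p_i(\mathcal{T}_i) = 0$ into the scalar recurrence satisfied by $a_n$; this step uses nothing deeper than the linearity of $\tr$ and of the fixed auxiliary map $\mathcal{E}_{i0}$, together with the commutation of polynomial expressions in $\mathcal{T}_i$ with its iterates. Everything else is standard linear algebra on sequence spaces, and the bound $d_1^2 + d_2^2$ is obtained without needing to identify the (possibly smaller) minimal polynomials of the~$\mathcal{T}_i$.
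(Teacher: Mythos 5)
Your proof is correct and follows essentially the same route as the paper's: both apply the Cayley--Hamilton theorem to the linear maps $\sm{P_i}\circ\mathcal{E}_{i1}$ viewed as endomorphisms of the $d_i^2$-dimensional operator spaces, and combine the two to get the bound $d_1^2+d_2^2$. The only cosmetic difference is that the paper vectorizes and works with the direct sum $\hat{A}_1\oplus\hat{A}_2$ acting on a single $(d_1^2+d_2^2)$-dimensional space, whereas you multiply the two characteristic polynomials and argue via a linear recurrence on the scalar difference sequence; these are the same argument in different clothing.
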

	
	Note that a branching structure appears after a measurement is
	performed. To describe it, we introduce the following:
	
	\begin{defn} \label{judgment-measurment} Let $\cM_1 = \{M_{1m}\}$ and $\cM_2 = \{M_{2m}\}$ be as in Definition \ref{meas-eq}, and let $A$ and $B_m$ be quantum predicates in $\mathcal{H}_{P_1\langle 1\rangle}\otimes \mathcal{H}_{P_2\langle 2\rangle}.$
		We define:
		$$\cM_1\approx \cM_2\models  A\Rightarrow \{B_m\}$$
		if for any $\rho\models \cM_1\approx \cM_2$, and for each $m$, there exists a coupling $\sigma_m$ for
		$\big\langle M_{1m}\mathit{tr}_{\langle 2\rangle}(\rho)M_{1m}^\dag,$ $M_{2m}\mathit{tr}_{\langle 1\rangle}(\rho)M_{2m}^\dag\big\rangle$
		such that
		\begin{equation}\label{eqem-judge}\mathit{tr}(A\rho)\le \mathit{tr}\Big(\sum_mB_m\sigma_m\Big).\end{equation}
	\end{defn}

	For the working example \ref{exam-1}, one may check $\cM\approx \cM^\prime \models I\otimes I\Rightarrow \{A_{00},A_{11}\}$ as shown in Fig. \ref{proofoutline-workingexample}. To see this, suppose $\rho\models\cM\approx \cM^\prime$. For $m=0$, $M_0\mathit{tr}_{\langle 2\rangle}(\rho)M_0 = p_0|0\>\<0|$ and $M_0^\prime\mathit{tr}_{\langle 1\rangle}(\rho)M_0^\prime = p_0|+\>\<+|$ with parameter $p_0 = \tr(M_0\mathit{tr}_{\langle 2\rangle}(\rho)M_0)$, and it is straightforward to check $\sigma_0 = p_0|0\>\<0|\otimes|+\>\<+|$ is a witness of lifting $(M_0\mathit{tr}_{\langle 2\rangle}(\rho)M_0){A_{00}}^\#(M_0^\prime\mathit{tr}_{\langle 1\rangle}(\rho)M_0^\prime)$, which leads to $\tr(A_{00}\sigma_0) = p_0$. Similar arguments hold for $m=1$, with $p_1 = \tr(M_1\mathit{tr}_{\langle 2\rangle}(\rho)M_1)$, witness $\sigma_1 = p_1|1\>\<1|\otimes|-\>\<-|$ and $\tr(A_{11}\sigma_1) = p_1$. Observe that $\tr(I\otimes I\rho) = \tr(\rho)$ and $\tr(\rho) = p_0+p_1$. Thus, we conclude that $\tr(I\otimes I\rho) = \tr(A_{00}\sigma_0) + \tr(A_{11}\sigma_1)$.

	A one-side variant of the above definition will also be useful.
	
	\begin{defn}\label{judgment-one-side-measurment}
		Let $\cM_1 = \{M_{1m}\}$, $A$ and $B_m$ be as in Definition \ref{judgment-measurment}. We define $$\cM_1\approx I_2\models  A\Rightarrow \{B_m\},$$ where $I_2$ stands for the identity operator in $\hs_{P_2}$,
		if for any $\rho\in\mathcal{D}^\le\left(\mathcal{H}_{P_1\langle 1\rangle}\otimes \mathcal{H}_{P_2\langle 2\rangle}\right)$,
		and for each $m$, there exist $\rho_{2m}\in\mathcal{D}^\le\left(\mathcal{H}_{P_2\langle 2\rangle}\right)$
		and a coupling $\sigma_m$ for
		$\cp{ M_{1m}\mathit{tr}_{\langle 2\rangle}(\rho)M_{1m}^\dag}{\rho_{2m}}$
		such that $\sum_m\rho_{2m} = \mathit{tr}_{\langle 1\rangle}(\rho)$ and
		\begin{equation}\mathit{tr}(A\rho)\le \mathit{tr}\Big(\sum_mB_m\sigma_m\Big).\end{equation}
	\end{defn}
	
	Similarly, we can define $I_1\approx \cM_2 \models A\Rightarrow \{B_m\}$, where $I_1$ is the identity operator in $\hs_{P_1}$.
	
	Now we are ready to present our new rules for case statements and loops in Fig. \ref{fig 4.5}. As pointed out in the Introduction,
	synchronous rules in non-probabilistic relational Hoare logic RHL
	and probabilistic logic pRHL for control-flow constructs
	(conditionals and loops) require that the two programs under
	comparison execute in lockstep. The control flows of quantum
	programs studied in this paper are determined by the outcome of
	measurements. Thus, measurement conditions $\cM_1\approx \cM_2$ and $(\cM_1,P_1)\approx
	(\cM_2,P_2)$ in our rules (IF1) and (LP1) and their one-side variants are introduced to warrant that the two programs
	execute in lockstep; more precisely, they enter the same branch in
	their control flows with equal probabilities. 
	
	Using rule (IF1), we are able to derive $\cM\approx\cM^\prime\vdash Q_1\sim Q_2: I\otimes I\Rightarrow B$ for our working example, shown in Fig. \ref{proofoutline-workingexample}. 
	Also in Example \ref{Bernoulli}, correctness of Quantum Bernoulli Factory is verified using (LP1) while (LP) is too weak to derive the desired judgment.
	
	\begin{figure*}\centering
		\begin{equation*}\begin{split}
				&({\rm IF1})\ \
				\frac{\cM_1\approx \cM_2\models A\Rightarrow\{B_m\}\quad  \vdash P_{1m}\sim P_{2m}: B_m\Rightarrow C\ {\rm for\ every}\ m}{\begin{array}{cc}\cM_1\approx \cM_2\vdash \mathbf{if}\ (\square m\cdot
						\cM_1[\overline{q}]=m\rightarrow P_{1m})\ \mathbf{fi}  \sim \mathbf{if}\ (\square m\cdot
						\cM_2[\overline{q}]=m\rightarrow P_{2m})\ \mathbf{fi}:
						A\Rightarrow C\end{array}}\\
				&({\rm LP1}) \
				\frac{\cM_1\approx \cM_2\models A\Rightarrow\{B_0,B_1\}\quad \vdash P_{1}\sim P_{2}: B_1\Rightarrow A}{(\cM_1,P_1)\approx (\cM_2,P_2)\vdash \mathbf{while}\
					\cM_1[\overline{q}]=1\ \mathbf{do}\ P_1 \mathbf{od}\sim    \mathbf{while}\
					\cM_2[\overline{q}]=1\ \mathbf{do}\ P_2\ \mathbf{od}:  A\Rightarrow B_0}\\
				&({\rm IF1\text{-}L})\ \ \
				\frac{\cM_1\approx I_2 \models A\Rightarrow\{B_m\} \quad  \vdash P_{1m}\sim P: B_m\Rightarrow C\ {\rm for\ every}\ m}{\vdash \mathbf{if}\ (\square m\cdot
					\cM_1[\overline{q}]=m\rightarrow P_{1m})\ \mathbf{fi} \sim P: A\Rightarrow C}\\
				&({\rm LP1\text{-}L})\ \ \
				\frac{\begin{array}{c}
						\models \mathbf{while}\ \cM_1[\overline{q}]=1\ \mathbf{do}\ P_1\ \mathbf{od} \text{\ lossless} \\
						\cM_1\approx I_2\models A\Rightarrow\{B_0,B_1\} \quad \vdash P_1\sim \mathbf{skip}: B_1\Rightarrow A
				\end{array}}{\vdash \mathbf{while}\
					\cM_1[\overline{q}]=1\ \mathbf{do}\ P_1\ \mathbf{od} \sim \mathbf{skip}:  A\Rightarrow B_0}\\
		\end{split}\end{equation*}
		\caption{More rules for case statements and loops. We omitted the right-sides rules, which are symmetric to the ones displayed here. In (LP1-L), $M_{10}$ and $M_{11}$ only apply on the Hilbert space of the left program, that is, e.g., $M_{10}$ is an abbreviation of $M_{10}\otimes I_2$.}\label{fig 4.5}
	\end{figure*}
	
	{\vskip 3pt}
	
	\textbf{Comparison between Rules (IF), (LP) and (IF1), (LP1)}: A careful comparison between the rules without and with measurement conditions is helpful for us to determine where the rules with measurement conditions are needed. \begin{enumerate}\item First, we notice that the appearance of the special case (IF-w) of (IF) is similar to (IF1). Indeed, whenever the measurement conditions are true and each branch is terminating, then (IF1) degenerates to (IF-w) provided we set: 
		$A =\sum_m \left(M_{1m}^{\dag}\otimes M_{2m}^\dag\right)B_m\left(M_{1m}\otimes M_{2m}\right).$
		However, this choice of $A$ is much weaker than the best possible choice. To see this, suppose $\rho$ is a coupling of inputs that satisfy the premises, and let $\rho_{1m} = M_{1m}\tr_2(\rho)M_{1m}^\dag$ 
		and $\rho_{2m} = M_{2m}\tr_1(\rho)M_{2m}^\dag$
		for all $m$. Actually, (IF-w) uses $\sum_m\rho_{1m}\otimes \rho_{2m}$ as part of the coupling of the output states to derive the precondition. However, this state represents only $1/d$ of the output in general, where $d$ is the dimension of the quantum register being measured. More precisely, the set $\left\{\left(M_{1m}\otimes M_{2m}\right )\right\}_m$ is a part of quantum measurement $\mathcal{M}_1\otimes \mathcal{M}_2 = \left\{\left(M_{1m}\otimes M_{2n}\right)\right\}_{m, n}$ and only contains about $1/d$ measurement operators of $\mathcal{M}_1\otimes\mathcal{M}_2$. Consequently, the trace of the coupling state (probability of occurrence) is smaller than possible, which leads to a weaker precondition.
		
		\item The above defect was remedied in the general rule (IF) by allowing all possible combinations $(m,n)$ rather than only diagonal $(m,m)$.  
		But there is another issue that sometimes prevents (IF) to derive relational properties as strong as those by (IF1). As can be seen in its soundness proof, (IF) simply relates two programs in a manner of tensor product, which does not captures the possible  correlation between these programs. Recall that in probabilistic logic pRHL, coupling was introduced to warrant that two programs be executed in a lockstep manner so that sharing randomness can be achieved. The rule (IF1) is proposed for the same purpose and can be used to reason about stronger relational properties of quantum programs, as shown in Example \ref{exam-1} as well as Example  \ref{Bernoulli}. On the other hand, whenever a strong correlation between two programs does not exist or is unnecessary for our purpose (see for instance, Example \ref{exam-01} - quantum one-time pad), we prefer to use (IF) because it is simpler. 
		
		\item The same argument applies to the rules (LP) and (LP1) for loops.
	\end{enumerate}
	
	\subsection{Separability Conditions}\label{subsec-sep}
	
	We now turn to the structural rules for our logic rqPD. The rules (Conseq), (Weaken) and (Case) of probabilistic relation Hoare logic (pRHL) can be straightforwardly generalised to the quantum setting and are shown in Fig. \ref{fig 4.6}. In (Conseq), we use the L\"owner order
	between quantum predicates (Hermitian operators) in place of boolean implication. The meanings of rules (Weaken) and (Case) are obvious.
	\begin{figure}[h]\centering
		\begin{equation*}\begin{split}
				&({\rm Conseq})\
				\frac{\Gamma\vdash P_1\sim P_2: A^\prime\Rightarrow B^\prime\ \ \ \ A\sqsubseteq A^\prime\ \ \ \ B^\prime\sqsubseteq B}{\Gamma\vdash P_1\sim P_2: A\Rightarrow B}\\
				&({\rm Weaken})\
				\frac{\Gamma\subseteq\Gamma^\prime\quad \Gamma\vdash P_1\sim P_2: A\Rightarrow B}{\Gamma^\prime\vdash P_1\sim P_2: A\Rightarrow B}\\
				&({\rm Case})\
				\frac{\Gamma\vdash P_{1}\sim P_{2}: A_i\Rightarrow B\ (i=1,...,n)\quad \{p_i\}\ {\rm is\ a\ probability\ distribution}}{\Gamma\vdash P_1\sim P_2: \sum_{i=1}^n p_iA_i\Rightarrow B}\\
				&({\rm Frame})\
				\frac{\Gamma\vdash P_1\sim P_2: A\Rightarrow B}{\Gamma\cup\left\{[V, \mathit{var}(P_1,P_2)]\right\} \vdash P_1\sim P_2:A\otimes C\Rightarrow B\otimes C}
		\end{split}\end{equation*}
		\caption{Structural rqPD rules.\ \ \ \ In (Conseq), $\sqsubseteq$ stands for the L\"{o}wner order between operators. In (Frame), $V\bigcap\mathit{var}(P_1,P_2)=\emptyset$ and $C$ is a quantum predicate in $\mathcal{H}_V$.}\label{fig 4.6}
	\end{figure} However, the (Frame) rule
	requires special care.
	
	A typical difficulty in reasoning about a quantum system is entanglement between
	its subsystems. The notions of bipartite separability and entanglement considered in Subsection \ref{sec-entangle} can be generalised to the case of more than two subsystems. A partial density operator $\rho$ in
	$\bigotimes_{i=1}^n\mathcal{H}_i$ is separable between $\mathcal{H}_i$ $(i=1,...,n)$ if there exist partial density operators $\rho_{mi}\in\mathcal{D}^\le(\mathcal{H}_i)$ such that $\rho=\sum_m  \left(\bigotimes_{i=1}^n \rho_{mi}\right)$.
	The following separability condition can be introduced to specify that certain entanglement is not provided (as a resource) or not allowed (e.g. between an adversary and a storage containing sensitive information).
	
	\begin{defn} Let $P_1, P_2$ be two programs and $\Sigma =[X_1,...,X_n]$ a partition of $\mathit{var}(P_1\langle 1\rangle)\cup \mathit{var}(P_2\langle 2\rangle)$. Then we say that a state $\rho\in\mathcal{D}^\le\left(\mathcal{H}_{P_1\langle 1\rangle}\otimes \mathcal{H}_{P_2\langle 2\rangle}\right)$ satisfies separability condition $\Sigma$, written $\rho\models\Sigma$, if $\rho$ is separable between $\mathcal{H}_{X_i}$ $(i=1,...,n)$.
	\end{defn} 
	
	With the above definition, we can define the (Frame) rule for quantum programs in Fig. \ref{fig 4.6} where a separability condition between the programs $P_1,P_2$ and the new predicate $C$.
	Recall that in probabilistic logic pRHL, the frame rule allows an assertion $C$
	to be carried from the precondition through to the postcondition. The validity
	of the frame rule is based on the assumption that the two programs $P_1$ and
	$P_2$ cannot modify the (free) variables in $C$; or mathematically speaking,
	$\mathit{var}(P_1,P_2)\cap\mathit{var}(C)=\emptyset$. In the quantum setting,
	however, the syntactic disjointness between $\mathit{var}(P_1,P_2)$ and $\mathit{var}(C)$ is not enough. Indeed, an entanglement can occur between them even if they are disjoint, and some properties of the subsystem denoted by the variables in $C$ can be changed by certain actions, say measurements of $P_1$ or $P_2$.
	So, the separability condition $\Gamma =[V,\mathit{var}(P_1,P_2)]$ is introduced in the conclusion part of the frame rule to exclude such an entanglement between $P_1,P_2$ and $C$, where $V$ is the set of quantum variables appearing in $C$.
	
	\subsection{Entailment between Side-Conditions}
	
	In the above two subsections, measurement and separability conditions are introduced into our logic rqPD. But the (SC) rule for sequential composition in Fig. \ref{fig 4.4-0} does not contain these conditions. 
	So, it must be accordingly strengthened to accommodate the propagation of measurement and separability conditions. To this end, we need the following:
	
	\begin{defn}\label{def-entail} Let $\Gamma,\Delta$ be two sets of measurement or separability conditions, and $P_1,P_2$ two programs. We say that $\Delta$ is couple-entailed by $\Gamma$ through $(P_1,P_2)$, written \begin{equation}\label{entail-form}\Gamma\stackrel{(P_1,P_2)}{\models}\Delta,\end{equation} if for any $\rho\models\Gamma$, whenever $\sigma$ is a coupling for $\cp{\sm{ P_1}\left(\mathit{tr}_{\<2\>}(\rho)\right)}{\sm{ P_2}\left(\mathit{tr}_{\<1\>}(\rho)\right)}$, then $\sigma\models\Delta$.
	\end{defn}
	
	Using the above definition, a strengthened version of (SC) is presented as rule (SC+) in Fig. \ref{fig 4.7}. 
	\begin{figure}\begin{align*}
			&({\rm SC+})\quad \frac{\Gamma\vdash P_1\sim P_2:A\Rightarrow B\qquad \Delta\vdash P_1^\prime\sim P_2^\prime:B\Rightarrow C\qquad \Gamma\stackrel{(P_1,P_2)}{\models}\Delta}{\Gamma\vdash P_1;P_1^\prime\sim P_2;P_2^\prime:A\Rightarrow C}
		\end{align*}
		\caption{Strong sequential rule}\label{fig 4.7}
	\end{figure}

	Let us go back to the working example \ref{exam-1}. 
	After a direct calculation, we are able to show
	$$\emptyset\stackrel{(q:=|0\>,q:=|0\>)}{\models}\cM^\prime\approx\cM^\prime\quad\text{and}\quad \cM^\prime\approx\cM^\prime\stackrel{(q:=H[q],{\bf skip})}{\models}\cM\approx\cM^\prime.$$
	Now, using rule (SC+) we can combine all the segment judgments shown in last few sections together to reason about judgment (\ref{ex-1.0}), as shown in Fig. \ref{proofoutline-workingexample}. 

	In general, it is not easy to use this rule because it is hard to check a side condition of the form (\ref{entail-form}).  
	However, in most of the applications, we do not need the full power of (SC+) because most of conditionals and loops can be dealt with by using (IF) and (LP) where no measurement conditions are introduced and thus $\Gamma=\Delta=\emptyset$, and 
	(IF1) and (LP1) are only employed for a few times. In particular, if we only need (IF1) or (LP1) to reason about a single conditional or loop, then $\Delta=\emptyset$ and side condition (\ref{entail-form}) is trivially valid; for instance, Example \ref{Bernoulli} is actually this case. The difficulty of applying (SC+) will arise only when (IF1) and (LP1) are needed to reason about many conditionals and loops. In the case of finite-dimensional state Hilbert spaces, for a large class of quantum programs, 
	this difficulty can be significantly eased by a back-tracking strategy for collecting a set of measurement or separability conditions at the beginning of a sequence of consecutive judgments in order to warrant that all side-conditions of the form (\ref{entail-form})  are valid. An elaboration of this strategy is given in \cite{fullversion}. 
	
	\subsection{Auxiliary Rules for General Quantum Operations}
	
	Our programming language only contains two simple kinds of quantum operations: unitary transformations and quantum measurements. Also, the post-measurement states are recorded in the semantics so that the dimension of the state Hilbert space is fixed. In applications, however, it is often convenient to apply general quantum operations; for example, quantum noises and communication channels.
	Principally,  a general quantum operation can be implemented by unitary transformations and measurements through introducing ancilla systems and discarding post-measurement states (see \cite{NC00}, Section 8.2.2 for the system-environment model). But the implementations are usually quite complicated. So, for convenience, we choose to expand the programming language by explicitly adding program constructs of the form:
	\begin{equation}\label{super-pro}P::=\overline{q}:=\mathcal{E}[\overline{q}]
	\end{equation} where $\mathcal{E}$ is a general quantum operation. Mathematically, $\mathcal{E}$ is modelled as trace-preserving super-operator with Hilbert space $\mathcal{H}_{\overline{q}}$ as its domain, but its codomain can be a different Hilbert space, even with a different dimension, e.g.  $\mathcal{H}_{\overline{q}\setminus\overline{q^\prime}}$ or $\mathcal{H}_{\overline{q}\cup\overline{q^\prime}}$. It is well-known that a super-operator $\mathcal{E}$ can be represented by a set of operators $\{E_i\}$ (Kraus operator-sum representation):
	$ \mathcal{E}(\rho)=\sum_i E_i\rho E_i^\dag$ for every state $\rho$ in $\mathcal{H}_{\overline{q}}$.
	
	The operational semantics of program $\overline{q}:=\mathcal{E}[\overline{q}]$ is defined by the following transition rule:
	$$\langle \overline{q}:=\mathcal{E}[\overline{q}],\rho\rangle\rightarrow\langle\downarrow,\mathcal{E}(\rho)\rangle.$$ Based on this, the denotational semantics of quantum programs containing super-operators can be defined in the same way as Definition \ref{den-sem-def}, provided allowing that the domain and codomain of the semantic function of a quantum program can be different.
	
	We present three inference rules for general quantum operations in Fig. \ref{fig 5.1}, generalising rules (UT), (UT-L) and (UT-R), respectively.
	
	\begin{figure}[h]\centering
		\begin{equation*}\begin{split}
				&({\rm SO})\ \vdash \overline{q_1}:=\mathcal{E}_1\left[\overline{q_1}\right]\sim \overline{q_2}:=\mathcal{E}_2\left[\overline{q_2}\right]: \left(\mathcal{E}_1^\ast\otimes \mathcal{E}_2^\ast\right) (A)\Rightarrow A\\
				&({\rm SO\text{-}L})\ \ \ \vdash \overline{q_1}:=\mathcal{E}_1\left[\overline{q_1}\right]\sim \mathbf{skip}: \mathcal{E}_1^\ast (A)\Rightarrow A
		\end{split}\end{equation*}
		\caption{Rules for trace-preserving super-operators (quantum operations). We use $\mathcal{E}^\ast$ to denote the dual of super-operator $\mathcal{E}$; that is, $\mathcal{E}^\ast(A)=\sum_i E_i^\dag A E_i$ if $\mathcal{E}(\rho)=\sum_i E_i\rho E_i^\dag$. Rule (SO-R) is symmetric to (SO-L).}\label{fig 5.1}
	\end{figure}
	
	\begin{rem}It is worth mentioning that allowing different dimensions of the domain and codomain of $\mathcal{E}$ in (\ref{super-pro}) has a benefit; that is, it enables us to introduce auxiliary quantum variables or discard a quantum variable.
		The construct of introducing auxiliary variables can be defined as cylinder extension, i.e. tensor product with the identity operator of the state Hilbert space of the auxiliary variables (divided by its dimension for normalisation), and the construct of discarding a quantum variable $q\in\overline{q}$ is indeed included in Selinger's quantum programming language QPL \cite{Selinger04}. It can be defined as a partial trace $\mathbf{Tr}[q]$, with its semantics described as a special super-operator: $\mathcal{E}(\rho)=\sum_i \langle i|\rho|i\rangle\in\mathcal{D}(\mathcal{H}_{\overline{q}\setminus\{q\}})$ for any $\rho\in\mathcal{D}(\mathcal{H}_{\overline{q}})$, where $\{|i\rangle\}$ is an orthonormal basis of $\mathcal{H}_q$. Then rules (SO) and its one-side variants (SO-L), (SO-R) warrant that introducing auxiliary variables and discarding a variable can be safely done in relational reasoning.   
	\end{rem}
	
	\subsection{Soundness Theorem}
	
	We can prove that our proof system is sound with respect to validity
	of judgments. The proof of soundness is given in the complete version of this paper \cite{fullversion}.
	\begin{thm}[Soundness]\label{soundness-thm}
		Derivable judgments are valid:
		$$\Gamma\vdash P_1\sim P_2:A
		\Rightarrow B\ \Longrightarrow \ \Gamma\models P_1\sim P_2:A
		\Rightarrow B$$
	\end{thm}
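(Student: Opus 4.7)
The plan is to prove soundness by structural induction on the derivation of $\Gamma \vdash P_1 \sim P_2 : A \Rightarrow B$. For each inference rule I would assume that all premise judgments are valid and that any semantic side-hypotheses (losslessness, measurement conditions) hold for the input $\rho \models \Gamma$, and then exhibit a coupling $\sigma$ of $\cp{\sm{P_1}(\tr_{\langle 2\rangle}\rho)}{\sm{P_2}(\tr_{\langle 1\rangle}\rho)}$ that witnesses the inequality $\tr(A\rho) \le \tr(B\sigma) + \tr(\rho) - \tr(\sigma)$ required by Definition \ref{valid-ju}.

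For the atomic rules (Skip), (Init), (UT), (SO), the coupling is constructed directly from the deterministic action of the construct on any starting coupling of the input marginals; the required trace inequality then reduces to a matrix calculation using that $\mathcal{E}^\ast$ is the Hilbert--Schmidt adjoint of $\mathcal{E}$. The one-sided variants arise by taking the identity super-operator on the untouched side. Among the structural rules, (Conseq) and (Weaken) are immediate from the definition of validity, (Case) follows by taking the convex combination $\sum_i p_i \sigma_i$ of per-assertion couplings, and (Frame) uses the separability hypothesis $[V, \mathit{var}(P_1,P_2)]$ to decompose the input as $\sum_k \tau_k \otimes \xi_k$ with $\tau_k$ supported over $V$ and $\xi_k$ over the program variables, apply the inductive hypothesis to each $\xi_k$, and reassemble by tensoring with $\tau_k$ and $C$; the identity $\tr((A\otimes C)(\xi\otimes\tau)) = \tr(A\xi)\cdot\tr(C\tau)$ then delivers the needed bound.

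The control-flow rules are more delicate. For (SC+), given premise couplings $\sigma_1$ and $\sigma_2$ obtained by applying the inductive hypothesis in turn, the composite coupling is simply $\sigma_2$; the entailment premise $\Gamma \stackrel{(P_1,P_2)}{\models} \Delta$ ensures $\sigma_1 \models \Delta$ so the second inductive hypothesis indeed applies, and chaining the two trace inequalities together with Proposition \ref{prop-treq} yields the desired bound. For (IF) and (IF-L) the coupling is built branch-by-branch as the sum of tensor products $\rho_{1m}\otimes\rho_{2n}$, following the coupling construction discussed in the comparison of (IF) with (IF1). For (IF1) and (IF1-L) one instead invokes the measurement condition $\cM_1 \approx \cM_2$ together with Definition \ref{judgment-measurment} to obtain per-branch couplings $\sigma_m$ of $\cp{M_{1m}\tr_{\langle 2\rangle}(\rho)M_{1m}^\dag}{M_{2m}\tr_{\langle 1\rangle}(\rho)M_{2m}^\dag}$, push each through the corresponding branch program using the inductive hypothesis, and take the sum.

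The main obstacle will be the loop rules (LP) and especially (LP1), since $\sm{\mathbf{while}\ \cM[\overline{q}]=1\ \mathbf{do}\ P\ \mathbf{od}}$ is an infinite sum over numbers of iterations and the witnessing coupling must be assembled as a limit of couplings for finitely-unrolled approximations. For (LP), I would unfold the loop $n$ times, construct a coupling for the $n$-step unfolding by iterating the (IF) and (SC) constructions with $B$ as a relational loop invariant, and pass to the limit, using losslessness together with compactness of the set of partial density operators with bounded trace to extract an accumulation point serving as the witness. For (LP1) there is the additional subtlety that the condition $(\cM_1,P_1) \approx (\cM_2,P_2)$, which by Lemma \ref{prop-finite} need only be verified for finitely many $n$, must be propagated through each unfolding so that the per-step (IF1)-style construction remains applicable at every level; showing that the invariant is preserved by the loop body under this propagated side-condition, and that the limit coupling still lies in $\mathcal{D}^\le(\mathcal{H}_{P_1\langle 1\rangle}\otimes\mathcal{H}_{P_2\langle 2\rangle})$ and satisfies the required trace inequality, is where the bulk of the technical work will lie.
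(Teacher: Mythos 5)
Your overall strategy---rule-by-rule induction on the derivation, direct coupling constructions for the atomic rules, chaining of couplings for (SC+) via the entailment side-condition, convex combinations for (Case), separable decomposition and reassembly for (Frame), per-branch couplings for the conditionals, and iterated couplings passed to a limit for the loops---is the same as the paper's proof, and the point you flag as the main obstacle (propagating the measurement condition $(\cM_1,P_1)\approx(\cM_2,P_2)$ through every unrolling so that the per-step construction applies at each level) is indeed exactly how the paper handles (LP1).

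There is, however, one genuine gap in your treatment of (IF) and (LP). The sum of per-branch (resp.\ per-iteration) couplings you describe is only a \emph{sub}-coupling: its marginals are dominated by, but in general not equal to, the true outputs. For (IF) this is because $S$ need not contain all pairs $(m,n)$, and because for a joint (possibly entangled) input one only has $\sum_{n}\tr_{\langle 2\rangle}\big[(M_{1m}\otimes M_{2n})\rho(M_{1m}\otimes M_{2n})^\dag\big]\sqsubseteq M_{1m}\tr_{\langle 2\rangle}(\rho)M_{1m}^\dag$ (and strict inequality once pairs are dropped); likewise your ``sum of tensor products $\rho_{1m}\otimes\rho_{2n}$'' has first marginal $\sum_{(m,n)\in S}\tr(\rho_{2n})\,\rho_{1m}\neq\sum_m\rho_{1m}$ in general. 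Since Definition \ref{valid-ju} demands an exact coupling of $\sm{P_1}(\tr_{\langle 2\rangle}\rho)$ and $\sm{P_2}(\tr_{\langle 1\rangle}\rho)$, the object you construct does not witness validity as it stands. The paper isolates the needed operator inequality as Lemma \ref{part-measure} and then \emph{completes} the sub-coupling by adding the product term $\delta_1\otimes\delta_2/\tr(\delta_1)$ of the two marginal deficits; the losslessness premises of (IF) and (LP) are consumed precisely here, to guarantee $\tr(\delta_1)=\tr(\delta_2)$ so that this correction is itself a coupling of the deficits, and since $0\sqsubseteq C$ the correction can only increase $\tr(C\,\cdot)$, preserving the inequality. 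In your plan losslessness is invoked only for convergence of the loop unrolling, so this second, essential use is missing; the same completion step is needed at the end of the (LP) argument after forming $\tau=\sum_n M_0\sigma_n M_0^\dag$.
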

	
	Completeness of relational logics is a challenging problem. In the deterministic setting, relational Hoare logic can be shown to be relatively complete for terminating programs provided it includes sufficiently many one-sided rules. Relative completeness fails for probabilistic relational Hoare logic; a further potential complication is that the coupling method --- upon which probabilistic relational Hoare logic builds --- is itself not complete for proving convergence Markov chains \cite{AKR01}.
	
	\section{Examples}\label{sec-examples}
	
	In this section, we give several examples to illustrate the power of rqPD. We
	mainly show how their relational properties can be formally specified
	as rqPD judgments. Due to the limited space, only a proof outline of the
first example is given, and the formal derivations of other judgments are
deferred to \cite{fullversion}.
	
	\subsection{Symmetry between Simple Programs}
	Let us start from our working example \ref{exam-1}. 
	In last section, we already proved judgment (\ref{ex-1.0}) in our logic rqPD. 
	A symmetry between programs
	$P_1$ and $P_2$ modelled by judgment (\ref{ex-1.0}) is more interesting than their similarity we can observe at the first glance. It worth noting that two different kinds
	of \textquotedblleft quantum equality\textquotedblright\ are used in
	the precondition and postcondition. To understand this judgment better, let recall Proposition \ref{prop-equal}, which give us an intuition of what $=_\mathcal{B}$ and $=_\mathit{sym}$ mean. The judgment tell us that, if the inputs of $P_1$ and $P_2$ are the same, then the outputs are also same, or in other words, program $P_1$ and $P_2$ are actually equal.
	
	\begin{rem}
		As discussed before, rule (IF1) is necessary to derive judgment (\ref{ex-1.0}) while using rule (IF-w) or more general (IF) is impossible to prove it. A strong correlation between two programs can be detected only if we run them in a lockstep manner. This is why rule (IF1) works. However, (IF) only requires that two programs run simultaneously while lockstep is not guaranteed. Therefore, it is not surprising to see (IF) fails here.
	\end{rem}

	\subsection{Uniformity}\label{sec-uniform}
	
	An elegant characterisation of uniform probability distribution with
	coupling was given in \cite{BartheEGHS17}. Unfortunately, the
	characterisation does not directly carry
	over to the quantum setting. In this subsection, we show how an
	alternative approach based on quantum coupling can be used to describe
	uniformity in quantum systems.  Let $\mathcal{H}$ be a Hilbert space
	and $\mathcal{B}=\{|i\rangle\}$ be an orthonormal basis of
	$\mathcal{H}$. For each $i$, we write $M_i=|i\rangle\langle i|$. Then
	the measurement in basis $\mathcal{B}$ is defined as
	$\cM_\mathcal{B}=\{M_i\}$.
	
	\begin{defn}
		A density operator $\rho$ in $\mathcal{H}$ with $d=\dim\mathcal{H}$ is called
		\emph{uniform} in basis $\mathcal{B}$ if the outcome of measurement
		$\cM_\mathcal{B}$ on $\rho$ is uniformly distributed; i.e. for every $i$,
		$$p_i=\mathit{tr}(M_i\rho)=\langle i|\rho|i\rangle=\frac{1}{d}.$$
	\end{defn}
	
	The following proposition gives a characterization of uniformity of a program's outputs in terms of quantum coupling.
	
	\begin{prop}[Uniformity by coupling]\label{uniformity}
		Let $P$ be a quantum program, $\mathcal{B}=\{|i\rangle\}$ be an orthonormal
		basis of $\mathcal{H}_P$ and $d=\dim\mathcal{H}_P$. Then the following three
		statements are equivalent:
		\begin{enumerate}
			\item for any input density operator $\rho$ in $\mathcal{H}_P$, output $\llbracket P\rrbracket (\rho)$ is uniform in basis $\mathcal{B}$;
			\item for any basis state $|i\rangle$ in $\mathcal{B}$, \begin{equation}\label{e-unif}\models P\sim P: \frac{I\otimes I}{d}\Rightarrow |i\rangle\langle i|\otimes I\end{equation}
			where $I$ is the identity operator in $\mathcal{H}_P$;
			\item for any basis state $|i\rangle$ in
			$\mathcal{B}$, \begin{equation}\label{eq-unif-1}[\mathit{var}(P\langle
				1\rangle),\mathit{var}(P\langle 2\rangle)]\models P\sim P \mathrel{:}
				=^e_\mathcal{C}\ \Rightarrow |i\rangle\langle i|\otimes
				I\end{equation} where $\mathcal{C}=\{|j\rangle\}$ is an arbitrary
			orthonormal basis of $\mathcal{H}_P$, and the
			equality operator $\mathit{=}^e_\mathcal{C}$   is defined to be
			$|\Phi\rangle\langle \Phi|$, where $\Phi$ is the maximally entangled state
			$|\Phi\rangle=\frac{1}{\sqrt{d}}\sum_j |jj\rangle.$
			More precisely, $\mathit{=}^e_\mathcal{C}$ is (the projection onto)
			the one-dimensional subspace spanned by the maximally entangle state
			$|\Phi\rangle$. It is interesting to see that in judgment
			(\ref{eq-unif-1}), a separability condition is enforced on inputs,
			but entanglement appears in the precondition
			$\mathit{=}^e_\mathcal{C}$. This suggests that entanglement
			cannot be avoid in such a characterisation of uniformity.
	\end{enumerate}\end{prop}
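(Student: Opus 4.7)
The plan is to prove the three statements equivalent via the cycle $(1) \Rightarrow (2) \Rightarrow (3) \Rightarrow (1)$. Note first that uniformity in (1) forces $\sum_i p_i = 1$, i.e.\ $\mathit{tr}(\sm{P}(\rho_0)) = 1$ for every density $\rho_0$, so $P$ is automatically lossless; I will use this throughout, so that every coupling $\sigma$ of the outputs satisfies $\mathit{tr}(\sigma) = \mathit{tr}(\rho)$ and the residual terms $\mathit{tr}(\rho) - \mathit{tr}(\sigma)$ in inequality~(\ref{eqe-judge}) vanish.

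For $(1) \Rightarrow (2)$, given any $\rho$ with marginals $\rho_1 = \mathit{tr}_{\<2\>}(\rho)$ and $\rho_2 = \mathit{tr}_{\<1\>}(\rho)$, I take $\sigma$ to be the rescaled independent coupling $\sm{P}(\rho_1)\otimes\sm{P}(\rho_2)/\mathit{tr}(\rho)$ (and $0$ when $\mathit{tr}(\rho)=0$). Losslessness makes $\sigma$ a valid coupling for $\cp{\sm{P}(\rho_1)}{\sm{P}(\rho_2)}$ with trace $\mathit{tr}(\rho)$, and uniformity gives $\mathit{tr}((|i\>\<i|\otimes I)\sigma) = \<i|\sm{P}(\rho_1)|i\> = \mathit{tr}(\rho)/d$, which matches the left-hand side of~(\ref{eqe-judge}). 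For $(2) \Rightarrow (1)$, I plug the product state $\rho = \rho_0\otimes\rho_0$ into~(\ref{e-unif}) for an arbitrary density $\rho_0$: the inequality collapses to $1/d \le \<i|\sm{P}(\rho_0)|i\>$ for every $i$, and summing over $i$ with $\mathit{tr}(\sm{P}(\rho_0))=1$ forces equality, which is exactly uniformity in basis $\mathcal{B}$.

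For $(2) \Rightarrow (3)$, the essential observation is $|\<\Phi|\psi_1,\psi_2\>|^2 \le 1/d$ for every product pure state (Cauchy--Schwarz after expanding $|\Phi\>$ in the basis $\mathcal{C}$). Convexity then yields $\mathit{tr}((=^e_\mathcal{C})\,\rho) \le \mathit{tr}(\rho)/d = \mathit{tr}((I\otimes I/d)\,\rho)$ for every separable $\rho$, so (3) follows from (2) with the same coupling $\sigma$. For $(3) \Rightarrow (1)$, the key trick is to feed in the separable input $\rho = |\psi\>\<\psi|\otimes|\psi^\ast\>\<\psi^\ast|$, where $|\psi^\ast\> = \sum_j \overline{\<j|\psi\>}\,|j\>$ is the complex conjugate of $|\psi\>$ in basis $\mathcal{C}=\{|j\>\}$. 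A direct expansion gives $\<\Phi|\psi,\psi^\ast\> = \frac{1}{\sqrt{d}}\sum_j \<j|\psi\>\overline{\<j|\psi\>} = 1/\sqrt{d}$, so $\mathit{tr}((=^e_\mathcal{C})\,\rho) = 1/d$, saturating the precondition. The same summing argument as in $(2) \Rightarrow (1)$ yields uniformity of $\sm{P}(|\psi\>\<\psi|)$ for every pure $|\psi\>$, and linearity of $\sm{P}$ extends it to arbitrary densities.

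The main obstacle is the implication $(3) \Rightarrow (1)$: the separability side condition rules out inputting the maximally entangled state, which is the only state on which $=^e_\mathcal{C}$ attains the value $1$, so the argument has to exhibit a separable input on which the precondition still reaches $1/d$ rather than something strictly smaller. The conjugate-pair construction $|\psi\>|\psi^\ast\>$ achieves precisely that; discovering this symmetry between a pure state and its conjugate (in the prescribed basis $\mathcal{C}$) is the only nontrivial step, and once it is found the remaining calculations mirror those used in $(2) \Rightarrow (1)$.
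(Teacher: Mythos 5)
Your proof is essentially the paper's. The two key ideas coincide exactly: for separable inputs you bound the precondition of (3) by $\tr\big((=^e_\mathcal{C})(|\alpha\>\<\alpha|\otimes|\beta\>\<\beta|)\big)=\frac{1}{d}|\<\bar\alpha|\beta\>|^2\le\frac{1}{d}$, which is the paper's inequality verbatim; and for $(3)\Rightarrow(1)$ your conjugate pair $|\psi\>\otimes|\psi^\ast\>$ is precisely the paper's choice $\rho_2=\rho_1^T$ for pure $\rho_1$, saturating the precondition at $1/d$. Routing the cycle through $(2)\Rightarrow(3)$ rather than proving $(1)\Rightarrow(3)$ directly is a cosmetic reorganization (and is fine, since the precondition of (3) is pointwise dominated by that of (2) on separable states, so the same output coupling works); the $(1)\Leftrightarrow(2)$ arguments are identical to the paper's.

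One caveat on your opening move: declaring that losslessness "will be used throughout" is circular in the backward directions $(2)\Rightarrow(1)$ and $(3)\Rightarrow(1)$, where (1) is not yet available. There the validity inequality only yields $\frac{1}{d}\le\<i|\sm{P}(\rho_0)|i\>+\tr(\rho)-\tr(\sigma)$, and summing over $i$ no longer forces equality unless the slack term is first eliminated; as written, your argument does not close that step. You should be aware, however, that the paper's own proof silently drops the same $\tr(\rho)-\tr(\sigma)$ term in both backward directions, so on this point your attempt faithfully reproduces the paper's reasoning rather than departing from it.
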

	
	Now, we consider uniformity for a concrete quantum protocol. The Bernoulli factory (BF) \cite{CBF} is a protocol for random number generation. It uses a coin with an unknown probability $p$ of heads to simulate a new coin that has probability $f(p)$ of heads for a given function $f:[0,1]\to[0,1]$. The Quantum Bernoulli factory (QBF) \cite{DJR15} also generates classical randomness (e.g., a biased coin with probability $f(p)$), but it uses quantum coins instead of classical coins. Interestingly, QBF can simulate a strictly larger class of functions $f$ than those simulated by BF. As a direct application of the above proposition, we can verify a simplified version of quantum Bernoulli factory in our logic.
	
	\begin{exam}[Simplified Quantum Bernoulli Factory]\label{Bernoulli}
		Suppose we have a two-qubit system with state Hilbert space $\mathcal{H}_{q_x}\otimes\mathcal{H}_{q_y}$ and an initial state $|0\rangle_{q_x}|0\rangle_{q_y}$. We are allowed to perform projective measurement $\cM = \{M_0,M_1\}$:
		\begin{align*}
			M_0 = |0\>_{q_x}\<0|\otimes|1\>_{q_y}\<1|+|1\>_{q_x}\<1|\otimes|0\>_{q_y}\<0|,\quad 
			M_1 = |0\>_{q_x}\<0|\otimes|0\>_{q_y}\<0|+|1\>_{q_x}\<1|\otimes|1\>_{q_y}\<1|
		\end{align*}
		and apply a given---but unknown---one-qubit unitary transformation $U$ such that $0<|\<0|U|0\>|<1$ on system $x$ or $y$.\footnote{In the classical BF, this condition means that the coin must be non-trivial---it cannot always return $0$ or always return $1$.} How can we produce the uniform state $\frac{1}{2}I_{q_x}$? The following quantum program accomplishes this task:
		\begin{align*}
			{\rm QBF}\equiv\ &q_x:=|0\rangle;\ q_y:=|0\rangle;\ \mathbf{while}\ \cM[q_x,q_y]=1\ \mathbf{do} \  q_x:=U[q_x];\ q_y:=U[q_y] \  \mathbf{od}; \ \mathbf{Tr}[q_y]
		\end{align*}
		where ${\bf Tr}$ stands for the partial trace over system $q_y$.
	\end{exam}
	
	Note that state $\frac{1}{2}I_{q_x}$ is the only density operators being uniform in any orthonormal basis $\mathcal{B}$. With Proposition \ref{uniformity}, QBF can be verified by proving that for any $|\psi\>\in \mathcal{H}_{q_x}$:
	\begin{equation}
		\label{ju-uniform}
		\models {\rm QBF}\sim{\rm QBF}: \frac{1}{2}I_{q_x}\otimes I_{q_y}\otimes I_{q_x^\prime}\otimes I_{q_y^\prime} \Rightarrow |\psi\>_{q_x}\<\psi|\otimes I_{q_x^\prime}.
	\end{equation}
	Since this judgment is valid for all $|\psi\>$, the output is uniform in all basis so the output state must be $\frac{1}{2}I_{q_x}$.
	
	It is worth pointing out that rule (LP1) plays an essential role in the proof. All registers are initialised before the loop and therefore, we are able to run two of the same QBF in a lockstep manner. Thus, rule (LP) is too weak to derive judgment (\ref{ju-uniform}).
	Rule (SO) is also needed in the verification of (\ref{ju-uniform}) because $\mathbf{Tr}[q_y]$ appears at the end of QBF. Indeed, if we do not trace out system $q_y$ at the end, then QBF outputs the Bell state $\frac{1}{\sqrt{2}}(|0\>_{q_x}|1\>_{q_y}+|1\>_{q_x}|0\>_{q_y}).$ This fact can also be realized in our logic. Moreover, it implies that our program QBF is not a trivial generalisation of classical Bernoulli factory because it is capable of producing the maximally entangled state.
	
	\subsection{Quantum Teleportation}
	
	Now we consider a more sophisticated example. Quantum teleportation \cite{Tel93} is arguably the most famous quantum communication protocol. With it, quantum information (e.g. the exact state of an atom or a photon) can be transmitted from one location to another, only through classical communication, but with the help of previously shared entanglement between the sender and the receiver. The correctness of quantum teleportation has been formally verified by several different methods in the literature, e.g. using categorical formalism of quantum mechanics \cite{AB04}. Our logic provides a new way for verifying the correctness of quantum teleportation; more importantly, it can be used to verify the reliability of quantum teleportation against various kinds of quantum noise. To the best of our knowledge, this is the first formal verification of its reliability.
	
	\begin{exam}Suppose that Alice possesses two qubits $p,q$ and Bob possesses qubit $r$, and there is entanglement, i.e. the EPR (Einstein-Podolsky-Rosen) pair: $|\beta_{00}\rangle=\frac{1}{\sqrt{2}}(|00\rangle+|11\rangle)$ between $q$ and $r$. Then Alice can send an arbitrary qubit state $|\psi\rangle=\alpha|0\rangle+\beta|1\rangle$ to Bob, i.e. from $p$ to $r$, by two-bit classical communication (for detailed description, see \cite{NC00}, Section 1.3.7). If we regard $p$ as the input state and $r$ the output state, then this protocol can be modeled by a quantum program:
		\begin{align*}
			{\rm QTEL}\equiv\ &q:=|0\rangle;\ r:=|0\rangle;\ q:=H[q];\ q,r:= {\rm CNOT}[q,r];\ p,q:=\ {\rm CNOT}[p,q];\ 
			p:=H[p];\\
			&\mathbf{if}\ (\cM[q]= 0\rightarrow \mathbf{skip}\ \square\ 1\rightarrow r:=X[r])\ \mathbf{fi};\\ 
			&\mathbf{if}\ (\cM[p]=0\rightarrow \mathbf{skip}\ \square\ 1\rightarrow r:=Z[r])\ \mathbf{fi}
		\end{align*}
		where $H$ is the Hadamard gate, $X$ and $Z$ are the Pauli gates, ${\rm CNOT}$ is the controlled-NOT:
		$${\rm CNOT}=\left(\begin{array}{cc}I & 0\\ 0 & X\end{array}\right),$$ $I, 0$ are the $2\times 2$ unit and zero matrices, respectively, and $\cM$ is the measurement in the computational basis, i.e. $\cM=\{M_0,M_1\}$, where $M_0=|0\rangle\langle 0|$, $M_1=|1\rangle\langle 1|$.\end{exam}
	
	\subsubsection{Correctness of Quantum Teleportation}
	\label{sec-correct-QTEL}
	In this subsection, we show how our logic can be used to verify the correctness of quantum teleportation. The correctness of QTEL can be described as the judgment: \begin{equation}\label{ju-tele}\models{\rm QTEL\sim \bf{skip}}: (=_\mathcal{B}) \Rightarrow (=_\mathcal{B}),\end{equation} where $\mathcal{B}=\{|\psi\rangle,|\phi\rangle\}$ is an arbitrary othornormal basis of the state Hilbert space of a qubit,  and $(=_\mathcal{B}) = |\psi\rangle|\psi\rangle\langle\psi|\langle\psi| + |\phi\rangle|\phi\rangle\langle\phi|\langle\phi|$ is the projector onto the subspace $\spans\{|\psi\rangle|\psi\rangle,|\phi\rangle|\phi\rangle\}$ [see Example \ref{exam-lifting} (2)].
	Indeed, for any input states $\rho$, there always exists an orthonormal basis $\mathcal{B}$ such that $\rho(=_{\mathcal{B}})^\#\rho$, and we assume that a witness for this lifting is $\sigma$. From judgment (\ref{ju-tele}), we know that there exists a coupling $\sigma^\prime$ for $\cp{\sm{ {\rm QTEL}}(\rho)}{\sm{ {\bf skip}}(\rho)}$ such that
	$\mathit{tr}(=_\mathcal{B}\sigma^\prime)\ge\mathit{tr}(=_\mathcal{B}\sigma) = 1.$
	So, $\sigma^\prime$ is a witness of lifting: $\llbracket {\rm QTEL}\rrbracket(\rho)(=_{\mathcal{B}})^\#\llbracket {\bf skip}\rrbracket(\rho),$ which, together with Proposition \ref{prop-equal}, implies $\llbracket {\rm QTEL}\rrbracket(\rho) = \llbracket {\bf skip}\rrbracket(\rho) = \rho.$
	
	Interestingly, the correctness of QTEL can also be described as the following judgment: \begin{equation}\label{ju-tele1}\models{\rm QTEL\sim \bf{skip}}: (=_\mathit{sym}) \Rightarrow (=_\mathit{sym})\end{equation} using a different equality $=_\mathit{sym}$, that is, the projector onto the symmetric subspace [see Example \ref{exam-lifting} (3)]. A similar argument shows that for any input $\rho$, we have $\llbracket {\rm QTEL}\rrbracket(\rho) = \llbracket {\bf skip}\rrbracket(\rho) = \rho$.
	
	The proof of these two judgments are somewhat easy. Unlike the previous two examples, the basic construct-specific rule (IF-L) is enough to derive the results. 
	
	\subsubsection{Reliability of Quantum Teleportation}\label{tele-reli}
	
	In this subsection, we further show that our logic can be used to deduce not only correctness but also reliability of quantum teleportation when its actual implementation suffers certain physical noise.
	
	Quantum noise are usually modelled by super-operators, a more general class of quantum operations than unitary transformations.
	
	\begin{exam}[Noise of Qubits, \cite{NC00}, Section 8.3]
		\begin{enumerate}
			\item[]
			\item The bit flip noise flips the state of a qubit from $|0\rangle$ to $|1\rangle$ and vice versa with probability $1-p$, and can be modelled by super-operator:
			\begin{equation}\label{flips}\mathcal{E}_{\it BF}(\rho)=E_0\rho E_0+E_1\rho E_1\end{equation} for all $\rho$, where
			\begin{align*}
				E_0 = \sqrt{p}I=\sqrt{p}\left(\begin{array}{cc}1 &0\\ 0 &1\end{array}\right) \quad
				E_1 = \sqrt{1-p}X=\sqrt{1-p}\left(\begin{array}{cc}0 &1\\ 1 &0\end{array}\right) .
			\end{align*}
			\item The phase flip noise can be modelled by the super-operator $\mathcal{E}_{\it PF}$ with
			\begin{align*}
				E_0 = \sqrt{p}I=\sqrt{p}\left(\begin{array}{cc}1 &0\\ 0 &1\end{array}\right) \quad
				E_1 = \sqrt{1-p}Z=\sqrt{1-p}\left(\begin{array}{cc}1 &0\\ 0 & -1\end{array}\right) .
			\end{align*}
			\item The bit-phase flip noise is modelled by the super-operator $\mathcal{E}_{\it BPF}$ with
			\begin{align*}
				E_0 = \sqrt{p}I=\sqrt{p}\left(\begin{array}{cc}1 &0\\ 0 &1\end{array}\right) \quad
				E_1 = \sqrt{1-p}Y=\sqrt{1-p}\left(\begin{array}{cc}0 &-i\\ i & 0\end{array}\right) .
			\end{align*}
			where $X,Y, Z$ are Pauli matrices.
	\end{enumerate}\end{exam}
	
	We sometimes write the bit flip super-operators as $\mathcal{E}_{\it BF}(p)$ in order to explicitly specify the flip probability $p$. The same convention is applied to the phase flip $\mathcal{E}_{\it PF}$ and bit-phase flip $\mathcal{E}_{\it BPF}$.
	
	\begin{exam} If the bit flip noise occurs after the Hadamard gates on both qubit $p$ and $q$, then the teleportation programs becomes:
		\begin{align*}
			{\rm QTEL}_\mathit{BF}\equiv\ &q:=|0\rangle;\ r:=|0\rangle;\ 
			q:=H[q]; q:=\mathcal{E}_{\it BF}[q];\ 
			q,r:=\ {\rm CNOT}[q,r];\\&p,q:=\ {\rm CNOT}[p,q];\ 
			p:=H[p]; p:=\mathcal{E}_{\it BF}[p];\\
			&\mathbf{if}\ (\cM[q]=0\rightarrow \mathbf{skip}\ \square\ 1\rightarrow r:=X[r])\ \mathbf{fi};\\
			&\mathbf{if}\ (\cM[p]=0\rightarrow \mathbf{skip}\ \square\ 1\rightarrow r:=Z[r])\ \mathbf{fi}
		\end{align*}
		where $\mathcal{E}_{\rm BF}$ is the bit flip super-operator.
		Moreover, we write ${\rm QTEL}_\mathit{PF}$ and ${\rm QTEL}_\mathit{BPF}$ for the phase flip and  bit-phase flip occurring at the same positions.
	\end{exam}
	
	Now the reliability of QTEL with the different noises---bit flip, phase flip and bit-phase flip---is modelled by the judgments:
	\begin{align}\label{ju-bit}&\models{\rm QTEL}_\mathit{BF} \sim {\rm QTEL}:
		\mathcal{E}_{\it PF}(p)(|\psi\rangle_p\langle\psi|)\otimes|\psi\rangle_{p^\prime}\langle\psi| \Rightarrow |\psi\rangle_r\langle\psi|\otimes|\psi\rangle_{r^\prime}\langle\psi|,\\
		\label{ju-phase}&\models{\rm QTEL}_\mathit{PF} \sim {\rm QTEL}:
		\mathcal{E}_{\it PF}(p)(|\psi\rangle_p\langle\psi|)\otimes|\psi\rangle_{p^\prime}\langle\psi| \Rightarrow |\psi\rangle_r\langle\psi|\otimes|\psi\rangle_{r^\prime}\langle\psi|,\\
		\label{ju-bp}&\models{\rm QTEL}_\mathit{BPF} \sim {\rm QTEL}: \mathcal{E}_{\it PF}(p^2+(1-p)^2)(|\psi\rangle_p\langle\psi|)\otimes|\psi\rangle_{p^\prime}\langle\psi|
		\Rightarrow |\psi\rangle_r\langle\psi|\otimes|\psi\rangle_{r^\prime}\langle\psi|.\end{align}
	
	To understand these judgments better, let us choose pure state $|\psi\>$ as the input of both QTEL$_{BF}$ and QTEL as an example. The correctness of QTEL has been verified and therefore, $\llbracket {\rm QTEL} \rrbracket(|\psi\>\<\psi|) = |\psi\>\<\psi|.$ We assume that $\llbracket {\rm QTEL}_\mathit{BF} \rrbracket(|\psi\>\<\psi|) = \rho$. There exists a unique coupling $\rho\otimes|\psi\>\<\psi|$ for the outputs $\<\rho,|\psi\>\<\psi|\>$, and according to the judgment (\ref{ju-bit}), we know that:
	\begin{align*}
		\tr(\mathcal{E}_{\it PF}(p)(|\psi\rangle_p\langle\psi|)\otimes|\psi\rangle_{p^\prime}\langle\psi|\cdot |\psi\>_p\<\psi|\otimes|\psi\rangle_{p^\prime}\langle\psi|)
		\le \tr(|\psi\rangle_r\langle\psi|\otimes|\psi\rangle_{r^\prime}\langle\psi|\cdot\rho\otimes|\psi\>_{r^\prime}\<\psi|);
	\end{align*}
	that is, $\<\psi|\rho|\psi\>\ge p+(1-p)|\<\psi|Z|\psi\>|^2.$ Whenever $p$ is close to 1, then $\rho$ is also close to $|\psi\>\<\psi|$, and this is what reliability actually means.

	Judgemens (\ref{ju-bit}), (\ref{ju-phase}) and (\ref{ju-bp}) can be verified in our logic rqPD using rule (SO-L) for general quantum operations and rule (IF) for all pairwise comparisons.
	
	\subsection{Quantum One-Time Pad}\label{sec-QOTP}
	
	In this subsection, we show that our logic can be used to specify and verify correctness and security of a basic quantum encryption scheme, namely the quantum one-time pad (QOTP) \cite{MTW00, BR03}. Similar to the classical one-time pad, it uses a one-time pre-shared secret key to encrypt and decrypt the quantum data, providing the information-theoretic security. We first consider the simplest case, for protecting one-qubit data.  
	
	\begin{exam}\label{exam-01}The QOTP scheme includes three parts: key generation $\mathbf{KeyGen}$, encryption $\mathbf{Enc}$ and decryption $\mathbf{Dec}$, which can be written as programs:
		\begin{align*}
			\mathbf{KeyGen} \equiv\ &a := |0\>; b := |0\>; \ 
			 a := H[a];  b := H[b];  \\
			& \mathbf{if}\ \cM[a,b]=00\rightarrow \mathbf{skip}\ \square\ \ \! 01\rightarrow \mathbf{skip}\ \\
			& \qquad\qquad\  \square\ \ \!  10\rightarrow \mathbf{skip}\ \square\ \ \!  11\rightarrow \mathbf{skip}\\ 
			& \mathbf{fi} \\
			\mathbf{Enc} \equiv \mathbf{Dec} \equiv\ & \mathbf{if}\ \cM[a,b]=00\rightarrow \mathbf{skip}\ \ \ \ \ \ \ \ \;\!\square \ \ \!  01\rightarrow p = Z[p]\\
			&\qquad\qquad\ \square\ \ \!  10\rightarrow p = X[p]\ \square\ \ \!  11\rightarrow p = Z[p];p = X[p]\\ &\mathbf{fi} \\
			\mathbf{DisKey} \equiv\ & {\bf Tr}[a];{\bf Tr}[b]
		\end{align*}
		Here, registers $a$ and $b$ are used as the secret key, and measurement
		$$\cM = \{M_{00} = |00\>_{ab}\<00|, M_{01} = |01\>_{ab}\<01|, M_{10} = |10\>_{ab}\<10|, M_{11} = |11\>_{ab}\<11|\}$$
		is introduced to detect the value of secret key, which has two-bit classical outcome. Register $p$ is the input quantum data which we want to protect. $H$ is the Hadamard gate and $X,Z$ are Pauli gates as usual. As the secret key is not considered when analysing the correctness and security of the protocol, we further introduce $\mathbf{DisKey}$ to discard the key.\end{exam}
	
	\subsubsection{Correctness of Quantum One-Time Pad} The correctness of QOTP can be formulated as the following judgment:
	\begin{align}
		\label{QOTP-1-Correct}
		\vdash \mathbf{KeyGen};\mathbf{Enc};\mathbf{Dec};\mathbf{DisKey}\ \sim\ {\bf skip}:  (=_{sym}) \Rightarrow (=_{sym}).
	\end{align}
	where $=_{sym}$ represents the projector onto the symmetric space.
	By an argument similar to that for the correctness of quantum teleportation in Section \ref{sec-correct-QTEL}, we can show that if judgment (\ref{QOTP-1-Correct}) is valid, then for any possible input $\rho$ of register $p$, $$\sm{\mathbf{KeyGen};\mathbf{Enc};\mathbf{Dec};\mathbf{DisKey}}(\rho) = \sm{\bf skip}(\rho) = \rho;$$
	that is, the input state is the same as the output after QOTP.
	
	Judgment (\ref{QOTP-1-Correct}) can be verified mainly with rule (IF-L). But note that the initialisations of registers $a$ and $b$ are regarded as the creation of new local qubits. So, rules (SO-L) and (SO-R) are needed here instead of (Init-L).
	
	\subsubsection{Security of Quantum One-Time Pad} Using the characterisation of uniformity given in Section \ref{sec-uniform}, we may verify the security of QOTP by proving that for any $|\psi\>\in\h_{p}$:
	\begin{align}
		\label{QOTP-1-Secure}
		\vdash \mathbf{KeyGen};\mathbf{Enc};\mathbf{DisKey}\ \sim\ \mathbf{KeyGen};\mathbf{Enc};\mathbf{DisKey}: \frac{I_p\otimes I_{p^\prime}}{2}\Rightarrow |\psi\>_p\<\psi|\otimes I_{p^\prime}.
	\end{align}
	In fact, if judgment (\ref{QOTP-1-Secure}) is valid for all $|\psi\>$, then the output is uniform in all bases; that is, the output of register $p$ is $\frac{1}{2}I_p$, which is actually the maximally mixed state and nothing can be inferred from it. To derive this  judgment, we need rule (SO) and (IF-w).
	
	\subsubsection{General Quantum One-Time Pad} Now let us generalise Example \ref{exam-01} to the general case for protecting $n$-qubit data. In this case, QOTP can be written as the following quantum program:
	\begin{align*}
		\mathbf{KeyGen}(n) \equiv\ &a_1 := |0\>; \cdots; a_n := |0\>;  b_1 := |0\>; \cdots; b_n := |0\>; \\
		&a_1 := H[a_1]; \cdots; a_n := H[a_n]; b_1 := H[b_1]; \cdots; b_n := H[b_n];  \\
		& \mathbf{if}\ (\square x_1z_1\cdot \cM[a_1,b_1]=x_1z_1\rightarrow \mathbf{skip})\ \mathbf{fi}; \\[-0.2cm]
		&\vdots \\
		& \mathbf{if}\ (\square x_nz_n\cdot \cM[a_n,b_n]=x_nz_n\rightarrow \mathbf{skip})\ \mathbf{fi} \\
		\mathbf{Enc}(n) \equiv \mathbf{Dec}(n) \equiv\ & \mathbf{if}\ (\square x_1z_1\cdot \cM[a_1,b_1]=x_1z_1\rightarrow p_1 = Z^{z_1}[p_1];\ p_1 = X^{x_1}[p_1])\ \mathbf{fi}; \\[-0.2cm]
		&\vdots \\
		& \mathbf{if}\ (\square x_nz_n\cdot \cM[a_n,b_n]=x_nz_n\rightarrow p_n = Z^{z_n}[p_n];\ p_n = X^{x_n}[p_n])\ \mathbf{fi} \\
		\mathbf{DisKey}(n) \equiv\ & {\bf Tr}[a_1];\cdots; {\bf Tr}[a_n];{\bf Tr}[b_1]\cdots; {\bf Tr}[b_n]
	\end{align*}
	Again, if we regard register $\bar{p} = p_1,\cdots,p_n$ as the input and output of QOTP and consider the trivial program $\bf skip$ with the duplicated register $\bar{p}^\prime = p_1^\prime,\cdots,p_n^\prime$, then the judgment 
	\begin{align}
		\label{QOTP-n-Correct}
		\vdash \mathbf{KeyGen}(n);\mathbf{Enc}(n);\mathbf{Dec}(n);\mathbf{DisKey}(n)\ \sim\ {\bf skip}: (=_{sym}) \Rightarrow\ (=_{sym}).
	\end{align}
	is derivable using the basic rules of logic rqPD, where $=_{sym}$ is the projector onto the symmetric space between $\bar{p}$ and $\bar{p}^\prime$. Indeed, this judgment implies the correctness of QOTP; that is, the input and output quantum data on register $\bar{p} = p_1,\cdots,p_n$ (might be entangled) are conserved. Similarly, we can also prove that for any $|\psi\>\in\h_{\bar{p}}$:
	\begin{align}
		\label{QOTP-n-Secure}
		\models \mathbf{KeyGen}(n);\mathbf{Enc}(n);\mathbf{DisKey}(n) \sim \mathbf{KeyGen}(n);\mathbf{Enc}(n);\mathbf{DisKey}(n): \frac{I_{\bar{p}}\otimes I_{\bar{p}^\prime}}{2^n}\Rightarrow |\psi\>_{\bar{p}}\<\psi|\otimes I_{\bar{p}^\prime}.
	\end{align}
	The above judgment actually implies the output after the encryption is the maximally mixed state and it is impossible for the eavesdropper to obtain any information about the quantum data.
	
	\section{Reasoning about Projective Predicates}\label{sec-projective}
	
	The logic rqPD was developed for reasoning about the equivalence between quantum programs with respect to general preconditions and postconditions represented by Hermitian operators. But in some applications, it is more convenient to use a simplified version of rqPD with preconditions and postconditions being projective predicates (equivalently, subspaces of the state Hilbert spaces). In this section, we present such a simplified version of rqPD and give an example to show its utility. As one may expect, a price for this simplification is a weaker expressive power of the logic. The coefficients $\frac{1}{d}$ and $\frac{1}{2}$ in the preconditions of judgments (\ref{e-unif}) and (\ref{ju-uniform}) are not expressible in rqPD with projective predicates, indicating that the expressive power of rqPD with projective predicates is \textit{strictly} weaker than that of full rqPD.
	
	\subsection{Inference Rules}
	
	In this subsection, we develop inference rules for judgments with projective preconditions and postconditions. We consider judgments of the form: \begin{equation}\label{judge-projector}P_1\sim P_2: A\Rightarrow B
	\end{equation} where $A,B$ are two projections in (or equivalently, subspaces of) $\mathcal{H}_{P_1\langle 1\rangle}\otimes\mathcal{H}_{P_2\langle 2\rangle}$.
	
	\begin{defn}\label{valid-ju-projector}
		Judgment (\ref{judge-projector}) is projectively valid, written: $$\models_P P_1\sim P_2: A\Rightarrow B,$$ if for any $\rho_1\in\mathcal{D}^\le\left(\mathcal{H}_{P_1\langle 1\rangle}\right)$ and $\rho_2\in\mathcal{D}^\le\left(\mathcal{H}_{P_2\langle 2\rangle}\right)$ such that $\rho_1 \mathrel{A^\#} \rho_2$, there exists a lifting of $B$ relating the output quantum states:
		$ \llbracket P_1\rrbracket(\rho_1) \mathrel{B^\#} \llbracket P_2\rrbracket(\rho_2). $
	\end{defn}
	
	The following proposition clarifies the relationship between projective validity and the notion of validity introduced in Definition \ref{valid-ju}.
	
	\begin{prop}\label{prop-comp} For any two program $P_1$ and $P_2$, and projective predicates $A$ and $B$:
		\begin{equation*}
			\begin{array}{cll}
				(1) &\models P_1\sim P_2: A\Rightarrow B\ \Rrightarrow\ \models_P P_1\sim P_2: A\Rightarrow B;\\
				(2) &\models P_1\sim P_2: A\Rightarrow B\ \not\Lleftarrow\ \models_P P_1\sim P_2: A\Rightarrow B. \end{array}
		\end{equation*}
	\end{prop}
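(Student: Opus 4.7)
The two parts are essentially independent.

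For part~(1), my plan is to instantiate the hypothesis at a witness of the input lifting and then exploit positivity to upgrade the numeric inequality into a support inclusion. Fix $\rho_1\in\D^\le(\h_{P_1\<1\>})$ and $\rho_2\in\D^\le(\h_{P_2\<2\>})$ with $\rho_1\,A^\#\,\rho_2$, and let $\tau$ be a witness, so $\supp(\tau)\subseteq A$ and therefore $\tr(A\tau)=\tr(\tau)$. The assumed full validity applied to the coupling $\rho:=\tau$ yields a coupling $\sigma$ for $\cp{\sm{P_1}(\rho_1)}{\sm{P_2}(\rho_2)}$ satisfying $\tr(A\tau)\le\tr(B\sigma)+\tr(\tau)-\tr(\sigma)$. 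Substituting $\tr(A\tau)=\tr(\tau)$ and cancelling gives $\tr((I-B)\sigma)\le 0$; since $I-B$ is positive (as $B$ is a projection) and $\sigma$ is positive, this trace must vanish, forcing $(I-B)\sigma=0$ and therefore $\supp(\sigma)\subseteq B$. Hence $\sigma$ witnesses $\sm{P_1}(\rho_1)\,B^\#\,\sm{P_2}(\rho_2)$, establishing projective validity.

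For part~(2) I would produce a counterexample, and my plan is to use a measurement-based program whose semantics is the computational-basis dephasing map, arranged so that on some chosen input the output coupling is forced to be a product state that misses the target subspace. Take $P_1=P_2$ to be $\mathbf{if}\,(\cM[q]=0\to\mathbf{skip}\,\square\,1\to\mathbf{skip})\,\mathbf{fi}$ with $\cM=\{|0\>\<0|,|1\>\<1|\}$, so $\sm{P}(\rho)=|0\>\<0|\rho|0\>\<0|+|1\>\<1|\rho|1\>\<1|$ and $P$ is lossless. Let $A=|++\>\<++|$ and $B=|\Phi\>\<\Phi|$ with $|\Phi\>=\tfrac{1}{\sqrt{2}}(|00\>+|11\>)$. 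Then $\models_P$ holds: every coupling supported in the one-dimensional $A$ has the form $c|++\>\<++|$, with marginals $c|+\>\<+|$, which the programs send to $c\cdot\tfrac{I}{2}$; and $c|\Phi\>\<\Phi|$ is a coupling of the outputs supported in $B$. To falsify $\models$, take $\rho=|0+\>\<0+|$: then $\tr(A\rho)=|\<++|0+\>|^2=\tfrac12$, while the marginals $|0\>\<0|$ and $|+\>\<+|$ are sent to $|0\>\<0|$ and $\tfrac{I}{2}$. Purity of $|0\>\<0|$ forces the unique coupling $\sigma=|0\>\<0|\otimes\tfrac{I}{2}$, giving $\tr(B\sigma)=\tfrac14$. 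Losslessness then collapses the quantitative inequality to $\tfrac12\le\tfrac14$, which fails.

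The main obstacle is the counterexample for~(2): part~(1) is a one-line application of the hypothesis together with a standard PSD argument, whereas (2) requires finding inputs where the existential over output couplings cannot be satisfied. When both output marginals are mixed there is typically enough freedom to attain high weight on $B$, so one must arrange that at least one marginal becomes pure. The construction above combines dephasing on the computational basis with the orthogonality pattern between $|++\>$, $|0+\>$ and the Bell state $|\Phi\>$ so that the unique product coupling $|0\>\<0|\otimes\tfrac{I}{2}$ is forced at the chosen $\rho$ and the resulting mass in $B$ falls strictly below the mass of $\rho$ in $A$.
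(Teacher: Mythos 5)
Your proposal is correct in both parts. Part (1) is essentially the paper's own argument: instantiate the full validity at a witness $\tau$ of $\rho_1\,A^\#\,\rho_2$, use $\tr(A\tau)=\tr(\tau)$ (since $A$ is a projection and $\supp(\tau)\subseteq A$) to reduce the inequality to $\tr(\sigma)\le\tr(B\sigma)$, and conclude $\supp(\sigma)\subseteq B$ by positivity of $I-B$ and $\sigma$.

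For part (2) you follow the same refutation strategy as the paper but with a different concrete counterexample. The paper takes $P_1\equiv q:=X[q]$, $P_2\equiv\mathbf{skip}$, $A=B=|\Psi\rangle\langle\Psi|$ with $|\Psi\rangle=\tfrac{1}{\sqrt2}(|00\rangle+|11\rangle)$, and the separable input $\rho=|00\rangle\langle00|$: both output marginals $|1\rangle\langle1|$ and $|0\rangle\langle0|$ are pure, so the unique coupling is $|10\rangle\langle10|$ and the inequality collapses to $\tfrac12\le 0$. Your dephasing program with $A=|{+}{+}\rangle\langle{+}{+}|$, $B=|\Phi\rangle\langle\Phi|$ and input $|0{+}\rangle\langle0{+}|$ is equally valid (one pure output marginal already forces the product coupling, giving $\tfrac12\le\tfrac14$), and I verified the $\models_P$ side: the only states supported in the one-dimensional $A$ have marginals $c|+\rangle\langle+|$, whose images $c\cdot\tfrac{I}{2}$ admit the Bell-state coupling. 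The underlying mechanism is identical in both constructions --- choose an input whose marginals cannot arise from any witness supported in $A$, force the output coupling via purity of a marginal, and break the trace inequality --- so the difference is cosmetic; the paper's instance is marginally cleaner since the right-hand side is exactly $0$, but yours has the mild pedagogical advantage of using a single program $P_1=P_2$ and distinct $A\neq B$.
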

	
	To present rules for proving projectively valid judgments, we need the following modifications of Definitions \ref{judgment-measurment} and \ref{judgment-one-side-measurment}.
	
	\begin{defn}\label{proj-measure-condition}Let $\cM_1=\{M_{1m}\}$ and $\cM_2=\{M_{2m}\}$ be two measurements with the same set $\{m\}$ of possible outcomes in $\mathcal{H}_{P_1}$ and $\mathcal{H}_{P_2}$, and let $A$ and $B_m$ be projective predicates in $\mathcal{H}_{P_1\langle 1\rangle}\otimes \mathcal{H}_{P_2\langle 2\rangle}.$ Then the assertion
		\begin{equation}
			\models_P \cM_1\approx \cM_2: A\Rightarrow \{B_m\}
		\end{equation}
		holds if for any $\rho_1\in\mathcal{D}^\le\left(\mathcal{H}_{P_1\langle 1\rangle}\right)$ and $\rho_2\in\mathcal{D}^\le\left(\mathcal{H}_{P_2\langle 2\rangle}\right)$ such that $\rho_1 \mathrel{A^\#} \rho_2$, there exists a sequence of lifting of $B_m$ relating the post-measurement states with the same outcomes: for all $m$,
		$$(M_{1m}\rho_1M_{1m}^\dag) \mathrel{B_m^\#} (M_{2m}\rho_2M_{2m}^\dag).$$
	\end{defn}
	
	%A one-side variant of the above definition is also useful.
	
	\begin{defn}Let $\cM_1=\{M_{1m}\}$ be a measurements in $\mathcal{H}_{P_1}$, and let $A$ and $B_m$ be projective predicates in $\mathcal{H}_{P_1\langle 1\rangle}\otimes \mathcal{H}_{P_2\langle 2\rangle}.$ Then the assertion
		\begin{equation}
			\models_P \cM_1\approx I_2: A\Rightarrow \{B_m\}
		\end{equation}
		holds if for any $\rho_1\in\mathcal{D}^\le\left(\mathcal{H}_{P_1\langle 1\rangle}\right)$ and $\rho_2\in\mathcal{D}^\le\left(\mathcal{H}_{P_2\langle 2\rangle}\right)$ such that $\rho_1 \mathrel{A^\#} \rho_2$, there exist $\rho_{2m}$ such that $\sum_m\rho_{2m} = \rho_2$ and for all $m$,
		$$(M_{1m}\rho_1M_{1m}^\dag) \mathrel{B_m^\#} \rho_{2m}.$$
	\end{defn}
	
	Now the proof system for judgments with projective preconditions and postconditions consists of rules (Skip), (UT), (SC), (UT-L/R), (Conseq), (Equiv) and (Frame) in Figs. \ref{fig 4.4-0}, \ref{fig 4.5-0} and \ref{fig 4.6} with $\vdash$, $\models$ being replaced by $\vdash_P$ and $\models_P$, respectively, together with the rules given in Fig. \ref{fig pro_1}.
	\begin{figure}[t]\centering
		\begin{equation*}\begin{split}
				&({\rm Init\text{-}P})\ \vdash_P q_1:=|0\rangle\sim q_2:=|0\rangle: A\Rightarrow |0\>_{q_1\<1\>}\<0|\otimes |0\>_{q_2\<2\>}\<0|\otimes \proj(\mathit{tr}_{\mathcal{H}_{q_1\<1\>}\otimes\mathcal{H}_{q_2\<2\>}}(A))\\
				&(\mathrm{Init\text{-}P\text{-}L}) \ \vdash_P q_1:=|0\rangle\sim \mathbf{skip}: A \Rightarrow |0\rangle_{q_1\langle 1\rangle}\langle 0| \otimes \proj( \mathit{tr}_{\mathcal{H}_{q_1\langle 1\rangle}} (A))\\
				&({\rm IF\text{-}P})\ \
				\frac{\models_P \cM_1\approx \cM_2: A\Rightarrow\{B_m\}\quad  \vdash_P P_{1m}\sim P_{2m}: B_m\Rightarrow C\ {\rm for\ every}\ m}{\vdash_P \mathbf{if}\ (\square m\cdot
					\cM_1[\overline{q}]=m\rightarrow P_{1m})\ \mathbf{fi}\sim\mathbf{if}\  (\square m\cdot
					\cM_2[\overline{q}]=m\rightarrow P_{2m})\ \mathbf{fi}:
					A\Rightarrow C}\\
				&({\rm IF\text{-}P\text{-}L})\ \ \
				\frac{ \models_P \cM_1\approx I_2:A\Rightarrow\{B_m\} \quad  \vdash_P P_{1m}\sim P: B_m\Rightarrow C\ {\rm for\ every}\ m}{\vdash_P \mathbf{if}\ (\square m\cdot
					\cM_1[\overline{q}]=m\rightarrow P_{1m})\ \mathbf{fi} \sim P:
					A\Rightarrow C}\\
				&({\rm LP\text{-}P}) \
				\frac{\models_P\cM_1\approx \cM_2: A\Rightarrow\{B_0,B_1\}\quad \vdash_P P_{1}\sim P_{2}: B_1\Rightarrow A}{\vdash_P \mathbf{while}\
					\cM_1[\overline{q}]=1\ \mathbf{do}\ P_1\ \mathbf{od}\sim    \mathbf{while}\
					\cM_2[\overline{q}]=1\ \mathbf{do}\ P_2\ \mathbf{od}:  A\Rightarrow B_0} \\
				&({\rm LP\text{-}P\text{-}L})\ \ \
				\frac{
					\begin{array}{c}
						\models \mathbf{while}\ \cM_1[\overline{q}]=1\ \mathbf{do}\ P_1\ \mathbf{od} \text{\ lossless} \\
						\models_P \cM_1\approx I_2: A\Rightarrow\{B_0,B_1\} \quad \vdash_P P_1\sim \mathbf{skip}: B_1\Rightarrow A
					\end{array}
				}{\vdash_P \mathbf{while}\
					\cM_1[\overline{q}]=1\ \mathbf{do}\ P_1\ \mathbf{od} \sim \mathbf{skip}:  A\Rightarrow B_0} \\
				&({\rm SO\text{-}P})\ \vdash_P \overline{q_1}:=\mathcal{E}_1\left[\overline{q_1}\right]\sim \overline{q_2}:=\mathcal{E}_2\left[\overline{q_2}\right]: A \Rightarrow \proj(\left(\mathcal{E}_1\otimes \mathcal{E}_2\right)(A))\\
				&({\rm SO\text{-}P\text{-}L})\ \vdash_P \overline{q_1}:=\mathcal{E}_1\left[\overline{q_1}\right]\sim \mathbf{skip}: A \Rightarrow \proj(\mathcal{E}_1(A))
		\end{split}\end{equation*}
		\caption{Rules for Projective Predicates. For any positive operator $A$ on Hilbert space $\mathcal{H}$, we write $\proj(A)$ for the projection onto $\supp(A)$, the subspace spanned by the eigenvectors of $A$ with nonzero eigenvalues. The quantum operations appeared in (SO-P) and (SO-P-L) are all trace-preserving. We omit the right-side counterparts of (Init-P-L), (IF-P-L), (LP-P-L) and (SO-P-L).}\label{fig pro_1}
	\end{figure}
	
	Let us carefully compare this simplified proof system for projective predicates with the original rqPD for general predicates of Hermitian operators:\begin{itemize} \item In rules (Init-P), (Init-P-L), (Init-P-R), (SO-P), (SO-P-L) and (SO-P-R), we have to use the operation $\proj(\cdot)$ because the operators in its operand there are not necessarily projective.
		
		\item Rule (Case) has no counterpart for projective predicates because probabilistic combination $\sum_i p_iA_i$ of a family of projective predicates $A_i$ is usually not projective.
		
		\item The main simplification occurs in the rules for control-flow constructs (i.e. conditionals and loops). We only consider rule (IF-P); the same explanation applies to other control-flow rules. First, the measurement condition $\models_P \cM_1\approx \cM_2: A\Rightarrow\{B_m\}$ in the premise of (IF-P) is weaker than the measurement condition $\cM_1\approx \cM_2\models A\Rightarrow\{B_m\}$ in the premise of (IF). Second, the measurement condition $\cM_1\approx \cM_2$ is the conclusion of (IF) is removed in (IF-P). As already pointed out in the Introduction, this is biggest reward of the projective simplification of our logic.
	\end{itemize}
	
	\begin{prop}
		\label{sound-proj}
		The proof system for judgments with projective preconditions and postconditions are sound.
	\end{prop}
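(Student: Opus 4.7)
I plan to prove soundness by structural induction on the derivation $\vdash_P P_1 \sim P_2 : A \Rightarrow B$, constructing in each case a witness of the projective lifting required by Definition \ref{valid-ju-projector}. It is tempting to try to inherit this result from Theorem \ref{soundness-thm} via Proposition \ref{prop-comp}(1), but this does not work directly: several rules in Figure \ref{fig pro_1} (notably (IF-P) and (LP-P)) have strictly weaker premises than their rqPD counterparts, so a derivation in the projective system does not in general lift to a derivation in full rqPD.

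\textbf{Base cases and structural rules.} Given a witness $\sigma$ of $\rho_1 \mathrel{A^\#} \rho_2$ with $\supp(\sigma) \subseteq A$, the base cases are immediate. For (UT) the transformed coupling $(U_1 \otimes U_2) \sigma (U_1 \otimes U_2)^\dag$ is a witness of the output lifting, since unitary conjugation preserves both marginals and maps $\supp(\sigma)$ onto the expected subspace. For (SO-P), the candidate witness is $(\mathcal{E}_1 \otimes \mathcal{E}_2)(\sigma)$; trace preservation of the super-operators handles the marginal conditions, and its support lies in $\supp((\mathcal{E}_1 \otimes \mathcal{E}_2)(A))$, which is exactly the projection $\proj(\cdot)$ appearing in the conclusion. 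Rule (Init-P) is handled by viewing the initialisation as a super-operator that collapses the first factor onto $|0\rangle\langle 0|$, leaving the rest in the support of the partial trace of $A$. The structural rules (SC), (Conseq), and (Frame) follow respectively by composition of couplings along the sequence, by monotonicity of $\supp$ with respect to the L\"owner order, and by tensoring with an independent witness of the framed predicate on the disjoint variable set $V$.

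\textbf{Measurement-driven rules.} For (IF-P), let $\sigma$ be a witness of the precondition. The premise $\models_P \cM_1 \approx \cM_2 : A \Rightarrow \{B_m\}$ unfolds by Definition \ref{proj-measure-condition} to give, for every outcome $m$, a witness $\sigma_m$ of the $B_m$-lifting between the post-measurement states $M_{1m}\rho_1 M_{1m}^\dag$ and $M_{2m}\rho_2 M_{2m}^\dag$. The induction hypothesis applied to the branch judgment $\vdash_P P_{1m} \sim P_{2m} : B_m \Rightarrow C$ then produces a witness $\tau_m$ of the $C$-lifting between $\sm{P_{1m}}(M_{1m}\rho_1 M_{1m}^\dag)$ and $\sm{P_{2m}}(M_{2m}\rho_2 M_{2m}^\dag)$. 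Summing yields $\tau = \sum_m \tau_m$; since $C$ is a projection containing every $\supp(\tau_m)$, we have $\supp(\tau) \subseteq C$, and the marginal conditions follow by summing the marginals of each $\tau_m$, which reproduces precisely the operational semantics of the two $\mathbf{if}$-statements. The one-sided rule (IF-P-L) is handled analogously using its modified measurement condition.

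\textbf{Main obstacle: the loop rule.} The principal difficulty is (LP-P). Write $W_i \equiv \mathbf{while}\ \cM_i[\bar q] = 1\ \mathbf{do}\ P_i\ \mathbf{od}$, so that $\sm{W_i}(\rho) = \sum_{n=0}^\infty M_{i0}(\sm{P_i} \circ \mathcal{E}_{i1})^n(\rho) M_{i0}^\dag$, where $\mathcal{E}_{i1}(\rho) = M_{i1}\rho M_{i1}^\dag$. Starting from a witness $\sigma$ with $\supp(\sigma) \subseteq A$, I plan to build iteratively a sequence $\sigma^{(n)}$ of couplings of $(\sm{P_1}\circ\mathcal{E}_{11})^n(\rho_1)$ and $(\sm{P_2}\circ\mathcal{E}_{21})^n(\rho_2)$, each with support in $A$: apply the measurement-condition premise to $\sigma^{(n)}$ to extract simultaneously a witness $\tau_n$ of the $B_0$-lifting between the $M_{i0}$-conditioned states (this is the part that will contribute to the output of $W_i$ at iteration $n$) and a witness of the $B_1$-lifting between the $M_{i1}$-conditioned states; then apply the induction hypothesis on $\vdash_P P_1 \sim P_2 : B_1 \Rightarrow A$ to the latter to obtain $\sigma^{(n+1)}$, re-establishing the loop invariant. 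The candidate witness of the postcondition is $\tau = \sum_{n=0}^\infty \tau_n$, which has support in $B_0$ by construction. The subtle step is verifying that $\tau$ is a genuine coupling of $\sm{W_1}(\rho_1)$ and $\sm{W_2}(\rho_2)$: this requires interchanging infinite sums with the partial trace, which is justified because all summands are positive and the partial sums are monotone and bounded in trace by $\tr(\sigma)$, hence converge in trace norm. The one-sided variant (LP-P-L) is analogous, with the losslessness hypothesis ensuring that the right-hand marginal sums to $\rho_2$ rather than to a strict subnormal operator.
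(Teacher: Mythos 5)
Your proposal is correct and follows essentially the same route as the paper's proof: a rule-by-rule case analysis in which (SO-P)/(Init-P) are handled via the tensor-product coupling $(\mathcal{E}_1\otimes\mathcal{E}_2)(\sigma)$ and the $\proj(\cdot)$ of its support, (IF-P) by unfolding the projective measurement condition, applying the induction hypothesis branchwise and summing the witnesses, and (LP-P) by the same iterative invariant (couplings of $(\sm{P_i}\circ\mathcal{E}_{i1})^n$-iterates with support in $A$, peeling off $B_0$-witnesses $\tau_n$ and summing with a monotone-bounded convergence argument). Your opening observation that soundness cannot simply be inherited from Theorem \ref{soundness-thm} via Proposition \ref{prop-comp}(1) is a correct and worthwhile justification for the direct proof, which the paper leaves implicit.
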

	
	As will be seen in the next subsection, this simplified proof system, in particular, the simplified rules for control-flow constructs, whenever they are applicable, can significantly ease the verification of relational properties of quantum programs. On the other hand, some relational properties of quantum programs, e.g. judgments (\ref{e-unif}), (\ref{ju-uniform}) and (\ref{ju-bit}-\ref{ju-bp}), can be verified by the original rqPD but not by this simplified system. Even for the same quantum programs, the original rqPD usually can prove stronger relational properties in the case where the weakest preconditions or strongest postconditions are not projective.
	
	\subsection{Example: Quantum Walks}
	In this subsection, we present an example to show the effectiveness of
	the inference rules given in the previous subsection. Quantum (random)
	walks~\cite{kempe2003quantum,venegas2012quantum} are quantum analogues
	of random walks, and have been widely used in the design of quantum
	algorithms, including quantum search and quantum simulation. There are two key ideas in defining a quantum walk that are fundamentally different from that of a classical random walk: (1) a \textquotedblleft quantum coin\textquotedblright\ is introduced to govern the movement of the walker, which allows the walker to move to two different directions, say left and right, simultaneously; (2) an absorbing boundary is realised by a quantum measurement.
	Here, we show how our logic can be applied to verify a certain equivalence of
	two one-dimensional quantum walks with absorbing boundaries: the quantum coins used in these two quantum walks are different, but they terminate at the same position.
	
	\begin{exam}
		\label{q-walker}
		Let $\mathcal{H}_c$ be the coin space, the $2$-dimensional Hilbert space with orthonormal basis state $|L\rangle$ (or $|0\rangle_c$) and $|R\rangle$ (or $|1\rangle_c$), indicating directions Left and Right, respectively. Let $\mathcal{H}_p$ be the $(n+1)$-dimensional Hilbert space with orthonormal basis states $|0\rangle, |1\rangle, ..., |n-1\rangle, |n\rangle$, where vector $|i\rangle$ denotes position $i$ for each $0\leq i\leq n$; in particular, positions $0$ and $n$ are the absorbing boundaries. The state space of the walk is then $\mathcal{H}=\mathcal{H}_c\otimes \mathcal{H}_p$. Each step of the walk consists of:
		\begin{enumerate}
			\item Measure the position of the system to see whether it is $0$ or $n$. If the outcome is \textquotedblleft yes\textquotedblright, then the walk terminates; otherwise, it continues. The measurement can be described as $\cM=\{M_\mathit{yes}, M_\mathit{no}\}$, where the measurement operators are: 
			$\,
				M_\mathit{yes}=|0\rangle\langle 0| + |n\rangle\langle n|$, $M_\mathit{no}=I_p-M_{yes}=\sum_{i=1}^{n-1} |i\rangle\langle i| $
			, and $I_p$ is the identity in position space $\mathcal{H}_p$;
			\item Apply a \textquotedblleft coin-tossing\textquotedblright\ operator $C$ is in the coin space $\mathcal{H}_c$.
			\item Apply a shift operator %\vspace{-0.1cm}\\[-0.1cm]
			$S=\sum_{i=1}^{n-1}|L\rangle\langle L|\otimes|i-1\rangle\langle i|+\sum_{i=1}^{n-1} |R\rangle\langle R|\otimes|i+ 1\rangle\langle i| $ in the space $\mathcal{H}$. Intuitively, operator $S$ moves the position one step to the left or to the right according to the direction state.
		\end{enumerate}
		A major difference between a quantum walk and a classical random walk is that a
		superposition of movement to the left and a movement to the right can happen in
		the quantum case.  The quantum walk can be written as a quantum program with
		the initial state in $\mathcal{H}_c\otimes \mathcal{H}_p$ as the input:
		\begin{align}\label{qw-0}
			\mathbf{while}\ \cM[p]=\mathit{no}\ \mathbf{do}\ c:=C[c];\ c,p:=S[c,p]\ \mathbf{od}
		\end{align}
	\end{exam}
	
	We consider two frequently used \textquotedblleft coin-tossing\textquotedblright\ operators here: the Hadamard operator $H$ and the balanced operator: 
	$
	Y = \frac{1}{\sqrt{2}}\left(\begin{array}{cc} 1& {\bm i}\\ {\bm i} & 1\end{array}\right).
	$
	We use $\mathbf{while}(H)$ and $\mathbf{while}(Y)$ to denote program (\ref{qw-0}) with $C=H$ or $Y$, respectively.
	What interests us is: with what kind of initial states do the quantum walks with different \textquotedblleft coin-tossing\textquotedblright\ operators $H$ and $Y$ produce the same output position?
	To this end, we add a measurement to determine the exact position of the walks after their termination, and discard the coin. So, programs $\mathbf{while}(H)$ and $\mathbf{while}(Y)$ are modified to:
	\begin{align*}
		QW(H) \equiv &\ \mathbf{while}(H);\ \mathbf{if}\ (\square i\cdot\cM^\prime[p]=i\rightarrow \mathbf{skip})\ \mathbf{fi};\ {\bf Tr}[c],\\
		QW(Y) \equiv &\ \mathbf{while}(Y);\ \mathbf{if}\ (\square i\cdot\cM^\prime[p]=i\rightarrow \mathbf{skip})\ \mathbf{fi};\ {\bf Tr}[c]
	\end{align*}
	where measurement $\cM^\prime = \{M_i^\prime\}$ with $M_i^\prime = |i\>\<i|$ for $i=0,1,\cdots,n$.
	
	Before formulating our result in our logic, let us fix the notations. Whenever comparing programs $QW(H)$ and $QW(Y)$ and using, say $x$, to denote a variable in the former, then we shall use $x^\prime$ for the same variable in the latter. For simplicity, we  use $|d, i\rangle_{c,p}$ as an abbreviation of $|d\rangle_c|i\rangle_p$, $I_{c,p}$ is the identity over the whole space $\mathcal{H}_c\otimes\mathcal{H}_p$, and $S_{c,p;c^\prime,p^\prime}$ is the SWAP operator between two systems $\mathcal{H}_c\otimes\mathcal{H}_p$ and $\mathcal{H}_{c^\prime}\otimes\mathcal{H}_{p^\prime}$. Furthermore, we introduce the following unitary operator $U$ and projective predicates $(=_\mathit{sym})$ and $(=_{sym}^p)$:
	\vspace{-0.1cm}
	\begin{align*}
		&U: |d,i\rangle_{c,p}\mapsto (-1)^{\frac{i+d+3}{2}} |d,i\rangle_{c,p} \qquad(=_\mathit{sym}) = \frac{1}{2}(I_{c,p}\otimes I_{c^\prime,p^\prime}+ S_{c,p;c^\prime,p^\prime})\\
		&(=_{sym}^p) = \frac{1}{2}\Big(\sum_{i,i^\prime=0,n}|i\rangle_p\langle i|\otimes |i^\prime\rangle_{p^\prime}\langle i^\prime| + \sum_{i,i^\prime=0,n}|i\rangle_p\langle i^\prime|\otimes |i\rangle_{p^\prime}\langle i^\prime|\Big), \\[-0.7cm]\nonumber
	\end{align*}
	In \cite{fullversion}, we show how to derive the following judgment in the projective version of rqPD:
	\vspace{-0.1cm}
	\begin{align}\label{two-qw}
		\models_P QW(H)\sim QW(Y): U_{c^\prime,p^\prime} (=_\mathit{sym})U^\dag_{c^\prime,p^\prime} \Rightarrow (=_{sym}^p). \\[-0.7cm]\nonumber
	\end{align}
	This judgment means that if walks $QW(H)$ and $QW(Y)$ start from states $\rho_1$ and $\rho_2=U\rho_1U^\dag$, respectively, then they terminate at exactly the same position.
	
	\section{Related work}\label{sec:related}
	The formal verification of quantum programs is an active area of
	research, and many expressive formalisms have been proposed in the
	literature~\cite{CMS,DP06,FDJY07,Kaku09,Ying11,Ying16}. However, previous work largely
	considers single program executions. 
	Security of quantum one-time pad (our Example \ref{exam-01}) was verified in \cite{Unruh19} using a variant of quantum Hoare logic rather than relational logic. 
	Other formalisms explicitly target
	equivalence of quantum programs~\cite{ALGN13,FengY15,KubotaKKKS13}.
	However, these works are based on bisimulations and symbolic methods,
	which have a more limited scope and are less powerful than general
	relational program logics. Finally, some works develop specialized
	methods for proving concrete properties of quantum programs; for
	instance, Hung et al~\cite{HungHZYHW18} reason about quantitative
	robustness of quantum programs. It would be interesting to cast the
	latter into our more general framework.
	This seems possible although may not be straightforward; indeed, in Subsection \ref{tele-reli}, we showed that our logic can be used to reason about the reliability of quantum teleportation against several kinds of quantum noise.
	
	Our work is most closely related to the quantum relational Hoare logic
	recently proposed by \citet{Unruh18,LU19}. Both works are inspired by
	probabilistic relational Hoare logic~\cite{BartheGZ09} and share the
	long-term objective of providing a convenient framework for formal
	verification of quantum cryptography. However, the two works explore
	different points in the design space of relational logics for quantum
	programs. There are several fundamental differences between our logic and Unruh's one, including expressiveness, entanglement in defining the validity of judgments and inference rules.  
	A careful comparison of them is given in \cite{fullversion}.
	
	\section{Conclusion}
	We have introduced a relational program logic for a core quantum
	programming language; our logic is based on a quantum analogue of
	probabilistic couplings, and is able to verify several non-trivial
	examples of relational properties for quantum programs. 
	
	There are several promising directions for future work. First, we would like to
	further develop the theory of quantum couplings, and in particular to
	define a quantum version of approximate couplings. An extension apRHL of probabilistic relational Hoare logic pRHL was defined in \cite{Barthe13} for verification of differential privacy. A surprising connection between quantum differential privacy and gentle measurements recently observed by \citet{Aaronson19} presents a further possible application of a quantum counterpart of apRHL in quantum physics. Second, we would
	like to explore variants and applications of our logic to other areas,
	including the convergence of quantum Markov chains, quantum cryptography,
	and translation validation of quantum programs; in particular, the correctness of optimising quantum compilers for NISQ (Noisy Intermediate Quantum) devices.

	%%% Acknowledgments
	\begin{acks}                            %% acks environment is optional
	                                        %% contents suppressed with 'anonymous'
	  %% Commands \grantsponsor{<sponsorID>}{<name>}{<url>} and
	  %% \grantnum[<url>]{<sponsorID>}{<number>} should be used to
	  %% acknowledge financial support and will be used by metadata
	  %% extraction tools.
    This work is partially supported by the University of Wisconsin, a
    Facebook TAV award, the Australian Research Council (Grant No: DE180100156 and DP180100691), the National Key R\&D Program of China (Grant No: 2018YFA0306701), and the National Natural Science Foundation of China (Grant No: 61832015). We are grateful to the Max Planck Institute for Software Systems for hosting some of the authors.
	\end{acks}

	%% Bibliography
	\bibliographystyle{ACM-Reference-Format}
	\bibliography{rqPD-final}
	
	\newpage
	{\Large \textbf{Supplementary material and deferred proofs}}
	\renewcommand{\thesection}{A\arabic{section}}

\section{Probabilistic couplings}\label{pcpc}

In this section, we briefly recall the basics of probabilistic
couplings so that the reader will see a close and natural
correspondence as well as some essential differences between
probabilistic coupling and their quantum counterparts.

Let $\mathcal{A}$ be a countable set. Then a sub-distribution over $\mathcal{A}$ is a mapping $\mu:\mathcal{A}\rightarrow [0,1]$ such that $\sum_{a\in\mathcal{A}}\mu (a)\leq 1$. In paricular, if $\sum_{a\in\mathcal{A}}\mu (a)= 1$, then $\mu$ is called a distribution over $\mathcal{A}$. We can define: 
\begin{enumerate}
	\item[(1)] the weight of $\mu$ is $$|\mu|=\sum_{a\in\mathcal{A}}\mu(a);$$
	\item[(2)] the support of $\mu$ is $\supp(\mu)=\{a\in\mathcal{A}: \mu(a)>0\};$
	\item[(3)] the probability of an event $S\subseteq\mathcal{A}$ is $$\mu(S)=\sum_{a\in S}\mu(a).$$
\end{enumerate}
If $\mu$ be a join sub-distribution, i.e. a sub-distribution over Cartesian product $\mathcal{A}_1\times\mathcal{A}_2$, then its marginals over $\mathcal{A}_1$ is defined by
$$\pi_1(\mu)(a_1)=\sum_{a_2\in\mathcal{A}_2}\mu(a_1,a_2)\ {\rm for\ every}\ a_1\in\mathcal{A}_1.$$
Similarly, we can define its marginals $\pi_2(\mu)$ over $\mathcal{A}_2$.

\begin{defn}[Coupling]Let $\mu_1, \mu_2$ be sub-distributions over $\mathcal{A}_1,\mathcal{A}_2$, respectively. Then a sub-distribution over $\mathcal{A}_1\times\mathcal{A}_2$ is called a coupling for $\cp{\mu_1}{\mu_2}$ if $\pi_1(\mu)=\mu_1$ and $\pi_2(\mu)=\mu_2$.
\end{defn}

\begin{exam}\label{exam-triv} For any distributions $\mu_1,\mu_2$ over $\mathcal{A}_1,\mathcal{A}_2$, respectively, the independent or trivial coupling is:
	$\mu_\times (a_1,a_2)=\mu_1(a_1)\cdot\mu_2(a_2).$
\end{exam}

\begin{exam}\label{exam-bit} Let $\mathbf{Flip}$ be the uniform distribution over booleans, i.e. $$\mathbf{Flip}(0)=\mathbf{Flip}(1)=\frac{1}{2}.$$ Then the following are two couplings for $\cp{\mathbf{Flip}}{\mathbf{Flip}}$:\begin{enumerate}
		\item Identity coupling: $\mu_{\rm id}(a_1,a_2)=\begin{cases}\frac{1}{2}\ &{\rm if}\ a_1=a_2,\\ 0\ &{\rm otherwise}.\end{cases}$
		\item Negation coupling: $\mu_\neg(a_1,a_2)=\begin{cases}\frac{1}{2}\ &{\rm if}\ \neg a_1=a_2,\\ 0\ &{\rm otherwise}.\end{cases}$
	\end{enumerate}
\end{exam}

\begin{exam}\label{exam-unif} As a generalisation of Example \ref{exam-bit}, let $\mathbf{Unif}_\mathcal{A}$ be the uniform distribution over a finite nonempty set $\mathcal{A}$, i.e. $$\mathbf{Unif}_\mathcal{A}(a)=\frac{1}{|\mathcal{A}|}$$ for every $a\in\mathcal{A}$. Then each bijection $f:\mathcal{A}\rightarrow\mathcal{A}$ yields a coupling for $\cp{\mathbf{Unif}_\mathcal{A}}{\mathbf{Unif}_\mathcal{A}}$:
	$$\mu_f(a_1,a_2)=\begin{cases}\frac{1}{|\mathcal{A}|}\ &{\rm if}\ f(a_1)=a_2,\\ 0\ &{\rm otherwise}.\end{cases}$$
\end{exam}

\begin{exam}\label{exam-id} For any sub-distribution $\mu$ over $\mathcal{A}$, the identity coupling for $\cp{\mu}{\mu}$ is: $$\mu_{\rm id}(a_1,a_2)=\begin{cases}\mu(a)\ &{\rm if}\ a_1=a_2 =a,\\ 0\ &{\rm otherwise}.\end{cases}$$
\end{exam}

\begin{defn}[Lifting] Let $\mu_1,\mu_2$ be sub-distributions over $\mathcal{A}_1,\mathcal{A}_2$, respectively, and let $\mathcal{A}\subseteq\mathcal{A}_1\times\mathcal{A}_2$ be a relation. Then a sub-distribution $\mu$ over $\mathcal{A}_1\times\mathcal{A}_2$ is called a witness for the $\mathcal{R}$-lifting of $\cp{\mu_1}{\mu_2}$ if: \begin{enumerate}\item $\mu$ is a coupling for $\cp{\mu_1}{\mu_2}$; \item $\supp(\mu)\subseteq \mathcal{R}$.
	\end{enumerate} Whenever a witness exists, we say that $\mu_1$ and $\mu_2$ are related by the $\mathcal{R}$-lifting and write $\mu_1\mathcal{R}^{\#}\mu_2$.
\end{defn}

\begin{exam}\begin{enumerate}\item Coupling $\mu_f$ in Example \ref{exam-unif} is a witness for the lifting: $$\mathbf{Unif}_\mathcal{A}\{(a_1,a_2)|f(a_1)=a_2\}^\#\mathbf{Unif}_\mathcal{A}.$$
		\item Coupling $\mu_{\rm id}$ in Example \ref{exam-id} is a witness for the lifting $\mu =^\# \mu.$
		\item Coupling $\mu_\times$ in Example \ref{exam-triv} is a witness for the lifting $\mu_1 T^\# \mu_2$, where $T=\mathcal{A}_1\times\mathcal{A}_2.$
\end{enumerate}\end{exam}

\begin{prop}Let $\mu_1,\mu_2$ be sub-distributions over $\mathcal{A}_1, \mathcal{A}$, respectively. If there exists a coupling for $\cp{\mu_1}{\mu_2}$, then their weight are equal: $|\mu_1|=|\mu_2|.$\end{prop}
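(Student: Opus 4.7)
The plan is to unpack the definitions of coupling, marginal, and weight, and observe that the total mass of the joint sub-distribution $\mu$ coincides with the total mass of each of its marginals. First, I would let $\mu$ be a coupling witnessing the hypothesis, so that by definition $\pi_1(\mu) = \mu_1$ and $\pi_2(\mu) = \mu_2$.

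Next, I would compute the weight of $\mu_1$ by substituting the definition of the first marginal and then swapping the order of summation, which is justified because all terms are nonnegative and the double sum is bounded above by $1$. Explicitly,
\begin{equation*}
|\mu_1| = \sum_{a_1 \in \mathcal{A}_1} \mu_1(a_1) = \sum_{a_1 \in \mathcal{A}_1} \sum_{a_2 \in \mathcal{A}_2} \mu(a_1, a_2) = |\mu|.
\end{equation*}
An entirely symmetric computation using $\pi_2(\mu) = \mu_2$ yields $|\mu_2| = |\mu|$, and combining these two equalities gives $|\mu_1| = |\mu_2|$ as desired.

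There is no real obstacle here: the result is a direct consequence of Fubini-type reordering of a convergent nonnegative double series, and requires no additional hypotheses beyond the existence of the coupling. The only thing worth noting is that the same argument will serve later as the discrete classical analogue of Proposition~\ref{prop-treq} (trace equivalence), since in the quantum–probabilistic dictionary weight corresponds to trace and marginals correspond to partial traces.
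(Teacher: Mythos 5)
Your proof is correct: unpacking the definition of the marginals and interchanging the order of summation (valid for nonnegative terms) gives $|\mu_1| = |\mu| = |\mu_2|$ directly. The paper states this proposition without proof, treating it as a standard fact, and your argument is exactly the expected one; your closing remark that it is the classical analogue of Proposition~\ref{prop-treq} is also consistent with how the paper presents the probabilistic--quantum correspondence.
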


\begin{prop} Let $\mu_1,\mu_2$ be sub-distributions over the same $\mathcal{A}$. Then $\mu_1=\mu_2$ if and only if $\mu_1 =^{\#} \mu_2$.
\end{prop}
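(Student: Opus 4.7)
The plan is a straightforward two-direction argument, mirroring the structure of Proposition~\ref{prop-equal} in the quantum case but using the support condition directly rather than spectral decomposition.

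For the forward direction ($\Rightarrow$), suppose $\mu_1 = \mu_2 =: \mu$. I would simply exhibit the identity coupling from Example~\ref{exam-id}, namely $\mu_{\rm id}(a_1,a_2) = \mu(a)$ if $a_1 = a_2 = a$ and $0$ otherwise, and check the two defining conditions of a witness. The marginal checks are immediate: $\pi_1(\mu_{\rm id})(a_1) = \sum_{a_2} \mu_{\rm id}(a_1,a_2) = \mu(a_1) = \mu_1(a_1)$, and symmetrically for $\pi_2$. The support condition $\supp(\mu_{\rm id}) \subseteq\ {=}$ is immediate from the definition, since $\mu_{\rm id}(a_1,a_2) > 0$ forces $a_1 = a_2$.

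For the backward direction ($\Leftarrow$), assume $\mu_1 (=)^{\#} \mu_2$ with some witness $\mu$ over $\mathcal{A}\times\mathcal{A}$. The support hypothesis $\supp(\mu) \subseteq \{(a,a) : a\in\mathcal{A}\}$ tells us that $\mu(a_1,a_2) = 0$ whenever $a_1 \neq a_2$. Then for every $a\in\mathcal{A}$, the marginal computation collapses to a single term:
\[
\mu_1(a) = \pi_1(\mu)(a) = \sum_{a_2\in\mathcal{A}} \mu(a,a_2) = \mu(a,a),
\]
and symmetrically $\mu_2(a) = \pi_2(\mu)(a) = \mu(a,a)$. Therefore $\mu_1(a) = \mu_2(a)$ for every $a\in\mathcal{A}$, giving $\mu_1 = \mu_2$.

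Neither direction presents a real obstacle: the forward direction is a construction check, and the backward direction is an elementary marginalisation argument exploiting the support constraint. If anything, the only subtlety worth flagging is the conventions for sub-distributions (as opposed to full distributions): the argument above goes through unchanged because the weights $|\mu_1|$ and $|\mu_2|$ need not equal~$1$, and the identity coupling in the forward direction automatically inherits the correct total mass $|\mu|$ from $\mu$. This matches the pattern of the quantum analogue (Proposition~\ref{prop-equal}), where the classical \textquotedblleft support in the diagonal\textquotedblright\ condition is replaced by \textquotedblleft support in $=_\mathcal{B}$\textquotedblright\ and the classical marginal collapse is replaced by the corresponding partial-trace collapse on a diagonal density operator.
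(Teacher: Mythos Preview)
Your proof is correct. The paper actually states this proposition without proof (it is a standard background fact about probabilistic liftings, recalled in the appendix for comparison with the quantum case), so there is nothing to compare against; your two-direction argument via the identity coupling of Example~\ref{exam-id} and the diagonal-support marginal collapse is exactly the expected one.
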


\section{Hoare logic for quantum programs}\label{QHL-app}
In this section, we review the Hoare-like logic for quantum
programs developed in \cite{Ying11, Ying16}.
\subsection{Partial Correctness and Total Correctness}
D'Hondt and Panangaden \cite{DP06} suggested to use effects as quantum
predicates. Then a correctness formula (or a Hoare triple) is a
statement of the form $\{A\}P\{B\}$, where $P$ is a quantum program,
and both $A, B$ are quantum predicates in $\mathcal{H}_P$, called the
precondition and postcondition, respectively.

\begin{defn}[Partial and Total Correctness]\label{correctness-interpretation}
	\begin{enumerate}\item The correctness formula $\{A\}P\{B\}$ is true in
		the sense of total correctness, written $\models_{\mathit{tot}}\{A\}P\{B\},$ if for all
		$\rho\in\mathcal{D}^\le(\mathcal{H}_P)$ we have: $$\tr(A\rho)\leq \tr(B\sm{P} (\rho)).$$
		
		\item The correctness formula $\{A\}P\{B\}$ is true in
		the sense of partial correctness, written $\models_{\mathit{par}}\{A\}P\{B\},$ if for all
		$\rho\in\mathcal{D}^\le(\mathcal{H}_P)$ we have: $$\tr(A\rho)\leq \tr(B\sm{P} (\rho))+
		[\tr(\rho)-\tr(\sm{P} (\rho))].$$
	\end{enumerate}
\end{defn}

The defining inequalities of total and partial correctness can be easily understood by observing that the interpretation of $\tr(A\rho)$ in physics is the expectation (i.e. average value) of observable $A$ in state $\rho$, and $\tr(\rho)-\tr(\sm{P}(\rho))$ is indeed the probability that with input $\rho$ program $P$ does not terminate.

\subsection{Proof System}
The proof system qPD for partial correctness consists of the axioms and inference rules presented in Figure \ref{fig 3.2}.
\begin{figure}[h]\centering
	\begin{equation*}\begin{split}
	&({\rm Ax.Sk})\ \{A\}\mathbf{Skip}\{A\}\ \ \ \ \ \ \ \ \ \ \ \ \ \ \ \ \ \ \ \ \ \ \ \ \ \ \\  &({\rm Ax.Init}) \left\{\sum_{i}|i\rangle_q\langle 0|A|0\rangle_q\langle
	i|\right\}q:=|0\rangle\{A\}\\
	&({\rm Ax.UT})\
	\{U^{\dag}AU\}\overline{q}:=U\left[\overline{q}\right]\{A\}\ \ \ \ \ \ \ \\ &({\rm R.SC})\ \ \
	\frac{\{A\}P_1\{B\}\ \ \ \ \ \ \{B\}P_2\{C\}}{\{A\}P_1;P_2\{C\}}\\
	&({\rm R.IF})\
	\frac{\{A_m\}P_m\{B\}\ {\rm for\ every}\ m}{\left\{\sum_m
		M_m^{\dag}A_mM_m\right\}\mathbf{if}\ (\square m\cdot
		\cM[\overline{q}]=m\rightarrow P_m)\ \mathbf{fi}\{B\}}\\
	&({\rm R.LP})\
	\frac{\{B\}P\left\{M_0^{\dag}AM_0+M_1^{\dag}BM_1\right\}}{\{M_0^{\dag}AM_0+M_1^{\dag}BM_1\}\mathbf{while}\
		\cM[\overline{q}]=1\ \mathbf{do}\ P\ \mathbf{od}\{A\}}\\
	&({\rm R.Or})\ \frac{A\sqsubseteq
		A^{\prime}\ \ \ \ \{A^{\prime}\}P\{B^{\prime}\}\ \ \ \
		B^{\prime}\sqsubseteq B}{\{A\}P\{B\}}
	\end{split}\end{equation*}
	\caption{Proof System qPD.}\label{fig 3.2}
\end{figure}
It was proved in \cite{Ying11} to be sound and (relatively) complete.
\begin{thm}[Soundness and Completeness]\label{sound-complete} For any quantum program $P$, and for any quantum predicates $A,B$: $$\models_\mathit{par}\{A\}P\{B\}\Leftrightarrow\ \vdash_\mathit{qPD}\{A\}P\{B\}.$$
\end{thm}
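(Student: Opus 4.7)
The plan is to establish both directions of the soundness/completeness theorem for qPD by the standard Hoare-logic recipe, adapted to the quantum setting and organised around a quantum weakest liberal precondition transformer.

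For soundness, I would proceed by induction on the derivation of $\vdash_\mathit{qPD}\{A\}P\{B\}$. The axioms (Ax.Sk), (Ax.Init), (Ax.UT) reduce to direct calculations using the operational semantics of Figure 1 together with the Heisenberg-picture identity $\tr(A\,\sm{P}(\rho))=\tr(\mathcal{E}_P^*(A)\rho)$ where $\mathcal{E}_P^*$ is the dual super-operator; in each case the non-termination slack $\tr(\rho)-\tr(\sm{P}(\rho))$ is zero since these constructs are trace-preserving. For (R.SC) I compose the two premises and exploit $\sm{P_1;P_2}=\sm{P_2}\circ\sm{P_1}$; the non-termination terms telescope. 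For (R.IF) I decompose $\sm{\mathbf{if}\cdots\mathbf{fi}}(\rho)=\sum_m\sm{P_m}(M_m\rho M_m^\dag)$, apply the premise to each branch and sum. (R.Or) is immediate from Löwner monotonicity of $\tr(\cdot\,\rho)$.

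The delicate case is (R.LP). I would unfold the loop semantics as a countable sum $\sm{\mathbf{while}\ \cM[\overline{q}]=1\ \mathbf{do}\ P\ \mathbf{od}}(\rho)=\sum_{n=0}^\infty M_0(\sm{P}\circ\mathcal{E}_1)^n(\rho)M_0^\dag$ with $\mathcal{E}_1(\sigma)=M_1\sigma M_1^\dag$, iterate the premise $n$ times to obtain a partial-correctness inequality after $n$ unfoldings, and then pass to the limit using continuity of $\tr(A\cdot)$ on the trace-norm compact set of partial density operators. The non-termination slack exactly absorbs the mass that stays inside the loop forever, which is precisely why partial (rather than total) correctness is preserved.

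For the converse direction (relative completeness), I would define the quantum weakest liberal precondition $\mathit{wlp}(P,B)$ as the unique Hermitian operator satisfying
\begin{equation*}
\tr(\mathit{wlp}(P,B)\rho)=\tr(B\,\sm{P}(\rho))+\tr(\rho)-\tr(\sm{P}(\rho))\quad\text{for all }\rho\in\mathcal{D}^\le(\mathcal{H}_P);
\end{equation*}
well-definedness and the fact that $0\sqsubseteq\mathit{wlp}(P,B)\sqsubseteq I$ follow because the right-hand side is a bounded linear functional of $\rho$ and $B$ is an effect. Then I would prove by induction on the structure of $P$ that (i) $\vdash_\mathit{qPD}\{\mathit{wlp}(P,B)\}P\{B\}$ and (ii) whenever $\models_\mathit{par}\{A\}P\{B\}$, one has $A\sqsubseteq\mathit{wlp}(P,B)$. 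Combined with (R.Or) these yield $\models_\mathit{par}\{A\}P\{B\}\Longrightarrow\vdash_\mathit{qPD}\{A\}P\{B\}$.

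The main obstacle will be the loop case of (i). The $\mathit{wlp}$ of a while loop is characterised as the greatest fixed point of the functional $X\mapsto M_0^\dag B M_0+M_1^\dag\,\mathit{wlp}(P,X)\,M_1$ in the Löwner order, and one must exhibit this fixed point as a genuine quantum predicate expressible in the logic so that (R.LP) applies with $\mathit{wlp}(\text{loop},B)$ playing the role of the invariant $B$ and $M_0^\dag B M_0+M_1^\dag\,\mathit{wlp}(\text{loop},B)\,M_1$ playing the role of the preserved precondition. In finite dimension the fixed point arises as the limit of a decreasing chain of effects and lies in the same operator space, which is the quantum analogue of the expressibility assumption behind classical relative completeness; verifying that this limit coincides with $\mathit{wlp}$ and that the resulting (R.LP) instance discharges the inductive hypothesis is the technical heart of the proof.
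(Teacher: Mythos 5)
Your proposal is correct and follows essentially the same route as the proof this paper relies on: the paper itself does not reprove Theorem~\ref{sound-complete} but cites \cite{Ying11}, where soundness is established by induction on derivations (with the loop case handled by unfolding the semantics as a countable sum and passing to the limit, the non-termination slack absorbing the residual mass) and relative completeness is established via the quantum weakest liberal precondition defined by the duality $\tr(\mathit{wlp}(P,B)\rho)=\tr(B\sm{P}(\rho))+\tr(\rho)-\tr(\sm{P}(\rho))$, characterised for loops as the greatest fixed point reached as the limit of a decreasing chain of effects. Your identification of the loop case of the $\mathit{wlp}$ derivability claim as the technical heart, and your use of (R.Or) to pass from $A\sqsubseteq\mathit{wlp}(P,B)$ to derivability, match the cited argument.
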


A sound and (relatively) complete proof system for total correctness was also developed in \cite{Ying11, Ying16}.

\section{Deferred Proofs and Verifications}

\subsection{Proof of Proposition \ref{prop-equal}}

\begin{proof} (1$\Rightarrow$2) Obvious.
	
	{\vskip 4pt}
	
	(2$\Rightarrow$3) It is trivial if we realize the fact $=_{\mathcal{B}}\ \subseteq\ =_{sym}$ for any othornormal basis $\mathcal{B}$.
	
	{\vskip 4pt}
	
	(3$\Rightarrow$1) Suppose that $\sigma$ is a witness of $\rho_1 =_{sym}^\# \rho_2$. Then we have $$(=_{sym})\sigma = \sigma\ {\rm and}\ \sigma (=_{sym}) = \sigma,$$ which imply $S\sigma=\sigma$ and $\sigma S = \sigma$. Therefore $\sigma = S\sigma S$, and moreover,
	\begin{align*}
	\rho_1 = \tr_2(\sigma) = \tr_2(S\sigma S) = \tr_1(\sigma ) = \rho_2
	\end{align*}
\end{proof}

\subsection{Proof of Proposition \ref{entangled-witness}}\label{en}

For two Hilbert space $\mathcal{H}_1$ and $\mathcal{H}_2$, we write $\mathcal{S}(\mathcal{H}_1\otimes\mathcal{H}_2)$ for the set of separable states $\rho\in\mathcal{D}^\le(\mathcal{H}_1\otimes\mathcal{H}_2)$. It is clear from the definition of separability that $\mathcal{S}(\mathcal{H}_1\otimes\mathcal{H}_2)$ is a convex set. For any $\rho_1\in\mathcal{D}^\le(\mathcal{H}_1)$ and $\rho_2\in\mathcal{D}^\le(\mathcal{H}_2)$ with the same trace, let $\mathcal{C}(\rho_1,\rho_2)$ be the set of all couplings for $\cp{\rho_1}{\rho_2}$. It is easy to see that $$\mathcal{C}(\rho_1,\rho_2)\cap\mathcal{S}(\mathcal{H}_1\otimes\mathcal{H}_2)$$ is non-empty.

Proposition \ref{entangled-witness} is an immediate corollary of the following:

\begin{fact}\label{separable-state}
	There are  states $\rho_1$ in $\mathcal{H}_1$, $\rho_2$ in $\mathcal{H}_2$ and separable quantum predicate $A$ over $\mathcal{H}_1\otimes\mathcal{H}_2$ such that
	\begin{equation*}
	\max_{\rho\in\mathcal{C}(\rho_1,\rho_2)\cap\mathcal{S}(\mathcal{H}_1\otimes\mathcal{H}_2)}\tr(A\rho)\ <\ \max_{\rho\in\mathcal{C}(\rho_1,\rho_2)}\tr(A\rho).
	\end{equation*}
\end{fact}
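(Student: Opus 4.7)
The plan is to establish Fact~\ref{separable-state} by exhibiting a concrete two-qubit example and reducing the two maximisations to a small calculation in Pauli coordinates. I take $\mathcal{H}_1 = \mathcal{H}_2 = \mathbb{C}^2$, set $\rho_1 = \rho_2 = I/2$, and pick
\begin{equation*}
A = \tfrac{1}{2}\bigl(|00\rangle\langle 00| + |{+}{+}\rangle\langle {+}{+}|\bigr),
\end{equation*}
which is manifestly separable as a sum of two pure product projectors; diagonalising on the two-dimensional span of $\{|00\rangle, |{+}{+}\rangle\}$ (with Gram overlap $1/2$) yields eigenvalues $3/4,\,1/4,\,0,\,0$, so $0 \sqsubseteq A \sqsubseteq I$, as required for a predicate. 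For the entangled witness I take $\rho = |\Phi^+\rangle\langle\Phi^+|$ with $|\Phi^+\rangle = (|00\rangle + |11\rangle)/\sqrt{2}$; its partial traces are both $I/2$, so $\rho \in \mathcal{C}(\rho_1, \rho_2)$, and a direct overlap computation gives $\tr(A\rho) = \tfrac{1}{2}\bigl(|\langle 00|\Phi^+\rangle|^2 + |\langle{+}{+}|\Phi^+\rangle|^2\bigr) = \tfrac{1}{2}\bigl(\tfrac{1}{2} + \tfrac{1}{2}\bigr) = \tfrac{1}{2}$.

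Next I reduce both maxima to a clean LP in the Pauli basis. Expanding $|0\rangle\langle 0| = (I+Z)/2$ and $|+\rangle\langle+| = (I+X)/2$ yields
\begin{equation*}
A = \tfrac{1}{8}\bigl(2\,I{\otimes}I + X{\otimes}I + I{\otimes}X + Z{\otimes}I + I{\otimes}Z + X{\otimes}X + Z{\otimes}Z\bigr),
\end{equation*}
and any coupling $\sigma$ of $(I/2, I/2)$ has vanishing one-local Pauli expectations, hence takes the form $\sigma = \tfrac{1}{4}\bigl(I{\otimes}I + \sum_{i,j\in\{x,y,z\}} T_{ij}\,\sigma_i{\otimes}\sigma_j\bigr)$; Pauli orthogonality then collapses the inner product to $\tr(A\sigma) = \tfrac{1}{4} + \tfrac{1}{8}(T_{xx} + T_{zz})$. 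Twirling $\sigma$ by the Klein four-group $\{I{\otimes}I,\, X{\otimes}X,\, Y{\otimes}Y,\, Z{\otimes}Z\}$ preserves the marginals (each element is a product unitary fixing $I/2$ under conjugation), preserves separability (it sends product states to product states), and preserves $\tr(A\sigma)$ because the twirled observable $\tilde A = \tfrac{1}{4}I + \tfrac{1}{8}(X{\otimes}X + Z{\otimes}Z)$ agrees with $A$ on every coupling of $(I/2, I/2)$ (their difference consists only of one-local Pauli terms, which trace to zero against the marginal constraint). The twirled $\tilde\sigma$ has $T_{ij} = 0$ for $i \neq j$, so the problem reduces to a three-variable LP in $(T_{xx}, T_{yy}, T_{zz})$: the constraint is $\sigma \geq 0$ for the unconstrained maximum, and $\sigma \geq 0$ together with $\sigma^{T_2} \geq 0$ for the separable maximum, using Peres--Horodecki exactness in $2{\times}2$.

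Finally I solve the LP using the four Bell-basis eigenvalues of $\sigma$, namely $\tfrac{1}{4}(1 + T_{xx} - T_{yy} + T_{zz})$, $\tfrac{1}{4}(1 - T_{xx} + T_{yy} + T_{zz})$, $\tfrac{1}{4}(1 + T_{xx} + T_{yy} - T_{zz})$, $\tfrac{1}{4}(1 - T_{xx} - T_{yy} - T_{zz})$, plus the four analogous PPT inequalities obtained by replacing $T_{yy}$ with $-T_{yy}$ (the two systems combined are equivalent to $|T_{xx}| + |T_{yy}| + |T_{zz}| \leq 1$). The general maximum $T_{xx} + T_{zz} = 2$ is attained at $(T_{xx}, T_{yy}, T_{zz}) = (1, -1, 1)$, which is exactly $|\Phi^+\rangle\langle\Phi^+|$ and gives $\tr(A\rho) = 1/2$. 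The separable maximum $T_{xx} + T_{zz} = 1$ is attained at $(1/2, 0, 1/2)$, i.e.\ at the separable coupling
\begin{equation*}
\sigma_* = \tfrac{1}{4}\bigl(|00\rangle\langle 00| + |11\rangle\langle 11| + |{+}{+}\rangle\langle{+}{+}| + |{-}{-}\rangle\langle{-}{-}|\bigr),
\end{equation*}
and gives $\tr(A\sigma_*) = 3/8$. The strict gap $3/8 < 1/2$ is Fact~\ref{separable-state}, and Proposition~\ref{entangled-witness} follows by choosing any $\lambda \in (3/8, 1/2]$. The most delicate step is the twirling reduction to diagonal $T$: one has to verify that $\tilde A$ coincides with $A$ on couplings of $(I/2, I/2)$, which is exactly the observation that $A - \tilde A$ lives in the span of single-qubit Paulis and thus annihilates any state whose marginals are maximally mixed.
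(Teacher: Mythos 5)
Your proof is correct, but it takes a genuinely different route from the paper's. Both arguments use $\rho_1=\rho_2=I/2$ on two qubits and the maximally entangled state as the entangled coupling, but the witnessing predicate and the bound on the separable side differ. The paper takes $A=\tfrac13 I\otimes I+\tfrac23|\Phi^+\rangle\langle\Phi^+|$ --- a Werner-type operator sitting exactly on the separability boundary, whose separability must itself be certified via the Peres--Horodecki criterion --- and achieves the value $1$ on $|\Phi^+\rangle\langle\Phi^+|$; it then bounds the separable maximum by $\tfrac23$ through a direct parametrisation of the matrix entries of a coupling $\sigma$ and the PPT inequality $(\tfrac12-x)^2\ge |y|^2$ on those entries. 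Your $A=\tfrac12(|00\rangle\langle00|+|{+}{+}\rangle\langle{+}{+}|)$ is \emph{manifestly} separable as a sum of two product projectors (no PPT test needed there), and you bound the separable maximum by twirling over the Klein four-group to reduce to Bell-diagonal couplings and then solving the resulting linear program on the octahedron $|T_{xx}|+|T_{yy}|+|T_{zz}|\le 1$, obtaining the gap $3/8<1/2$. What each approach buys: the paper's computation is shorter and entirely elementary (a two-parameter matrix inequality), whereas yours requires verifying the three invariance properties of the twirl --- you correctly identify the agreement of $A$ with its twirl $\tilde A$ on couplings as the delicate point, and your justification (the difference is a sum of one-local Paulis, which vanish against maximally mixed marginals) is sound. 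In return, your argument is more systematic: the twirling reduction makes transparent \emph{why} the separable maximum is what it is (an explicit polytope optimisation), it only uses the easy direction separable $\Rightarrow$ PPT since your optimal separable coupling $\sigma_*$ is exhibited explicitly as a mixture of product states, and it yields both the optimisers, not just the optimal values. Your closing remark that any $\lambda\in(3/8,1/2]$ then gives Proposition \ref{entangled-witness} matches how the paper derives that proposition from this fact.
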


\begin{proof} Let both $\mathcal{H}_1$ and $\mathcal{H}_2$ be the two dimensional Hilbert space, and
	\begin{align*}
	&A = \frac{1}{3}\left[
	\begin{array}{cccc}
	2 & 0 & 0 & 1 \\
	0 & 1 & 0 & 0 \\
	0 & 0 & 1 & 0 \\
	1 & 0 & 0 & 2 \\
	\end{array}
	\right],\ B = \frac{1}{2}\left[
	\begin{array}{cccc}
	1 & 0 & 0 & 1 \\
	0 & 0 & 0 & 0 \\
	0 & 0 & 0 & 0 \\
	1 & 0 & 0 & 1 \\
	\end{array}
	\right], \ \rho_1=\rho_2=\frac{1}{2}\left[
	\begin{array}{cc}
	1 & 0\\
	0 & 1\\
	\end{array}
	\right],\\
	&\rho_s = \frac{1}{4}\left[
	\begin{array}{cccc}
	1 & 0 & 0 & 1 \\
	0 & 1 & 0 & 0 \\
	0 & 0 & 1 & 0 \\
	1 & 0 & 0 & 1 \\
	\end{array}
	\right],\
	\rho_e = \frac{1}{2}\left[
	\begin{array}{cccc}
	1 & 0 & 0 & 1 \\
	0 & 0 & 0 & 0 \\
	0 & 0 & 0 & 0 \\
	1 & 0 & 0 & 1 \\
	\end{array}
	\right].
	\end{align*}
	%Separability Criterion for Density Matrices Asher Peres*
	In the $2\times2$ case, the Peres-Horodecki criterion (also called the PPT criterion, for Positive Partial Transpose) is a necessary and sufficient condition for the separability of a density matrix (see \cite{Per96} for more details). We can check that both $A$ and $\rho_s$ are separable using this criterion. Moreover, $A$ is a quantum predicate as $0\sqsubseteq A\sqsubseteq I$, and $\rho_s, \rho_e$ are both the couplings for $\cp{\rho_1}{\rho_2}$. We are going to show that
	\begin{align*}
	\max_{\rho\in\mathcal{C}(\rho_1,\rho_2)\cap\mathcal{S}(\mathcal{H}_1,\mathcal{H}_2)}\tr(A\rho) = \tr(A\rho_s) = \frac{2}{3}
	< \max_{\rho\in\mathcal{C}(\rho_1,\rho_2)} \tr(A\rho) = \tr(A\rho_e) = 1.\end{align*}
	The right hand side is obvious as $\tr(A\rho_e)$ already achieves the maximum 1. What remains to show is that for any separable state $\sigma$ that couples $(\rho_1,\rho_2)$, $\tr(A\sigma)\le\frac{2}{3}$. Write $\sigma$ and its partial transpose (with respect to $\mathcal{H}_2$) in the matrix form using the fact that it is a coupling for $\cp{\rho_1}{\rho_2}$:
	$$\sigma = \left[
	\begin{array}{cccc}
	x & \cdot & \cdot & y \\
	\cdot & \frac{1}{2}-x & \cdot & \cdot \\
	\cdot & \cdot & \frac{1}{2}-x & \cdot \\
	y^* & \cdot & \cdot & x \\
	\end{array}\right],\
	\sigma^{T_2} = \left[
	\begin{array}{cccc}
	x & \cdot & \cdot & \cdot \\
	\cdot & \frac{1}{2}-x & y & \cdot \\
	\cdot & y^* & \frac{1}{2}-x & \cdot \\
	\cdot & \cdot & \cdot & x \\
	\end{array}
	\right]$$
	with real parameter $x$ and complex number $y$ while `$\cdot$' represents the parameters we are not interested in. Note that $\sigma$ is non-negative so that $0\le x\le \frac{1}{2}$ and $x^2\ge yy^*$. On the other hand, $\sigma$ is separable, so $\sigma^{T_2}$ is non-negative, which implies $(\frac{1}{2}-x)^2\ge yy^*$. Thus,
	\begin{align*}
	\tr(A\sigma) = \frac{1}{3}(1+2x+y+y^*)\le\frac{1}{3}(1+2x+2|y|)\le\frac{1}{3}\left(1+2x+2\min\left\{x,\frac{1}{2}-x\right\}\right)\le\frac{2}{3}.
	\end{align*}
\end{proof}

\subsection{Proof of Lemma \ref{prop-finite}}

For any two Hilbert spaces $\mathcal{X}$ and $\mathcal{Y}$. We write $\mathrm{L}(\mathcal{X},\mathcal{Y})$ for the set of linear operators from $\mathcal{X}$ to $\mathcal{Y}$.
Then we can introduce an operator-vector correspondence; i.e. the mapping:
$$\mathrm{vec}: \mathrm{L}(\mathcal{X},\mathcal{Y})\mapsto \mathcal{Y}\otimes\mathcal{X}$$ defined by $$\mathrm{vec}(|b\rangle\langle a|) = |b\rangle|a\rangle$$ for the standard basis elements $|a\rangle$ and $|b\rangle$. Intuitively, it represents a change of bases from the standard basis of $\mathrm{L}(\mathcal{X},\mathcal{Y})$ to the standard basis of $\mathcal{Y}\otimes\mathcal{X}$ (see \cite{Wat18}, Chapter 2).
This mapping is a bijection, and indeed an isometry, in the sense that
$$\langle M, N\rangle = \langle \mathrm{vec}(M), \mathrm{vec}(N) \rangle = \tr(M^\dag N) = \mathrm{vec}(M)^\dag\mathrm{vec}(N)$$
for all $M,N\in\mathrm{L}(\mathcal{X},\mathcal{Y})$.
Moreover, for any linear map $A: \mathcal{D}^\le(\mathcal{H})\mapsto\mathcal{D}^\le(\mathcal{H})$ with $d=\dim\mathcal{H}$, we can define a new linear map by a $d^2\times d^2$ matrix $\hat{A}$ in the following way:
$$A(\rho) = \sigma\ \Longleftrightarrow\ \hat{A}\cdot\mathrm{vec}(\rho) = \mathrm{vec}(\sigma).$$
Now, let us write $Y_1 = M_{10}M_{10}^\dag$ and $Y_2 = M_{20}M_{20}^\dag$, and define linear maps $\hat{A}_1$ and $\hat{A}_2$ from the respective linear maps $A_1=\sm{P_1}\circ\mathcal{E}_{11}$ and $A_2=\sm{P_2}\circ\mathcal{E}_{21}$.
Then the following facts are trivial:
\begin{align*}\tr(\mathcal{E}_{10}\circ(\sm{P_1}\circ\mathcal{E}_{11})^n(\rho_1)) &= \tr(Y_1^\dag A_1^n(\rho_1)) = \mathrm{vec}(Y_1)^\dag\hat{A}^n_1\mathrm{vec}(\rho_1) \\
\tr(\mathcal{E}_{20}\circ(\sm{P_2}\circ\mathcal{E}_{21})^n(\rho_2)) &= \tr(Y_2^\dag A_2^n(\rho_2)) = \mathrm{vec}(Y_2)^\dag\hat{A}^n_2\mathrm{vec}(\rho_2).\end{align*}
Moreover, if we define two vectors $|\alpha\rangle = \mathrm{vec}(Y_1)\oplus \mathrm{vec}(Y_2)$, $|\beta\rangle = \mathrm{vec}(\rho_1) \oplus (-\mathrm{vec}(\rho_2))$ and $\hat{A} = \hat{A}_1 \oplus \hat{A}_2$, then we obtain:
\begin{align*}
Z_n &= \tr(\mathcal{E}_{10}\circ(\sm{P_1}\circ\mathcal{E}_{11})^n(\rho_1)) - \tr(\mathcal{E}_{20}\circ(\sm{P_2}\circ\mathcal{E}_{21})^n(\rho_2)) \\
&= \mathrm{vec}(Y_1)^\dag\hat{A}^n_1\mathrm{vec}(\rho_1) - \mathrm{vec}(Y_2)^\dag\hat{A}^n_2\mathrm{vec}(\rho_2) \\
&= (\mathrm{vec}(Y_1)^\dag,\mathrm{vec}(Y_2)^\dag)\left(
\begin{array}{cc}
\hat{A}^n_1 & \\
& \hat{A}^n_2
\end{array}\right)\left(
\begin{array}{c}
\mathrm{vec}(\rho_1) \\
-\mathrm{vec}(\rho_2)
\end{array}
\right) \\
&= \langle \alpha|\hat{A}^n|\beta\rangle
\end{align*}
using the fact that $\hat{A}^n = \hat{A}^n_1 \oplus \hat{A}^n_2$.
With Cayley-Hamilton theorem (see Chapter XV, \S112 in \cite{VDW53}), we know that a matrix power of any order $k$ can be written as a matrix polynomial of degree at most $N - 1$ where $N$ is the dimension of the matrix. So, for any $n\ge0$, there exists coefficients $c_0,c_1,\cdots,c_{d_1^2+d_2^2-1}$ such that
$$\hat{A}^n = \sum_{i=0}^{d_1^2+d_2^2-1}c_i \hat{A}^i,$$
noting that $\hat{A}$ is a $(d_1^2+d_2^2)\times(d_1^2+d_2^2)$ matrix.
Thus, if $Z_n = 0$ holds for any $0\le n\le d_1^2+d_2^2-1$, then for any $n\ge0$,
$$
Z_n = \langle \alpha|\hat{A}^n|\beta\rangle = \langle \alpha|\left(\sum_{i=0}^{d_1^2+d_2^2-1}c_i \hat{A}^i\right)|\beta\rangle = \sum_{i=0}^{d_1^2+d_2^2-1}c_iZ_i = 0.
$$

\subsection{Proof of Soundness Theorem (Theorem \ref{soundness-thm})}\label{sound-proof}

Let us first prove soundness of rules (SO), (SO-L) and (SO-R) because some conclusions obtained in this proof will be used in the proof of other rules. Given a quantum operation $\mathcal{E}$ with its Kraus representation $$\mathcal{E}(\rho) = \sum_iE_i\rho E_i^\dag,$$ its dual map $\mathcal{E}^*$ is defined by $$\mathcal{E}^*(A) = \sum_iE_i^\dag AE_i$$ for any operator $A$.

\begin{prop}
	For any quantum predicate $A$ and trace-preserving quantum operation $\mathcal{E}$, $\mathcal{E}^*(A)$ is still a quantum predicate. That is, $0\leq \mathcal{E}^*(A)\leq I$ is ture for any $0\leq A\leq I$.
\end{prop}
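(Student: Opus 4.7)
The plan is to verify the two inequalities $0\sqsubseteq\mathcal{E}^*(A)$ and $\mathcal{E}^*(A)\sqsubseteq I$ separately, exploiting two basic features of the dual map: it is positivity-preserving on its own, and the trace-preservation hypothesis on $\mathcal{E}$ translates into unitality of $\mathcal{E}^*$. Both facts follow directly from the Kraus representation $\mathcal{E}(\rho)=\sum_i E_i\rho E_i^\dag$ together with the constraint $\sum_i E_i^\dag E_i=I$.

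For the lower bound, I would fix an arbitrary $|\psi\rangle\in\mathcal{H}$ and compute
\begin{equation*}
\langle\psi|\mathcal{E}^*(A)|\psi\rangle=\sum_i\langle\psi|E_i^\dag A E_i|\psi\rangle=\sum_i\langle E_i\psi|A|E_i\psi\rangle\geq 0,
\end{equation*}
where the final inequality uses $A\sqsupseteq 0$ applied to each vector $E_i|\psi\rangle$. Since $|\psi\rangle$ was arbitrary, this gives $\mathcal{E}^*(A)\sqsupseteq 0$.

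For the upper bound, first I would observe that $\mathcal{E}^*$ is linear and that the trace-preservation hypothesis on $\mathcal{E}$ is exactly $\sum_i E_i^\dag E_i=I$, so
\begin{equation*}
\mathcal{E}^*(I)=\sum_i E_i^\dag I E_i=\sum_i E_i^\dag E_i=I.
\end{equation*}
Now $A\sqsubseteq I$ means $I-A\sqsupseteq 0$, and by the previous paragraph applied to $I-A$ in place of $A$, we get $\mathcal{E}^*(I-A)\sqsupseteq 0$. By linearity $\mathcal{E}^*(I-A)=\mathcal{E}^*(I)-\mathcal{E}^*(A)=I-\mathcal{E}^*(A)$, hence $\mathcal{E}^*(A)\sqsubseteq I$, completing the proof.

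There is no real obstacle here; the only thing to be careful about is invoking trace-preservation at the right spot, namely to identify $\mathcal{E}^*$ as a unital map, which is precisely what converts the positivity argument on $I-A$ into the desired upper bound.
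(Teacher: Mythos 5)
Your proof is correct. The paper actually states this proposition without providing any proof (it is treated as a standard fact preceding the soundness argument for the (SO) rules), and your argument is the canonical one: positivity of $\mathcal{E}^*(A)$ via $\langle\psi|\mathcal{E}^*(A)|\psi\rangle=\sum_i\langle E_i\psi|A|E_i\psi\rangle\geq 0$, and the upper bound by applying the same positivity argument to $I-A$ together with unitality of $\mathcal{E}^*$, which is exactly the operator identity $\sum_i E_i^\dag E_i=I$ equivalent to trace preservation of $\mathcal{E}$. Nothing is missing.
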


Moreover, we introduce a mathematical (rather than logical) version of Definition \ref{valid-ju}: for any quantum operations $\mathcal{E}$ and $\mathcal{F}$ in a Hilbert space $\mathcal{H}_1$, $\mathcal{H}_2$, respectively, and for any quantum predicates $A$ and $B$ in $\mathcal{H}_1\otimes\mathcal{H}_2$, we write:
$$\models\mathcal{E}\sim\mathcal{F}: A\Rightarrow B$$ if for any $\rho\in\mathcal{D}^\le\left(\mathcal{H}_1\otimes \mathcal{H}_2\right)$, there exists a coupling $\sigma\in\mathcal{D}^\le\left(\mathcal{H}_1\otimes \mathcal{H}_2\right)$ for $\cp{\mathcal{E} (\tr_{\langle 2\rangle}\rho)}{\mathcal{F}(\tr_{\langle 1\rangle}\rho)}$ such that \begin{equation}\tr(A\rho)\leq\tr(B\sigma)+\tr(\rho)-\tr(\sigma).\end{equation}

\begin{lem}\label{tech-SO} For any trace-preserving quantum operations $\mathcal{E},\mathcal{F}$, we have:
	$$\models \mathcal{E}\sim\mathcal{F}:(\mathcal{E}\otimes\mathcal{F})^*(A)\Rightarrow A
	$$
	where $(\mathcal{E}\otimes\mathcal{F})^*=\mathcal{E}^*\otimes\mathcal{F}^*$ is the dual map of $\mathcal{E}\otimes\mathcal{F}$. \end{lem}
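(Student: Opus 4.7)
The plan is to exhibit an explicit witness coupling and reduce the required inequality to the defining identity of the adjoint $(\mathcal{E}\otimes\mathcal{F})^*$. Given any $\rho\in\mathcal{D}^\le(\mathcal{H}_1\otimes\mathcal{H}_2)$, the natural candidate is
\[
\sigma := (\mathcal{E}\otimes\mathcal{F})(\rho).
\]
First I would verify that $\sigma$ is a coupling for $\langle \mathcal{E}(\tr_{\langle 2\rangle}\rho),\mathcal{F}(\tr_{\langle 1\rangle}\rho)\rangle$. Because $\mathcal{E}$ and $\mathcal{F}$ act on disjoint tensor factors, the partial traces commute with the operations in the obvious sense, and combined with trace preservation of $\mathcal{F}$ this yields $\tr_{\langle 2\rangle}\sigma=\mathcal{E}(\tr_{\langle 2\rangle}\rho)$; the analogous computation with trace preservation of $\mathcal{E}$ gives $\tr_{\langle 1\rangle}\sigma=\mathcal{F}(\tr_{\langle 1\rangle}\rho)$. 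Note that trace preservation of $\mathcal{E}\otimes\mathcal{F}$ also gives $\tr(\sigma)=\tr(\rho)$, so the slack term $\tr(\rho)-\tr(\sigma)$ in the validity inequality vanishes.

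Next I would establish the inequality. With $\sigma$ as above it becomes
\[
\tr\bigl((\mathcal{E}\otimes\mathcal{F})^*(A)\,\rho\bigr)\ \le\ \tr\bigl(A\,(\mathcal{E}\otimes\mathcal{F})(\rho)\bigr),
\]
which is in fact an equality: it is precisely the defining property of the dual map, namely $\tr(\mathcal{G}^*(X)Y)=\tr(X\,\mathcal{G}(Y))$ applied to $\mathcal{G}=\mathcal{E}\otimes\mathcal{F}$ (this is immediate from the Kraus representations, since if $\mathcal{E}(\cdot)=\sum_i E_i(\cdot)E_i^\dag$ and $\mathcal{F}(\cdot)=\sum_j F_j(\cdot)F_j^\dag$, then $(\mathcal{E}\otimes\mathcal{F})^*(A)=\sum_{i,j}(E_i\otimes F_j)^\dag A(E_i\otimes F_j)$, and cycling traces does the rest).

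There is essentially no obstacle here; the lemma is really just the observation that product operations on product states yield product marginals, combined with the adjoint identity. The only care required is in the partial-trace computation, where one must use trace preservation of the \emph{other} factor to push the partial trace through the product operation. This is also why the hypothesis of trace preservation of both $\mathcal{E}$ and $\mathcal{F}$ is essential: without it, $\sigma$ would still satisfy the inequality but would not in general have the correct marginals, so it would fail to be a coupling.
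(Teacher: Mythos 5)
Your proof is correct and follows essentially the same route as the paper's: both take $\sigma=(\mathcal{E}\otimes\mathcal{F})(\rho)$ as the witness, verify the marginals using trace preservation of the other tensor factor (the paper spells this out via a Schmidt decomposition and Kraus operators), and reduce the inequality to the adjoint identity $\tr\bigl((\mathcal{E}\otimes\mathcal{F})^*(A)\rho\bigr)=\tr\bigl(A(\mathcal{E}\otimes\mathcal{F})(\rho)\bigr)$, which holds with equality. You also correctly identify where trace preservation is actually needed, so no gap remains.
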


\begin{proof} (Outline) We only need to prove that for any input $\rho\in\mathcal{D}(\mathcal{H}_{1}\otimes\mathcal{H}_{2})$, $(\mathcal{E}\otimes\mathcal{F})(\rho)$ is a coupling for $\cp{ \mathcal{E}(\tr_{\langle2\rangle}\rho)}{ \mathcal{F}(\tr_{\langle1\rangle}\rho)}$. This fact can be better understood by a physical argument. Suppose Alice and Bob share a bipartite state $\rho\in\mathcal{D}(\mathcal{H}_1\otimes\mathcal{H}_2)$. Next, Alice and Bob do the quantum operation $\mathcal{E}$ (mapping states in $\mathcal{H}_1$ to $\mathcal{H}_{1^\prime}$) and $\mathcal{F}$ (mapping states in $\mathcal{H}_2$ to $\mathcal{H}_{2^\prime}$) respectively. After that, the state they shared is $(\mathcal{E}\otimes\mathcal{F})(\rho)$. As a quantum operation can be realized by the unitary transformation over the principle system and the environment, in Alice's view, she cannot gain any information about what Bob does on his part of the state without any classical communication. So, $\tr_{\h_{2^\prime}}[(\mathcal{E}\otimes\mathcal{F})(\rho)] = \mathcal{E}(\tr_{\h_2}\rho)$.
	Therefore, $(\mathcal{E}\otimes\mathcal{F})(\rho)$ is still a coupling for $\cp{ \mathcal{E}(\tr_{\langle2\rangle}\rho)}{ \mathcal{F}(\tr_{\langle1\rangle}\rho)}$, and we have:
	\begin{equation*}
	\tr[A(\mathcal{E}\otimes\mathcal{F})(\rho)] = \tr[(\mathcal{E}\otimes\mathcal{F})^*(A)\rho].
	\end{equation*}
	
	(Details) Mathematical details for proving that $(\mathcal{E}\otimes\mathcal{F})(\rho)$ is a coupling for $\cp{ \mathcal{E}(\tr_{\langle2\rangle}\rho)}{ \mathcal{F}(\tr_{\langle1\rangle}\rho)}$ for all $\rho\in\D(\h_1\otimes\h_2)$.  As $\E$ and $\F$ are both trace-preserving quantum operations, their Kraus operators $\{E_i\}$ and $\{F_j\}$ must satisfy:
	$$\sum_iE_i^\dag E_i = I_{\h_1},\quad\sum_jF_j^\dag F_j = I_{\h_2}.$$
	On the other hand, for any $\rho\in\D(\h_1\otimes\h_2)$, due to the Schmidt decomposition theorem (see for example \cite{NC00}, Theorem 2.7), it can be written as:
	$$\rho = \sum_{l}p_l\left(\sum_{k}\lambda_{l_k}|\alpha_{l_k}\>|\beta_{l_k}\>\right)\left(\sum_{k^\prime}\lambda_{l_{k^\prime}}^\dag\<\alpha_{l_{k^\prime}}|\<\beta_{l_{k^\prime}}|\right) = \sum_{lkk^\prime}p_l\lambda_{l_k}\lambda_{l_{k^\prime}}^\dag|\alpha_{l_k}\>\<\alpha_{l_{k^\prime}}|\otimes|\beta_{l_k}\>\<\beta_{l_{k^\prime}}|$$
	where $p_l\in[0,1]$, $\lambda_{l_k}$ are complex numbers, and $\{|\alpha_{l_k}\>\}_k$ and $\{|\beta_{l_{k}}\>\}_k$ are orthonormal basis of $\h_1$ and $\h_2$ respectively for all $l$. Observe the following fact by straightforward calculations:
	\begin{align*}
	\E(\tr_{\h_2}(\rho)) &= \sum_i\left[E_i \sum_{l}p_l\left(\sum_{kk^\prime}\lambda_{l_k}\lambda_{l_{k^\prime}}^\dag|\alpha_{l_k}\>\<\alpha_{l_{k^\prime}}|\cdot\<\beta_{l_{k^\prime}}|\beta_{l_k}\>\right)  E_i^\dag\right] \\
	&= \sum_{ilk}p_l\lambda_{l_k}\lambda_{l_{k}}^\dag\left[E_i |\alpha_{l_k}\>\<\alpha_{l_{k}}|  E_i^\dag\right],
	\end{align*}
	\begin{align*}
	\tr_{\h_{2^\prime}}\left[(\E\otimes\F)(\rho)\right] 
	&= \tr_{\h_{2^\prime}}\left\{\sum_{ijlkk^\prime} p_l\lambda_{l_k}\lambda_{l_{k^\prime}}^\dag \left(E_i|\alpha_{l_k}\>\<\alpha_{l_{k^\prime}}|E_i^\dag\right)\otimes\left(F_j|\beta_{l_k}\>\<\beta_{l_{k^\prime}}|F_j^\dag\right)\right\} \\
	&= \sum_{ilkk^\prime} p_l\lambda_{l_k}\lambda_{l_{k^\prime}}^\dag \left(E_i|\alpha_{l_k}\>\<\alpha_{l_{k^\prime}}|E_i^\dag\right)\cdot\tr\left(\sum_jF_j^\dag F_j|\beta_{l_k}\>\<\beta_{l_{k^\prime}}|\right) \\
	&= \sum_{ilk}p_l\lambda_{l_k}\lambda_{l_{k}}^\dag\left[E_i |\alpha_{l_k}\>\<\alpha_{l_{k}}|  E_i^\dag\right] = \E(\tr_{\h_2}(\rho))
	\end{align*}
	and similarly, $\tr_{\h_{1^\prime}}\left[(\E\otimes\F)(\rho)\right] = \F(\tr_{\h_1}(\rho)),$ which complete the proof.
\end{proof}

\begin{rem}\begin{enumerate}\item The above lemma can help to simplify the proofs of of the validity of rules (Init) and (UT). Furthermore, we can use it to create/trace-out qubits.
		
		\item Note that the trace-preserving condition in the above lemma is necessary. So, it doesn't help when we deal with a single outcome of a measurement. But when we encounter an equivalence between two $\mathbf{if}$ statements of the form $\mathbf{if}\cdots\sim\mathbf{if}\cdots$, all of the sub-programs are all trace-preserving and have simple Kraus representations, we can regard the whole $\mathbf{if}$ statements as a quantum operation, and the above lemma applies.
\end{enumerate}\end{rem}

{\vskip 4pt}

Now soundness of rules (SO), (SO-L) and (SO-R) can be easily proved. The validity of (SO) is just a corollary of Lemma \ref{tech-SO}. Realise that the semantic function of ${\bf skip}$ is the identity operation, so (SO-L) and (SO-R) are two special cases of (SO).

In addition, the following technical lemma will be needed in the remainder of this subsection. 

\begin{lem}
	\label{part-measure}
	Consider a bipartite system $A$ and $B$ with state Hilbert space $\hs_A$ and $\hs_B$, respectively. Given two linear operators $M: \hs_A\mapsto\hs_{A^\prime}$ and $N: \hs_B\mapsto\hs_{B^\prime}$ such that $N^\dag N\le I_{\hs_B}$. Then for any $\rho\in\D(\hs_A\otimes\hs_B)$, it holds that:
	\begin{equation}
	\label{part-measure-ineq}
	\tr_{\hs_{B^\prime}} [(M\otimes N)\rho(M\otimes N)^\dag] \sqsubseteq \tr_{\hs_{B^\prime}}[(M\otimes I_{\hs_B})\rho(M\otimes I_{\hs_B})^\dag] = M\tr_{\hs_B}(\rho)M^\dag
	\end{equation}
	In particular, if linear operators $N_1,N_2,\cdots,N_n$ mapping from $\hs_B$ to $\hs_{B^\prime}$ satisfy $\sum_{i=1}^nN_i^\dag N_i \sqsubseteq I_{\hs_B}$, then:
	\begin{equation}
	\label{part-measure-eq}
	\sum_{i=1}^n\tr_{\hs_{B^\prime}} [(M\otimes N_i)\rho(M\otimes N_i)^\dag] \sqsubseteq M\tr_{\hs_B}(\rho)M^\dag.
	\end{equation}
	More generally, for any quantum operations $\E_A: \D^\le(\hs_A)\mapsto\D^\le(\hs_{A^\prime})$ and $\F_B: \D^\le(\hs_B)\mapsto\D^\le(\hs_{B^\prime})$, it holds that
	$$ \tr_{\hs_{B^\prime}}[\E_A\otimes\E_B(\rho)] \sqsubseteq \E_A(\tr_{\hs_B}(\rho)).$$
\end{lem}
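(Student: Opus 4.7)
The plan is to derive all three assertions from a single master identity and then apply elementary monotonicity of the partial trace with respect to the L\"owner order. The master identity I will establish is
$$\tr_{\hs_{B^\prime}}\bigl[(M\otimes N)\rho(M\otimes N)^\dag\bigr] \;=\; M\,\tr_{\hs_B}\bigl[(I_{\hs_A}\otimes N^\dag N)\rho\bigr]\,M^\dag$$
for arbitrary linear $M, N$ and $\rho \in \D(\hs_A\otimes\hs_B)$. I would prove it by direct computation: expand $\rho$ in a product basis $\{|e_i\rangle\otimes|f_j\rangle\}$, apply $M\otimes N$ and its adjoint, and use cyclicity of the trace to rewrite $\tr(N|f_j\rangle\langle f_{j^\prime}|N^\dag) = \langle f_{j^\prime}|N^\dag N|f_j\rangle$; this collapses the partial trace over $\hs_{B^\prime}$ on the left into a partial trace over $\hs_B$ on the right. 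The equality asserted in (\ref{part-measure-ineq}) is the special case $N = I_{\hs_B}$.

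For the inequality in (\ref{part-measure-ineq}), the hypothesis $N^\dag N \sqsubseteq I_{\hs_B}$ gives $0\sqsubseteq I_{\hs_B} - N^\dag N$, so it suffices to prove the following monotonicity statement: whenever $0\sqsubseteq X\sqsubseteq I_{\hs_B}$ and $\rho$ is a positive operator on $\hs_A\otimes\hs_B$,
$$\tr_{\hs_B}\bigl[(I_{\hs_A}\otimes X)\rho\bigr] \;\sqsubseteq\; \tr_{\hs_B}(\rho).$$
To check this I would test against an arbitrary $|\psi\rangle\in\hs_A$: the scalar $\langle\psi|\tr_{\hs_B}[(I_{\hs_A}\otimes(I_{\hs_B}-X))\rho]|\psi\rangle$ equals $\tr[(|\psi\rangle\langle\psi|\otimes(I_{\hs_B}-X))\rho]$, which is non-negative because it is the trace of the product of two positive operators. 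Since the map $Y\mapsto MYM^\dag$ preserves the L\"owner order, (\ref{part-measure-ineq}) follows.

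The remaining two statements then drop out. For (\ref{part-measure-eq}), apply the master identity to each $N_i$ and sum; the hypothesis $\sum_i N_i^\dag N_i \sqsubseteq I_{\hs_B}$ reduces the bound to the same monotonicity step. For the general statement on quantum operations, fix Kraus decompositions $\mathcal{E}_A(\sigma) = \sum_i E_i\sigma E_i^\dag$ and $\mathcal{F}_B(\tau) = \sum_j F_j\tau F_j^\dag$. Because $\mathcal{F}_B$ is (at worst) trace non-increasing, $\sum_j F_j^\dag F_j \sqsubseteq I_{\hs_B}$, so applying (\ref{part-measure-eq}) to the family $\{(E_i, F_j)\}_{i,j}$ and recombining by linearity yields $\tr_{\hs_{B^\prime}}[(\mathcal{E}_A\otimes\mathcal{F}_B)(\rho)] \sqsubseteq \mathcal{E}_A(\tr_{\hs_B}(\rho))$.

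I do not expect any real obstacle: the only slightly non-obvious ingredient is the master identity, after which every step is a standard positivity manipulation. The main thing worth flagging is a mild notational subtlety --- the $\tr_{\hs_{B^\prime}}$ in the middle expression of (\ref{part-measure-ineq}) must be read as tracing out the second tensor factor, which in that expression actually lives in $\hs_B$ (since $N$ has been replaced by $I_{\hs_B}$); I would make this clarification explicit at the outset so the identity chain reads unambiguously.
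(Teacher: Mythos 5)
Your proof is correct, but it takes a genuinely different route from the paper's. The paper reduces to pure states, writes $|\psi\rangle\langle\psi| = (X\otimes I)\Phi(X\otimes I)^\dag$ using the maximally entangled operator $\Phi = \sum_{i,j}|i\rangle\langle j|\otimes|i\rangle\langle j|$ (after a without-loss-of-generality assumption that $\hs_A$ and $\hs_B$ have equal dimension), computes the left-hand side as $MX(N^\dag N)^T X^\dag M^\dag$, and concludes from $(N^\dag N)^T \sqsubseteq I$; the mixed-state case then follows by linearity. You instead establish the exact identity
$$\tr_{\hs_{B^\prime}}\bigl[(M\otimes N)\rho(M\otimes N)^\dag\bigr] = M\,\tr_{\hs_B}\bigl[(I_{\hs_A}\otimes N^\dag N)\rho\bigr]\,M^\dag$$
directly for arbitrary $\rho$ and reduce everything to the monotonicity of $X\mapsto \tr_{\hs_B}[(I\otimes X)\rho]$ in the L\"owner order. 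Your identity is correct (the basis computation checks out), and your route has two advantages: it avoids the vectorization trick and the basis-dependent transpose entirely, and it works for mixed states and unequal dimensions without any reduction step. The one point to make explicit is that $\tr_{\hs_B}[(I\otimes(I_{\hs_B}-X))\rho]$ is Hermitian before you test it against vectors $|\psi\rangle$ to conclude positivity --- this follows either from the cyclicity of the partial trace over the $B$ factor (so the expression equals $\tr_{\hs_B}[(I\otimes Y^{1/2})\rho(I\otimes Y^{1/2})]$ with $Y = I_{\hs_B}-X$) or simply from your master identity applied with $N = (I_{\hs_B}-X)^{1/2}$. The derivations of the summed version and of the quantum-operation version are the same in both proofs; your phrase ``apply (\ref{part-measure-eq}) to the family $\{(E_i,F_j)\}_{i,j}$'' should be read as applying it once for each fixed $i$ with $M = E_i$ and the family $\{F_j\}_j$, then summing over $i$, which is exactly what the paper does.
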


\begin{proof}
	Actually, if the dimensions of $\hs_A$ and $\hs_B$ are not same, then we can always enlarge the small one to the same dimension. So, without of generality, we assume $\hs_A$ and $\hs_B$ are the same $\hs$ of dimension $d$ (finite or infinite). Then, it is possible to define the maximal entangled operator $\Phi$:
	$$\Phi = \sum_{i,j=0}^{d-1}|i\>_A\<j|\otimes|i\>_B\<j|$$
	where $\{|i\>\}$ is an orthonormal basis of $\hs$. For any pure state $|\psi\>$ in $\hs\otimes\hs$, there exists linear operator $X$ such that:
	$$
	|\psi\>\<\psi| = (X\otimes I)\Phi(X\otimes I)^\dag.
	$$
	It is straightforward to show that
	\begin{align*}
	\tr_{\hs_{B^\prime}} [(M\otimes N)|\psi\>\<\psi|(M\otimes N)^\dag] 
	=\ & \tr_{\hs_{B^\prime}} [(MX\otimes N)\Phi(MX\otimes N)^\dag] \\
	=\ & \tr_{\hs_{B^\prime}} \left[\sum_{i,j=0}^{d-1}(MX\otimes N)|i\>_A\<j|\otimes|i\>_B\<j|(MX\otimes N)^\dag\right] \\
	=\ & \sum_{i,j=0}^{d-1}\tr_{\hs_{B^\prime}} [MX|i\>_A\<j|X^\dag M^\dag \otimes N|i\>_B\<j| N^\dag] \\
	=\ & \sum_{i,j=0}^{d-1} MX|i\>\<j|X^\dag M^\dag  \cdot   \<j| N^\dag N|i\> \\
	=\ & \sum_{i,j=0}^{d-1} MX|i\>\<i|( N^\dag N)^T|j\>\<j|X^\dag M^\dag \\
	=\ & MX(N^\dag N)^TX^\dag M^\dag \\
	\sqsubseteq \ & MXX^\dag M^\dag \\
	=\ & \tr_{\hs_{B^\prime}} [(M\otimes I_\h)|\psi\>\<\psi|(M\otimes I_\h)^\dag]
	\end{align*}
	using the assumption $N^\dag N\le I_{\hs}$ which leads to $(N^\dag N)^T\le I_{\hs}$. Therefore, for any density operators, inequality (\ref{part-measure-ineq}) holds.
	
	If linear operators $N_1,N_2,\cdots,N_n$ over $\hs_B$ satisfy $\sum_{i=1}^nN_i^\dag N_i = I_{\hs_B}$, then for any pure state $|\psi\>$ in $\hs\otimes\hs$, we have:
	\begin{align*}
	\sum_{i=1}^n\tr_{\hs_{B^\prime}} [(M\otimes N_i)\rho(M\otimes N_i)^\dag]
	=\ & \sum_{i=1}^nMX(N_i^\dag N_i)^TX^\dag M^\dag \\
	\sqsubseteq\ & MXX^\dag M^\dag \\
	=\ & M\tr_{\hs_{B}}(|\psi\>\<\psi|)M^\dag.
	\end{align*}
	The linearity ensures the truth of Eqn. (\ref{part-measure-eq}).
	
	Assume that $\E_A(\cdot) = \sum_mE_m(\cdot)E_m^\dag$ and $\F_A(\cdot) = \sum_nF_n(\cdot)F_n^\dag$, then for any $\rho$:
	\begin{align*}
	\tr_{\hs_{B^\prime}}[\E_A\otimes\F_B(\rho)] &= \tr_{\hs_{B^\prime}}\left[\sum_{m,n}(E_m\otimes F_n)\rho (E_m\otimes F_n)^\dag\right] \\
	&= \sum_m\sum_n\tr_{\hs_{B^\prime}}\left[(E_m\otimes F_n)\rho (E_m\otimes F_n)^\dag\right] \\
	&\sqsubseteq \sum_mE_m\tr_{\hs_{B}}(\rho)E_m^\dag \\
	&= \E_A[\tr_{\hs_{B}}(\rho)],
	\end{align*}
	as $\sum_nF_n^\dag F_n\sqsubseteq I_\h$ ($\F$ is a quantum operation).
\end{proof}

\textit{\textbf{Now we are ready to prove the whole Theorem \ref{soundness-thm}}}. It suffices to prove validity of all axioms and inference rules in Figures \ref{fig 4.4-0}, \ref{fig 4.5-0}, \ref{fig 4.5}, \ref{fig 4.6} and \ref{fig 4.7}:

{\vskip 4pt}

$\bullet$ (Skip) Obvious.

{\vskip 4pt}

$\bullet$ (Init) For any $\rho\in\mathcal{D}^\le(\mathcal{H}_{P_1}\otimes\mathcal{H}_{P_2})$, set:
\begin{align*}
\sigma &= \sum_{ij}\left(|0\rangle_{q_1\langle1\rangle}\langle i|\otimes|0\rangle_{q_2\langle2\rangle}\langle j|\right)\rho\left(|i\rangle_{q_1\langle1\rangle}\langle 0|\otimes|j\rangle_{q_2\langle2\rangle}\langle 0|\right)= (\mathcal{E}\otimes\mathcal{F})(\rho),
\end{align*}
where $\mathcal{E}$ and $\mathcal{F}$ are two initial quantum operations with Kraus operators $\{|0\rangle_{q_1\langle1\rangle}\langle i|\}$ and $\{|0\rangle_{q_2\langle2\rangle}\langle i|\}$, respectively.
As $\mathcal{E}$ and $\mathcal{F}$ are all trace-preserving, we know that $\tr(\rho) = \tr(\sigma)$, and  $(\mathcal{E}\otimes\mathcal{F})(\rho)$ is a coupling for
\begin{align*}
&\cp{ \mathcal{E}(\tr_{\langle 2 \rangle}(\rho))}{ \mathcal{F}(\tr_{\langle 1 \rangle}(\rho)) }\\ 
=\ &\cp{ \sum_{i}\left(|0\rangle_{q_1\langle1\rangle}\langle i|\right)(\tr_{\langle 2 \rangle}(\rho))\left(|i\rangle_{q_1\langle1\rangle}\langle 0|\right)}{ \sum_{j}\left(|0\rangle_{q_2\langle2\rangle}\langle j|\right)(\tr_{\langle 1 \rangle}(\rho))\left(|j\rangle_{q_2\langle2\rangle}\langle 0|\right) } \\
=\ &\cp{ \sm{ q_1:=|0\rangle}(\tr_{\langle 2 \rangle}(\rho))}{ \sm{ q_2:=|0\rangle}(\tr_{\langle 1 \rangle}(\rho)) }
\end{align*}
Moreover, we have:
\begin{align*}
&\tr\Bigg\{ \bigg[\sum_{i,j}\left(|i\rangle_{q_1\langle 1\rangle}\langle 0|\otimes |j\rangle_{q_2\langle 2\rangle}\langle 0|\right) A\left( |0\rangle_{q_1\langle 1\rangle}\langle
i|\otimes |0\rangle_{q_2\langle 2\rangle}\langle j| \right)\bigg] \rho \Bigg\} \\
=\ &\tr\bigg[ A\sum_{i,j} \left( |0\rangle_{q_1\langle 1\rangle}\langle
i|\otimes |0\rangle_{q_2\langle 2\rangle}\langle j| \right) \rho \left(|i\rangle_{q_1\langle 1\rangle}\langle 0|\otimes |j\rangle_{q_2\langle 2\rangle}\langle 0|\right)\bigg]
= \tr( A\sigma).
\end{align*}

{\vskip 4pt}

$\bullet$ (UT) For any $\rho\in\mathcal{D}^\le(\mathcal{H}_{P_1}\otimes\mathcal{H}_{P_2})$, we set $$\sigma = (U_1\otimes U_2)\rho(U_1^\dag\otimes U_2^\dag).$$ Of course, $\tr(\rho) = \tr(\sigma)$.
Due to the unitarity of $U_1$ and $U_2$, we have:
\begin{align*}
\tr_{\langle 2 \rangle}(\sigma) &= \tr_{\langle 2 \rangle}((U_1\otimes U_2)\rho(U_1^\dag\otimes U_2^\dag)) = U_1 \tr_{\langle 2 \rangle}(\rho) U_1^\dag = \sm{\overline{q_1}:=U_1\left[\overline{q_1}\right]}(\tr_{\langle 2 \rangle}(\rho))
\end{align*}
and $$\tr_{\langle 2 \rangle}(\sigma) = \sm{\overline{q_2}:=U_2\left[\overline{q_2}\right]}(\tr_{\langle 1 \rangle}(\rho)),$$
which ensure that $\sigma$ is a coupling for
$$\cp{ \sm{ \overline{q_1}:=U_1\left[\overline{q_1}\right] }(\tr_{\langle 2 \rangle}(\rho))}{ \sm{ \overline{q_2}:=U_2\left[\overline{q_2}\right] }(\tr_{\langle 1 \rangle}(\rho))  }.$$
Moreover, it holds that
\begin{align*}
&\tr\left[ \left(U_1^\dag\otimes U_2^\dag\right) A\left( U_1 \otimes U_2\right) \rho \right]
=\ \tr\left[ A\left( U_1 \otimes U_2\right) \rho \left(U_1^\dag\otimes U_2^\dag\right)\right]
=\ \tr( A\sigma).
\end{align*}

{\vskip 4pt}

$\bullet$ (SC+) For $j = 1,2$, we write $\mathcal{H}_{\langle j\rangle}$ for the input state Hilbert space of program $P_j;P_j^\prime$. Assume that $$\text{(i)}\ \Gamma\models P_1\sim P_2: A\Rightarrow B;\ \text{(ii)}\ \Delta\models P_1^\prime\sim P_2^\prime: B\Rightarrow C;\ \text{and (iii)}\ \Gamma\stackrel{(P_1,P_2)}{\models}\Delta.$$ We are going to show that $$\Gamma\models P_1;P_2^\prime\sim P_2;P_2^\prime: A\Rightarrow C.$$
For any $\rho\in\mathcal{D}^\le(\mathcal{H}_{\langle 1\rangle}\otimes\mathcal{H}_{\langle 2\rangle})$, if $\rho\models\Gamma$, then
it follows from assumption (i) that there exists a coupling $\sigma$ for $\cp{ \sm{ P_1 } (\tr_{\langle 2\rangle}\rho)}{  \sm{ P_2 } (\tr_{\langle 1\rangle}\rho) }$ such that $$\tr(A\rho)\le\tr(B\sigma)+\tr(\rho)-\tr(\sigma).$$
Thus, by assumption (iii) and Definition \ref{def-entail} we see that $\sigma\models\Delta$. This together with assumption (ii) implies that there exists a coupling $\delta$ for $\cp{ \sm{P_1^\prime} (\tr_{\langle 2\rangle}\sigma)}{  \sm{P_2^\prime} (\tr_{\langle 1\rangle}\sigma) }$ such that $$\tr(B\sigma)\le\tr(C\delta)+\tr(\sigma)-\tr(\delta).$$ Consequently, it holds that $$\tr(A\rho)\le\tr(C\delta)+\tr(\rho)-\tr(\delta).$$
Therefore, it suffices to show that $\delta$ is a coupling for $\cp{ \sm{P_1;P_1^\prime} (\tr_{\langle 2\rangle}\rho)}{  \sm{P_2;P_2^\prime} (\tr_{\langle 1\rangle}\rho) }$. This is obvious as
$$ \tr_{\langle 2\rangle}\delta = \sm{P_1^\prime} (\tr_{\langle 2\rangle}\sigma) = \sm{P_1^\prime} (\sm{P_1} (\tr_{\langle 2\rangle}\rho)) = \sm{P_1;P_1^\prime} (\tr_{\langle 2\rangle}\rho),$$
and similarly $\tr_{\langle 1\rangle}\delta = \sm{P_2;P_2^\prime} (\tr_{\langle 1\rangle}\rho)$.

{\vskip 4pt}

$\bullet$ (IF) Suppose that $\models P_{1m}\sim P_{2n}: B_{mn}\Rightarrow C$ for every $(m,n)\in S$. Then for any input $\rho$ with reduced density matrices $\rho_1 = \tr_{\langle2\rangle}(\rho),\rho_2 = \tr_{\langle1\rangle}(\rho)$, if $(m,n)\in S$, then there exists a coupling $\sigma_{mn}$ for
$$\cp{\sm{P_{1m}}\left(\tr_{\langle 2\rangle}\big((M_{1m}\otimes M_{2n})\rho(M_{1m}\otimes M_{2n})^\dag\big)\right)}{ \sm{P_{2n}}\left(\tr_{\langle 1\rangle}\big((M_{1m}\otimes M_{2n})\rho(M_{1m}\otimes M_{2n})^\dag\big)\right)}$$ such that
$$\tr\left[B_{mn}\left((M_{1m}\otimes M_{2n})\rho(M_{1m}\otimes M_{2n})^\dag
\right)\right]\leq\tr\left(C\sigma_{mn}\right).$$
Now we set $\sigma=\sum_{(m,n)\in S}\sigma_{mn}$. Then using Proposition 3.3.1(v) in \cite{Ying16} and Lemma \ref{part-measure} we obtain: \begin{align*}\tr_{\langle 2\rangle}(\sigma)=\sum_{(m,n)\in S} \tr_{\langle 2\rangle}(\sigma_{mn}) 
&=\sum_{(m,n)\in S} \sm{P_{1m}}\left(\tr_{\langle 2\rangle}\big((M_{1m}\otimes M_{2n})\rho(M_{1m}\otimes M_{2n})^\dag\big)\right)
\\
&=\sum_m \sm{P_{1m}}\left(\sum_n\tr_{\langle 2\rangle}\big((M_{1m}\otimes M_{2n})\rho(M_{1m}\otimes M_{2n})^\dag\big)\right)
\\
&\sqsubseteq \sum_m \sm{P_{1m}}\left(M_{1m} \tr_{\langle 2\rangle}(\rho) M_{1m}^\dag \right)
\\
&=\sm{\mathbf{if}\ (\square m\cdot
	M_1[\overline{q}]=m\rightarrow P_{1m})\ \mathbf{fi}} (\rho_1),
\end{align*} 
and set the difference $\delta_1 = \sm{\mathbf{if}\ (\square m\cdot
	M_1[\overline{q}]=m\rightarrow P_{1m})\ \mathbf{fi}} (\rho_1) - \tr_{\langle 2\rangle}(\sigma)$.
Similarly, we have:
$$\tr_{\langle 1\rangle}(\sigma) \sqsubseteq \sm{\mathbf{if}\ (\square n\cdot
	M_2[\overline{q}]=n\rightarrow P_{2n})\ \mathbf{fi}} (\rho_2),$$
and set $\delta_2 = \sm{\mathbf{if}\ (\square n\cdot
	M_1[\overline{q}]=n\rightarrow P_{2n})\ \mathbf{fi}} (\rho_2) - \tr_{\langle 1\rangle}(\sigma).$  
The premises of lossless guarantee the existence of coupling for outputs, and therefore, $\delta_1$ and $\delta_2$ has the same trace. So, $\sigma + \delta_1\otimes\delta_2/tr(\delta_1)$ is a coupling for $$\cp{\sm{\mathbf{if}\ (\square m\cdot
	M_1[\overline{q}]=m\rightarrow P_{1m})\ \mathbf{fi}} (\rho_1)}{\ \sm{\mathbf{if}\ (\square m\cdot
	M_2[\overline{q}]=m\rightarrow P_{2m})\ \mathbf{fi}} (\rho_2)}.$$
Moreover, it holds that
\begin{align*}&\tr\left[\sum_{(m,n)\in S}
\left(M_{1m}^{\dag}\otimes M_{2n}^\dag\right)B_{mn}\left(M_{1m}\otimes M_{2n}\right)\rho\right]\\ 
=\ & \sum_{(m,n)\in S} \tr\left[\left(M_{1m}^{\dag}\otimes M_{2n}^\dag\right)B_{mn}\left(M_{1m}\otimes M_{2n}\right)\rho\right]\\ 
=\ & \sum_{(m,n)\in S} \tr\left[B_{mn}\left(M_{1m}\otimes M_{2n}\right)\rho\left(M_{1m}^{\dag}\otimes M_{2n}^\dag\right)\right]\\ 
\leq\ &\sum_{(m,n)\in S}
\tr\left(C\sigma_{mn}\right)=\tr\left(C\sigma \right) \le \tr\left(C(\sigma + \delta_1\otimes\delta_2/tr(\delta_1))\right).
\end{align*} Therefore, it holds that
\begin{align*}\models \mathbf{if}\ (\square m\cdot
M_1[\overline{q}]=m\rightarrow P_{1m})\ \mathbf{fi}\sim \mathbf{if}\ (\square n\cdot
M_2[\overline{q}]=n\rightarrow P_{2n})\ \mathbf{fi}:\ \ \ \ \ \ \ \ \ \ \ \ \ \ \ \ \\
\sum_m
\left(M_{1m}^{\dag}\otimes M_{2n}^\dag\right)B_{mn}\left(M_{1m}\otimes M_{2n}\right)\Rightarrow C.
\end{align*}

{\vskip 4pt}

$\bullet$ (IF1) For any $\rho\in\mathcal{D}^\le(\mathcal{H}_{P_1}\otimes\mathcal{H}_{P_2})$ satisfying the condition $\cM_1\approx \cM_2$, due to the assumption that $$\cM_1\approx \cM_2\models A\Rightarrow\{B_m\},$$ for each possible outcome $m$, there exists a coupling $\sigma_m$ for
$\cp{ M_{1m}(\tr_{\langle 2\rangle}\rho)M_{1m}^\dag}{M_{2m}(\tr_{\langle 1\rangle}\rho)M_{2m}^\dag},$
such that $\tr(\rho) = \sum_m\tr(\sigma_m)$ and $$\tr(A\rho)\le \tr\Big(\sum_mB_m\sigma_m\Big).$$
Moreover, according to the second assumption that $$\models P_{1m}\sim P_{2m}: B_m\Rightarrow C\ {\rm for\ every}\ m,$$ for each $\sigma_m$, there exists a coupling $\delta_m$ for
$\cp{ \sm{P_{1m}} (\tr_{\langle 2\rangle}\sigma_m)}{ \sm{P_{2m}} (\tr_{\langle 1\rangle}\sigma_m) }$
such that
$$\tr(B_m\sigma_m) \le \tr(C\delta_m)+\tr\sigma_m-\tr\delta_m.$$
Set $\delta = \sum_m\delta_m$. Then we obtain:
\begin{align*}
\tr(A\rho) &\le \tr\Big(\sum_mB_m\sigma_m\Big)  \le \sum_m\left[\tr(C\delta_m)+\tr\sigma_m-\tr\delta_m\right]  = \tr(C\delta) + \tr(\rho) -  \tr(\delta).
\end{align*}
Moreover, it is straightforward to see
\begin{align*}
\tr_{\langle 2\rangle}\delta &= \sum_m\tr_{\langle 2\rangle}\delta_m
= \sum_m \sm{P_{1m}} (\tr_{\langle 2\rangle}\sigma_m) = \sum_m \sm{P_{1m}} ( M_{1m}(\tr_{\langle 2\rangle}\rho)M_{1m}^\dag) \\
&= \sm{\mathbf{if}\ (\square m\cdot
	\cM_1[\overline{q}]=m\rightarrow P_{1m})\ \mathbf{fi}} (\tr_{\langle 2\rangle}\rho)
\end{align*}
and similarly $$\tr_{\langle 1\rangle}\delta = \sm{\mathbf{if}\ (\square m\cdot
	\cM_2[\overline{q}]=m\rightarrow P_{2m})\ \mathbf{fi}} (\tr_{\langle 1\rangle}\rho).$$

{\vskip 4pt}

$\bullet$ (LP) For simplicity,
let us use $Q_1$ to denote the \textbf{while}-statement: $$\mathbf{while}\ \cM_1[\overline{q}]=1\ \mathbf{do}\ P_1\ \mathbf{od}$$ and $Q_2$ to denote $$\mathbf{while}\
\cM_2[\overline{q}]=1\ \mathbf{do}\ P_2\ \mathbf{od}.$$ We also introduce an auxiliary notation: for $i = 1,2$, quantum operation $\mathcal{E}_{i0}$ and $\mathcal{E}_{i1}$ are defined by the measurement $\cM_i$
$$\mathcal{E}_{i0}(\rho) = M_{i0}\rho M_{i0}^\dag,\quad \mathcal{E}_{i1}(\rho) = M_{i1}\rho M_{i1}^\dag.$$
The lossless condition ensures that both $P_1$ and $P_2$ are terminating subprograms. For simplicity, we use notations $M_0 = M_{10}\otimes M_{20}$ and $M_1 = M_{11}\otimes M_{21}$.
For any $\rho\in\mathcal{D}(\mathcal{H}_{Q_1}\otimes\mathcal{H}_{Q_2})$, we inductively define the following sequence of states:
\begin{enumerate}
	\item $\sigma_0 = \rho$;
	\item for input $M_1\sigma_nM_1^\dag$, due to the second assumption, $\sigma_{n+1}$ is the coupling of the outputs
	$$\cp{\sm{P_1}(\tr_{\<2\>}(M_1\sigma_nM_1^\dag))}{\sm{P_2}(\tr_{\<1\>}(M_1\sigma_nM_1^\dag))}$$
	such that:
	$$
	\tr(BM_1\sigma_nM_1^\dag)\le \tr\{[M_0^\dag AM_0
	+ M_1^\dag BM_1]\sigma_{n+1}\}.
	$$
\end{enumerate}
We first prove by induction that for all $n\ge0$, it holds that
$$
\tr_{\<2\>}\sigma_{n} \sqsubseteq (\sm{P_1}\circ\E_{11})^n(\tr_{\<2\>}\rho), \quad
\tr_{\<1\>}\sigma_{n} \sqsubseteq (\sm{P_2}\circ\E_{21})^n(\tr_{\<1\>}\rho).
$$
It is trivial for $n=0$. Suppose it holds for $n = k$, then for $n=k+1$, we have
\begin{align*}
\tr_{\<2\>}\sigma_{k+1} &=
\sm{P_1}(\tr_{\<2\>}[(M_{11}\otimes M_{21})\sigma_n(M_{11}^\dag\otimes M_{21}^\dag)]) \\
&\sqsubseteq \sm{P_1} M_{11}\tr_{\<2\>}(\sigma_n)M_{11}^\dag \\
&\sqsubseteq (\sm{P_1}\circ\E_{11})[(\sm{P_1}\circ\E_{11})^k(\tr_{\<2\>}\rho)] \\
&= (\sm{P_1}\circ\E_{11})^{k+1}(\tr_{\<2\>}\rho)
\end{align*}
using Lemma \ref{part-measure}.

Actually, the assumption that $Q_1$ is lossless ensures that
$\lim_{n\rightarrow\infty}(\sm{P_1}\circ\E_{11})^n(\tr_{\<2\>}\rho) = 0$ as its trace characterizes the probability of nontermination. Therefore, $\lim_{n\rightarrow\infty}\sigma_n = 0$.
Together with the choice of $\sigma_{n}$, we have
\begin{align*}
&\tr\bigg(A \sum_{n=0}^{\infty}(M_0\sigma_nM_0^\dag)\bigg) \\
\ge\ & \sum_{n=0}^{\infty}\left[\tr(AM_0\sigma_nM_0^\dag) + \tr(BM_1\sigma_nM_1^\dag)
- \tr\{[M_0^\dag AM_0 + M_1^\dag BM_1]\sigma_{n+1}\}\right] \\
=\ &\tr(AM_0\sigma_0M_0^\dag) + \tr(BM_1\sigma_0M_1^\dag) - \lim_{n\rightarrow\infty}\tr\{[M_0^\dag AM_0 + M_1^\dag BM_1]\sigma_{n}\} \\
=\ &\tr\{[M_0^\dag AM_0 + M_1^\dag BM_1]\rho\}.
\end{align*}
Let us define $\tau = \sum_{n=0}^{\infty}(M_0\sigma_nM_0^\dag)$. According to Lemma \ref{part-measure} again, the following inequality holds:
\begin{align*}
\tr_{\<2\>}(\tau) &= \sum_{n=0}^{\infty}\tr_{\<2\>}[(M_{10}\otimes M_{20})\sigma_n(M_{10}\otimes M_{20})^\dag] \\
&\sqsubseteq \sum_{n=0}^{\infty}M_{10}\tr_{\<2\>}(\sigma_n)M_{10}^\dag \\
&\sqsubseteq \sum_{n=0}^{\infty}\E_{10}[(\sm{P_1}\circ\E_{11})^n(\tr_{\<2\>}\rho)] \\
&= \sum_{n=0}^{\infty} [\E_{10}\circ(\sm{P_1}\circ\E_{11})^n](\tr_{\<2\>}\rho) \\
&= \sm{Q_1}(\tr_{\<2\>}\rho).
\end{align*}
Similarly, $\tr_{\<1\>}(\tau)\sqsubseteq\sm{Q_2}(\tr_{\<1\>}\rho)$. Set: $$\delta_1 = \sm{Q_1}(\tr_{\<2\>}\rho) - \tr_{\<2\>}(\tau)\ {\rm and}\ 
\delta_2 = \sm{Q_2}(\tr_{\<1\>}\rho) - \tr_{\<1\>}(\tau),$$ the terminating condition of both $\bf while$ programs guarantees that $\delta_1$ and $\delta_2$ have the same trace, and therefore, $\tau + \delta_1\otimes\delta_2/\tr(\delta_1)$ is a coupling of $\cp{\sm{Q_1}(\tr_{\<2\>}\rho)}{\sm{Q_2}(\tr_{\<1\>}\rho)}$ and satisfies
$$
\tr\{[M_0^\dag AM_0 + M_1^\dag BM_1]\rho\}\le\tr(A\tau)\le \tr[A(\tau + \delta_1\otimes\delta_2/\tr(\delta_1))].
$$

{\vskip 4pt}

$\bullet$ (LP1) We use the same notations as in above proof of (LP). For any input state $\rho\in\mathcal{D}^\le(\mathcal{H}_{Q_1}\otimes\mathcal{H}_{Q_2})$ satisfying the condition $$(\cM_1,P_1)\approx (\cM_2, P_2),$$ and for any $n\ge0$, we must have:
\begin{align*}
\tr\left[\left(\mathcal{E}_{10}\circ(\sm{P_1}\circ \mathcal{E}_{11})^n\right)(\tr_{\langle 2\rangle}\rho)\right] &=
\tr\left[\left(\mathcal{E}_{20}\circ(\sm{P_2}\circ \mathcal{E}_{21})^n\right)(\tr_{\langle 1\rangle}\rho)\right],
\end{align*}
which implies that any coupling for $$\cp{ (\sm{P_1}\circ \mathcal{E}_{11})^n(\tr_{\langle 2\rangle}\rho) }{ (\sm{P_2}\circ \mathcal{E}_{21})^n(\tr_{\langle 1\rangle}\rho) }$$ must satisfies the condition $\cM_1\approx \cM_2$.
Choose $\delta_0 = \rho$ so that $\delta_0$ is a coupling for $\cp{\tr_{\langle 2\rangle}\rho}{ \tr_{\langle 1\rangle}\rho}$, the following statement holds for $n\ge0$:
\begin{align*}
&\textbf{Statement:} \\
&\quad\exists \text{\ a coupling\ } \sigma_{0n} \text{\ for\ } \cp{ \left(\mathcal{E}_{10}\circ(\sm{P_1}\circ \mathcal{E}_{11})^n\right)(\tr_{\langle 2\rangle}\rho)}{ \left(\mathcal{E}_{20}\circ(\sm{P_2}\circ \mathcal{E}_{21})^n\right)(\tr_{\langle 1\rangle}\rho) }, \\
&\quad\exists \text{\ a coupling\ } \sigma_{1n} \text{\ for\ } \cp{ \left(\mathcal{E}_{11}\circ(\sm{P_1}\circ \mathcal{E}_{11})^n\right)(\tr_{\langle 2\rangle}\rho)}{ \left(\mathcal{E}_{21}\circ(\sm{P_2}\circ \mathcal{E}_{21})^n\right)(\tr_{\langle 1\rangle}\rho) }, \\
&\quad\exists \text{\ a coupling\ } \delta_{n+1} \text{\ for\ } \cp{ (\sm{P_1}\circ \mathcal{E}_{11})^{n+1}(\tr_{\langle 2\rangle}\rho)}{ (\sm{P_2}\circ \mathcal{E}_{21})^{n+1}(\tr_{\langle 1\rangle}\rho) }, \\
&\ \ \text{such that:} \\
&\quad\quad\quad\quad\tr(\delta_n)=\tr(\sigma_{0n}) + \tr(\sigma_{1n}),\\
&\quad\quad\quad\quad\tr(A\delta_n)\le\tr(B_0\sigma_{0n}) + \tr(B_1\sigma_{1n}),\\
&\quad\quad\quad\quad\tr(B_1\sigma_{1n})\le\tr(A\delta_{n+1}) + \tr(\sigma_{1n})-\tr(\delta_{n+1}).
\end{align*}
The above statement can be proved by induction. For basis $n=0$, of course, $\rho$ satisfies $\cM_1\approx \cM_2$, due to the assumption $$\cM_1\approx \cM_2\models A\Rightarrow\{B_0,B_1\},$$ we must have two couplings $\sigma_{00}$ and $\sigma_{10}$ for $\cp{ \mathcal{E}_{10}(\tr_{\langle 2\rangle}\rho)}{ \mathcal{E}_{20}(\tr_{\langle 1\rangle}\rho) }$ and $\cp{ \mathcal{E}_{11}(\tr_{\langle 2\rangle}\rho)}{ \mathcal{E}_{21}(\tr_{\langle 1\rangle}\rho) }$ respectively, such that
\begin{align*}
&\tr(\delta_0) = \tr(\rho)=\tr(\sigma_{00}) + \tr(\sigma_{10}), \\
&\tr(A\delta_0) = \tr(A\rho)\le\tr(B_0\sigma_{00}) + \tr(B_1\sigma_{10}),
\end{align*}
as we set $\delta_0=\rho$.

According to the second assumption $\vdash P_{1}\sim P_{2}: B_1\Rightarrow A$, there exists a coupling $\delta_{1}$ for $\cp{ \sm{P_1}(\tr_{\langle 2\rangle}\sigma_{10})}{ \sm{P_2}(\tr_{\langle 1\rangle}\sigma_{10}) }$ such that
$$\tr(B_1\sigma_{10})\le\tr(A\delta_{1}) + \tr(\sigma_{10})-\tr(\delta_{1}).$$
Note that $$\sm{P_1}(\tr_{\langle 2\rangle}\sigma_{00}) = \sm{P_1}(\mathcal{E}_{11}(\tr_{\langle 2\rangle}\rho)) = (\sm{P_1}\circ\mathcal{E}_{11})(\tr_{\langle 2\rangle}\rho),$$ so $\delta_1$ is a coupling for $$\cp{ (\sm{P_1}\circ \mathcal{E}_{11})(\tr_{\langle 2\rangle}\rho)}{ (\sm{P_2}\circ \mathcal{E}_{21})(\tr_{\langle 1\rangle}\rho) }.$$
So the induction basis is true.

Suppose the statement holds for $n = k-1\ (k\ge1)$ , we will show that the statement also holds for $n = k$. From the assumption, we know that $\delta_{k}$ is a coupling for $$\cp{ (\sm{P_1}\circ \mathcal{E}_{11})^{k}(\tr_{\langle 2\rangle}\rho)}{ (\sm{P_2}\circ \mathcal{E}_{21})^{k}(\tr_{\langle 1\rangle}\rho) },$$ and it also satisfies $\cM_1\approx \cM_2$, so due to the assumption $$\cM_1\approx \cM_2\models A\Rightarrow\{B_0,B_1\},$$ we must have two couplings $\sigma_{0k}$ for
\begin{align*}\cp{ \mathcal{E}_{10}(\tr_{\langle 2\rangle}\delta_{k})}{ \mathcal{E}_{20}(\tr_{\langle 1\rangle}\delta_{k}) }
= \cp{ \big(\mathcal{E}_{10}\circ(\sm{P_1}\circ \mathcal{E}_{11})^k\big)(\tr_{\langle 2\rangle}\rho)}{ \big(\mathcal{E}_{20}\circ(\sm{P_2}\circ \mathcal{E}_{21})^k\big)(\tr_{\langle 1\rangle}\rho) }\end{align*}
and $\sigma_{1k}$ for
\begin{align*}\cp{\mathcal{E}_{11}(\tr_{\langle 2\rangle}\delta_{k})}{ \mathcal{E}_{21}(\tr_{\langle 1\rangle}\delta_{k}) }  = \cp{ \big(\mathcal{E}_{11}\circ(\sm{P_1}\circ \mathcal{E}_{11})^k\big)(\tr_{\langle 2\rangle}\rho)}{ \big(\mathcal{E}_{21}\circ(\sm{P_2}\circ \mathcal{E}_{21})^k\big)(\tr_{\langle 1\rangle}\rho) }\end{align*}
such that
\begin{align*}
&\tr(\delta_k)=\tr(\sigma_{0k}) + \tr(\sigma_{1k}), \\
&\tr(A\delta_k)\le\tr(B_0\sigma_{0k}) + \tr(B_1\sigma_{1k}).
\end{align*}
According to the second assumption $$\vdash P_{1}\sim P_{2}: B_1\Rightarrow A,$$ there exists a coupling $\delta_{k+1}$ for
\begin{align*}&\cp{ \sm{P_1}(\tr_{\langle 2\rangle}\sigma_{1k})}{ \sm{P_2}(\tr_{\langle 1\rangle}\sigma_{1k}) }  =
\cp{ (\sm{P_1}\circ \mathcal{E}_{11})^{k+1}(\tr_{\langle 2\rangle}\rho)}{ (\sm{P_2}\circ \mathcal{E}_{21})^{k+1}(\tr_{\langle 1\rangle}\rho) }
\end{align*}
such that
$$\tr(B_1\sigma_{1k})\le\tr(A\delta_{k+1}) + \tr(\sigma_{1k})-\tr(\delta_{k+1}).$$

So the statement holds for any $n\ge0$. From the statement, we have the following equations for $N\ge0$:
\begin{align*}
\tr(\rho) &= \tr(\sigma_{00}) + \tr(\sigma_{10}) + \sum_{n=1}^N\tr(\delta_n) - \sum_{n=1}^N\tr(\delta_n) \\
&= \tr(\sigma_{00}) + \tr(\sigma_{10}) + \sum_{n=1}^N\left(\tr(\sigma_{0n}) + \tr(\sigma_{1n}) \right) - \sum_{n=1}^N\tr(\delta_n) \\
&= \sum_{n=0}^N\tr(\sigma_{0n}) + \tr(\sigma_{1N}) + \sum_{n=0}^{N-1}\left(\tr(\sigma_{1n}) - \tr(\delta_{n+1})\right),
\end{align*}
and also for $N\ge0$,
\begin{align*}
\tr(A\rho) &\le \tr(B_0\sigma_{00}) + \tr(B_1\sigma_{10}) \\
&\le \tr(B_0\sigma_{00}) + \tr(A\delta_{1}) + \tr(\sigma_{10})-\tr(\delta_{1}) \\
&\le \tr(B_0\sigma_{00}) + \tr(B_0\sigma_{01}) + \tr(B_1\sigma_{11}) + \tr(\sigma_{10})-\tr(\delta_{1})\\
&\ \ \vdots\\
&\le \sum_{n=0}^N\tr(B_0\sigma_{0n}) + \tr(B_1\sigma_{1N}) + \sum_{n=0}^{N-1}\left( \tr(\sigma_{1n})-\tr(\delta_{n+1}) \right)\\
&\le \sum_{n=0}^N\tr(B_0\sigma_{0n}) + \tr(\sigma_{1N}) + \sum_{n=0}^{N-1}\left( \tr(\sigma_{1n})-\tr(\delta_{n+1}) \right) \\
&= \sum_{n=0}^N\tr(B_0\sigma_{0n}) + \tr(\rho) - \sum_{n=0}^N\tr(\sigma_{0n}) \\
&= \tr\left(B_0\sum_{n=0}^N\sigma_{0n}\right) + \tr(\rho) - \tr\left(\sum_{n=0}^N\sigma_{0n}\right).
\end{align*}
Notice that
\begin{align*}
\tr(\sum_{n=0}^N\sigma_{0n})=\sum_{n=0}^N\tr(\sigma_{0n})=\sum_{n=0}^N\tr\left[\mathcal{E}_{10}\circ(\sm{P_1}\circ \mathcal{E}_{11})^n\right](\tr_{\langle 2\rangle}\rho)\leq 1.
\end{align*}
Therefore, $\lim_{N\rightarrow\infty}\sum_{n=0}^N\sigma_{n}$ does exist.
Set $$\sigma = \lim_{N\rightarrow\infty}\sum_{n=0}^N\sigma_{0n},$$ and let $N\rightarrow \infty$ in the above inequality. Then we have:
$$\tr(A\rho) \le\tr\left(B_0\sigma\right) + \tr(\rho) - \tr\left(\sigma\right).$$
Moreover, it is straightforward that $\sigma$ is a coupling for
\begin{align*}
&\lim_{N\rightarrow\infty} \Big\langle \sum_{n=0}^N\left(\mathcal{E}_{10}\circ(\sm{P_1}\circ \mathcal{E}_{11})^n\right)(\tr_{\langle 2\rangle}\rho), \sum_{n=0}^N\left(\mathcal{E}_{20}\circ(\sm{P_2}\circ \mathcal{E}_{21})^n\right)(\tr_{\langle 1\rangle}\rho) \Big\rangle \\
=\ &\cp{ \sm{Q_1} (\tr_{\langle 2\rangle}\rho)}{ \sm{Q_2}(\tr_{\langle 1\rangle}\rho) }.
\end{align*}

{\vskip 4pt}

$\bullet$ (Init-L) For any $\rho\in\mathcal{D}^\le(\mathcal{H}_{P_1}\otimes\mathcal{H}_{P_2})$, set:
\begin{align*}
\sigma &= \sum_{i}\left(|0\rangle_{q_1\langle1\rangle}\langle i| \right)\rho\left(|i\rangle_{q_1\langle1\rangle}\langle 0|\right)
= (\mathcal{E}\otimes\mathcal{I})(\rho),
\end{align*}
where $\mathcal{E}$ is the initial quantum operation with Kraus operators $\{|0\rangle_{q_1\langle1\rangle}\langle i|\}$ and $\mathcal{I}$ is the identity operation.
As $\mathcal{E}$ and $\mathcal{I}$ are all trace-preserving, we know that $\tr(\rho) = \tr(\sigma)$, and $(\mathcal{E}\otimes\mathcal{I})(\rho)$ is a coupling for
\begin{align*}
\cp{ \mathcal{E}(\tr_{\langle 2 \rangle}(\rho))}{ \mathcal{I}(\tr_{\langle 1 \rangle}(\rho)) }
=\ &\Big\langle \sum_{i}\left(|0\rangle_{q_1\langle1\rangle}\langle i|\right)(\tr_{\langle 2 \rangle}(\rho))\left(|i\rangle_{q_1\langle1\rangle}\langle 0|\right), \tr_{\langle 1 \rangle}(\rho) \Big\rangle \\
=\ &\cp{ \sm{q_1:=|0\rangle}(\tr_{\langle 2 \rangle}(\rho))}{ \sm{\mathbf{skip}}(\tr_{\langle 1 \rangle}(\rho)) }
\end{align*}
Moreover,
\begin{align*}
&\tr\left[\sum_{i}\left(|i\rangle_{q_1\langle 1\rangle}\langle 0|\right) A\left( |0\rangle_{q_1\langle 1\rangle}\langle
i|\right)\rho\right]
=\ \tr\left[ A\sum_{i}\left( |0\rangle_{q_1\langle 1\rangle}\langle
i|\right)\rho\left(|i\rangle_{q_1\langle 1\rangle}\langle 0|\right) \right]
=\ \tr(A\sigma).
\end{align*}

{\vskip 4pt}

$\bullet$ (UT-L) Similar to (UT) if we regard $\sm{\mathbf{skip}}$ as $\sm{q := I[q]}$ where $I$ is the identity matrix of the Hilbert space of all variables.

{\vskip 4pt}

$\bullet$ (IF-L) Very similar to (IF); omitted.

{\vskip 4pt}

$\bullet$ (IF1-L) Similar to (IF1). For any $\rho\in\mathcal{D}^\le(\mathcal{H}_{P_1}\otimes\mathcal{H}_{P_2})$, due to the assumption $$\cM_1\approx I_2\models A\Rightarrow\{B_m\},$$ so there exists a serial of states $\sigma_m$ and $\rho_{2m}$ with $$\sum_m\rho_{2m} = \tr_{\langle 1\rangle}\rho,$$ such that for all $m$, $\sigma_m$ is a coupling for
$$\langle M_{1m}(\tr_{\langle 2\rangle}\rho)M_{1m}^\dag,\rho_{2m} \rangle,$$
and $$\tr(A\rho)\le \tr\Big(\sum_mB_m\sigma_m\Big).$$
Moreover, according to the second assumption $$\vdash P_{1m}\sim P: B_m\Rightarrow C\ {\rm for\ every}\ m,$$ for each $\sigma_m$, there exists a coupling $\delta_m$ for
$$\cp{ \sm{P_{1m}} (\tr_{\langle 2\rangle}\sigma_m)}{ \sm{P} (\tr_{\langle 1\rangle}\sigma_m) }$$
such that
$$\tr(B_m\sigma_m) \le \tr(C\delta_m)+\tr\sigma_m-\tr\delta_m.$$
Set $\delta = \sum_m\delta_m$, then
\begin{align*}
\tr(A\rho) &\le \tr\Big(\sum_mB_m\sigma_m\Big)  \le \sum_m\left[\tr(C\delta_m)+\tr\sigma_m-\tr\delta_m\right]  = \tr(C\delta) + \tr(\rho) -  \tr(\delta).
\end{align*}
Moreover, it is straightforward to see
\begin{align*}
\tr_{\langle 2\rangle}\delta &= \sum_m\tr_{\langle 2\rangle}\delta_m
= \sum_m \sm{P_{1m}} (\tr_{\langle 2\rangle}\sigma_m) \\
&= \sum_m \sm{P_{1m}} ( M_{1m}(\tr_{\langle 2\rangle}\rho)M_{1m}^\dag) \\
&= \sm{\mathbf{if}\ (\square m\cdot
	\cM_1[\overline{q}]=m\rightarrow P_{1m})\ \mathbf{fi}} (\tr_{\langle 2\rangle}\rho)
\end{align*}
and
\begin{align*}
\tr_{\langle 1\rangle}\delta &= \sum_m\tr_{\langle 1\rangle}\delta_m
= \sum_m \sm{P} (\tr_{\langle 1\rangle}\sigma_m) \\
&= \sum_m \sm{P} \left( \rho_{2m}\right)
= \sm{P} \left( \sum_m  \rho_{2m} \right) \\
&= \sm{P}( \tr_{\langle 1\rangle}\rho).
\end{align*}

{\vskip 4pt}

$\bullet$ (LP-L) Similar to (LP1). We use notations: $$Q_1\equiv\mathbf{while}\ \cM_1[\overline{q}]=1\ \mathbf{do}\ P_1\ \mathbf{od},\qquad\qquad Q_2 \equiv \mathbf{skip},$$ $\E_{10}(\cdot)=M_{10}(\cdot) M_{10}^\dag$ and  $\E_{11}(\cdot)=M_{11}(\cdot) M_{11}^\dag$. The lossless condition of $Q_1$ ensures that $P_1$ is a terminating subprogram. For simplicity, we use notations $M_0 = M_{10}\otimes I_2$ and $M_1 = M_{11}\otimes I_2$.
For any $\rho\in\mathcal{D}^\le(\mathcal{H}_{Q_1}\otimes\mathcal{H}_{Q_2})$, we inductively define the following sequence of states:
\begin{enumerate}
	\item $\sigma_0 = \rho$;
	\item for input $M_1\sigma_nM_1^\dag$, due to the second assumption, $\sigma_{n+1}$ is the coupling of the outputs
	$$\cp{\sm{P_1}(\tr_{\<2\>}(M_1\sigma_nM_1^\dag))}{\sm{\mathbf{skip}}(\tr_{\<1\>}(M_1\sigma_nM_1^\dag))}$$
	such that:
	$$
	\tr(BM_1\sigma_nM_1^\dag)\le \tr\{[M_0^\dag AM_0
	+ M_1^\dag BM_1]\sigma_{n+1}\}.
	$$
\end{enumerate}
We first prove by induction that for all $n\ge0$, it holds that
$$
\tr_{\<2\>}(\sigma_{n}) = (\sm{P_1}\circ\E_{11})^n(\tr_{\<2\>}\rho), \quad
\tr_{\<1\>}(\sigma_{n}) - \tr_{\<1\>}(\sigma_{n+1}) = \tr_{\<1\>}(M_0\sigma_{n}M_0^\dag).
$$
It is trivial for $n=0$ according to Lemma \ref{part-measure}. Suppose it holds for $n = k$, then for $n=k+1$, using Lemma \ref{part-measure} we have:
\begin{align*}
\tr_{\<2\>}(\sigma_{k+1}) &=
\sm{P_1}(\tr_{\<2\>}[(M_{11}\otimes I_2)\sigma_n(M_{11}^\dag\otimes I_2)]) \\
&= \sm{P_1} M_{11}\tr_{\<2\>}(\sigma_n)M_{11}^\dag \\
&= (\sm{P_1}\circ\E_{11})^{k+1}(\tr_{\<2\>}\rho)
\end{align*}
and:
\begin{align*}
&\tr_{\<1\>}(\sigma_{k}) - \tr_{\<1\>}(\sigma_{k+1}) \\
=\ & \tr_{\<1\>}[(M_{10}\otimes I_2)\sigma_k(M_{10}^\dag\otimes I_2)] +
\tr_{\<1\>}[(M_{11}\otimes I_2)\sigma_k(M_{11}^\dag\otimes I_2)] - \sm{\mathbf{skip}}(\tr_{\<1\>}(M_1\sigma_kM_1^\dag)) \\
=\ & \tr_{\<1\>}(M_0\sigma_{k}M_0^\dag).
\end{align*}

Actually, the assumption that $Q_1$ is lossless ensures that
$\lim_{n\rightarrow\infty}(\sm{P_1}\circ\E_{11})^n(\tr_{\<2\>}\rho) = 0$ as its trace characterizes the probability of nontermination. Therefore, $\lim_{n\rightarrow\infty}\sigma_n = 0$.
Together with the choice of $\sigma_{n}$, we have
\begin{align*}
&\tr\bigg(A \sum_{n=0}^{\infty}(M_0\sigma_nM_0^\dag)\bigg) \\
\ge\ & \sum_{n=0}^{\infty}\left[\tr(AM_0\sigma_nM_0^\dag) + \tr(BM_1\sigma_nM_1^\dag)
- \tr\{[M_0^\dag AM_0 + M_1^\dag BM_1]\sigma_{n+1}\}\right] \\
=\ &\tr(AM_0\sigma_0M_0^\dag) + \tr(BM_1\sigma_0M_1^\dag) - \lim_{n\rightarrow\infty}\tr\{[M_0^\dag AM_0 + M_1^\dag BM_1]\sigma_{n}\} \\
=\ &\tr\{[M_0^\dag AM_0 + M_1^\dag BM_1]\rho\}.
\end{align*}
Let us define $\tau = \sum_{n=0}^{\infty}(M_0\sigma_nM_0^\dag)$. According to Lemma \ref{part-measure} again, we can show that $\tau$ is a coupling of the outputs:
\begin{align*}
\tr_{\<2\>}(\tau) &= \sum_{n=0}^{\infty}\tr_{\<2\>}[(M_{10}\otimes I_2)\sigma_n(M_{10}\otimes I_2)^\dag] = \sum_{n=0}^{\infty}M_{10}\tr_{\<2\>}(\sigma_n)M_{10}^\dag \\
&= \sum_{n=0}^{\infty}\E_{10}[(\sm{P_1}\circ\E_{11})^n(\tr_{\<2\>}\rho)] = \sum_{n=0}^{\infty} [\E_{10}\circ(\sm{P_1}\circ\E_{11})^n](\tr_{\<2\>}\rho) \\
&= \sm{Q_1}(\tr_{\<2\>}\rho).
\end{align*}
and
\begin{align*}
\tr_{\<1\>}(\tau) &= \sum_{n=0}^{\infty}\tr_{\<1\>}[(M_0\sigma_nM_0^\dag] = \sum_{n=0}^{\infty}[\tr_{\<1\>}(\sigma_{n}) - \tr_{\<1\>}(\sigma_{n+1})] \\
&= \tr_{\<1\>}(\rho)-\lim_{n\rightarrow\infty}\tr_{\<1\>}(\sigma_{n}) \\
&= \sm{\mathbf{skip}}(\tr_{\<1\>}\rho).
\end{align*}

{\vskip 4pt}

$\bullet$ (LP1-L) Similar to (LP1). Let us use $Q_1$ to denote the left \textbf{while} program $$\mathbf{while}\ \cM_1[\overline{q}]=1\ \mathbf{do}\ P_1\ \mathbf{od}$$ and $Q_2$ to denote the right program $\mathbf{skip}$ for simplicity, and introduce an auxiliary notation: for $i = 1,2$, the quantum operation $\mathcal{E}_{0}$ and $\mathcal{E}_{1}$ are defined by the measurement $\cM_1$
$$\mathcal{E}_{0}(\rho) = M_{10}\rho M_{10}^\dag,\quad \mathcal{E}_{1}(\rho) = M_{11}\rho M_{11}^\dag.$$

For any input state $\rho\in\mathcal{D}^\le(\mathcal{H}_{Q_1}\otimes\mathcal{H}_{Q_2})$, choose $\delta_0 = \rho$ so that $\delta_0$ is a coupling for $\cp{\tr_{\langle 2\rangle}\rho}{ \tr_{\langle 1\rangle}\rho}$, and $\varrho_{1,-1} = \tr_{\langle 1\rangle}\rho$, then the following statement holds for $n\ge0$:
\begin{align*}
&\textbf{Statement:} \\
&\quad\quad\exists\ \varrho_{0n},\varrho_{1n},\sigma_{0n}, \sigma_{1n}, \delta_{n+1} \text{\ such that} \\
&\quad\quad\quad\quad \sigma_{0n} \text{\ is a coupling for\ } \cp{ \left(\mathcal{E}_{0}\circ(\sm{P_1}\circ \mathcal{E}_{1})^n\right)(\tr_{\langle 2\rangle}\rho)}{ \varrho_{0n} }, \\
&\quad\quad\quad\quad \sigma_{1n} \text{\ is a coupling for\ } \cp{ \left(\mathcal{E}_{1}\circ(\sm{P_1}\circ \mathcal{E}_{1})^n\right)(\tr_{\langle 2\rangle}\rho)}{ \varrho_{1n} }, \\
&\quad\quad\quad\quad \delta_{n+1} \text{\ is a coupling for\ } \cp{ (\sm{P_1}\circ \mathcal{E}_{1})^{n+1}(\tr_{\langle 2\rangle}\rho)}{ \varrho_{1n} }, \\
&\quad\quad\text{and:} \\
&\quad\quad\quad\quad\varrho_{0n}+\varrho_{1n}=\varrho_{1,n-1},\\
&\quad\quad\quad\quad\tr(\delta_n)=\tr(\sigma_{0n}) + \tr(\sigma_{1n}),\\
&\quad\quad\quad\quad\tr(A\delta_n)\le\tr(B_0\sigma_{0n}) + \tr(B_1\sigma_{1n}),\\
&\quad\quad\quad\quad\tr(B_1\sigma_{1n})\le\tr(A\delta_{n+1}) + \tr(\sigma_{1n})-\tr(\delta_{n+1}).
\end{align*}
The above statement can be proved by induction. For basis $n=0$, due to the assumption $\cM_1\approx I_2\models A\Rightarrow\{B_0,B_1\}$, there exist $\varrho_{00}$,  $\varrho_{10}$ and two couplings $\sigma_{00}$ for
$\cp{ \mathcal{E}_{0}(\tr_{\langle 2\rangle}\rho)}{ \varrho_{00} }$
and $\sigma_{10}$ for
$\cp{ \mathcal{E}_{1}(\tr_{\langle 2\rangle}\rho)}{ \varrho_{10} },$
such that
\begin{align*}
&\varrho_{00}+\varrho_{10} = \tr_{\langle 1\rangle}\rho = \varrho_{1,-1}, \\
&\tr(\delta_0) = \tr(\rho)=\tr(\sigma_{00}) + \tr(\sigma_{10}), \\
&\tr(A\delta_0) = \tr(A\rho)\le\tr(B_0\sigma_{00}) + \tr(B_1\sigma_{10}),
\end{align*}
as we set $\delta_0=\rho$ and $\varrho_{1,-1} = \tr_{\langle 1\rangle}\rho$.
According to the second assumption $\vdash P_{1}\sim \mathbf{skip}: B_1\Rightarrow A$, there exists a coupling $\delta_{1}$ for $\cp{ \sm{P_1}(\tr_{\langle 2\rangle}\sigma_{10})}{ \tr_{\langle 1\rangle}\sigma_{10} }$ such that
$$\tr(B_1\sigma_{10})\le\tr(A\delta_{1}) + \tr(\sigma_{10})-\tr(\delta_{1}).$$
Note that $$\sm{P_1}(\tr_{\langle 2\rangle}\sigma_{10}) = \sm{P_1}(\mathcal{E}_{1}(\tr_{\langle 2\rangle}\rho)) = (\sm{P_1}\circ\mathcal{E}_{1})(\tr_{\langle 2\rangle}\rho),$$ so $\delta_1$ is a coupling for
$\langle (\sm{P_1}\circ \mathcal{E}_{1})(\tr_{\langle 2\rangle}\rho), \varrho_{10} \rangle.$
So the basis is true.

Suppose the statement holds for $n = k-1 (k\ge1)$, we will show that the statement also holds for $n = k$. From the assumption $\cM_1\sim I_2\models A\Rightarrow\{B_0,B_1\}$, there exist $\varrho_{0k}$,  $\varrho_{1k}$ and two couplings $\sigma_{0k}$ for
\begin{align*}
\cp{ \mathcal{E}_{0}(\tr_{\langle 2\rangle}\delta_{k})}{ \varrho_{0k}       }
= \cp{ \big(\mathcal{E}_{0}\circ(\sm{P_1}\circ \mathcal{E}_{1})^k\big)(\tr_{\langle 2\rangle}\rho)}{ \varrho_{0k}  }\end{align*}
and $\sigma_{1k}$ for
\begin{align*}
\cp{ \mathcal{E}_{1}(\tr_{\langle 2\rangle}\delta_{k})}{ \varrho_{1k}      } = \cp{ \big(\mathcal{E}_{1}\circ(\sm{P_1}\circ \mathcal{E}_{1})^k\big)(\tr_{\langle 2\rangle}\rho)}{ \varrho_{1k} },\end{align*}
such that
\begin{align*}
&\varrho_{0k}+\varrho_{1k} = \tr_{\langle 1\rangle}\delta_{k} = \varrho_{1,k-1}, \\
&\tr(\delta_k)=\tr(\sigma_{0k}) + \tr(\sigma_{1k}), \\
&\tr(A\delta_k)\le\tr(B_0\sigma_{0k}) + \tr(B_1\sigma_{1k}).
\end{align*}
According to the second assumption $\vdash P_{1}\sim \mathbf{skip}: B_1\Rightarrow A$, there exists a coupling $\delta_{k+1}$ for
\begin{align*}
\cp{ \sm{P_1}(\tr_{\langle 2\rangle}\sigma_{1k})}{ \tr_{\langle 1\rangle}\sigma_{1k}) } = \cp{ (\sm{P_1}\circ \mathcal{E}_{1})^{k+1}(\tr_{\langle 2\rangle}\rho)}{ \varrho_{1k} }
\end{align*}
such that
$$\tr(B_1\sigma_{1k})\le\tr(A\delta_{k+1}) + \tr(\sigma_{1k})-\tr(\delta_{k+1}).$$

So the statement holds for any $n\ge0$. From the statement, we have the following inequality for $N\ge0$ (similar to the proof of (LP-E)):
\begin{align*}
\tr(A\rho) \le \tr\left(B_0\sum_{n=0}^N\sigma_{0n}\right) + \tr(\rho) - \tr\left(\sum_{n=0}^N\sigma_{0n}\right).
\end{align*}
We can notice that $\lim_{N\rightarrow\infty}\sum_{n=0}^N\sigma_{0n}$ does exist via similar arguments of of (LP).
Set $\sigma = \lim_{N\rightarrow\infty}\sum_{n=0}^N\sigma_{0n}$, and let $N\rightarrow \infty$ in the above inequality, we have:
$$\tr(A\rho) \le\tr\left(B_0\sigma\right) + \tr(\rho) - \tr\left(\sigma\right).$$
Moreover, we can check the following equation:
\begin{align*}
\lim_{N\rightarrow\infty} \sum_{n=0}^N \varrho_{0n}
&= \lim_{N\rightarrow\infty} \left\{ \sum_{n=0}^N (\varrho_{0n}+\varrho_{1n}) - \sum_{n=0}^N \varrho_{1n} \right\} \\
&= \lim_{N\rightarrow\infty} \left\{ \sum_{n=0}^N \varrho_{1,n-1} - \sum_{n=0}^N \varrho_{1n} \right\} \\
&= \lim_{N\rightarrow\infty} \left\{ \varrho_{1,-1} - \varrho_{1N} \right\} \\
&= \tr_{\langle 1\rangle}\rho - \lim_{N\rightarrow\infty}\varrho_{1N} \\
&= \tr_{\langle 1\rangle}\rho.
\end{align*}
The last equality holds because
\begin{align*}
\lim_{N\rightarrow\infty}\tr(\varrho_{1N}) &= \tr(\tr_{\langle 1\rangle}\rho) -  \lim_{N\rightarrow\infty}\tr\left(\sum_{n=0}^{N}\varrho_{0n}\right) \\
&= \tr(\rho) - \lim_{N\rightarrow\infty}\tr\left[\sum_{n=0}^{N}\left(\mathcal{E}_{0}\circ(\sm{P_1}\circ \mathcal{E}_{1})^n\right)(\tr_{\langle 2\rangle}\rho)\right] \\
&= \tr(\rho) - \tr[\sm{\mathbf{while}\ \cM_1[\overline{q}]=1\ \mathbf{do}\ P_1\ \mathbf{od}} (\tr_{\langle 2\rangle}\rho)] \\
&= \tr(\rho) - \tr(\tr_{\langle 2\rangle}\rho) \\
&= 0,
\end{align*}
followed from the assumption $\models \mathbf{while}\ \cM_1[\overline{q}]=1\ \mathbf{do}\ P_1\ \mathbf{od} \text{\ lossless}$, therefore, $\lim_{N\rightarrow\infty}\varrho_{1N} = 0$.
Now, it is straightforward that $\sigma$ is a coupling for
\begin{align*}
&\lim_{N\rightarrow\infty} \Big\langle \sum_{n=0}^N\left(\mathcal{E}_{0}\circ(\sm{P_1}\circ \mathcal{E}_{1})^n\right)(\tr_{\langle 2\rangle}\rho), \sum_{n=0}^N \varrho_{0n} \Big\rangle \\
=\ &\cp{ \sm{\mathbf{while}\ \cM_1[\overline{q}]=1\ \mathbf{do}\ P_1\ \mathbf{od}} (\tr_{\langle 2\rangle}\rho)}{ \sm{\mathbf{skip}} (\tr_{\langle 1\rangle}\rho) }.
\end{align*}

{\vskip 4pt}

%\begin{figure}[h]\centering
%\begin{equation*}\begin{split}
%&({\rm Conseq})\ \ \
%\frac{\vdash P_1\sim P_2: A^\prime\Rightarrow B^\prime\ \ \ \ \ \ A\sqsubseteq A^\prime\ \ \ \ \ \ B^\prime\sqsubseteq B}{\vdash P_1\sim P_2: A\Rightarrow B}\\
%&({\rm Equiv})\ \ \
%\frac{\vdash P_{1}^\prime\sim P_{2}^\prime: A\Rightarrow B\ \ \ \ \ \  P_1\equiv P_1^\prime\ \ \ \ \ \  P_2\equiv P_2^\prime}{\vdash P_1\sim P_2:A\Rightarrow B}\\
%&{\color{blue}({\rm Case})\ \ \
%\frac{\vdash P_{1}\sim P_{2}: A_i\Rightarrow B\ (i=1,...,n)}{\vdash P_1\sim P_2: \sum_{i=1}^n p_iA_i\Rightarrow B \quad\{p_i\} \text{\ forms\ a\ distribution}}}\\
%&({\rm Frame})\ \ \
%\frac{\vdash P_1\sim P_2: A\Rightarrow B}{V, \mathit{var}(P_1,P_2) \vdash P_1\sim P_2:A\otimes C\Rightarrow B\otimes C}
%\end{split}\end{equation*}
%\caption{Structural rqPD rules.\ \ \ \ In (Conseq), $\sqsubseteq$ stands for the L\"{o}wner order between operators. In (Equiv), $\equiv$ stands for equivalence between quantum programs. In (Frame), $V\cap\mathit{var}(P_1,P_2)=\emptyset$ and $C$ is a quantum predicate in $\mathcal{H}_V$.}
%\end{figure}

$\bullet$ (Conseq) It follows immediately from the fact that $A\sqsubseteq B$ implies $\tr(A\rho)\leq\tr(B\rho)$ for all $\rho$.

{\vskip 4pt}

$\bullet$ (Weaken) Obvious.

{\vskip 4pt}

$\bullet$ (Case) For any $\rho\in\mathcal{D}^\le(\mathcal{H}_{P_1}\otimes\mathcal{H}_{P_2})$, if $\rho\models\Gamma$, then it follows from the assumption that for each $i$, there exists a coupling $\sigma_i$ for $\cp{ \sm{P_1}(\tr_{2}(\rho))}{ \sm{P_2}(\tr_{1}(\rho))}$ such that $$\tr((A_i)\rho)\le\tr(B\sigma_i)+\tr(\rho)-\tr(\sigma_i).$$ We set $\sigma = \sum_{i=1}^np_i\sigma_i$. Then
\begin{align*}
&\tr\left[\left(\sum_{i=1}^np_iA_i\right)\rho\right] = \sum_{i=1}^np_i\tr\left(A_i\rho\right)  \le\sum_{i=1}^n p_i\left[\tr(B\sigma_i)+\tr(\rho)-\tr(\sigma_i)\right]  =\tr(B\sigma)+\tr(\rho)-\tr(\sigma),\\
&\tr_1(\sigma)=\sum_{i=1}^n\tr_1(p_i\sigma_i) = p_i\sum_{i=1}^n\sm{P_2}(\tr_1(\rho)) = \sm{P_2}(\tr_1(\rho)),
\end{align*}
and $\tr_2(\sigma)= \sm{P_1}(\tr_2(\rho))$. Therefore, it holds that $\Gamma\models P_1\sim P_2: \sum_{i=1}^n p_iA_i\Rightarrow B$.

{\vskip 4pt}

$\bullet$ (Frame) Assume that $\Gamma\models P_1\sim P_2:A\Rightarrow B$. Moreover,
suppose $V = V_1\cup V_2$ where $V_1$ represents the extended variables of $P_1$ and $V_2$ of $P_2$. Of course, $V_1\cap V_2=\emptyset$ and $\mathcal{H}_V = \mathcal{H}_{V_1}\otimes\mathcal{H}_{V_2}$. We prove $\Gamma\cup\left\{[V,\mathit{var}(P_1,P_2)]\right\}\models P_1\sim P_2:A\otimes C\Rightarrow B\otimes C$; that is, for any separable state $\rho$ between $\mathcal{H}_{P_1}\otimes\mathcal{H}_{P_2}$ and $\mathcal{H}_{V}$, if $\rho\models$, then there exists a coupling $\rho^\prime$ in $\mathcal{H}_{P_1}\otimes\mathcal{H}_{P_2}\otimes\mathcal{H}_{V}$ for
$$\cp{ \sm{P_1} (\tr_{\mathcal{H}_{P_2}\otimes\mathcal{H}_{V_2}}(\rho))}{ \sm{P_2} (\tr_{\mathcal{H}_{P_1}\otimes\mathcal{H}_{V_1}}(\rho)) }$$
such that
$$\tr((A\otimes C)\rho)\le \tr((B\otimes C)\rho^\prime) + \tr(\rho)-\tr(\rho^\prime).$$
First of all, by separability of $\rho$, we can write:
$$\rho = \sum_i p_i (\rho_i\otimes \sigma_i)$$
where $\rho_i\in\mathcal{D}^\le(\mathcal{H}_{P_1}\otimes\mathcal{H}_{P_2})$, $\sigma_i\in\mathcal{D}^\le(\mathcal{H}_{V})$ and $p_i\ge0$. Since $V\cap \mathrm{var}(P_1,P_2)=\emptyset$, it holds that
\begin{align*}
&\sm{P_1} (\tr_{\mathcal{H}_{P_2}\otimes\mathcal{H}_{V_2}}(\rho)) = \sum_i p_i[\sm{P_1} (\tr_{\mathcal{H}_{P_2}}(\rho_i)) \otimes \tr_{\mathcal{H}_{V_2}}(\sigma_i) ], \\
&\sm{P_2} (\tr_{\mathcal{H}_{P_1}\otimes\mathcal{H}_{V_1}}(\rho)) = \sum_i p_i[\sm{P_2} (\tr_{\mathcal{H}_{P_1}}(\rho_i)) \otimes \tr_{\mathcal{H}_{V_1}}(\sigma_i) ].
\end{align*}
For each $i$, since $\rho\models\Gamma$ and $\Gamma\models P_1\sim P_2: A\Rightarrow B$, there exists a coupling $\rho_i^\prime$ for $$\cp{ \sm{P_1} (\tr_{\mathcal{H}_{P_2}}(\rho_i))}{ \sm{P_2} (\tr_{\mathcal{H}_{P_1}}(\rho_i)) }$$ such that $\tr(A\rho_i)\le\tr(B\rho_i^\prime) + \tr(\rho_i) - \tr(\rho_i^\prime)$. We set:
$$\rho^\prime = \sum_ip_i(\rho_i^\prime\otimes\sigma_i).$$
Then we can check that
\begin{align*}
\tr_{\langle 2\rangle}(\rho^\prime) &= \tr_{\mathcal{H}_{P_2}\otimes\mathcal{H}_{V_2}}(\rho^\prime)
= \sum_ip_i[\tr_{\mathcal{H}_{P_2}}(\rho_i^\prime) \otimes \tr_{\mathcal{H}_{V_2}}(\sigma_i)] \\
&= \sum_ip_i[\sm{P_1} (\tr_{\mathcal{H}_{P_2}}(\rho_i)) \otimes \tr_{\mathcal{H}_{V_2}}(\sigma_i)]
= \sm{P_1} (\tr_{\mathcal{H}_{P_2}\otimes\mathcal{H}_{V_2}}(\rho))
\end{align*}
and $\tr_{\langle 1\rangle}(\rho^\prime) = \sm{P_2} (\tr_{\mathcal{H}_{P_1}\otimes\mathcal{H}_{V_1}}(\rho))$. Therefore, $\rho^\prime$ is a coupling for
$$\cp{ \sm{P_1} (\tr_{\mathcal{H}_{P_2}\otimes\mathcal{H}_{V_2}}(\rho))}{ \sm{P_2} (\tr_{\mathcal{H}_{P_1}\otimes\mathcal{H}_{V_1}}(\rho)) }.$$ Furthermore, we have:
\begin{align*}
\tr((A\otimes C)\rho) &= \tr\Big((A\otimes C)\sum_i p_i (\rho_i\otimes \sigma_i)\Big) \\
&= \sum_i p_i\tr(A\rho_i)\cdot\tr(C\sigma_i) \\
&\le \sum_i p_i\tr(B\rho_i^\prime)\cdot\tr(C\sigma_i) + \sum_i p_i\Big[\tr(\rho_i) - \tr(\rho_i^\prime)\Big]\cdot\tr(C\sigma_i) \\
&\le \tr\Big((B\otimes C)\sum_i p_i(\rho_i^\prime\otimes\sigma_i)\Big) + \sum_i p_i\Big[\tr(\rho_i) - \tr(\rho_i^\prime)\Big]\cdot\tr(\sigma_i) \\
&= \tr((B\otimes C)\rho^\prime) + \tr\Big(\sum_ip_i\rho_i\otimes\sigma_i\Big) - \tr\Big(\sum_ip_i\rho^\prime_i\otimes\sigma_i\Big) \\
&= \tr((B\otimes C)\rho^\prime) + \tr(\rho) - \tr(\rho^\prime).
\end{align*}\qed

\subsection{Verification of Example \ref{exam-1}}

First, the applications of axiom (UT-R) and (UT) yields:
\begin{align}\label{ex-1.1}(\mathrm{UT\text{-}R})\ \ \ \vdash\mathbf{skip}\sim q:=H[q]:
\frac{1}{2}[I\otimes I+(I\otimes H)S(I\otimes H)]\Rightarrow\ =_{sym}\end{align}
\begin{align}\label{ex-1.2}(\mathrm{UT})\ \ \ \vdash q:=X[q]\sim q:=Z[q]:
A_{00}\Rightarrow\frac{1}{2}[I\otimes I+(I\otimes H)S(I\otimes H)]\end{align}
\begin{align}\label{ex-1.3}(\mathrm{UT})\ \ \ \vdash q:=H[q]\sim q:=H[q]:
A_{11}\Rightarrow\frac{1}{2}[I\otimes I+(I\otimes H)S(I\otimes H)]\end{align}
where:
\begin{align*}&A_{00}=\frac{1}{2}[I\otimes I+(X\otimes ZH)S(X\otimes HZ)],\\ \ \ \ \ &A_{11}=\frac{1}{2}[I\otimes I+(H\otimes HH)S(H\otimes HH)].\end{align*}

Now we need the following technical lemma:

\begin{lem} Let $B_0 =|0\>\<0|\otimes I$, $B_1 = |1\>\<1|\otimes I,$ and $B=B_0+B_1=I\otimes I.$
	Then we have:
	\begin{equation}\label{ex-1.4}\cM\approx \cM^\prime \models B\Rightarrow \{A_{00},A_{11}\}.\end{equation}\end{lem}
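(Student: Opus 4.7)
The plan is to exhibit explicit product-state couplings for each outcome and use the decomposition $B = B_0 + B_1$ of the precondition to reduce the required trace inequality to two separate identities. Fix any $\rho \models \cM \approx \cM^\prime$ and let $p_0, p_1$ be the common outcome probabilities granted by the condition, i.e.\ $\tr(M_i \tr_{\langle 2\rangle}(\rho) M_i) = \tr(M_i^\prime \tr_{\langle 1\rangle}(\rho) M_i^\prime) = p_i$ for $i = 0, 1$. Since all four operators $M_0, M_1, M_0^\prime, M_1^\prime$ are rank-one projectors, each post-measurement state collapses onto a fixed pure state: $M_i \tr_{\langle 2\rangle}(\rho) M_i = p_i |i\rangle\langle i|$, while $M_0^\prime \tr_{\langle 1\rangle}(\rho) M_0^\prime = p_0 |+\rangle\langle +|$ and $M_1^\prime \tr_{\langle 1\rangle}(\rho) M_1^\prime = p_1 |-\rangle\langle -|$. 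This suggests the natural product couplings $\sigma_0 = p_0 |0\rangle\langle 0| \otimes |+\rangle\langle +|$ and $\sigma_1 = p_1 |1\rangle\langle 1| \otimes |-\rangle\langle -|$; their partial traces match the prescribed marginals by construction, so clause (1) of Definition \ref{judgment-measurment} is automatic.

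Next I will verify $\tr(B\rho) \leq \tr(A_{00}\sigma_0) + \tr(A_{11}\sigma_1)$. Splitting $B = B_0 + B_1$, we have $\tr(B_i \rho) = \langle i|\tr_{\langle 2\rangle}(\rho)|i\rangle = p_i$, hence $\tr(B\rho) = p_0 + p_1$, so it suffices to show $\tr(A_{ii}\sigma_i) = p_i$ for each $i$. For $A_{00} = \tfrac{1}{2}(I + U^\dagger S U)$ with $U = X \otimes HZ$ (so $U^\dagger = X \otimes ZH$), the key single-qubit computation is $U|0, +\rangle = X|0\rangle \otimes HZ|+\rangle = |1\rangle \otimes |1\rangle$, which is SWAP-symmetric and hence fixed by $S$; it follows that $U^\dagger S U$ fixes $|0, +\rangle$ and $\tr(A_{00}\sigma_0) = \tfrac{p_0}{2}(1 + 1) = p_0$. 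The analogous calculation for $A_{11} = \tfrac{1}{2}(I + (H \otimes I) S (H \otimes I))$, simplified using $H^2 = I$, reduces to $(H \otimes I)|1, -\rangle = |-, -\rangle$, again SWAP-symmetric, giving $\tr(A_{11}\sigma_1) = p_1$. Adding, $\tr(A_{00}\sigma_0) + \tr(A_{11}\sigma_1) = p_0 + p_1 = \tr(B\rho)$, so the inequality holds (with equality).

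The proof is essentially a bookkeeping exercise and presents no real obstacle. The only point worth flagging is why the operators $A_{00}$ and $A_{11}$ happen to fix the vectors $|0, +\rangle$ and $|1, -\rangle$ appearing in the couplings. This is not a coincidence: in Figure \ref{proofoutline-workingexample} these operators were produced as the weakest preconditions obtained by back-propagating the symmetric postcondition $\tfrac{1}{2}(I + (I \otimes H) S (I \otimes H))$ through the unitaries of the matching $(i,i)$-branches, and the coupling vectors are precisely the joint eigenvectors of $(M_i, M_i^\prime)$, so the symmetry is built in by construction.
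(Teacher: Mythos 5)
Your proof is correct and follows essentially the same route as the paper's: the same product couplings $\sigma_0 = p_0\,|0\rangle\langle 0|\otimes|+\rangle\langle +|$ and $\sigma_1 = p_1\,|1\rangle\langle 1|\otimes|-\rangle\langle -|$, the same splitting $\tr(B\rho)=p_0+p_1$, and the identity $\tr(A_{ii}\sigma_i)=p_i$. The only difference is that you spell out the eigenvector computation showing $A_{00}$ and $A_{11}$ fix $|0,+\rangle$ and $|1,-\rangle$, which the paper asserts without detail; this is a welcome addition rather than a deviation.
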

\begin{proof} For any $\rho\models \cM\approx \cM^\prime$,  it must holds that:
	$$\<0|\tr_2(\rho)|0\> = \<+|\tr_1(\rho)|+\>, \quad \<1|\tr_2(\rho)|1\> = \<-|\tr_1(\rho)|-\>.$$
	Set $\sigma_0 = \<0|\tr_2(\rho)|0\>\cdot |0\>\<0|\otimes |+\>\<+|$ and $\sigma_1 = \<1|\tr_2(\rho)|1\>\cdot |1\>\<1|\otimes |-\>\<-|$, which are exactly the coupling of 
	$\cp{ M_0\tr_2(\rho)M_0}{ M_0^\prime\tr_1(\rho)M_0^\prime}$ and $\cp{ M_1\tr_2(\rho)M_1}{ M_1^\prime\tr_1(\rho)M_1^\prime}$, respectively.
	So, $\tr(A_{00}\sigma_0) = \<0|\tr_2(\rho)|0\>$ and $\tr(A_{11}\sigma_1) = \<1|\tr_2(\rho)|1\>$.
	On the other hand,
	\begin{align*}\tr(B_0\rho) = \tr(|0\rangle\langle 0|\otimes I\rho)
	= \tr(|0\rangle\langle 0|\tr_2(\rho)) = \<0|\tr_2(\rho)|0\> = \tr(A_{00}\sigma_0).\end{align*}
	Similarly, we have $\tr(B_1\rho)=\tr(A_{11}\sigma_1)$. Therefore, $\tr(B\rho)=\tr(A_{00}\sigma_0)+\tr(A_{11}\sigma_1).$
\end{proof}

Combining (\ref{ex-1.2}), (\ref{ex-1.3}) and (\ref{ex-1.4}), we obtain:
\begin{equation}\label{ex-1.5}(\mathrm{IF1})\ \ \ \cM\approx \cM^\prime\vdash Q_1\sim Q_2: I\otimes I\Rightarrow\frac{1}{2}[I\otimes I+(I\otimes H)S(I\otimes H)].\end{equation}
Furthermore, we have:
\begin{equation}\label{ex-1.6}(\mathrm{UT\text{-}L})\ \ \ \vdash q:=H[q]\sim \mathbf{skip}: I\otimes I\Rightarrow I\otimes I.\end{equation}
It holds that \begin{align}(\mathrm{Init})&\ \ \ \vdash q:=|0\rangle\sim q:=|0\rangle: I\otimes I\Rightarrow I\otimes I \nonumber \\
(\mathrm{Conseq})&\ \ \ \vdash q:=|0\rangle\sim q:=|0\rangle: (|00\>\<00|+|11\>\<11|)\Rightarrow I\otimes I
\label{ex-1.7}
\end{align}
because \begin{align*}&\sum_{i,j=0}^1(|i\rangle\langle 0|\otimes |j\rangle\langle 0|)(I\otimes I)(|0\rangle\langle i|\otimes |0\rangle\langle j|)
= I\otimes I, \\
&(|00\>\<00|+|11\>\<11|) \sqsubseteq I\otimes I.
\end{align*}
Then (\ref{ex-1.0}) is proved by combining (\ref{ex-1.1}) and (\ref{ex-1.5}) to (\ref{ex-1.7}) using rule (SC+) by checking the entailment between side-conditions:
$$\emptyset\stackrel{(q:=|0\>,q:=|0\>)}{\models}\cM^\prime\approx\cM^\prime\quad\text{and}\quad \cM^\prime\approx\cM^\prime\stackrel{(q:=H[q],{\bf skip})}{\models}\cM\approx\cM^\prime.$$

\noindent\textbf{Deduction without (IF1)} Now, we use rule (IF) to derive the precondition of two ${\bf if}$ statements.
\begin{align*}
&(\mathrm{UT})\ \ \ \vdash q:=X[q]\sim q:=Z[q]:
A_{00}\Rightarrow\frac{1}{2}[I\otimes I+(I\otimes H)S(I\otimes H)] \\
&(\mathrm{UT})\ \ \ \vdash q:=H[q]\sim q:=H[q]:
A_{11}\Rightarrow\frac{1}{2}[I\otimes I+(I\otimes H)S(I\otimes H)] \\
&(\mathrm{UT})\ \ \ \vdash q:=X[q]\sim q:=H[q]:
A_{01}\Rightarrow\frac{1}{2}[I\otimes I+(I\otimes H)S(I\otimes H)] \\
&(\mathrm{UT})\ \ \ \vdash q:=H[q]\sim q:=Z[q]:
A_{10}\Rightarrow\frac{1}{2}[I\otimes I+(I\otimes H)S(I\otimes H)]
\end{align*}
where: 
\begin{align*}A_{00}=\frac{1}{2}[I\otimes I+(X\otimes ZH)S(X\otimes HZ)],\quad \ A_{11}=\frac{1}{2}[I\otimes I+(H\otimes HH)S(H\otimes HH)], \\
A_{01}=\frac{1}{2}[I\otimes I+(X\otimes HH)S(X\otimes HH)],\quad \ A_{10}=\frac{1}{2}[I\otimes I+(H\otimes ZH)S(H\otimes HZ)].
\end{align*}
Applying rule (IF) directly we obtain:
$$
\vdash Q_1 \sim Q_2:\ A \Rightarrow \frac{1}{2}[I\otimes I+(I\otimes H)S(I\otimes H)],
$$
where $$A = \sum_{i,j=0}^1(M_i\otimes M_j^\prime)^\dag A_{ij}(M_i\otimes M_j^\prime) = \left(
\begin{array}{cccc}
7/8 & 1/8 & 0 & 0 \\
1/8 & 7/8 & 0 & 0 \\
0 & 0 & 7/8 & -1/8 \\
0 & 0 & -1/8 & 7/8 \\
\end{array}
\right).$$
Similarly, using rule (UT-L) and (Init), we are only able to derive 
$$\vdash P_1\sim P_2:\ \frac{7}{8}I\otimes I\Rightarrow\ =_{sym}.$$
However, $=_\B\ \not\sqsubseteq \frac{7}{8}I\otimes I$, so the rule of (IF) is too weak to derive the judgment we desired.

\subsection{Proof of Proposition \ref{uniformity}}

\begin{proof} For simplicity, we write $$E=\frac{I\otimes I}{d}.$$ First of all, we notice that for any $\rho\in\D^\le(\h_P\otimes\h_P)$ and any basis state $|i\>$ in $\mathcal{B}$,
	$$\tr((|i\rangle\langle i|\otimes I)\rho)=\langle i|\tr_2(\rho)|i\rangle,\quad\tr(E\rho)=\frac{\tr(\rho)}{d}.$$
	
	(1) $\Rightarrow$ (2): If (1) is valid, then
	$$\langle i|\sm{P}(\rho_1)|i\rangle=\frac{1}{d},\quad
	\tr(\sm{P}(\rho_1)) = \sum_i\langle i|\sm{P}(\rho_1)|i\rangle = 1 = tr(\rho_1)$$
	for any basis state $|i\>$ in $\mathcal{B}$ and any $\rho_1\in\D(\h_P)$ with $tr(\rho_1)=1$. Therefore, $\sm{P}$ is a terminating quantum program which ensures the existence of the coupling for the outputs.
	
	On the other hand, for any $\rho$ with unit trace, suppose $\sigma$ is a coupling for the outputs $$\cp{ \sm{P}(\tr_2(\rho))}{\sm{P}(\tr_1(\rho)) },$$ then it holds that $$\tr(E\rho)=\frac{\tr(\rho)}{d}=\frac{1}{d}=\langle i|\sm{P}(\tr_2(\rho))|i\rangle = \tr((|i\rangle\langle i|\otimes I)\sigma).$$
	
	(2) $\Rightarrow$ (1): If equation (\ref{e-unif}) is true for every $i$, then for any $\rho_1,\rho_2\in\D(\h_P)$ with unit trace, there exist couplings $\rho$ and $\sigma$ for $\cp{\rho_1}{\rho_2}$ and  $\cp{\sm{P}(\rho_1)}{\sm{P}(\rho_2)}$ respectively such that $$\frac{1}{d}=\tr(E\rho)\leq\tr((|i\rangle\langle i|\otimes I)\sigma)=\langle i|\sm{P}(\rho_1)|i\rangle.$$
	On the other hand, we have: $$\sum_i\langle i|\sm{P}(\rho_1)|i\rangle=\tr(\sm{P}(\rho_1))\leq 1.$$ Consequently, we obtain: $$\langle i|\sm{P}(\rho_1)|i\rangle=\frac{1}{d}$$ for every $i$; that is, $\sm{P} (\rho_1)$ is uniform in $\mathcal{B}=\{|i\rangle\}$ for any $\rho_1$.
	
	(1) $\Rightarrow$ (3): Suppose (1) is valid. Similar to the above proof for (1) $\Rightarrow$ (2), it is sufficient to show that, for any $\rho_1,\rho_2\in\D(\h_P)$ with unit trace, by linearity we have
	$$\tr(=_{\mathcal{C}}^e(\rho_1\otimes\rho_2))\leq \sup_{|\alpha\rangle,|\beta\rangle\in\mathcal{H}_P}\tr[=_{\mathcal{C}}^e(|\alpha\rangle\langle \alpha|\otimes|\beta\rangle\langle \beta|)]=\frac{\tr[(|\alpha\rangle\langle \alpha|)^T|\beta\rangle\langle \beta|]}{d}=\frac{|\langle \bar{\alpha}|\beta\rangle|^2}{d}\leq \frac{1}{d}.$$
	
	(3) $\Rightarrow$ (1): If equation (\ref{eq-unif-1}) is true for every $i$, then for any pure state $\rho_1=|\alpha\rangle\langle \alpha|\in\D(\mathcal{H}_P)$, we set $\rho_2=\rho_1^T$. There must exist a coupling $\sigma$ for $\cp{\sm{P}(\rho_1)}{\sm{P}(\rho_2)}$ such that $$\frac{1}{d}=\tr(=_{\mathcal{C}}^e(\rho_1\otimes\rho_2))\leq\tr((|i\rangle\langle i|\otimes I)\sigma)=\langle i|\sm{P}(\rho_1)|i\rangle.$$
	On the other hand, we have: $$\sum_i\langle i|\sm{P}(\rho_1)|i\rangle=\tr(\sm{P}(\rho_1))\leq 1,$$
	which implies $$\langle i|\sm{P}(\rho_1)|i\rangle=\frac{1}{d}$$ for every $i$; that is, $\sm{P} (\rho_1)$ is uniform in $\mathcal{B}=\{|i\rangle\}$ for any pure state $\rho_1$. By linearity, we know that $\sm{P} (\rho)$ is uniform in $\mathcal{B}=\{|i\rangle\}$ for any state $\rho$.
	
	\begin{rem} Note that the maximally entangled state $|\Phi\rangle$ and thus $=_{\mathcal{C}}^e$ are determined by the given orthonormal basis $\mathcal{C}$. From the above proof, we see that if clause (3) in the above proposition is valid for some orthonormal basis $\mathcal{C}$, then it must be valid for any other orthonormal basis $\mathcal{C}^\prime$.\end{rem}
\end{proof}

\subsection{Verification of Example \ref{Bernoulli}}

For simplicity of the presentation, a judgment $\Gamma \vdash P_1\sim P_2:A\Rightarrow B$ using an inference rule $R$ in rqPD is displaced as
\begin{equation*}\begin{split}
&\{A\}\{{\rm SC:}\ \Gamma\} \\
&\bullet\ P_1\sim P_2 \quad\quad (R)\\
&\{B\}\end{split}\end{equation*}
A formal proof of judgment (\ref{ju-uniform}) using the inference rules of rqPD is presented in Figure \ref{qRHL-uniform} together with the following parameters:
\begin{figure}[b]
	\centering
	\begin{align*}
	& \left\{D = I_{q_x}\otimes I_{q_y}\otimes I_{q_x^\prime}\otimes I_{q_y^\prime}[{}_{q_x}\<0|{}_{q_y}\<0|{}_{q_x^\prime}\<0|{}_{q_y^\prime}\<0|A|0\>_{q_x}|0\>_{q_y}|0\>_{q_x^\prime}|0\>_{q_y^\prime}]\right\}\\
	&\bullet\quad q_x:=|0\rangle;q_y:=|0\rangle;\ \sim\ q_x^\prime:=|0\rangle;q_y^\prime:=|0\rangle; \quad\quad{\rm(Init)}\\
	& \{A\}\{{\rm SC:}\ (\cM,P)\approx(\cM,P)\}\\
	&\bullet\quad \mathbf{while}\ \cM[q_x,q_y]=1\ \mathbf{do}\ \sim\ \mathbf{while}\ \cM[q_x^\prime,q_y^\prime]=1\ \mathbf{do} \quad\quad{\rm(LP1)}\\
	& \quad\quad \left\{B_1 = (U^\dag_{q_x}\otimes U^\dag_{q_y}\otimes U^\dag_{q_x^\prime}\otimes U^\dag_{q_y^\prime})A(U_{q_x}\otimes U_{q_y}\otimes U_{q_x^\prime}\otimes U_{q_y^\prime})\right\}\\
	& \quad\quad \bullet\quad\quad q_x:=U[q_x];\quad\quad \sim\ \quad\quad q_x^\prime:=U[q_x^\prime]; \quad\quad{\rm(UT)}\\
	& \quad\quad  \left\{(U^\dag_{q_y}\otimes U^\dag_{q_y^\prime})A(U_{q_y}\otimes U_{q_y^\prime})\right\} \\
	&\quad\quad  \bullet\quad\quad q_y:=U[q_y];\quad\quad \sim\ \quad\quad q_y^\prime:=U[q_y^\prime]; \quad\quad{\rm(UT)}\\
	& \quad\quad  \{A\}\\
	&\bullet\quad \mathbf{od};\quad\quad\quad\quad\quad\quad\quad\quad\ \ \ \sim\ \mathbf{od};  \\
	& \left\{B_0 = |\psi\>_{q_x}\<\psi|\otimes I_{q_y}\otimes I_{q_x^\prime}\otimes I_{q_y^\prime}\right\}\\
	&\bullet\quad \mathbf{Tr}[q_y];\ \sim\ \mathbf{Tr}[q_y^\prime]; \quad\quad{\rm(SO)}\\
	& \left\{C = |\psi\>_{q_x}\<\psi|\otimes I_{q_x^\prime}\right\}
	\end{align*}
	\caption{Verification of ${\rm MQBF}\sim{\rm MQBF}$ in rqPD.
		Note that two programs are the same and independent of inputs, so the condition of equal probability of the while loop is trivially holds. Moreover, we are able to check $\cM\approx \cM\models A \Rightarrow \{B_0,B_1\}$ which is discussed in the context. Therefore, followed from the pre-condition $A$, we have the post-condition $B_0$ after the while loop.
	}\label{qRHL-uniform}
\end{figure}
\begingroup
\allowdisplaybreaks
\begin{align*}
A &= \left(\begin{array}{cccc}
\frac{1}{2} & & & \\
& |\<0|\psi\>|^2 & & \\
& & |\<1|\psi\>|^2 & \\
& & & \frac{1}{2} \\
\end{array}\right)\otimes I_{q_x^\prime}\otimes I_{q_y^\prime} = A_{\<1\>}\otimes I_{q_x^\prime}\otimes I_{q_y^\prime}, \\ \\
D &= I_{q_x}\otimes I_{q_y}\otimes I_{q_x^\prime}\otimes I_{q_y^\prime}[{}_{q_x}\<0|{}_{q_y}\<0|{}_{q_x^\prime}\<0|{}_{q_y^\prime}\<0|A|0\>_{q_x}|0\>_{q_y}|0\>_{q_x^\prime}|0\>_{q_y^\prime}] = \frac{1}{2}I_{q_x}\otimes I_{q_y}\otimes I_{q_x^\prime}\otimes I_{q_y^\prime} \\ \\
B_1 &= (U^\dag_{q_x}\otimes U^\dag_{q_y}\otimes U^\dag_{q_x^\prime}\otimes U^\dag_{q_y^\prime})A(U_{q_x}\otimes U_{q_y}\otimes U_{q_x^\prime}\otimes U_{q_y^\prime})\\ &= \left(\begin{array}{cccc}
\frac{1}{2} & \cdot & \cdot & 0\\
\cdot & \cdot & \cdot & \cdot \\
\cdot & \cdot & \cdot & \cdot \\
0 & \cdot & \cdot & \frac{1}{2} \\
\end{array}\right)\otimes I_{q_x^\prime}\otimes I_{q_y^\prime}
= B_{1\<1\>}\otimes I_{q_x^\prime}\otimes I_{q_y^\prime}.
\end{align*}
\endgroup
Before using the (LP1) rules, we need first to check the validity of
$$
\cM\approx \cM\models A \Rightarrow \{B_0,B_1\}.
$$
To see this, for any $\rho$ satisfies $\cM\approx \cM$, we choose the coupling $\sigma_0$ and $\sigma_1$
$$\sigma_0 = \frac{(M_0\tr_{\<2\>}(\rho)M_0^\dag)\otimes(M_0\tr_{\<1\>}(\rho)M_0^\dag)}{\tr(M_0\tr_{\<1\>}(\rho)M_0^\dag)},\quad
\sigma_1 = \frac{(M_1\tr_{\<2\>}(\rho)M_1^\dag)\otimes(M_1\tr_{\<1\>}(\rho)M_1^\dag)}{\tr(M_1\tr_{\<1\>}(\rho)M_1^\dag)}$$
for $\cp{M_0\tr_{\<2\>}(\rho)M_0^\dag}{\ M_0\tr_{\<1\>}(\rho)M_0^\dag)}$ and $\cp{M_1\tr_{\<2\>}(\rho)M_1^\dag}{\ M_1\tr_{\<1\>}(\rho)M_1^\dag)}$ respectively, and suppose that the matrix form of $\rho_1 = \tr_{\<2\>}(\rho)$ is
$$
\rho_1 = \tr_{\<2\>}(\rho) = \left(\begin{array}{cccc}
\lambda_{00} & \cdot & \cdot & \cdot\\
\cdot & \lambda_{11} & \cdot & \cdot \\
\cdot & \cdot & \lambda_{22} & \cdot \\
\cdot & \cdot & \cdot & \lambda_{33} \\
\end{array}\right).$$
Then we can calculate the following equations carefully:
\begin{align*}
\tr(A\rho) &= \tr(A_{\<1\>}\otimes I_{q_x^\prime}\otimes I_{q_y^\prime}\rho) = \tr(A_{\<1\>}\tr_{\<2\>}(\rho)) = \frac{1}{2}\lambda_{00} + |\<0|\psi\>|^2\lambda_{11} + |\<1|\psi\>|^2\lambda_{22} + \frac{1}{2}\lambda_{33}, \\ \\[-0.2cm]
\tr(B_0\sigma_0) &= \tr(|\psi\>_{q_x}\<\psi|\otimes I_{q_y}\otimes I_{q_x^\prime}\otimes I_{q_y^\prime}\sigma_0) = \tr(|\psi\>_{q_x}\<\psi|\otimes I_{q_y}\tr_{\<2\>}(\sigma_0)) \\
&= \tr(|\psi\>\<\psi|\tr_{q_y}(M_0\rho_1M_0^\dag)) = \tr(|\psi\>\<\psi| (\lambda_{11}|0\>\<0|+\lambda_{22}|1\>\<1|)   ) \\
&= \lambda_{11}|\<0|\psi\>|^2+\lambda_{22}|\<1|\psi\>|^2, \\ \\[-0.2cm]
\tr(B_1\sigma_1) &= \tr(B_{1\<1\>}\otimes I_{q_x^\prime}\otimes I_{q_y^\prime}\sigma_1) = \tr(B_{1\<1\>}M_1\rho_1M_1^\dag) = \left(\begin{array}{cccc}
\frac{1}{2} & \cdot & \cdot & 0\\
\cdot & \cdot & \cdot & \cdot \\
\cdot & \cdot & \cdot & \cdot \\
0 & \cdot & \cdot & \frac{1}{2} \\
\end{array}\right)
\left(\begin{array}{cccc}
\lambda_{00} & 0 & 0 & \cdot\\
0 & 0 & 0 & 0 \\
0 & 0 & 0 & 0 \\
\cdot & 0 & 0 & \lambda_{33} \\
\end{array}\right) \\
&= \frac{1}{2}\lambda_{00} +  \frac{1}{2}\lambda_{33},\\ \\[-0.2cm]
\tr(A\rho) &= \tr(B_0\sigma_0) + \tr(B_1\sigma_1).
\end{align*}
The side condition of equal probability $(\cM,P)\approx(\cM,P)$ (where $P\equiv q_x:=U[q_x];\ q_y:=U[q_y];$ is the loop body) is trivially holds as the two programs are the same and independent of inputs; that is,
$$\emptyset\stackrel{(q_x:=|0\>;q_y:=|0\>,\ q_x:=|0\>;q_y:=|0\>)}{\models}(\cM,P)\approx(\cM,P)$$
Therefore, we can regard $A$ as the pre-condition and $B_0$ as the post-condition of the while loop, and together with the ${\rm(SC+)}$ rule we conclude that judgment (\ref{ju-uniform}) is valid.

{\vskip 4pt}
\noindent\textbf{Impossibility of Using (LP) to derive judgment (\ref{ju-uniform})}

{\vskip 4pt} 
As shown in Fig. \ref{qRHL-uniform}, if $D = \frac{1}{2}I_{q_x}\otimes I_{q_y} \otimes I_{q_x^\prime}\otimes I_{q_y^\prime}$ is derivable, then the precondition $A$ before the ${\bf while}$ loops must satisfy:
\begin{equation}
\label{QBF-contri}
{}_{q_x}\<0|{}_{q_y}\<0|{}_{q_x^\prime}\<0|{}_{q_y^\prime}\<0|A|0\>_{q_x}|0\>_{q_y}|0\>_{q_x^\prime}|0\>_{q_y^\prime}\ge \frac{1}{2}.
\end{equation}

To simplify the calculation, let us choose a special case to see that above inequality does not holds for all possible $|\psi\>$ and non-trivial $U$. We set:
$$U = H = \frac{1}{\sqrt{2}}\left(\begin{array}{cc}1&1\\1&-1\end{array}\right),\quad |\psi\> = |0\>.$$

Suppose $A$ is derivable using rule (LP) for two ${\bf while}$ programs with postcondition $B_0$, we must have the following proof outline:
\begin{align*}
& \left\{A = (M_0\otimes M_0)^\dag B_0 (M_0\otimes M_0) +  (M_1\otimes M_1)^\dag B_1 (M_1\otimes M_1)\right\}\\
&\bullet\quad \mathbf{while}\ \cM[q_x,q_y]=1\ \mathbf{do}\ \sim\ \mathbf{while}\ \cM[q_x^\prime,q_y^\prime]=1\ \mathbf{do} \quad\quad{\rm(LP)}\\
& \quad\quad \left\{B_1 = (U^\dag_{q_x}\otimes U^\dag_{q_y}\otimes U^\dag_{q_x^\prime}\otimes U^\dag_{q_y^\prime})A(U_{q_x}\otimes U_{q_y}\otimes U_{q_x^\prime}\otimes U_{q_y^\prime})\right\}\\
& \quad\quad \bullet\quad\quad q_x:=U[q_x];\quad\quad \sim\ \quad\quad q_x^\prime:=U[q_x^\prime]; \quad\quad{\rm(UT)}\\
& \quad\quad  \left\{(U^\dag_{q_y}\otimes U^\dag_{q_y^\prime})A(U_{q_y}\otimes U_{q_y^\prime})\right\} \\
&\quad\quad  \bullet\quad\quad q_y:=U[q_y];\quad\quad \sim\ \quad\quad q_y^\prime:=U[q_y^\prime]; \quad\quad{\rm(UT)}\\
& \quad\quad  \left\{A = (M_0\otimes M_0)^\dag B_0 (M_0\otimes M_0) +  (M_1\otimes M_1)^\dag B_1 (M_1\otimes M_1)\right\}\\
&\bullet\quad \mathbf{od};\quad\quad\quad\quad\quad\quad\quad\quad\ \ \ \sim\ \mathbf{od};  \\
& B_0 = |\psi\>_{q_x}\<\psi|\otimes I_{q_y}\otimes I_{q_x^\prime}\otimes I_{q_y^\prime}
\end{align*}

It is straightforward to observe:
\begin{align*}
{}_{q_x}\<0|{}_{q_y}\<0|{}_{q_x^\prime}\<0|{}_{q_y^\prime}\<0|[(M_0\otimes M_0)^\dag B_0 (M_0\otimes M_0)]|0\>_{q_x}|0\>_{q_y}|0\>_{q_x^\prime}|0\>_{q_y^\prime} = 0
\end{align*}
and
\begin{align*}
B_1&= (H\otimes H\otimes H\otimes H) A (H\otimes H\otimes H\otimes H) \\
&\sqsubseteq (H\otimes H\otimes H\otimes H) \big[(M_0\otimes M_0)^\dag (|0\>\<0|\otimes I\otimes I\otimes I) (M_0\otimes M_0) + \\
&\qquad\qquad\qquad  (M_1\otimes M_1)^\dag (I\otimes I\otimes I\otimes I) (M_1\otimes M_1)\big] (H\otimes H\otimes H\otimes H) \\
&= (|+\>\<+|\otimes|-\>\<-|)\otimes(|+\>\<+|\otimes|-\>\<-| + |-\>\<-|\otimes|+\>\<+|) \\
&\quad + (|+\>\<+|\otimes|+\>\<+| + |-\>\<-|\otimes|-\>\<-|)\otimes (|+\>\<+|\otimes|+\>\<+| + |-\>\<-|\otimes|-\>\<-|)
\end{align*}
which implies that:
\begin{align*}
{}_{q_x}\<0|{}_{q_y}\<0|{}_{q_x^\prime}\<0|{}_{q_y^\prime}\<0|[(M_1\otimes M_1)^\dag B_1 (M_1\otimes M_1)]|0\>_{q_x}|0\>_{q_y}|0\>_{q_x^\prime}|0\>_{q_y^\prime} \le
3/8.
\end{align*}
Therefore, 
$${}_{q_x}\<0|{}_{q_y}\<0|{}_{q_x^\prime}\<0|{}_{q_y^\prime}\<0|A|0\>_{q_x}|0\>_{q_y}|0\>_{q_x^\prime}|0\>_{q_y^\prime} \le 0 + 3/8 = 3/8$$
which violates Eqn. (\ref{QBF-contri}).
So we conclude that judgment (\ref{ju-uniform}) is not derivable using (LP) and the use of (LP1) is needed.

\subsection{Verification of Judgment (\ref{ju-tele})}

This subsection gives a verification of the first formulation of correctness of quantum teleportation. The verification is presented in Figure \ref{qRHL-tel}.
\begin{figure}[b]
	\centering
	\begin{align*}
	&\left\{D = {}_q\langle0|{}_r\langle0|H_q{\rm CNOT}_{q,r}{\rm CNOT}_{p,q} H_pCH_p{\rm CNOT}_{p,q}{\rm CNOT}_{q,r}H_q|0\rangle_q|0\rangle_r\right\} \\
	&\bullet\quad q:=|0\rangle;\ \sim\ {\bf skip;} \quad\quad{\rm(Init\text{-}L)}\\
	&\left\{ {}_r\langle0|H_q{\rm CNOT}_{q,r}{\rm CNOT}_{p,q}H_pCH_p{\rm CNOT}_{p,q}{\rm CNOT}_{q,r}H_q|0\rangle_r\right\} \\
	&\bullet\quad r:=|0\rangle;\ \sim\ {\bf skip;} \quad\quad{\rm(Init\text{-}L)}\\
	&\left\{H_q{\rm CNOT}_{q,r}{\rm CNOT}_{p,q}H_pCH_p{\rm CNOT}_{p,q}{\rm CNOT}_{q,r}H_q\right\} \\
	&\bullet\quad q:=H[q];\ \sim\ {\bf skip;} \quad\quad{\rm(UT\text{-}L)}\\
	&\left\{{\rm CNOT}_{q,r}{\rm CNOT}_{p,q}H_pCH_p{\rm CNOT}_{p,q}{\rm CNOT}_{q,r}\right\}\\
	&\bullet\quad q,r:=\ {\rm CNOT}[q,r];\ \sim\ {\bf skip;} \quad\quad{\rm(UT\text{-}L)}\\
	&\left\{{\rm CNOT}_{p,q}H_pCH_p{\rm CNOT}_{p,q}\right\} \\
	&\bullet\quad p,q:=\ {\rm CNOT}[p,q];\ \sim\ {\bf skip;} \quad\quad{\rm(UT\text{-}L)}\\
	&\left\{H_pCH_p\right\} \\
	&\bullet\quad p:=H[p];\ \sim\ {\bf skip;} \quad\quad{\rm(UT\text{-}L)}\\
	&\left\{C = |0\rangle_q\langle0|\otimes B + |1\rangle_q\langle1|\otimes (X_rBX_r)\right\} \\
	&\bullet\ \ \begin{array}{l}
	\mathbf{if}\ \cM[q]=0\rightarrow \mathbf{skip}\\ \square\, \ \ \ \ \ \ \ \ \ \ \quad 1\rightarrow r:=X[r]\ \mathbf{fi};
	\end{array}\ \sim\ {\bf skip;} \quad\quad{\rm(IF\text{-}L)}\\
	&\left\{B = |0\rangle_p\langle0|\otimes A + |1\rangle_p\langle1|\otimes (Z_rAZ_r)\right\} \\
	&\bullet\ \ \begin{array}{l}
	\mathbf{if}\ \cM[p]=0\rightarrow \mathbf{skip}\\ \square\, \ \ \ \ \ \ \ \ \ \ \quad 1\rightarrow r:=Z[r]\ \mathbf{fi};
	\end{array}\ \sim\ {\bf skip;} \quad\quad{\rm(IF\text{-}L)}\\
	&\left\{A = |\psi\rangle_r\langle\psi|\otimes|\psi\rangle_s\langle\psi|+|\phi\rangle_r\langle\phi|\otimes|\phi\rangle_s\langle\phi|\right\}
	\end{align*}
	\caption{Verification of QTEL$\sim$\textbf{skip} in rqPD. Each line marked by the bullet presents two programs $P_1\sim P_2$, and using the proof rule indicated in the bracket, we can show the validity of the judgment $\vdash P_1\sim P_2: \mathrm{Pre}\Rightarrow\mathrm{Post}$ if we choose the quantum predicate in the previous line as the precondition $\mathrm{Pre}$ and the quantum predicate in the next line as the postcondition $\mathrm{Post}$. The (SC) rule ensures the validity of the whole programs $\vdash \mathrm{QTEL}\sim \mathbf{skip}: D\Rightarrow A$.}\label{qRHL-tel}
\end{figure}
Let us carefully explain the notations and several key steps in the proof. 
%First of all, we need following technical lemma in the steps using rule (IF-L):
%
%\begin{lem}\label{lem-one-side-measure-annihilation}
%	Let $\cM_1=\{M_{1m}=\langle\psi_m|\}$ be a measurement with $\{|\psi_m\rangle\}$ being an orthonormal basis of $\mathcal{H}_{\cM_1}$. Then
%	$$\cM_1\approx I_2\models \sum_m|\psi_m\rangle\langle\psi_m|\otimes B_m\Rightarrow\{B_m\},$$
%	where $I_2$ stands for the identity operator in $\mathcal{H}_{P_2}$, the input Hilbert space of $P_1$ is $\mathcal{H}_{P_1} = \mathcal{H}_{\cM_1} \otimes \mathcal{H}_{1}$ and the output Hilbert space of $P_1$ is $\mathcal{H}_{1}$ as the measurement discards the qubits of $\mathcal{H}_{\cM_1}$, and $B_m$ are quantum predicates in $\mathcal{H}_{1}\otimes\mathcal{H}_{P_2}$ for all $m$.
%\end{lem}
%
%\begin{proof} Straightforward.\end{proof}
%
%Secondly, 
We use $s$ to denote the state space of the second program $\bf{skip}$. Therefore we can write: $$A = |\psi\rangle_r\langle\psi|\otimes|\psi\rangle_s\langle\psi|+|\phi\rangle_r\langle\phi|\otimes|\phi\rangle_s\langle\phi|= (=_\mathcal{B}).$$ Note that
\begin{equation}
\label{defE}
E = {}_q\langle0|{}_r\langle0|H_q{\rm CNOT}_{q,r}{\rm CNOT}_{p,q}H_p.
\end{equation}
So, our goal is to prove: $$D = ECE^\dag = (=_\mathcal{B}) = |\psi\rangle_p\langle\psi|\otimes|\psi\rangle_s\langle\psi| + |\phi\rangle_p\langle\phi|\otimes|\phi\rangle_s\langle\phi|.$$
For any vectors $|\alpha\rangle$ and $|\beta\rangle$, we have the following:
\begin{align}
\label{propE}
&E|0\rangle_p|0\rangle_q|\alpha\rangle_r|\beta\rangle_s = \frac{1}{2}|\alpha\rangle_p|\beta\rangle_s, \qquad
&&E|1\rangle_p|0\rangle_qZ_r|\alpha\rangle_r|\beta\rangle_s = \frac{1}{2}|\alpha\rangle_p|\beta\rangle_s, \nonumber\\
&E|0\rangle_p|1\rangle_qX_r|\alpha\rangle_r|\beta\rangle_s = \frac{1}{2}|\alpha\rangle_p|\beta\rangle_s,\qquad
&&E|1\rangle_p|1\rangle_qX_rZ_r|\alpha\rangle_r|\beta\rangle_s = \frac{1}{2}|\alpha\rangle_p|\beta\rangle_s.
\end{align}
So, we can calculate $D$ directly:
\begin{align*}
D &= ECE^\dag = E(|0\rangle_q\langle0|\otimes B + |1\rangle_q\langle1|\otimes (X_rBX_r))E^\dag \\
&= E\left[|0\rangle_p\langle0|\otimes|0\rangle_q\langle0|\otimes A + |1\rangle_p\langle1|\otimes|0\rangle_q\langle0|\otimes (Z_rAZ_r)\right] + |0\rangle_p\langle0|\otimes|1\rangle_q\langle1|\otimes (X_rAX_r) \\&\qquad  + |1\rangle_p\langle1|\otimes|1\rangle_q\langle1|\otimes (X_rZ_rAZ_rX_r)]E^\dag \\
&=  (E|0\rangle_p|0\rangle_q|\psi\rangle_r|\psi\rangle_s)({}_p\langle0|_q\langle0|_r\langle\psi|_s\langle\psi|E^\dag) + (E|0\rangle_p|0\rangle_q|\phi\rangle_r|\phi\rangle_s)({}_p\langle0|_q\langle0|_r\langle\phi|_s\langle\phi|E^\dag) \\
&\qquad + (E|1\rangle_p|0\rangle_qZ_r|\psi\rangle_r|\psi\rangle_s)({}_p\langle1|_q\langle0|_r\langle\psi|Z_r{}_s\langle\psi|E^\dag) + (E|1\rangle_p|0\rangle_qZ_r|\phi\rangle_r|\phi\rangle_s)({}_p\langle1|_q\langle0|_r\langle\phi|Z_r{}_s\langle\phi|E^\dag) \\
&\qquad + (E|0\rangle_p|1\rangle_qX_r|\psi\rangle_r|\psi\rangle_s)({}_p\langle0|_q\langle1|_r\langle\psi|X_r{}_s\langle\psi|E^\dag) + (E|0\rangle_p|1\rangle_qX_r|\phi\rangle_r|\phi\rangle_s)({}_p\langle0|_q\langle1|_r\langle\phi|X_r{}_s\langle\phi|E^\dag) \\
&\qquad + (E|1\rangle_p|1\rangle_qX_rZ_r|\psi\rangle_r|\psi\rangle_s)({}_p\langle1|_q\langle1|_r\langle\psi|Z_rX_r{}_s\langle\psi|E^\dag)\\ &\qquad + (E|1\rangle_p|1\rangle_qX_rZ_r|\phi\rangle_r|\phi\rangle_s)({}_p\langle1|_q\langle1|_r\langle\phi|Z_rX_r{}_s\langle\phi|E^\dag) \\
&= |\psi\rangle_p\langle\psi|\otimes|\psi\rangle_s\langle\psi| + |\phi\rangle_p\langle\phi|\otimes|\phi\rangle_s\langle\phi|.
\end{align*}
Therefore, the judgment is valid.

\subsection{Verification of Judgment (\ref{ju-tele1})}

This subsection gives a verification of the second formulation of correctness of quantum teleportation.
A formal proof of judgment (\ref{ju-tele1}) using the inference rules of rqPD is presented in Figure \ref{qRHL-tel1}.
\begin{figure}[b]
	\centering
	\begin{align*}
	&\left\{D = {}_q\langle0|{}_r\langle0|H_q{\rm CNOT}_{q,r}{\rm CNOT}_{p,q}H_pCH_p{\rm CNOT}_{p,q}{\rm CNOT}_{q,r}H_q|0\rangle_q|0\rangle_r\right\} \\
	&\bullet\quad q:=|0\rangle;\ \sim\ {\bf skip;} \quad\quad{\rm(Init\text{-}L)}\\
	&\left\{{}_r\langle0|H_q{\rm CNOT}_{q,r}{\rm CNOT}_{p,q}H_pCH_p{\rm CNOT}_{p,q}{\rm CNOT}_{q,r}H_q|0\rangle_r\right\} \\
	&\bullet\quad r:=|0\rangle;\ \sim\ {\bf skip;} \quad\quad{\rm(Init\text{-}L)}\\
	&\left\{H_q{\rm CNOT}_{q,r}{\rm CNOT}_{p,q}H_pCH_p{\rm CNOT}_{p,q}{\rm CNOT}_{q,r}H_q\right\} \\
	&\bullet\quad q:=H[q];\ \sim\ {\bf skip;} \quad\quad{\rm(UT\text{-}L)}\\
	&\left\{{\rm CNOT}_{q,r}{\rm CNOT}_{p,q}H_pCH_p{\rm CNOT}_{p,q}{\rm CNOT}_{q,r}\right\}\\
	&\bullet\quad q,r:=\ {\rm CNOT}[q,r];\ \sim\ {\bf skip;} \quad\quad{\rm(UT\text{-}L)}\\
	&\left\{{\rm CNOT}_{p,q}H_pCH_p{\rm CNOT}_{p,q}\right\} \\
	&\bullet\quad p,q:=\ {\rm CNOT}[p,q];\ \sim\ {\bf skip;} \quad\quad{\rm(UT\text{-}L)}\\
	&\left\{H_pCH_p\right\} \\
	&\bullet\quad p:=H[p];\ \sim\ {\bf skip;} \quad\quad{\rm(UT\text{-}L)}\\
	&\left\{C = |0\rangle_q\langle0|\otimes B + |1\rangle_q\langle1|\otimes (X_rBX_r)\right\} \\
	&\bullet\ \ \begin{array}{l}
	\mathbf{if}\ \cM[q]=0\rightarrow \mathbf{skip}\\ \ \ \ \ \ \square\, \ \ \ \ \ \ \ \ 1\rightarrow r:=X[r]\ \mathbf{fi};
	\end{array}\ \sim\ {\bf skip;} \quad\quad{\rm(IF\text{-}L)}\\
	&\left\{B = |0\rangle_p\langle0|\otimes A + |1\rangle_p\langle1|\otimes (Z_rAZ_r)\right\} \\
	&\bullet\ \ \begin{array}{l}
	\mathbf{if}\ \cM[p]=0\rightarrow \mathbf{skip}\\ \ \ \ \ \ \square\, \ \ \ \ \ \ \ \ 1\rightarrow r:=Z[r]\ \mathbf{fi};
	\end{array}\ \sim\ {\bf skip;} \quad\quad{\rm(IF\text{-}L)}\\
	&\left\{A = \frac{1}{2}\left(I_{rs}+\sum_{ij=0}^1|i\>_r\<j|\otimes|j\>_s\<i|\right)\right\}
	\end{align*}
	\caption{Alternative verification of QTEL$\sim$\textbf{skip} in rqPD.}\label{qRHL-tel1}
\end{figure}
Similarly to the proof of judgment (\ref{ju-tele}) above, we use $s$ to denote the state space of the second program $\bf{skip}$ and introduce a different symbol $A$:
\begin{align*}&A = \frac{1}{2}\left(I_{rs}+\sum_{ij=0}^1|i\>_r\<j|\otimes|j\>_s\<i|\right) = \frac{1}{2}\sum_{ij=0}^1( |i\>_r\<i|\otimes|j\>_s\<j|  +  |i\>_r\<j|\otimes|j\>_s\<i|)\equiv (=_{sym}).\end{align*}
Using the same definition of $E$ in Eqn. (\ref{defE}), our goal is to prove:
$$D = ECE^\dag = (=_{sym}) = \frac{1}{2}\left(I_{ps}+\sum_{ij=0}^1|i\>_p\<j|\otimes|j\>_s\<i|\right).$$
We can calculate $D$ directly from $C$ according to the facts in Eqns. (\ref{propE}):
\begingroup
\allowdisplaybreaks
\begin{align*}
C &= \sum_{m,n=0}^1|m\>_p\<m|\otimes |n\>_q\<n| \otimes X_r^nZ_r^mAZ_r^mX_r^n \\
D &= ECE^\dag = \frac{1}{2}E\Big(\sum_{m,n=0}^1|m\>_p\<m|\otimes |n\>_q\<n| \otimes X_r^nZ_r^m \frac{1}{2}\sum_{ij=0}^1|i\>_r\<i|\otimes|j\>_s\<j|Z_r^mX_r^n  \Big)E^\dag \\
&\quad + \frac{1}{2}E\Big(\sum_{m,n=0}^1|m\>_p\<m|\otimes |n\>_q\<n| \otimes X_r^nZ_r^m \frac{1}{2}\sum_{ij=0}^1|i\>_r\<j|\otimes|j\>_s\<i|Z_r^mX_r^n  \Big)E^\dag \\
&= \frac{1}{2}\sum_{m,n,i,j=0}^1 (E|m\>_p|n\>_qX_r^nZ_r^m|i\>_r|j\>_s)({}_p\<m|_q\<n|_r\<i|_s\<j|Z_r^mX_r^nE^\dag) \\
&\quad + \frac{1}{2}\sum_{m,n,i,j=0}^1 (E|m\>_p|n\>_qX_r^nZ_r^m|i\>_r|j\>_s)({}_p\<m|_q\<n|_r\<j|_s\<i|Z_r^mX_r^nE^\dag) \\
&= \frac{1}{2}\sum_{m,n,i,j=0}^1 \left(\frac{1}{2}|i\>_p|j\>_s\right)\left(\frac{1}{2}{}_p\<i|_s\<j|\right) + \frac{1}{2}\sum_{m,n,i,j=0}^1 \left(\frac{1}{2}|i\>_p|j\>_s\right)\left(\frac{1}{2}{}_p\<j|_s\<i|\right) \\
&= \frac{1}{2}I_{ps} + \frac{1}{2}\sum_{i,j=0}^1 (|i\>_p\<j|\otimes|j\>_s\<i|) \\
&= (=_{sym})
\end{align*}
\endgroup
Therefore, the judgment is valid.

\subsection{Verification of Judgments (\ref{ju-bit}), (\ref{ju-phase}) and (\ref{ju-bp})}

In this subsection, we verify the reliability of quantum teleportation against several models of quantum noise. 

%Principally, any {\bf if} statement can also be written as a quantum operation. More precisely, we can use the following two quantum operations $\mathcal{E}_{\cM_q}$ and $\mathcal{E}_{\cM_p}$ with theirs Kraus representation to replace the last two {\bf if} statement in QTEL respectively.
%\begin{align*}
%	\mathcal{E}_{\cM_q} = \{\mathcal{E}_{0\cM_q}=|0\rangle_q\otimes I_r, \mathcal{E}_{1\cM_q}=|1\rangle_q\otimes X_r\},\quad \mathcal{E}_{\cM_p} = \{\mathcal{E}_{0\cM_p}=|0\rangle_p\otimes I_r, \mathcal{E}_{1\cM_p}=|1\rangle_p\otimes Z_r\}
%\end{align*}

\begin{figure}
	\centering
	\begin{align*}
	&\left\{S = {}_q\langle0|{}_{q^\prime}\langle0|{}_r\langle0|{}_{r^\prime}\langle0|(H_q\otimes H_{q^\prime})R(H_q\otimes H_{q^\prime})|0\rangle_q|0\rangle_{q^\prime}|0\rangle_r|0\rangle_{r^\prime}\right\}
	\\
	&\bullet\quad q:=|0\rangle;\ \sim\ q^\prime:=|0\rangle; \quad\quad{\rm(Init)}\\
	&\left\{{}_r\langle0|{}_{r^\prime}\langle0|(H_q\otimes H_{q^\prime})R(H_q\otimes H_{q^\prime})|0\rangle_r|0\rangle_{r^\prime}\right\} \\
	&\bullet\quad r:=|0\rangle;\ \sim\ r^\prime:=|0\rangle; \quad\quad{\rm(Init)}\\
	&\left\{(H_q\otimes H_{q^\prime})R(H_q\otimes H_{q^\prime})\right\} \\
	&\bullet\quad q:=H[q];\ \sim\ q^\prime:=H[q^\prime]; \quad\quad{\rm(UT)}\\
	&\,\big\{R = \mathcal{E}^*_{q} (Q)\big\}\\
	&\bullet\quad q:=\mathcal{E}[q];\ \sim\ {\bf skip;} \quad\quad{\rm(SO\text{-}L)}\\
	&\left\{Q = ({\rm CNOT}_{q,r}\otimes {\rm CNOT}_{q^\prime,r^\prime})P({\rm CNOT}_{q,r} \otimes {\rm CNOT}_{q^\prime,r^\prime} )\right\}\\
	&\bullet\quad q,r:=\ {\rm CNOT}[q,r];\ \sim\ q^\prime,r^\prime:=\ {\rm CNOT}[q^\prime,r^\prime]; \quad\quad{\rm(UT)}\\
	&\left\{P = ({\rm CNOT}_{p,q}H_p\otimes{\rm CNOT}_{p^\prime,q^\prime}H_{p^\prime})   D
	(H_p{\rm CNOT}_{p,q}\otimes H_{p^\prime}{\rm CNOT}_{p^\prime,q^\prime})\right\} \\
	&\bullet\quad p,q:=\ {\rm CNOT}[p,q];\ \sim\ p^\prime,q^\prime:=\ {\rm CNOT}[p^\prime,q^\prime]; \quad\quad{\rm(UT)}\\
	&\left\{(H_p\otimes H_{p^\prime})D(H_p\otimes H_{p^\prime})\right\} \\
	&\bullet\quad p:=H[p];\ \sim\ p^\prime:=H[p^\prime]; \quad\quad{\rm(UT)}\\
	&\,\big\{D = \mathcal{E}^*_{p} (C)\big\} \\
	&\bullet\quad p:=\mathcal{E}[p];\ \sim\ {\bf skip;} \quad\quad{\rm(SO\text{-}L)}\\
	&\,\bigg\{C =  \sum_{j,j^\prime=0,1} |j\>_q\<j|\otimes |j^\prime\>_{q^\prime}\<j^\prime| \otimes (X_r^j\otimes X_{r^\prime}^{j^\prime})B (X_r^j\otimes X_{r^\prime}^{j^\prime})\bigg\} \\
	&\bullet\ \ \begin{array}{l}
	\mathbf{if}\ \cM[q]=0\rightarrow \mathbf{skip}\\ \square\, 1\rightarrow r:=X[r]\ \mathbf{fi};
	\end{array}\ \sim\ \begin{array}{l}
	\mathbf{if}\ \cM[q^\prime]=0\rightarrow \mathbf{skip}\\ \square\, 1\rightarrow r^\prime:=X[r^\prime]\ \mathbf{fi};
	\end{array}\ \quad\quad{\rm(IF)}\\
	&\,\bigg\{B = \sum_{i,i^\prime=0,1} |i\>_p\<i|\otimes |i^\prime\>_{p^\prime}\<i^\prime| \otimes (Z_r^i\otimes Z_{r^\prime}^{i^\prime})A (Z_r^i\otimes Z_{r^\prime}^{i^\prime})\bigg\} \\
	&\bullet\ \ \begin{array}{l}
	\mathbf{if}\ \cM[p]=0\rightarrow \mathbf{skip}\\ \square\, 1\rightarrow r:=Z[r]\ \mathbf{fi};
	\end{array}\ \sim\ \begin{array}{l}
	\mathbf{if}\ \cM[p^\prime]=0\rightarrow \mathbf{skip}\\  \square\, 1\rightarrow r^\prime:=Z[r^\prime]\ \mathbf{fi};
	\end{array}\ \quad\quad{\rm(IF)}\\
	&\left\{A = |\psi\rangle_r\langle\psi|\otimes|\psi\rangle_{r^\prime}\langle\psi|\right\}
	\end{align*}
	\caption{Verification of ${\rm QTEL_{\it noise}}\sim {\rm QTEL}$ in rqPD.}\label{qRHL-tel-noise}
\end{figure}

Now, using the inference rules in Figure \ref{fig 4.4-0}, \ref{fig 4.5-0} and \ref{fig 5.1}, we present the formal proof in Figure \ref{qRHL-tel-noise}. We can calculate the following equations directly. For simplicity, $[\cdot]$ is a copy of the context between the previous $[$ and $]$.
\begingroup
\allowdisplaybreaks
\begin{align*}
&C = \sum_{i,j,i^\prime,j^\prime = 0}^1 \left[|i\rangle_p|j\rangle_qX^jZ^i|\psi\rangle_r|i^\prime\rangle_{p^\prime}|j^\prime\rangle_{q^\prime}X^{j^\prime} Z^{i^\prime}|\psi\rangle_{r^\prime}\right][\cdot]^\dag \\
&F = H_q{\rm CNOT}_{q,r}{\rm CNOT}_{p,q}H_p, \\
&F^\prime = H_{q^\prime}{\rm CNOT}_{q^\prime,r^\prime}{\rm CNOT}_{p^\prime,q^\prime}H_{p^\prime} \\
&D = \mathcal{E}^*_{p}(C) = \sum_{k = 0}^1 E^\dag_{kp}(C)E_{kp} \\
&R = \mathcal{E}^*_{q}(Q) = \mathcal{E}^*_{q}((H_qH_q)\otimes (H_{q^\prime}H_{q^\prime})Q(H_qH_q)\otimes (H_{q^\prime}H_{q^\prime}))  \\
&\ \ = \mathcal{E}^*_{q}((H_qF)\otimes (H_{q^\prime}F^\prime) D(F^\dag H_q)\otimes (F^{\prime\dag}H_{q^\prime})) \\
&\ \ = \sum_{k,l=0}^1(E^\dag_{lq}H_qFE^\dag_{kp}) \otimes (H_{q^\prime}F^\prime) (C) (E_{kp}H_qF^\dag E_{lq})\otimes  (F^{\prime\dag}H_{q^\prime})
\end{align*}
\endgroup
Our aim is to calculate what is S:
\begin{align*}
S &= {}_q\langle0|{}_{q^\prime}\langle0|{}_r\langle0|{}_{r^\prime}\langle0|(H_q\otimes H_{q^\prime})R(H_q\otimes H_{q^\prime})|0\rangle_q|0\rangle_{q^\prime}|0\rangle_r|0\rangle_{r^\prime} \\
&= \sum_{k,l=0}^1 ({}_q\langle0|{}_r\langle0|H_qE^\dag_{lq}H_qFE^\dag_{kp})\otimes ({}_{q^\prime}\langle0|{}_{r^\prime}\langle0|F^\prime) (C) (E_{kp}F^\dag H_qE_{lq}H_q|0\rangle_q|0\rangle_r)\otimes (F^{\prime\dag}|0\rangle_{q^\prime}|0\rangle_{r^\prime})  \\
&= \left\{\sum_{k,l=0}^1 \sum_{i,j=0}^1 \left[ {}_q\langle0|{}_r\langle0|H_qE^\dag_{lq}H_qFE^\dag_{kp}|i\rangle_p|j\rangle_qX^jZ^i|\psi\rangle_r   \right][\cdot]^\dag\right\} \otimes |\psi\rangle_{p^\prime}\langle\psi| \\
&= S_1\otimes |\psi\rangle_{p^\prime}\langle\psi|
\end{align*}
using the following fact (of course, it also holds with superscript primes over each index):
\begin{align*}
{}_{q}\langle0|{}_{r}\langle0|F|i\rangle_{p}|j\rangle_{q}X^{j} Z^{i}|\psi\rangle_{r} = \frac{1}{2}|\psi\rangle_{p},\quad\quad
{}_{q}\langle1|{}_{r}\langle0|F|i\rangle_{p}|j\rangle_{q}X^{j} Z^{i}|\psi\rangle_{r} = \frac{1}{2}(-)^jZ|\psi\rangle_{p}.
\end{align*}

{\vskip 4pt}

$\bullet$ ${\rm QTEL_{\it BF}}\sim {\rm QTEL}$

{\vskip 4pt}

For the bif flip noise, the following fact can be easily realized:
\begin{align*}
E^\dag_{1p}|i\rangle_p|j\rangle_qX^jZ^i|\psi\rangle_r &= \sqrt{1-p}|1\oplus i\rangle_p|j\rangle_qX^jZ^i|\psi\rangle_r = \sqrt{1-p}|1\oplus i\rangle_p|j\rangle_qX^jZ^{(1\oplus i)}Z|\psi\rangle_r \\
{}_q\langle 0|H_qE^\dag_{1q}H_q &= \sqrt{1-p} {}_q\langle 0|
\end{align*}
Thus,
\begin{align*}
S_1 &= \sum_{i,j=0}^1\bigg\{\left[ {}_q\langle0|{}_r\langle0|H_qE^\dag_{0q}H_qFE^\dag_{0p}|i\rangle_p|j\rangle_qX^jZ^i|\psi\rangle_r   \right][\cdot]^\dag + \left[ {}_q\langle0|{}_r\langle0|H_qE^\dag_{0q}H_qFE^\dag_{1p}|i\rangle_p|j\rangle_qX^jZ^i|\psi\rangle_r   \right][\cdot]^\dag \\
&\qquad + \left[ {}_q\langle0|{}_r\langle0|H_qE^\dag_{1q}H_qFE^\dag_{0p}|i\rangle_p|j\rangle_qX^jZ^i|\psi\rangle_r   \right][\cdot]^\dag + \left[ {}_q\langle0|{}_r\langle0|H_qE^\dag_{1q}H_qFE^\dag_{1p}|i\rangle_p|j\rangle_qX^jZ^i|\psi\rangle_r   \right][\cdot]^\dag\bigg\} \\
&= p^2 \sum_{i,j=0}^1\left[ {}_q\langle0|{}_r\langle0|F|i\rangle_p|j\rangle_qX^jZ^i|\psi\rangle_r   \right][\cdot]^\dag  + p(1-p)\sum_{i,j=0}^1\left[ {}_q\langle0|{}_r\langle0|F|i\rangle_p|j\rangle_qX^jZ^iZ|\psi\rangle_r   \right][\cdot]^\dag \\
&\quad + p(1-p)\sum_{i,j=0}^1\left[ {}_q\langle0|{}_r\langle0|F|i\rangle_p|j\rangle_qX^jZ^i|\psi\rangle_r   \right][\cdot]^\dag  + (1-p)^2\sum_{i,j=0}^1\left[ {}_q\langle0|{}_r\langle0|F|i\rangle_p|j\rangle_qX^jZ^iZ|\psi\rangle_r   \right][\cdot]^\dag \\
&= p^2|\psi\rangle_p\langle\psi| + p(1-p)Z|\psi\rangle_p\langle\psi|Z + p(1-p)|\psi\rangle_p\langle\psi| + (1-p)^2Z|\psi\rangle_p\langle\psi|Z \\
&= p|\psi\rangle_p\langle\psi| + (1-p)Z|\psi\rangle_p\langle\psi|Z \\
&= \mathcal{E}_{\it PF}(p)(|\psi\rangle_p\langle\psi|).
\end{align*}
Therefore, the precondition $S$ is:
$$S = S_1\otimes |\psi\rangle_{p^\prime}\langle\psi| =  \mathcal{E}_{\it PF}(p)(|\psi\rangle_p\langle\psi|)\otimes |\psi\rangle_{p^\prime}\langle\psi|.$$

{\vskip 4pt}

$\bullet$ ${\rm QTEL_{\it PF}}\sim {\rm QTEL}$

{\vskip 4pt}

For the phase flip noise, the following fact can be easily realized:
\begin{align*}
E^\dag_{1p}|i\rangle_p|j\rangle_qX^jZ^i|\psi\rangle_r &= (-)^i\sqrt{1-p}|i\rangle_p|j\rangle_qX^jZ^i|\psi\rangle_r \\
{}_q\langle 0|H_qE^\dag_{1q}H_q &= \sqrt{1-p} {}_q\langle 1|
\end{align*}
Thus,
\begin{align*}
S_1 &= \sum_{i,j=0}^1\bigg\{\left[ {}_q\langle0|{}_r\langle0|H_qE^\dag_{0q}H_qFE^\dag_{0p}|i\rangle_p|j\rangle_qX^jZ^i|\psi\rangle_r   \right][\cdot]^\dag + \left[ {}_q\langle0|{}_r\langle0|H_qE^\dag_{0q}H_qFE^\dag_{1p}|i\rangle_p|j\rangle_qX^jZ^i|\psi\rangle_r   \right][\cdot]^\dag \\
&\qquad + \left[ {}_q\langle0|{}_r\langle0|H_qE^\dag_{1q}H_qFE^\dag_{0p}|i\rangle_p|j\rangle_qX^jZ^i|\psi\rangle_r   \right][\cdot]^\dag + \left[ {}_q\langle0|{}_r\langle0|H_qE^\dag_{1q}H_qFE^\dag_{1p}|i\rangle_p|j\rangle_qX^jZ^i|\psi\rangle_r   \right][\cdot]^\dag\bigg\} \\
&= p^2 \sum_{i,j=0}^1\left[ {}_q\langle0|{}_r\langle0|F|i\rangle_p|j\rangle_qX^jZ^i|\psi\rangle_r   \right][\cdot]^\dag  + p(1-p)\sum_{i,j=0}^1\left[ {}_q\langle0|{}_r\langle0|F|i\rangle_p|j\rangle_qX^jZ^i|\psi\rangle_r   \right][\cdot]^\dag \\
&\quad + p(1-p)\sum_{i,j=0}^1\left[ {}_q\langle1|{}_r\langle0|F|i\rangle_p|j\rangle_qX^jZ^i|\psi\rangle_r   \right][\cdot]^\dag  + (1-p)^2\sum_{i,j=0}^1\left[ {}_q\langle1|{}_r\langle0|F|i\rangle_p|j\rangle_qX^jZ^i|\psi\rangle_r   \right][\cdot]^\dag \\
&= p^2|\psi\rangle_p\langle\psi| + p(1-p)|\psi\rangle_p\langle\psi| + p(1-p)Z|\psi\rangle_p\langle\psi|Z + (1-p)^2Z|\psi\rangle_p\langle\psi|Z \\
&= p|\psi\rangle_p\langle\psi| + (1-p)Z|\psi\rangle_p\langle\psi|Z \\
&= \mathcal{E}_{\it PF}(p)(|\psi\rangle_p\langle\psi|).
\end{align*}
Therefore, the precondition $S$ is:
$$S = S_1\otimes |\psi\rangle_{p^\prime}\langle\psi| =  \mathcal{E}_{\it PF}(p)(|\psi\rangle_p\langle\psi|)\otimes |\psi\rangle_{p^\prime}\langle\psi|.$$

{\vskip 4pt}

$\bullet$ ${\rm QTEL_{\it BPF}}\sim {\rm QTEL}$

{\vskip 4pt}

For the phase flip noise, the following fact can be easily realized:
\begin{align*}
E^\dag_{1p}|i\rangle_p|j\rangle_qX^jZ^i|\psi\rangle_r &= \sqrt{1-p}(-)^i{\bm i}|1\oplus i\rangle_p|j\rangle_qX^jZ^i|\psi\rangle_r = \sqrt{1-p}(-)^i{\bm i}|1\oplus i\rangle_p|j\rangle_qX^jZ^{(1\oplus i)}Z|\psi\rangle_r \\
{}_q\langle 0|H_qE^\dag_{1q}H_q &= \sqrt{1-p}{\bm i} {}_q\langle 1|
\end{align*}
Thus,
\begin{align*}
S_1
&= \sum_{i,j=0}^1\bigg\{\left[ {}_q\langle0|{}_r\langle0|H_qE^\dag_{0q}H_qFE^\dag_{0p}|i\rangle_p|j\rangle_qX^jZ^i|\psi\rangle_r   \right][\cdot]^\dag + \left[ {}_q\langle0|{}_r\langle0|H_qE^\dag_{0q}H_qFE^\dag_{1p}|i\rangle_p|j\rangle_qX^jZ^i|\psi\rangle_r   \right][\cdot]^\dag \\
&\qquad + \left[ {}_q\langle0|{}_r\langle0|H_qE^\dag_{1q}H_qFE^\dag_{0p}|i\rangle_p|j\rangle_qX^jZ^i|\psi\rangle_r   \right][\cdot]^\dag + \left[ {}_q\langle0|{}_r\langle0|H_qE^\dag_{1q}H_qFE^\dag_{1p}|i\rangle_p|j\rangle_qX^jZ^i|\psi\rangle_r   \right][\cdot]^\dag\bigg\} \\
&= p^2 \sum_{i,j=0}^1\left[ {}_q\langle0|{}_r\langle0|F|i\rangle_p|j\rangle_qX^jZ^i|\psi\rangle_r   \right][\cdot]^\dag  + p(1-p)\sum_{i,j=0}^1\left[ {}_q\langle0|{}_r\langle0|F|i\rangle_p|j\rangle_qX^jZ^iZ|\psi\rangle_r   \right][\cdot]^\dag \\
& + p(1-p)\sum_{i,j=0}^1\left[ {}_q\langle1|{}_r\langle0|F|i\rangle_p|j\rangle_qX^jZ^i|\psi\rangle_r   \right][\cdot]^\dag  + (1-p)^2\sum_{i,j=0}^1\left[ {}_q\langle1|{}_r\langle0|F|i\rangle_p|j\rangle_qX^jZ^iZ|\psi\rangle_r   \right][\cdot]^\dag \\
&= p^2|\psi\rangle_p\langle\psi| + p(1-p)Z|\psi\rangle_p\langle\psi|Z + p(1-p)Z|\psi\rangle_p\langle\psi|Z + (1-p)^2ZZ|\psi\rangle_p\langle\psi|ZZ \\
&= (p^2+(1-p)^2)|\psi\rangle_p\langle\psi| + 2p(1-p)Z|\psi\rangle_p\langle\psi|Z \\
&= \mathcal{E}_{\it PF}(p^2+(1-p)^2)(|\psi\rangle_p\langle\psi|).
\end{align*}
Therefore, the precondition $S$ is:
$$S = S_1\otimes |\psi\rangle_{p^\prime}\langle\psi| =  \mathcal{E}_{\it PF}(p^2+(1-p)^2)(|\psi\rangle_p\langle\psi|)\otimes |\psi\rangle_{p^\prime}\langle\psi|.$$

\subsection{Proof of Judgment (\ref{QOTP-1-Correct}), (\ref{QOTP-1-Secure}), (\ref{QOTP-n-Correct}) and (\ref{QOTP-n-Secure})}

In this subsection, we verify the correctness and security of quantum one-time pad. 
The proof outlines for judgments (\ref{QOTP-1-Correct}), (\ref{QOTP-1-Secure}), (\ref{QOTP-n-Correct}) and (\ref{QOTP-n-Secure}) are shown in Figs. \ref{fig-QOTP-1-Correct}, \ref{fig-QOTP-1-Secure}, \ref{fig-QOTP-n-Correct} and \ref{fig-QOTP-n-Secure}, respectively.

It is worth noting that the initializations of registers $a$ (or $a_i$) and $b$ (or $b_i$) are regarded as the creation of new local qubits, so the rules (SO-L) / (SO) are used instead of (Init-L) / (Init). All the calculations are straightforward, except the following equation:
$$
I_{\bar{p}} = \sum_{\forall\ i\in[n]: x_i,z_i\in\{0,1\}}\prod_{i}^n (Z_{p_i}^{z_i}X_{p_i}^{x_i})|\psi\>_{\bar{p}}\<\psi|\prod_{i}^n (X_{p_i}^{x_i}Z_{p_i}^{z_i})$$
for all possible pure state $|\psi\>\in\h_{\bar{p}}$. This fact has already been proved in \cite{MTW00} (see Theorem 4.1 there) if we realize the relationship between Pauli matrices $ZX = -iY$.

\begin{figure}
	\footnotesize
	\begin{align*}
	& \left\{(=_{sym}) = \frac{1}{2}(I_{pp^\prime}+S_{p;p^\prime})\right\}   \\
	&\bullet\ a := |0\>; b := |0\>; \ \sim \ {\bf skip} \quad\quad{\rm(SO\text{-}L)} \\
	& \left\{I_a \otimes I_b \otimes \frac{1}{2}(I_{pp^\prime}+S_{p;p^\prime})\right\} \\
	&\bullet\ a := H[a]; b := H[b]; \ \sim \ {\bf skip}   \quad\quad{\rm(UT\text{-}L)}\\
	& \left\{I_a \otimes I_b \otimes \frac{1}{2}(I_{pp^\prime}+S_{p;p^\prime})\right\} \\
	&\bullet\ \ \begin{array}{l}\mathbf{if}\ \cM[a,b]=00\rightarrow \mathbf{skip}\\
	\square\qquad\qquad\quad 01\rightarrow \mathbf{skip}\\
	\square\qquad\qquad\quad 10\rightarrow \mathbf{skip}\\
	\square\qquad\qquad\quad 11\rightarrow \mathbf{skip}\\
	\mathbf{fi}
	\end{array} \ \sim \ {\bf skip}   \quad\quad{\rm(IF\text{-}L)}\\
	& \bigg\{\{I_a \otimes I_b \otimes (I_{pp^\prime}+S_{p;p^\prime}) = |00\>_{ab}\<00|\otimes \frac{1}{2}(I_{pp^\prime}+S_{p;p^\prime}) + |01\>_{ab}\<01|\otimes \frac{1}{2}(I_{pp^\prime}+S_{p;p^\prime}) \\
	& \qquad\qquad\qquad\qquad\qquad\qquad + |10\>_{ab}\<10|\otimes \frac{1}{2}(I_{pp^\prime}+S_{p;p^\prime}) + |11\>_{ab}\<11|\otimes \frac{1}{2}(I_{pp^\prime}+S_{p;p^\prime})\bigg\} \\
	&\bullet\ \ \begin{array}{l}\mathbf{if}\ \cM[a,b]=00\rightarrow \mathbf{skip}\\
	\square\qquad\qquad\quad 01\rightarrow p = Z[p]\\
	\square\qquad\qquad\quad 10\rightarrow p = X[p\\
	\square\qquad\qquad\quad 11\rightarrow p = Z[p];p = X[p]\\
	\mathbf{fi}
	\end{array} \ \sim \ {\bf skip}   \quad\quad{\rm(IF\text{-}L)}\\
	& \bigg\{|00\>_{ab}\<00|\otimes \frac{1}{2}(I_{pp^\prime}+S_{p;p^\prime}) + |01\>_{ab}\<01|\otimes \frac{1}{2}(I_{pp^\prime}+Z_pS_{p;p^\prime}Z_p) \\
	&\qquad+ |10\>_{ab}\<10|\otimes \frac{1}{2}(I_{pp^\prime}+X_pS_{p;p^\prime}X_p) + |11\>_{ab}\<11|\otimes \frac{1}{2}(I_{pp^\prime}+Z_pX_pS_{p;p^\prime}X_pZ_p)\bigg\} \\
	&\bullet\ \ \begin{array}{l}\mathbf{if}\ \cM[a,b]=00\rightarrow \mathbf{skip}\\
	\square\qquad\qquad\quad 01\rightarrow p = Z[p]\\
	\square\qquad\qquad\quad 10\rightarrow p = X[p\\
	\square\qquad\qquad\quad 11\rightarrow p = Z[p];p = X[p]\\
	\mathbf{fi}
	\end{array} \ \sim \ {\bf skip}  \quad\quad{\rm(IF\text{-}L)}\\
	&\left\{(=_{sym}) = I_a \otimes I_b \otimes \frac{1}{2}(I_{pp^\prime}+S_{p;p^\prime})\right\} \\
	&\bullet\ \ {\bf Tr}[a];{\bf Tr}[b]\ \sim\ {\bf skip} \quad\quad{\rm(SO\text{-}L)}\\
	&\left\{(=_{sym}) = \frac{1}{2}(I_{pp^\prime}+S_{p;p^\prime})\right\}
	\end{align*}
	\caption{Verification of correctness for QOTP. The proof outline for Judgment (\ref{QOTP-1-Correct}).}
	\label{fig-QOTP-1-Correct}
\end{figure}

\begin{figure}
	\footnotesize
	\begin{align*}
	& \bigg\{\frac{I_p\otimes I_{p^\prime}}{2} =
	\frac{1}{4}|\psi\>_p\<\psi|\otimes I_{p^\prime} + \frac{1}{4}Z_p|\psi\>_p\<\psi|Z_p\otimes I_{p^\prime} + \frac{1}{4}X_p|\psi\>_p\<\psi|X_p\otimes I_{p^\prime} + \frac{1}{4}Z_pX_p|\psi\>_p\<\psi|X_pZ_p\otimes I_{p^\prime}\bigg\} \\
	&\bullet\ a := |0\>; b := |0\>; \ \sim \ a^\prime := |0\>; b^\prime := |0\>; \quad\quad{\rm(SO)}\\
	& \Big\{|++\>_{ab}\<++|\otimes|++\>_{a^\prime b^\prime}\<++|\otimes |\psi\>_p\<\psi|\otimes I_{p^\prime} \\
	&\quad + |+-\>_{ab}\<+-|\otimes|+-\>_{a^\prime b^\prime}\<+-|\otimes Z_p|\psi\>_p\<\psi|Z_p\otimes I_{p^\prime} \\
	&\quad + |-+\>_{ab}\<-+|\otimes|-+\>_{a^\prime b^\prime}\<-+|\otimes X_p|\psi\>_p\<\psi|X_p\otimes I_{p^\prime} \\
	&\quad + |++\>_{ab}\<++|\otimes|++\>_{a^\prime b^\prime}\<++|\otimes Z_pX_p|\psi\>_p\<\psi|X_pZ_p\otimes I_{p^\prime}\Big\} \\
	&\bullet\ a := H[a]; b := H[b]; \ \sim \ a^\prime := H[a^\prime]; b^\prime := H[b^\prime];   \quad\quad{\rm(UT)}\\
	& \Big\{|00\>_{ab}\<00|\otimes|00\>_{a^\prime b^\prime}\<00|\otimes |\psi\>_p\<\psi|\otimes I_{p^\prime}
	+ |01\>_{ab}\<01|\otimes|01\>_{a^\prime b^\prime}\<01|\otimes Z_p|\psi\>_p\<\psi|Z_p\otimes I_{p^\prime} \\
	&\ \ + |10\>_{ab}\<10|\otimes|10\>_{a^\prime b^\prime}\<10|\otimes X_p|\psi\>_p\<\psi|X_p\otimes I_{p^\prime}
	+ |11\>_{ab}\<11|\otimes|11\>_{a^\prime b^\prime}\<11|\otimes Z_pX_p|\psi\>_p\<\psi|X_pZ_p\otimes I_{p^\prime}\Big\} \\
	&\bullet\ \ \begin{array}{l}\mathbf{if}\ \cM[a,b]=00\rightarrow \mathbf{skip}\\
	\square\qquad\qquad\quad 01\rightarrow \mathbf{skip}\\
	\square\qquad\qquad\quad 10\rightarrow \mathbf{skip}\\
	\square\qquad\qquad\quad 11\rightarrow \mathbf{skip}\\
	\mathbf{fi}
	\end{array} \ \sim \ \begin{array}{l}\mathbf{if}\ \cM[a,b]=00\rightarrow \mathbf{skip}\\
	\square\qquad\qquad\quad 01\rightarrow \mathbf{skip}\\
	\square\qquad\qquad\quad 10\rightarrow \mathbf{skip}\\
	\square\qquad\qquad\quad 11\rightarrow \mathbf{skip}\\
	\mathbf{fi}
	\end{array}   \quad\quad{\rm(IF\text{-}w)}\\
	& \Big\{|00\>_{ab}\<00|\otimes|00\>_{a^\prime b^\prime}\<00|\otimes |\psi\>_p\<\psi|\otimes I_{p^\prime}
	+ |01\>_{ab}\<01|\otimes|01\>_{a^\prime b^\prime}\<01|\otimes Z_p|\psi\>_p\<\psi|Z_p\otimes I_{p^\prime} \\
	&\ \ + |10\>_{ab}\<10|\otimes|10\>_{a^\prime b^\prime}\<10|\otimes X_p|\psi\>_p\<\psi|X_p\otimes I_{p^\prime}
	+ |11\>_{ab}\<11|\otimes|11\>_{a^\prime b^\prime}\<11|\otimes Z_pX_p|\psi\>_p\<\psi|X_pZ_p\otimes I_{p^\prime}\Big\} \\
	&\bullet\ \ \begin{array}{l}\mathbf{if}\ \cM[a,b]=00\rightarrow \mathbf{skip}\\
	\square\qquad\qquad\quad 01\rightarrow p = Z[p]\\
	\square\qquad\qquad\quad 10\rightarrow p = X[p\\
	\square\qquad\qquad\quad 11\rightarrow p = Z[p];p = X[p]\\
	\mathbf{fi}
	\end{array} \ \sim \ \begin{array}{l}\mathbf{if}\ \cM[a,b]=00\rightarrow \mathbf{skip}\\
	\square\qquad\qquad\quad 01\rightarrow p = Z[p]\\
	\square\qquad\qquad\quad 10\rightarrow p = X[p\\
	\square\qquad\qquad\quad 11\rightarrow p = Z[p];p = X[p]\\
	\mathbf{fi}
	\end{array}  \quad\quad{\rm(IF\text{-}w)}\\
	&\left\{I_{ab}\otimes I_{a^\prime b^\prime} \otimes |\psi\>_p\<\psi|\otimes I_{p^\prime}\right\} \\
	&\bullet\ \ {\bf Tr}[a];{\bf Tr}[b]\ \sim\ {\bf Tr}[a^\prime];{\bf Tr}[b^\prime] \quad\quad{\rm(SO)}\\
	&\left\{|\psi\>_p\<\psi|\otimes I_{p^\prime}\right\}
	\end{align*}
	\caption{Verification of the security for QOTP. The proof outline for Judgment (\ref{QOTP-1-Secure}). $|+\> = \frac{1}{\sqrt{2}}(|0\>+|1\>)$ and $|-\> = \frac{1}{\sqrt{2}}(|0\>-|1\>)$.}
	\label{fig-QOTP-1-Secure}
\end{figure}

\begin{figure}
	\footnotesize
	\begin{align*}
	& \left\{(=_{sym}) = \frac{1}{2}(I_{\bar{p}\bar{p}^\prime}+S_{\bar{p};\bar{p}^\prime})\right\}   \\
	&\bullet\ a_1 := |0\>; \cdots; a_n := |0\>; b_1 := |0\>; \cdots; b_n := |0\>;
	\ \sim \ {\bf skip} \quad\quad{\rm(SO\text{-}L)} \\
	& \left\{\bigotimes_{i=1}^n(I_{a_i} \otimes I_{b_i}) \otimes \frac{1}{2}(I_{\bar{p}\bar{p}^\prime}+S_{\bar{p};\bar{p}^\prime})\right\} \\
	&\bullet\ a_1 := H[a_1]; \cdots; a_n := H[a_n]; b_1 := H[b_1]; \cdots; b_n := H[b_n]; \ \sim \ {\bf skip}   \quad\quad{\rm(UT\text{-}L)}\\
	& \left\{\bigotimes_{i=1}^n(I_{a_i} \otimes I_{b_i}) \otimes \frac{1}{2}(I_{\bar{p}\bar{p}^\prime}+S_{\bar{p};\bar{p}^\prime})\right\} \\
	&\bullet\ \ \begin{array}{l}
	\mathbf{if}\ (\square x_1z_1\cdot\cM[a_1,b_1]=x_1z_1\rightarrow \mathbf{skip})\ \mathbf{fi}; \\
	\vdots \\
	\mathbf{if}\ (\square x_nz_n\cdot\cM[a_n,b_n]=x_nz_n\rightarrow \mathbf{skip})\ \mathbf{fi}
	\end{array} \ \sim \ {\bf skip}   \quad\quad{\rm(IF\text{-}L)}\\
	& \Bigg\{\bigotimes_{i=1}^n(I_{a_i} \otimes I_{b_i}) \otimes \frac{1}{2}(I_{\bar{p}\bar{p}^\prime}+S_{\bar{p};\bar{p}^\prime}) =
	\sum_{\forall\ i\in[n]: x_i,z_i \in\{0,1\}} \bigotimes_{i=1}^n(|x_iz_i\>_{a_ib_i}\<x_iz_i|)\otimes\frac{1}{2}\bigg(I_{\bar{p}\bar{p}^\prime}+  \\
	& \qquad\qquad\qquad\qquad\qquad\qquad\qquad\quad \prod_{i}^n (Z_{p_i}^{z_i}X_{p_i}^{x_i})\prod_{i}^n (Z_{p_i}^{z_i}X_{p_i}^{x_i})S_{\bar{p};\bar{p}^\prime}
	\prod_{i}^n (X_{p_i}^{x_i}Z_{p_i}^{z_i})\prod_{i}^n (X_{p_i}^{x_i}Z_{p_i}^{z_i})\bigg)\Bigg\} \\
	&\bullet\ \ \begin{array}{l}\mathbf{if}\ (\square x_1z_1\cdot\cM[a_1,b_1]=x_1z_1\rightarrow p_1 = Z^{z_1}[p_1];\ p_1 = X^{x_1}[p_1])\ \mathbf{fi}; \\
	\vdots \\
	\mathbf{if}\ (\square x_nz_n\cdot\cM[a_n,b_n]=x_nz_n\rightarrow p_n = Z^{z_n}[p_n];\ p_n = X^{x_n}[p_n])\ \mathbf{fi}
	\end{array} \ \sim \ {\bf skip}   \quad\quad{\rm(IF\text{-}L)}\\
	& \left\{\sum_{\forall\ i\in[n]: x_i,z_i\in\{0,1\}} \bigotimes_{i=1}^n(|x_iz_i\>_{a_ib_i}\<x_iz_i|)\otimes \frac{1}{2}\bigg(I_{\bar{p}\bar{p}^\prime}+ \prod_{i}^n (Z_{p_i}^{z_i}X_{p_i}^{x_i})S_{\bar{p};\bar{p}^\prime}
	\prod_{i}^n (X_{p_i}^{x_i}Z_{p_i}^{z_i})\bigg)\right\} \\
	&\bullet\ \ \begin{array}{l}\mathbf{if}\ (\square x_1z_1\cdot\cM[a_1,b_1]=x_1z_1\rightarrow p_1 = Z^{z_1}[p_1];\ p_1 = X^{x_1}[p_1])\ \mathbf{fi}; \\
	\vdots \\
	\mathbf{if}\ (\square x_nz_n\cdot\cM[a_n,b_n]=x_nz_n\rightarrow p_n = Z^{z_n}[p_n];\ p_n = X^{x_n}[p_n])\ \mathbf{fi}
	\end{array} \ \sim \ {\bf skip}  \quad\quad{\rm(IF\text{-}L)}\\
	&\left\{(=_{sym}) = \bigotimes_{i=1}^n(I_{a_i} \otimes I_{b_i}) \otimes \frac{1}{2}(I_{\bar{p}\bar{p}^\prime}+S_{\bar{p};\bar{p}^\prime})\right\} \\
	&\bullet\ \ {\bf Tr}[a_1];\cdots; {\bf Tr}[a_n];{\bf Tr}[b_1]\cdots; {\bf Tr}[b_n]\ \sim\ {\bf skip} \quad\quad{\rm(SO\text{-}L)}\\
	&\left\{(=_{sym}) = \frac{1}{2}(I_{\bar{p}\bar{p}^\prime}+S_{\bar{p};\bar{p}^\prime})\right\}
	\end{align*}
	\caption{Verification of correctness for general QOTP with $n$-qubit quantum data. The proof outline for Judgment (\ref{QOTP-n-Correct}).}
	\label{fig-QOTP-n-Correct}
\end{figure}

\begin{figure}
	\footnotesize
	\begin{align*}
	& \Bigg\{\frac{I_{\bar{p}}\otimes I_{{\bar{p}}^\prime}}{2^n} =
	\frac{1}{2^n} \bigg(\sum_{\forall\ i\in[n]: x_i,z_i\in\{0,1\}}\prod_{i}^n (Z_{p_i}^{z_i}X_{p_i}^{x_i})|\psi\>_{\bar{p}}\<\psi|\prod_{i}^n (X_{p_i}^{x_i}Z_{p_i}^{z_i})\bigg)\otimes I_{{\bar{p}}^\prime}\Bigg\} \\
	&\bullet\ \begin{array}{l}
	a_1 := |0\>; \cdots; a_n := |0\>; \\
	b_1 := |0\>; \cdots; b_n := |0\>;
	\end{array} \ \sim \ \begin{array}{l}
	a_1^\prime := |0\>; \cdots; a_n^\prime := |0\>; \\
	b_1^\prime := |0\>; \cdots; b_n^\prime := |0\>;
	\end{array} \quad\quad{\rm(SO)}\\
	& \Bigg\{\sum_{\forall\ i\in[n]: x_i,z_i\in\{0,1\}} \bigotimes_{i=1}^n(|f(x_i)f(z_i)\>_{a_ib_i}\<f(x_i)f(z_i)|\otimes|f(x_i)f(z_i)\>_{a_i^\prime b_i^\prime}\<f(x_i)f(z_i)|) \\
	&\qquad\qquad\qquad\qquad \otimes
	\prod_{i}^n (Z_{p_i}^{z_i}X_{p_i}^{x_i})|\psi\>_{\bar{p}}\<\psi|\prod_{i}^n (X_{p_i}^{x_i}Z_{p_i}^{z_i})\otimes I_{{\bar{p}}^\prime}\Bigg\}\\
	&\bullet\ \begin{array}{l}
	a_1 := H[a_1]; \cdots; a_n := H[a_n]; \\
	b_1 := H[b_1]; \cdots; b_n := H[b_n];
	\end{array} \ \sim \ \begin{array}{l}
	a_1^\prime := H[a_1^\prime]; \cdots; a_n^\prime := H[a_n^\prime]; \\
	b_1^\prime := H[b_1^\prime]; \cdots; b_n^\prime := H[b_n^\prime];
	\end{array}   \quad\quad{\rm(UT)}\\
	& \Bigg\{\sum_{\forall\ i\in[n]: x_i,z_i\in\{0,1\}} \bigotimes_{i=1}^n(|x_iz_i\>_{a_ib_i}\<x_iz_i|\otimes|x_iz_i\>_{a_i^\prime b_i^\prime}\<x_iz_i|)\otimes
	\prod_{i}^n (Z_{p_i}^{z_i}X_{p_i}^{x_i})|\psi\>_{\bar{p}}\<\psi|\prod_{i}^n (X_{p_i}^{x_i}Z_{p_i}^{z_i})\otimes I_{{\bar{p}}^\prime}\Bigg\}\\
	&\bullet\ \ \begin{array}{l}
	\mathbf{if}\ (\square x_1z_1\cdot\cM[a_1,b_1]=x_1z_1\rightarrow\mathbf{skip})\ \mathbf{fi}; \\
	\vdots \\
	\mathbf{if}\ (\square x_nz_n\cdot\cM[a_n,b_n]=x_nz_n\rightarrow\mathbf{skip})\ \mathbf{fi}; \\
	\end{array}
	\ \sim \ \begin{array}{l}
	\mathbf{if}\ (\square x_1z_1 \cdot\cM[a_1^\prime ,b_1^\prime ]=x_1z_1 \rightarrow\mathbf{skip})\ \mathbf{fi}; \\
	\vdots \\
	\mathbf{if}\ (\square x_nz_n \cdot\cM[a_n^\prime ,b_n^\prime ]=x_nz_n \rightarrow\mathbf{skip})\ \mathbf{fi}; \\
	\end{array}   \quad\quad{\rm(IF)}\\
	& \Bigg\{\sum_{\forall\ i\in[n]: x_i,z_i\in\{0,1\}} \bigotimes_{i=1}^n(|x_iz_i\>_{a_ib_i}\<x_iz_i|\otimes|x_iz_i\>_{a_i^\prime b_i^\prime}\<x_iz_i|)\otimes
	\prod_{i}^n (Z_{p_i}^{z_i}X_{p_i}^{x_i})|\psi\>_{\bar{p}}\<\psi|\prod_{i}^n (X_{p_i}^{x_i}Z_{p_i}^{z_i})\otimes I_{{\bar{p}}^\prime}\Bigg\}\\
	&\bullet\ \ \begin{array}{l}
	\mathbf{if}\ (\square x_1z_1\cdot\cM[a_1,b_1]=x_1z_1 \\
	\qquad\rightarrow p_1 = Z^{z_1}[p_1];\ p_1 = X^{x_1}[p_1])\ \mathbf{fi}; \\
	\vdots \\
	\mathbf{if}\ (\square x_nz_n\cdot\cM[a_n,b_n]=x_nz_n\\
	\qquad\rightarrow p_n = Z^{z_n}[p_n];\ p_n = X^{x_n}[p_n])\ \mathbf{fi}; \\
	\end{array}
	\ \sim \ \begin{array}{l}
	\mathbf{if}\ (\square x_1z_1 \cdot\cM[a_1^\prime ,b_1^\prime ]=x_1z_1  \\
	\qquad\rightarrow p_1^\prime  = Z^{z_1}[p_1^\prime];\ p_1^\prime  = X^{x_1}[p_1^\prime])\ \mathbf{fi}; \\
	\vdots \\
	\mathbf{if}\ (\square x_nz_n \cdot\cM[a_n^\prime ,b_n^\prime ]=x_nz_n \\
	\qquad\rightarrow p_n^\prime  = Z^{z_n}[p_n^\prime];\ p_n^\prime  = X^{x_n}[p_n^\prime ])\ \mathbf{fi}; \\
	\end{array}   \quad\quad{\rm(IF)}\\
	&\Bigg\{\bigotimes_{i=1}^n(I_{a_ib_i} \otimes I_{a_i^\prime b_i^\prime}) \otimes |\psi\>_{\bar{p}}\<\psi|\otimes I_{{\bar{p}}^\prime}\Bigg\} \\
	&\bullet\ \ {\bf Tr}[a_1];\cdots; {\bf Tr}[a_n];{\bf Tr}[b_1]\cdots; {\bf Tr}[b_n]\ \sim\ {\bf Tr}[a_1^\prime ];\cdots; {\bf Tr}[a_n^\prime ];{\bf Tr}[b_1^\prime ]\cdots; {\bf Tr}[b_n^\prime ] \quad\quad{\rm(SO)}\\
	&\left\{|\psi\>_{\bar{p}}\<\psi|\otimes I_{{\bar{p}}^\prime}\right\}
	\end{align*}
	\caption{Verification of the security for general QOTP with $n$-qubit quantum data. The proof outline for Judgment (\ref{QOTP-n-Secure}). The function $f: \{0,1\}\mapsto\{+,-\}$ is defined by $f(0)=+,f(1)=-$; that is, $|f(0)\> = |+\>=\frac{1}{\sqrt{2}}(|0\>+|1\>)$ and $|f(1)\> = |-\>=\frac{1}{\sqrt{2}}(|0\>-|1\>)$.}
	\label{fig-QOTP-n-Secure}
\end{figure}

\subsection{Proof of Proposition \ref{prop-comp}}
\label{proof-prop-comp}
\begin{proof}
	(1) If $\models P_1\sim P_2: A\Rightarrow B$, then for any input $\rho\in A$, there exists a coupling $\sigma$ for $\cp{ \sm{P_1} (\tr_2(\rho))}{\sm{P_2}(\tr_2(\rho)) }$ such that
	$$\tr(\rho) = \tr(A\rho) \le \tr(B\sigma) + \tr(\rho) - \tr(\sigma),$$
	which implies $\tr(B\sigma) = \tr(\sigma)$, or equivalently, $\sigma\in B$. Therefore, $\models_P P_1\sim P_2: A\Rightarrow B$.
	
	(2) We show a counterexample here. Let us consider a qubit $q$ and two programs $P_1$ and $P_2$
	$$P_1\equiv q = X[q];\quad\quad P_2\equiv\mathbf{skip};$$
	and choose projective predicates $A = B = |\Psi\>\<\Psi|$ where $|\Psi\>$ is the maximally entangled state $\frac{1}{\sqrt{2}}(|00\>+|11\>)$.
	
	If $\rho\in A$, then $\rho = \lambda|\Psi\>$ for some real number $0\le\lambda\le1$. Consequently, $$\tr_1(\rho) = \tr_2(\rho) = \frac{\lambda}{2}I$$ and $\sm{P_1} (\tr_2(\rho)) = \frac{\lambda}{2}I$. Therefore, $\rho$ is still a coupling of $\cp{ \sm{P_1}}{ (\tr_2(\rho)),\sm{P_2}(\tr_2(\rho)) }$, which shows that $$\models_P P_1\sim P_2: A\Rightarrow B.$$
	
	On the other hand, if we choose a separable input $\rho = |00\>\<00|$, then the output of two programs are $\sm{P_1} (\tr_2(\rho))=|1\>\<1|$ and $\sm{P_2}(\tr_2(\rho))  =  |0\>\<0|$.  They have a unique coupling $\sigma = |10\>\<10|$. However,
	$$\tr(A\rho) = \frac{1}{2} \ge \tr(B\sigma) = 0,$$
	which rules out that $\models P_1\sim P_2: A\Rightarrow B$.
\end{proof}

\subsection{Proof of Proposition \ref{sound-proj}}

\begin{proof}
	The validity of axioms (Skip-P), (UT-P) and rules (SC-P), (Conseq-P) and (Equiv) are trivial. We only show the validity of rules in Figure \ref{fig pro_1} together with (Frame-P) here.
	
	{\vskip 4pt}
	
	$\bullet$ (SO-P) As shown in the proof of Lemma \ref{tech-SO}, for any inputs $\rho_1\in\D^\le(\h_{P_1\<1\>})$ and $\rho_2\in\D^\le(\h_{P_2\<2\>})$ with a witness $\rho$ of the lifting $\rho_1A^\#\rho_2$, then $(\E_1\otimes\E_2)(\rho)$ is a coupling for $$\cp{\E_1(\tr_{\h_2}(\rho))}{\E_2(\tr_{\h_1}(\rho))} = \cp{\E_1(\rho_1)}{\E_2(\rho_1)}.$$
	Moreover, as $\supp(\rho)\subseteq A$, then trivially $$\supp((\E_1\otimes\E_2)(\rho))\subseteq\proj((\E_1\otimes\E_2)(A)),$$ which implies the validity of (SO-P).
	
	{\vskip 4pt}
	
	$\bullet$ (Init-P), (Init-P-L) and (SO-P-L) Special cases of (SO-P).
	
	{\vskip 4pt}
	
	$\bullet$ (IF-P) From the first assumptions we know that, for any $\rho_1\in\D^\le(\h_{P_1\<1\>})$ and $\rho_2\in\D^\le(\h_{P_2\<2\>})$ such that $\rho_1A^\#\rho_2$, there exists a sequence of lifting of $B_m$ relating the post-measurement states with the same outcomes; that is, for all $m$,
	$$(M_{1m}\rho_1M_{1m}^\dag)B_m^\#(M_{2m}\rho_2M_{2m}^\dag).$$
	Together with the second assumption, we must have:
	$$\sm{P_{1m}}(M_{1m}\rho_1M_{1m}^\dag)C^\#\sm{P_{2m}}(M_{2m}\rho_2M_{2m}^\dag).$$
	Due to the linearity of partial trace, we conclude that
	$$\Big[\sum_m\sm{P_{1m}}(M_{1m}\rho_1M_{1m}^\dag)\Big]C^\#\Big[\sum_m\sm{P_{2m}}(M_{2m}\rho_2M_{2m}^\dag)\Big],$$
	or equivalently,
	$$\sm{\mathbf{if}\ (\square m\cdot
		\cM_1[\overline{q}]=m\rightarrow P_{1m})\ \mathbf{fi}}(\rho_1)C^\# \sm{\mathbf{if}\ (\square m\cdot
		\cM_2[\overline{q}]=m\rightarrow P_{2m})\ \mathbf{fi}}(\rho_2).$$
	
	{\vskip 4pt}
	
	$\bullet$ (IF-P-L) Similar to (IF-P).
	
	{\vskip 4pt}
	
	$\bullet$ (LP-P) We first introduce an auxiliary notation: for $i = 1,2$, quantum operation $\mathcal{E}_{i0}$ and $\mathcal{E}_{i1}$ are defined by the measurement $\cM_i$
	$$\mathcal{E}_{i0}(\rho) = M_{i0}\rho M_{i0}^\dag,\quad \mathcal{E}_{i1}(\rho) = M_{i1}\rho M_{i1}^\dag.$$
	For any $\rho_1\in\D^\le(\h_{P_1\<1\>})$ and $\rho_2\in\D^\le(\h_{P_2\<2\>})$ such that $\rho_1A^\#\rho_2$, we claim that for all $n\ge0$, the following statement holds:
	$$
	{\bf statement:}\ (\sm{P_1}\circ\E_{11})^n(\rho_1)A^\#(\sm{P_2}\circ\E_{21})^n(\rho_2),\quad [\E_{10}\circ(\sm{P_1}\circ\E_{11})^n](\rho_1)B_0^\#[\E_{20}\circ(\sm{P_2}\circ\E_{21})^n](\rho_2).
	$$
	We prove it by induction on $n$. For $n=0$, $\rho_1A^\#\rho_2$ is already assumed. The first assumption of measurement ensures that $$(M_{10}\rho_1M_{10}^\dag) B_0^\# (M_{20}\rho_2M_{20}^\dag),$$ or in other words, $\E_{10}(\rho_1) B_0^\# \E_{20}(\rho_2)$.
	Suppose the statement holds for $n = k$. Then for $n=k+1$, because
	$$(\sm{P_1}\circ\E_{11})^k(\rho_1)A^\#(\sm{P_2}\circ\E_{21})^k(\rho_2),$$
	the assumption of measurement implies that
	$$[M_{11}(\sm{P_1}\circ\E_{11})^k(\rho_1)M_{11}^\dag]B_1^\#[M_{21}(\sm{P_2}\circ\E_{21})^k(\rho_2)M_{21}^\dag],$$
	and followed by the second assumption $\models_P P_1\sim P_2:B_1\Rightarrow A$, we have
	$$\sm{P_1}[M_{11}(\sm{P_1}\circ\E_{11})^k(\rho_1)M_{11}^\dag]A^\#\sm{P_2}[M_{21}(\sm{P_2}\circ\E_{21})^k(\rho_2)M_{21}^\dag],$$
	or equivalently
	$$(\sm{P_1}\circ\E_{11})^{k+1}(\rho_1)A^\#(\sm{P_2}\circ\E_{21})^{k+1}(\rho_2).$$
	Applying the assumption of measurement on the above formula, it is trivial that
	$$[\E_{10}\circ(\sm{P_1}\circ\E_{11})^{k+1}](\rho_1)B_0^\#[\E_{20}\circ(\sm{P_2}\circ\E_{21})^{k+1}](\rho_2),$$
	and this complete the proof the statement.
	
	Now, suppose that the witness of the lifting $$[\E_{10}\circ(\sm{P_1}\circ\E_{11})^n](\rho_1)B_0^\#[\E_{20}\circ(\sm{P_2}\circ\E_{21})^n](\rho_2)$$ is $\sigma_n$ for all $n\ge0$. We set $\sigma = \sum_n\sigma_n$ whose convergence is guaranteed by
	\begin{align*}
	1 &\ge \tr\left[\sm{\mathbf{while}\ \cM_1[\overline{q}]=1\ \mathbf{do}\ P_1\ \mathbf{od}}(\rho_1)\right] = \tr\Big(\sum_n[\E_{10}\circ(\sm{P_1}\circ\E_{11})^n](\rho_1)\Big) \\
	&= \sum_n\tr\big([\E_{10}\circ(\sm{P_1}\circ\E_{11})^n](\rho_1)\big) = \sum_n\tr(\sigma_n) \\
	&= \tr\Big(\sum_n\sigma_n\Big).
	\end{align*}
	Then, it is straightforward to show that $\sigma$ is a coupling of
	$$
	\cp{\sm{\mathbf{while}\ \cM_1[\overline{q}]=1\ \mathbf{do}\ P_1\ \mathbf{od}}(\rho_1)}{\ \sm{\mathbf{while}\ \cM_2[\overline{q}]=1\ \mathbf{do}\ P_2\ \mathbf{od}}(\rho_2)}
	$$
	and $\supp(\sigma)\subseteq B_0$.
	
	{\vskip 4pt}
	
	$\bullet$ (LP-P-L) Similar to (LP-P).
	
	{\vskip 4pt}
	
	$\bullet$ (Frame-P)
	Suppose $V = V_1\cup V_2$ where $V_1$ represents the extended variables of $P_1$ and $V_2$ of $P_2$. Of course, $V_1\cap V_2=\emptyset$ and $\mathcal{H}_V = \mathcal{H}_{V_1}\otimes\mathcal{H}_{V_2}$. We prove: $$[V,\mathit{var}(P_1,P_2)]\models_P P_1\sim P_2:A\otimes C\Rightarrow B\otimes C;$$ that is, for any separable state $\rho$ between $\mathcal{H}_{P_1}\otimes\mathcal{H}_{P_2}$ and $\mathcal{H}_{V}$ satisfies $\supp(\rho)\subseteq A\otimes C$,
	$$ \sm{P_1} (\tr_{\mathcal{H}_{P_2}\otimes\mathcal{H}_{V_2}}(\rho)) (B\otimes C)^\# \sm{P_2} (\tr_{\mathcal{H}_{P_1}\otimes\mathcal{H}_{V_1}}(\rho)).$$
	
	First of all, by separability of $\rho$, we can write:
	$$\rho = \sum_i p_i (\rho_i\otimes \sigma_i)$$
	where $\rho_i\in\mathcal{D}^\le(\mathcal{H}_{P_1}\otimes\mathcal{H}_{P_2})$, $\sigma_i\in\mathcal{D}^\le(\mathcal{H}_{V})$ and $p_i>0$. Since $V\cap \mathrm{var}(P_1,P_2)=\emptyset$, it holds that
	\begin{align*}
	&\sm{P_1} (\tr_{\mathcal{H}_{P_2}\otimes\mathcal{H}_{V_2}}(\rho)) = \sum_i p_i[\sm{P_1} (\tr_{\mathcal{H}_{P_2}}(\rho_i)) \otimes \tr_{\mathcal{H}_{V_2}}(\sigma_i) ], \\
	&\sm{P_2} (\tr_{\mathcal{H}_{P_1}\otimes\mathcal{H}_{V_1}}(\rho)) = \sum_i p_i[\sm{P_2} (\tr_{\mathcal{H}_{P_1}}(\rho_i)) \otimes \tr_{\mathcal{H}_{V_1}}(\sigma_i) ].
	\end{align*}
	As $\supp(\rho)\subseteq A\otimes C$, so for all $i$, $\supp(\rho_i)\subseteq A$ and $\supp(\sigma_i)\subseteq C$. For each $i$, since $\models_P P_1\sim P_2: A\Rightarrow B$, then
	$$\sm{P_1} (\tr_{\mathcal{H}_{P_2}}(\rho_i)) B^\# \sm{P_2} (\tr_{\mathcal{H}_{P_1}}(\rho_i)), $$
	and we assume $\rho_i^\prime$ is a witness.
	
	We set:
	$$\rho^\prime = \sum_ip_i(\rho_i^\prime\otimes\sigma_i).$$
	Then we can check that
	\begin{align*}
	\tr_{\langle 2\rangle}(\rho^\prime) &= \tr_{\mathcal{H}_{P_2}\otimes\mathcal{H}_{V_2}}(\rho^\prime)
	= \sum_ip_i[\tr_{\mathcal{H}_{P_2}}(\rho_i^\prime) \otimes \tr_{\mathcal{H}_{V_2}}(\sigma_i)] \\
	&= \sum_ip_i[\sm{P_1} (\tr_{\mathcal{H}_{P_2}}(\rho_i)) \otimes \tr_{\mathcal{H}_{V_2}}(\sigma_i)]
	= \sm{P_1} (\tr_{\mathcal{H}_{P_2}\otimes\mathcal{H}_{V_2}}(\rho))
	\end{align*}
	and $\tr_{\langle 1\rangle}(\rho^\prime) = \sm{P_2} (\tr_{\mathcal{H}_{P_1}\otimes\mathcal{H}_{V_1}}(\rho))$. Therefore, $\rho^\prime$ is a coupling for
	$$\cp{ \sm{P_1} (\tr_{\mathcal{H}_{P_2}\otimes\mathcal{H}_{V_2}}(\rho))}{ \sm{P_2} (\tr_{\mathcal{H}_{P_1}\otimes\mathcal{H}_{V_1}}(\rho)) }.$$ Furthermore, as $\supp(\rho_i^\prime)\subseteq B$ and $\supp(\sigma_i)\subseteq C$, we conclude that $\supp(\rho^\prime)\subseteq B\otimes C$.
\end{proof}

\subsection{Verification of Judgment (\ref{two-qw})}\label{ver-q-walker}

In this subsection, we verify the equivalence of two quantum random walkers with different coin-tossing operators. 
The formal proof is displayed in Figure \ref{qRHL-QW} together with two extra notations:
\begin{align*}
=_{sym}^B &= \frac{1}{2}\bigg(\sum_{i,i^\prime=0,n}\sum_{d,d^\prime = 0,1}|d,i\rangle_{c,p}\langle d, i|\otimes |d^\prime,i^\prime\rangle_{c^\prime,p^\prime}\langle d^\prime, i^\prime|\\ &\qquad + \sum_{i,i^\prime=0,n}\sum_{d,d^\prime = 0,1}|d,i\rangle_{c,p}\langle d^\prime,i^\prime|\otimes |d^\prime,i^\prime\rangle_{c^\prime,p^\prime}\langle d, i|\bigg) \\
=_{sym}^I &= \frac{1}{2}\bigg(\sum_{i,i^\prime=1}^{n-1}\sum_{d,d^\prime = 0}^1|d,i\rangle_{c,p}\langle d, i|\otimes |d^\prime,i^\prime\rangle_{c^\prime,p^\prime}\langle d^\prime, i^\prime| + \sum_{i,i^\prime=1}^{n-1}\sum_{d,d^\prime = 0}^1|d,i\rangle_{c,p}\langle d^\prime,i^\prime|\otimes |d^\prime,i^\prime\rangle_{c^\prime,p^\prime}\langle d, i|\bigg)
\end{align*}

\begin{figure}
	\centering
	\begin{align*}
	& \left\{A = U_{c^\prime,p^\prime}(=_{sym})U^\dag_{c^\prime,p^\prime}\right\}\\
	&\bullet \mathbf{while}\ \cM[p]=1\ \mathbf{do}\ \sim\ \mathbf{while}\ \cM[p^\prime]=1\ \mathbf{do}\quad\quad {\rm(LP\text{-}P)}\\
	&  \quad\quad\quad \left\{B_1 = U_{c^\prime,p^\prime}(=_{sym}^I)U^\dag_{c^\prime,p^\prime}\right\}\\
	&  \quad\quad\quad \bullet c:=H[c]; \sim\  c^\prime:=Y[c^\prime]; \quad\quad{\rm(UT\text{-}P)}\\
	&  \quad\quad\quad \left\{(H_c\otimes Y_{c^\prime})U_{c^\prime,p^\prime}(=_{sym}^I)U^\dag_{c^\prime,p^\prime}(H_c^\dag\otimes Y_{c^\prime}^\dag)\right\}\\
	&  \quad\quad\quad \bullet c,p:=S[c,p]; \sim\  c^\prime,p^\prime:=S[c^\prime,p^\prime]; \quad\quad{\rm(UT\text{-}P)}\\
	&   \quad\quad\quad \left\{C = (S_{c,p}H_c\otimes S_{c^\prime,p^\prime}Y_{c^\prime})U_{c^\prime,p^\prime}(=_{sym}^I)U^\dag_{c^\prime,p^\prime}(H_c^\dag S_{c,p}^\dag\otimes Y_{c^\prime}^\dag S_{c^\prime,p^\prime}^\dag)\right\}\\
	&\bullet \mathbf{od};\sim\ \mathbf{od};  \\
	&\left\{B_0 = U_{c^\prime,p^\prime}(=_{sym}^B)U^\dag_{c^\prime,p^\prime}\right\}\\
	&\bullet\quad \mathbf{if}\ (\square i\cdot\cM^\prime[p]=i\rightarrow \mathbf{skip})\ \mathbf{fi}; \sim \mathbf{if}\ (\square i\cdot\cM^\prime[p^\prime]=i\rightarrow \mathbf{skip})\ \mathbf{fi};  \quad\quad{\rm(IF\text{-}P)}\\
	&\left\{D = I_c\otimes I_{c^\prime}\otimes(=_{sym}^p)\right\} \\
	&\bullet\quad \mathbf{Tr}[c];\ \sim\ \mathbf{Tr}[c^\prime]; \quad\quad{\rm(SO\text{-}P)}\\
	&\left\{E = \proj(\tr_{c,c^\prime}(D))\right\}
	\end{align*}
	\caption{Verification of $QW(H)\sim QW(Y)$ in rqPD.
	}\label{qRHL-QW}
\end{figure}

The use of (SO-P) for the last line of the program is obvious:
\begin{align*}
E &= \proj(\tr_{c,c^\prime}(D)) = \proj(\tr_{c,c^\prime}(I_c\otimes I_{c^\prime}\otimes(=_{sym}^p))) = \proj(=_{sym}^p) = (=_{sym}^p).
\end{align*}

For the ${\bf if}$ sentence, we first show that the following assertion holds:
\begin{equation}
\label{qw-if}
\models_P \cM^\prime\approx\cM^\prime: B_0\Rightarrow\{F_i\}
\end{equation}
where $F_1=F_2=\cdots=F_{n-1}=0$ and
$$
F_0 = I_c\otimes I_{c^\prime}\otimes |0\>_p\<0|\otimes |0\>_{p^\prime}\<0|,\quad
F_n = I_c\otimes I_{c^\prime}\otimes |n\>_p\<n|\otimes |n\>_{p^\prime}\<n|.
$$
It is not difficult to realize that, for any $$\rho\in B_0 = U_{c^\prime,p^\prime}(=_{sym}^B)U^\dag_{c^\prime,p^\prime},$$ it must have following properties:
\begin{align*}
&\rho_1 = \tr_2(\rho) = U^\dag\tr_1(\rho)U = U^\dag\rho_2U; \\
&\forall\ i = 1,2,\cdots,n-1: \langle i|\rho_1|i\rangle = \langle i|\rho_2|i\rangle = 0.
\end{align*}
The second property ensures that for $i = 1,2,\cdots,n-1$, $$M_i^\prime\rho_1M_i^{\prime\dag} = M_i^\prime\rho_2M_i^{\prime\dag} = 0,$$ or equivalently to say $$(M_i^\prime\rho_1M_i^{\prime\dag}) F_i^\# (M_i^\prime\rho_2M_i^{\prime\dag})$$ as $F_i=0$. Moreover, for $i=0, n$, we note that
\begin{align*}
&M_i^\prime\rho_1M_i^{\prime\dag} = \rho_{1,c,i}\otimes|i\>\<i|, \quad M_i^\prime\rho_2M_i^{\prime\dag} = \rho_{2,c,i}\otimes|i\>\<i|; \\
&\tr(M_i^\prime\rho_1M_i^{\prime\dag}) = \tr(\langle i|\rho_1|i\rangle) = \tr(\langle i|U^\dag\rho_2U|i\rangle)= \tr(U^\dag\langle i|\rho_2|i\rangle U)= \tr(\langle i|\rho_2|i\rangle) = \tr(M_i^\prime\rho_2M_i^{\prime\dag})
\end{align*}
where $\rho_{1,c,i}, \rho_{2,c,i}\in\mathcal{D}^\le(\mathcal{H}_c)$, which implies
$$(M_i^\prime\rho_1M_i^{\prime\dag}) (I_c\otimes I_{c^\prime}\otimes |i\>_p\<i|\otimes |i\>_{p^\prime}\<i|)^\# (M_i^\prime\rho_2M_i^{\prime\dag}),$$ or equivalently $$(M_i^\prime\rho_1M_i^{\prime\dag}) F_i^\# (M_i^\prime\rho_2M_i^{\prime\dag}).$$ Therefore, assertion (\ref{qw-if}) holds.

As $F_i\sqsubseteq D$ for all $i$, using rules (Skip-P) and (Conseq-P), we have:
\begin{equation}
\label{qw-if1}
\vdash_P{\bf skip}\sim{\bf skip}:F_i\Rightarrow D.
\end{equation}
Note that all subprograms of $\bf{if}$ statement are ${\bf skip}$, and therefore with the precondition $B_0$, the precondition $D$ of the $\bf{if}$ statement is valid using rule (IF-P) directly according to assertions (\ref{qw-if}) and (\ref{qw-if1}).

Now, we focus on the proof of $C\sqsubseteq A$. For simplicity, we use bold ${\bm i}$ to denote the imaginary unit, that is $\bm i = (-1)^{\frac{1}{2}}$. First of all, the following equations are trivial:
\begin{align*}
S_{c,p}H_c|d,i\rangle &= \frac{1}{\sqrt{2}}((-)^{d+1}|0,i-1\rangle+|1,i+1\rangle),\\
S_{c,p}Y_cU_{c,p}|d,i\rangle &= \frac{1}{\sqrt{2}} {\bm i}^i   ({\bm i}(-)^{d+1}|0,i-1\rangle+|1,i+1\rangle)
\end{align*}
Secondly, we can calculate $C$ directly from the above equations: %\sum_{\substack{d,d^\prime = 0,1\\i,i^\prime=1,\cdots,n-1}}
\begingroup
\allowdisplaybreaks
\begin{align*}
C &= (S_{c,p}H_c\otimes S_{c^\prime,p^\prime}Y_{c^\prime})U_{c^\prime,p^\prime}(=_{sym}^I)U^\dag_{c^\prime,p^\prime}(H_c^\dag S_{c,p}^\dag\otimes Y_{c^\prime}^\dag S_{c^\prime,p^\prime}^\dag) \\
&= \frac{1}{2}\sum_{i,i^\prime=1}^{n-1}\sum_{d,d^\prime = 0}^1(S_{c,p}H_c\otimes S_{c^\prime,p^\prime}Y_{c^\prime})U_{c^\prime,p^\prime}|d,i\rangle_{c,p}\langle d, i|\otimes |d^\prime,i^\prime\rangle_{c^\prime,p^\prime}\langle d^\prime, i^\prime|U^\dag_{c^\prime,p^\prime}(H_c^\dag S_{c,p}^\dag\otimes Y_{c^\prime}^\dag S_{c^\prime,p^\prime}^\dag) \\
&\quad + \frac{1}{2}\sum_{i,i^\prime=1}^{n-1}\sum_{d,d^\prime = 0}^1(S_{c,p}H_c\otimes S_{c^\prime,p^\prime}Y_{c^\prime})U_{c^\prime,p^\prime}|d,i\rangle_{c,p}\langle d^\prime,i^\prime|\otimes |d^\prime,i^\prime\rangle_{c^\prime,p^\prime}\langle d, i|U^\dag_{c^\prime,p^\prime}(H_c^\dag S_{c,p}^\dag\otimes Y_{c^\prime}^\dag S_{c^\prime,p^\prime}^\dag) \\
&= \frac{1}{8}\sum_{i,i^\prime=1}^{n-1}\sum_{d,d^\prime = 0}^1
\Big[
((-)^{d+1}|0,i-1\rangle_{c,p}+|1,i+1\rangle_{c,p})
{\bm i}^{i^\prime}({\bm i}(-)^{d^\prime+1}|0,i^\prime-1\rangle_{c^\prime,p^\prime}+|1,i^\prime+1\rangle_{c^\prime,p^\prime})\cdot\\
&\qquad\qquad\qquad((-)^{d+1}{}_{c,p}\langle0,i-1|+{}_{c,p}\langle1,i+1|)
(-{\bm i})^{i^\prime}
((-{\bm i})(-)^{d^\prime+1}{}_{c^\prime,p^\prime}\langle0,i^\prime-1|+{}_{c^\prime,p^\prime}\langle1,i^\prime+1|)\Big]\\
&\quad +\frac{1}{8}\sum_{i,i^\prime=1}^{n-1}\sum_{d,d^\prime = 0}^1
\Big[
((-)^{d+1}|0,i-1\rangle_{c,p}+|1,i+1\rangle_{c,p}){\bm i}^{i^\prime}
({\bm i}(-)^{d^\prime+1}|0,i^\prime-1\rangle_{c^\prime,p^\prime}+|1,i^\prime+1\rangle_{c^\prime,p^\prime}) \cdot\\
&\qquad\qquad\qquad((-)^{d^\prime+1}{}_{c,p}\langle0,i^\prime-1|+{}_{c,p}\langle1,i^\prime+1|)
(-{\bm i})^{i}
((-{\bm i})(-)^{d+1}{}_{c^\prime,p^\prime}\langle0,i-1|+{}_{c^\prime,p^\prime}\langle1,i+1|)\Big]\\
&= \frac{1}{2}\sum_{i,i^\prime=1}^{n-1}\Big[
|0,i-1\rangle_{c,p}\langle0,i-1|\otimes |0,i^\prime-1\rangle_{c^\prime,p^\prime}\langle0,i^\prime-1| + |1,i+1\rangle_{c,p}\langle1,i+1|\otimes |1,i^\prime+1\rangle_{c^\prime,p^\prime}\langle1,i^\prime+1| \\
&\quad + |0,i-1\rangle_{c,p}\langle0,i-1|\otimes |1,i^\prime+1\rangle_{c^\prime,p^\prime}\langle1,i^\prime+1| + |1,i+1\rangle_{c,p}\langle1,i+1|\otimes |0,i^\prime-1\rangle_{c^\prime,p^\prime}\langle0,i^\prime-1| \\
&\quad +  {\bm i}^{i^\prime}(-{\bm i})^{i}|0,i-1\rangle_{c,p}\langle0,i^\prime-1|\otimes |0,i^\prime-1\rangle_{c^\prime,p^\prime}\langle0,i-1|\\ &\quad + {\bm i}^{i^\prime}(-{\bm i})^{i}|1,i+1\rangle_{c,p}\langle1,i^\prime+1|\otimes |1,i^\prime+1\rangle_{c^\prime,p^\prime}\langle1,i+1| \\
&\quad +  {\bm i}^{i^\prime}(-{\bm i})^{i}({-\bm i})|0,i-1\rangle_{c,p}\langle1,i^\prime+1|\otimes |1,i^\prime+1\rangle_{c^\prime,p^\prime}\langle0,i-1|\\ &\quad + {\bm i}^{i^\prime}(-{\bm i})^{i}({\bm i})|1,i+1\rangle_{c,p}\langle0,i^\prime-1|\otimes |0,i^\prime-1\rangle_{c^\prime,p^\prime}\langle1,i+1|
\Big] \\
\end{align*}
\endgroup
Next, we show that $C$ is invariant under the premultiplication of $U_{c^\prime,p^\prime}S_{c,p;c^\prime,p^\prime}U^\dag_{c^\prime,p^\prime}$:
\begin{align*}
& U_{c^\prime,p^\prime}S_{c,p;c^\prime,p^\prime}U^\dag_{c^\prime,p^\prime}C \\
=\ &\frac{1}{2}\sum_{i,i^\prime=1}^{n-1}\Big[
(-{\bm i})^{i^\prime}{\bm i}^i     |0,i^\prime-1\rangle_{c,p}\langle0,i-1|\otimes |0,i-1\rangle_{c^\prime,p^\prime}\langle0,i^\prime-1|\\ &\quad + (-{\bm i})^{i^\prime}{\bm i}^i |1,i^\prime+1\rangle_{c,p}\langle1,i+1|\otimes |1,i+1\rangle_{c^\prime,p^\prime}\langle1,i^\prime+1| \\
&  +(-{\bm i})^{i^\prime}{\bm i}^{i}({\bm i})|1,i^\prime+1\rangle_{c,p}\langle0,i-1|\otimes |0,i-1\rangle_{c^\prime,p^\prime}\langle1,i^\prime+1| \\ &\quad +  (-{\bm i})^{i^\prime}{\bm i}^{i}(-{\bm i})|0,i^\prime-1\rangle_{c,p}\langle1,i+1|\otimes |1,i+1\rangle_{c^\prime,p^\prime}\langle0,i^\prime-1| \\
&  +|0,i-1\rangle_{c,p}\langle0,i^\prime-1|\otimes |0,i^\prime-1\rangle_{c^\prime,p^\prime}\langle0,i-1| +   |1,i+1\rangle_{c,p}\langle1,i^\prime+1|\otimes |1,i^\prime+1\rangle_{c^\prime,p^\prime}\langle1,i+1| \\
&  +|0,i-1\rangle_{c,p}\langle1,i^\prime+1|\otimes |1,i^\prime+1\rangle_{c^\prime,p^\prime}\langle0,i-1| +  |1,i+1\rangle_{c,p}\langle0,i^\prime-1|\otimes |0,i^\prime-1\rangle_{c^\prime,p^\prime}\langle1,i+1|
\Big] \\
=\ &C
\end{align*}
Finally, let calculate $AC$:
\begin{align*}
AC &= \frac{1}{2}U_{c^\prime,p^\prime}(I_{c,p}\otimes I_{c^\prime,p^\prime}+ S_{c,p;c^\prime,p^\prime})U^\dag_{c^\prime,p^\prime}C = \frac{1}{2}C + \frac{1}{2}U_{c^\prime,p^\prime}S_{c,p;c^\prime,p^\prime}U^\dag_{c^\prime,p^\prime}C = \frac{1}{2}C + \frac{1}{2}C = C
\end{align*}
which implies that $C\sqsubseteq A$, as both $A$  and $C$ are projectors.

What remains to be shown is:
\begin{equation}
\label{exam-QW-meas}
\models_P \mathcal{M}\approx\mathcal{M}: A\Rightarrow\{B_0,B_1\}.
\end{equation}
Actually, this one is not difficult to realize as measurement $\mathcal{M}$ and $U$ are commutative, that is, $M_iU=UM_i$ for $i=0,1$. Therefore, for any $\rho \in A = U_{c^\prime,p^\prime}(=_{sym})U^\dag_{c^\prime,p^\prime}$,
\begin{align*}
& \rho_1 = \tr_2(\rho) = U^\dag\tr_1(\rho)U = U^\dag\rho_2U; \\
& M_i\rho_1M_i^\dag = M_iU^\dag\rho_2UM_i^\dag = U^\dag M_i\rho_2M_i^\dag U.
\end{align*}
Moreover, realize that $$\supp(M_0\rho_1M_0^\dag)\subseteq(I_c\otimes (|0\>_p\<0|+|n\>_p\<n|)),$$ so $$M_0\rho_1M_0^\dag(=_{sym}^B)^\#M_0\rho_1M_0^\dag,$$ which implies that $$M_0\rho_1M_0^\dag [U_{c^\prime,p^\prime}(=_{sym}^B)U^\dag_{c^\prime,p^\prime}]^\# M_0\rho_2M_0^\dag.$$ Similarly, we have: $$M_1\rho_1M_1^\dag [U_{c^\prime,p^\prime}(=_{sym}^I)U^\dag_{c^\prime,p^\prime}]^\# M_1\rho_2M_1^\dag.$$

Combining the above fact $C\sqsubseteq A$ and Eqn. (\ref{exam-QW-meas}), we conclude that with precondition $A$, the postcondition $B_0$ of the while loop is valid.

\subsection{Proof of Proposition \ref{prop-sep}}

Let us first present a technical lemma:

\begin{lem}
	\label{max_ent_pre}
	Suppose $\mathcal{H}_1$ and $\mathcal{H}_2$ are both two dimensional Hilbert spaces and $|\Phi\>$ is a maximally entangled state between $\mathcal{H}_1$ and $\mathcal{H}_2$. Then for any separable state $\rho$ between $\mathcal{H}_1$ and $\mathcal{H}_2$,
	$$
	\tr(|\Psi\>\<\Psi|\rho)\le\frac{1}{2}\tr(\rho).
	$$
\end{lem}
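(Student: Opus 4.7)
The plan is to reduce to the pure product case and then exploit the defining property of a maximally entangled state, namely that its partial traces are maximally mixed. By the definition of separability (extended to partial density operators), we can write
\begin{equation*}
\rho \;=\; \sum_{m} \rho_{m1}\otimes\rho_{m2}
\end{equation*}
with each $\rho_{mi}\in\mathcal{D}^{\le}(\mathcal{H}_i)$. Spectrally decomposing each factor, we obtain a finer decomposition $\rho=\sum_{k}\lambda_k\,|\alpha_k\rangle\langle\alpha_k|\otimes|\beta_k\rangle\langle\beta_k|$ with $\lambda_k\ge 0$ and $\tr(\rho)=\sum_k \lambda_k$. By linearity of the trace, it therefore suffices to establish the inequality on each pure product term.

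The core estimate is $|\langle\Phi|\alpha\rangle|\beta\rangle|^2\le\tfrac{1}{2}$ for every unit vector $|\alpha\rangle\otimes|\beta\rangle$. Using Cauchy--Schwarz together with $\tr_2(|\Phi\rangle\langle\Phi|)=\tfrac{1}{2}I_{\mathcal{H}_1}$ (the hallmark of a maximally entangled state in a $2\otimes 2$ system), I compute
\begin{equation*}
\bigl|\langle\Phi|(|\alpha\rangle\otimes|\beta\rangle)\bigr|^{2}
\;\le\; \bigl\|(|\alpha\rangle\langle\alpha|\otimes I)|\Phi\rangle\bigr\|^{2}
\;=\; \langle\alpha|\,\tr_2(|\Phi\rangle\langle\Phi|)\,|\alpha\rangle
\;=\; \tfrac{1}{2}.
\end{equation*}
Equivalently, $\tr\bigl(|\Phi\rangle\langle\Phi|\cdot(|\alpha\rangle\langle\alpha|\otimes|\beta\rangle\langle\beta|)\bigr)\le\tfrac{1}{2}\cdot\tr(|\alpha\rangle\langle\alpha|\otimes|\beta\rangle\langle\beta|)$.

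Summing over the decomposition gives
\begin{equation*}
\tr\bigl(|\Phi\rangle\langle\Phi|\rho\bigr)\;=\;\sum_k \lambda_k\,\bigl|\langle\Phi|\alpha_k\rangle|\beta_k\rangle\bigr|^{2}\;\le\;\tfrac{1}{2}\sum_k \lambda_k\;=\;\tfrac{1}{2}\tr(\rho),
\end{equation*}
which is the claim. There is no real obstacle here: the only subtlety is ensuring the pure-product decomposition is legitimate for a partial (rather than normalized) density operator, which is immediate since the spectral decompositions of $\rho_{m1}$ and $\rho_{m2}$ produce nonnegative (not necessarily probability) weights, and the bound we use is homogeneous in those weights. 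Note that the argument in fact works in any $d\otimes d$ system with the sharper constant $1/d$; the restriction to dimension two plays no essential role beyond matching the statement.
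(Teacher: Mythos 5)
Your proof is correct, and it takes a genuinely different route from the paper's. The paper fixes $|\Phi\rangle=\tfrac{1}{\sqrt2}(|00\rangle+|11\rangle)$, writes $\rho$ and its partial transpose $\rho^{T_2}$ in coordinates, and invokes the Peres--Horodecki observation that separability forces $\rho^{T_2}\sqsubseteq 0$ to be violated --- i.e.\ $\rho^{T_2}$ must be positive --- which yields the constraint $yz\ge |v|^2$ on the relevant entries; an AM--GM step then gives $\tfrac12(x+w+v+v^*)\le\tfrac12\tr(\rho)$. You instead reduce to pure product terms via the convex decomposition of a separable state and prove the single estimate $|\langle\Phi|(|\alpha\rangle\otimes|\beta\rangle)|^2\le\tfrac12$ by Cauchy--Schwarz plus the fact that $\tr_2(|\Phi\rangle\langle\Phi|)=\tfrac12 I$. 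Your key chain
$|\langle\Phi|(|\alpha\rangle\otimes|\beta\rangle)|^2\le\|(|\alpha\rangle\langle\alpha|\otimes I)|\Phi\rangle\|^2=\langle\alpha|\tr_2(|\Phi\rangle\langle\Phi|)|\alpha\rangle$ checks out, and the homogeneity remark correctly handles sub-normalized weights. What each approach buys: yours is basis-free, needs only the defining marginal property of a maximally entangled state rather than a coordinate computation, and generalizes immediately to $d\otimes d$ with the sharper constant $1/d$; the paper's is a short explicit $4\times4$ calculation, but it leans on the PPT criterion (only the easy necessity direction is actually used) and is tied to the specific representative of $|\Phi\rangle$. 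One cosmetic point: the lemma statement itself switches between $|\Phi\rangle$ and $|\Psi\rangle$; you use $|\Phi\rangle$ consistently, which is the right reading.
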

\begin{proof}
	Suppose $|\Phi\> = \frac{1}{\sqrt{2}}(|00\>+|11\>)$ and write $\rho$ and $\rho^{T_2}$ in the matrix forms:
	$$\rho = \left[
	\begin{array}{cccc}
	x & \cdot & \cdot & v \\
	\cdot & y & \cdot & \cdot \\
	\cdot & \cdot & z & \cdot \\
	v^* & \cdot & \cdot & w \\
	\end{array}\right],\
	\rho^{T_2} = \left[
	\begin{array}{cccc}
	x & \cdot & \cdot & \cdot \\
	\cdot & y & v & \cdot \\
	\cdot & v^* & z & \cdot \\
	\cdot & \cdot & \cdot & w \\
	\end{array}
	\right]$$
	with non-negative real numbers $x,y,z,w$ and complex number $v$ while `$\cdot$' represents the parameters we are not interested in. As $\rho$ is separable, so $\rho^{T_2}$ is non-negative which implies $yz\ge vv^*$. Therefore, we obtain:
	\begin{align*}
	\tr(|\Psi\>\<\Psi|\rho) &= \frac{1}{2}(x+w+v+v^*) \le \frac{1}{2}(x+w+2|v|) \\
	&\le \frac{1}{2}(x+w+2\sqrt{yz}) \le \frac{1}{2}(x+w+y+z) \\
	&= \frac{1}{2}\tr(\rho).
	\end{align*}
\end{proof}

Now we are ready to prove Proposition \ref{prop-sep}.

\begin{proof}
	(i) It is sufficient to use the same counterexample shown in the proof of Proposition \ref{prop-comp} (2) [see Appendix \ref{proof-prop-comp}] to prove this. Actually, that example not only rules out $\models P_1\sim P_2:A\Rightarrow B$, but also $\models_S P_1\sim P_2:A\Rightarrow B$ (see Definition \ref{def-judgment-sep}).
	
	(ii) Let us still consider two one-qubit programs $$P_1\equiv P_2 \equiv q:=|0\>;$$ and two projective predicate $A = B = |\Psi\>\<\Psi|$. According to Lemma \ref{max_ent_pre} we know that for any separable state $\rho$ between $\mathcal{H}_1$ and $\mathcal{H}_2$, $\tr(A\rho)\le\frac{1}{2}\tr(\rho)$. The output states of two programs are both $\tr(\rho)|0\>\<0|$, which have the unique coupling $\sigma = \tr(\rho)|00\>\<00|$, therefore,
	$$\tr(B\sigma) = \frac{1}{2}\tr(\rho) \ge \tr(A\rho)$$
	which concludes the validity of $\models_S P_1\sim P_2: A\Rightarrow B$.
	
	On the other hand, choosing $\rho = |\Psi\>\<\Psi|\in A$ as the input, the coupling $\sigma$ of outputs is still unique and $\sigma = |00\>\<00|$, whose support is not in $B$; that is, $\models_p P_1\sim P_2: A\Rightarrow B$ is not valid.
	
	(iii) Obvious.
	
	(iv) Let us consider two one-qubit programs
	$$P_1\equiv P_2\equiv \mathbf{skip};$$ with their two-dimensional Hilbert space $\mathcal{H}_1$ and $\mathcal{H}_2$ respectively, and the separable quantum predicates $$A = \frac{1}{3}I\otimes I + \frac{2}{3}|\Psi\>\<\Psi|$$ and $B = \frac{2}{3}I\otimes I$ over $\mathcal{H}_1\otimes\mathcal{H}_2$ where $|\Psi\>$ is the maximally entangled state $\frac{1}{\sqrt{2}}(|00\>+|11\>)$.
	
	$B$ is trivially separable and the separability of $A$ is already shown in the proof of Fact \ref{separable-state} [see Appendix \ref{entangled-witness}]. Now, it is straightforward to see that for any separable input $\rho$ the following inequality holds:
	\begin{align*}
	\tr(A\rho) &= \frac{1}{3}\tr(\rho) + \frac{2}{3}\tr(|\Psi\>\<\Psi|\rho) \le \frac{2}{3}\tr(\rho) = \tr(B\rho).
	\end{align*}
	according to Fact \ref{max_ent_pre}.
	Note that $P_1$ and $P_2$ are $\mathbf{skip}$, so $\rho$ is still a coupling for $\cp{ \sm{P_1}(\tr_2(\rho))}{  \sm{P_2}(\tr_1(\rho)) }$, which implies the validity of the judgment $\models_S P_1\sim P_2: A\Rightarrow B$.
	
	On the other hand, however, if we choose $|\Psi\>\<\Psi|$ as the input, then
	$$\tr(A|\Psi\>\<\Psi|) = 1 > \frac{2}{3} = \frac{2}{3}\tr(\sigma) = \tr(B\sigma)$$
	for any possible couplings $\sigma$ for the output states, which rules out the validity of judgment $\models P_1\sim P_2: A\Rightarrow B$.
\end{proof}

\section{A Strategy for Collecting Measurement and Separability Conditions}\label{strategy}

In this section, we propose a strategy for collecting measurement and separability conditions in order to warrant the side-conditions of the form (\ref{entail-form}) in a sequence of applications of the strong sequential rule (SC+). Separability conditions are relatively easier to deal with. So, we focus on measurement conditions.   

We first observe that for each quantum program containing \textbf{if} and \textbf{while}, it naturally includes branches in execution.
We can generate a (potentially) infinite probabilistic branching tree for any input state as follows: 
\begin{itemize}\item 
	$({\rm IF})\ \langle\mathbf{if}\ (\square m\cdot
	\cM[\overline{q}]=m\rightarrow P_m)$ is executed with input $\rho$. The current node has $M$ children nodes, where $M$ is the number of possible outcomes $m$ of the measurement $\mathcal{M}$, such that the edge connected to the $m$-th child has probability weight $\tr(M_m^{\dag}M_m\rho)$.
	\item 
	$\mathbf{while}\ \cM[\overline{q}]=1\ \mathbf{do}\ P\ \mathbf{od}$ is executed with input $\rho$. The current node has two children nodes such that the edge connected to the $i$-th child has probability weight $\tr(M_i^{\dag}M_i\rho)$ $(i=0,1)$. The second node then will have two children nodes such that the edge connected to the $i$-th child has probability weight $\tr(M_i^{\dag}M_iP(M_1\rho M_1^{\dag}))$. $\cdots$. Repeating this process, each \textbf{while} loop would generate an infinite  probabilistic branching tree.
\end{itemize}

Now we can define comparability between two quantum programs.
\begin{defn} Let $P_1$ and $P_2$ be two quantum programs and $\rho_1$ and $\rho_2$ two input states. We say that $(P_1,\rho_1)$ and $(P_2, \rho_2)$ are comparable, if the probabilistic branching tree generated from $P_1$ executing on $\rho_1$ is the same as the probabilistic branching tree generated from $P_2$ executing on $\rho_2$.
\end{defn}

The measurement conditions used in verifying relational properties between two programs $P_1$ and $P_2$ are actually the constraints on the respective inputs $\rho_1, \rho_2$ under which $(P_1,\rho_1)$ and $(P_2,\rho_2)$ are comparable. 

Although the probabilistic branching trees of $P_1$ and $P_2$ are generally infinite, we have a back tracking procedure to find sufficient constrains for comparability of them in the case of finite-dimensional state Hilbert spaces. 
\begin{lemma}
	For any two programs $P_1$ and $P_2$ in finite-dimensional Hilbert spaces, we can compute the constrains on inputs under which they become comparable within a finite number of steps.
\end{lemma}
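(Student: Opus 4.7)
The plan is to proceed by structural induction on the common syntax of $P_1$ and $P_2$, computing at each construct a finite set of linear equality constraints on the input state that are necessary and sufficient for comparability, and then back-propagating them to the initial inputs via the duals of the semantic super-operators. Throughout, I identify a constraint with an equality of the form $\tr(A\cdot\tr_{\<2\>}(\rho)) = \tr(B\cdot\tr_{\<1\>}(\rho))$ for explicit Hermitian operators $A,B$, so that applying a trace-preserving super-operator $\mathcal{E}$ to one side transforms the constraint into $\tr(\mathcal{E}^\ast(A)\cdot\tr_{\<2\>}(\rho))=\cdots$, preserving the format and keeping the constraint set closed under back-propagation.

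The base cases of $\mathbf{skip}$, $q:=|0\rangle$ and $\bar q:=U[\bar q]$ are immediate, since these constructs introduce no branching and contribute an empty set of constraints. For sequential composition $P_1^\prime;P_1^{\prime\prime}$ versus $P_2^\prime;P_2^{\prime\prime}$, I recursively compute the comparability constraints for $(P_1^\prime,P_2^\prime)$ and for $(P_1^{\prime\prime},P_2^{\prime\prime})$, and push the latter back through $\sm{P_1^\prime}$ and $\sm{P_2^\prime}$ by the format-preserving substitution above. For $\mathbf{if}$-statements with measurements $\cM_1=\{M_{1m}\}$ and $\cM_2=\{M_{2m}\}$, I first add the finitely many branch-probability equalities $\tr(M_{1m}^\dag M_{1m}\cdot\tr_{\<2\>}(\rho)) = \tr(M_{2m}^\dag M_{2m}\cdot\tr_{\<1\>}(\rho))$ for each outcome $m$, and then, for every corresponding pair of branches, back-propagate the recursively computed body constraints through the post-measurement super-operators $\rho\mapsto M_{im}\rho M_{im}^\dag$.

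The main obstacle is the $\mathbf{while}$-loop, where the probabilistic branching tree is genuinely infinite and a naive enumeration would yield infinitely many constraints. For $\mathbf{while}\ \cM_i[\bar q]=1\ \mathbf{do}\ P_i\ \mathbf{od}$ $(i=1,2)$, comparability at depth $n$ is exactly the equality in equation~(\ref{eq-prob}) together with comparability of a single iteration of the body on the appropriate post-measurement states. The key observation is that Lemma~\ref{prop-finite} applies: the super-operators $\sm{P_i}\circ\mathcal{E}_{i1}$ act on the finite-dimensional space of operators on $\mathcal{H}_{P_i}$, so the Cayley-Hamilton argument reduces checks at arbitrary depth $n$ to checks at the finitely many depths $0\le n\le d_1^2+d_2^2-1$. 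An analogous argument applies when back-propagating a body constraint of the form $\tr(A\cdot\tr_{\<2\>}(\sigma)) = \tr(B\cdot\tr_{\<1\>}(\sigma))$, viewed as a function of the iteration depth, since the pullback by $(\sm{P_i}\circ\mathcal{E}_{i1})^\ast$ also eventually becomes a linear combination of its first $d_1^2+d_2^2-1$ values.

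Putting the pieces together, the back-tracking procedure walks the syntax tree once, emitting at every node only finitely many equality constraints, and terminates because the recursion follows the finite syntax and the only potentially infinite source—the while-loop—is truncated at depth $d_1^2+d_2^2-1$ by Lemma~\ref{prop-finite}. The resulting finite system of linear equalities in $\rho$, each of the form $\tr(H_k\rho)=0$ for an explicitly computable Hermitian $H_k$, is exactly what is required to warrant the entailments of the form (\ref{entail-form}) in a sequence of applications of rule (SC+).
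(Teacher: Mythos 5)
Your proposal is correct and follows essentially the same route as the paper: both back-propagate finite sets of linear trace constraints of the form $\tr(A\rho_1)=\tr(B\rho_2)$ through the program structure, adding branch-probability equalities at measurements, and both tame the infinitely many depth-indexed constraints arising from \textbf{while}-loops by the Cayley--Hamilton/linear-recurrence argument of Lemma~\ref{prop-finite}, truncating at depth $O(d_1^2+d_2^2)$. The only cosmetic difference is that you phrase the traversal as structural induction with dual super-operators while the paper presents it as an explicit back-tracking algorithm from the output point with constraint set $\{(I,I)\}$.
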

\begin{proof} We show find linear constrains by back tracking through the structure of programs $P_,P_2$: 
	\begin{itemize}\item At the output points, the constrains are $\{(I,I)\}$, which means that the output probability should be the same.
		
		\item Now suppose that at some point, the constrains are
		$\{(A_1,B_1),\cdots,(A_k,B_k)\}$, which means that the respective inputs $\rho_1,\rho_2$ at this point should satisfy $\tr(A_i\rho_1)=\tr(B_i\rho_2)$ to make the following subprogram generate same probabilistic branching trees. Such constrains are also considered as the requirements of the outputs of the front subprograms.
		
		Our aim is to generate the constrains for the point before the current one. We only illustrate the following two cases:
		\begin{itemize}\item The subprograms of $P_1$ and $P_2$ associated with this point are: 
			\begin{align*} &({\rm IF_1})\qquad \langle\mathbf{if}\ (\square m\cdot
			\cM[\overline{q}]=m\rightarrow P_m),\\ &({\rm IF_2})\qquad \langle\mathbf{if}\ (\square m\cdot
			\cN[\overline{q}]=m\rightarrow Q_m).\end{align*}
			Then the following equations should be added into the constraints: 
			\begin{itemize}
				\item $\tr(M_i^{\dag}M_i\rho_1)=\tr(N_i^{\dag}N_i\rho_2)$ for $1\leq i\leq m$.
				\item $\tr(\sum_{i=1}^m P_i(M_i\rho_1M_i^{\dag})A_j)=(\sum_{i=1}^m P_i(M_i\rho_1M_i^{\dag})B_j)$ for all $1\leq j\leq k$.
			\end{itemize}
			It is easy to derive new constrains for the input at the points of ${\rm IF_1}$ and ${\rm IF_2}$.
			
			\item The subprograms of $P_1$ and $P_2$ associated with this point are: 
			\begin{align*}&({\rm LP_1})\qquad \mathbf{while}\ \cM[\overline{q}]=1\ \mathbf{do}\ P\ \mathbf{od},\\ &({\rm LP_2})\qquad \mathbf{while}\ \cN[\overline{q}]=1\ \mathbf{do}\ Q\ \mathbf{od}.\end{align*} Assume that the semantic functions of $P$ and $Q$ are super-operators $\E$ and $\F$, respectively. We set $\E_a=\E\circ M_1\cdot M_1^{\dag}$ and $\F_a=\F\circ N_1\cdot N_1^{\dag}$
			Then the following equations should be added into the constrains: for each $1\leq j$,
			\begin{itemize}
				\item $\tr(M_0\E_a^j(\rho_1)M_0^{\dag})=\tr(N_0\F_a^j\rho_2 N_0^{\dag})$ for $1\leq i\leq m$.
				\item $\tr(M_0\E_a^j(\rho_1)M_0^{\dag}A_i)=(N_0\F_a^j\rho_2 N_0^{\dag}B_j)$ for all $1\leq i\leq k$.
			\end{itemize}
			Note that the second requirement is very strong in the sense that rather than only requiring the outputs of the loops satisfies the latter constrains, we want each branch of them satisfies the latter constrains. This is reasonable because different branches of the loops can actually happen in different points of time.\end{itemize}\end{itemize}
	
	The key idea here is that although the above procedure may produce infinitely many constrains, a finite number of them will be enough to ensure that all of them are satisfied. Indeed, we can see that $d_a^2+d_b^2+1$ of these constraints are enough by observing that the equations given above can be regarded as linear constrains of linear recurrent series of degree $d_a^2+d_b^2$. It is well known that a $d$-degree linear recurrent series is $0$ if and only if the front $d+1$ items are all $0$.
\end{proof}

For a better understanding of the procedure given in the proof of the above lemma, it is visualised as Algorithm \ref{algo:Robust Entanglement Distribution}.  

\begin{algorithm}[h]
	
	\Input{Quantum Programs: $P_1=S_{1,1};S_{1,2};\cdots; S_{1,n}$, $P_2=S_{2,1};S_{2,2};\cdots;S_{2,n}$}	
	\tcc{$S_{1,j}$ and $S_{2,j}$ are both $\mathbf{if}$ or $\mathbf{while}$}
	$i\gets n$\;
	$C=\{(I,I)\}$\;
	$l_C=1$\;
	\While {$i>0$}{
		\If{$S_{1,i}=\ \langle\mathbf{if}\ (\square m\cdot\cM[\overline{q}]=m\rightarrow P_m\rangle$ and $S_{2,i}=\ \langle\mathbf{if}\ (\square m\cdot\cN[\overline{q}]=m\rightarrow Q_m\rangle$}
		{
			$D\gets\emptyset$\;
			\For{$1\leq j\leq m$}{
				Add $(M_j^{\dag}M_j,N_j^{\dag}N_j)$ into $D$\;
				\For{$1\leq k\leq l_C$}{
					Add $(M_j^{\dag}P_j^*(A_k)M_j,N_j^{\dag}Q_j^*(B_k)N_j)$ into $D$\;
				}
			}
			$C\gets D$\;
			$l_C\gets|C|$\;
		}
		
		\If{$S_{1,i}=\mathbf{while}\ \cM[\overline{q}]=1\ \mathbf{do}\ P\ \mathbf{od}$ and $\mathbf{while}$ and $S_{2,i}=\mathbf{while}\ \cN[\overline{q}]=1\ \mathbf{do}\ Q\ \mathbf{od}$}{
			$D\gets\emptyset$\;
			$\E_a(\cdot)\gets M_1^{\dag}P^*(\cdot)M_1$\;
			$\F_a(\cdot)\gets N_1^{\dag}Q^*(\cdot)N_1$\;
			\For{$0\leq j\leq d_1^2+d_2^2+1$}{
				Add $(\E_a^j(M_0^{\dag}M_0),\F_a^j(N_0^{\dag}N_0))$ into $D$\;
				\For{$1\leq k\leq l_C$}{
					Add $(\E_a^j(M_0^{\dag}A_kM_0),\F_a^j(N_0^{\dag}B_kN_0))$ into $D$\;
				}
			}
			
			$C\gets D$\;
			$l_C\gets|C|$\;
		}
		
		$i\gets i-1$\;
	}	
	
	\Return{\textsf{Constraints for Comparability between $P_1$ and $P_2$}}\;
	\caption{\textsf{Find Constrains for Comparability between Quantum Programs}}
	\label{algo:Robust Entanglement Distribution}
\end{algorithm}

\section{Comparison between Unruh's Logic and Ours}\label{comparison-two}

\subsection*{Expressiveness}One main consideration in~\cite{Unruh18} is to retain the
intuitive flavour of proofs in probabilistic relational Hoare logic
and to give an implementation in a general-purpose proof assistant.
Thus, \citep{Unruh18} only considers projective predicates as
pre-conditions and post-conditions. As already noted in Section \ref{sec-projective}, this restriction
can simplify the verification of quantum programs, but also limits the
expressiveness of the logic. In particular, we believe that our main
examples cannot be dealt with by Unruh's logic~\cite{Unruh18}: The symmetry in Example \ref{exam-1}, the uniformity in Proposition \ref{uniformity}, Quantum Bernoulli Factory (Example \ref{Bernoulli}) and the reliability of quantum teleportation (Section \ref{tele-reli}) cannot be specified in that logic because the preconditions in judgments (\ref{e-unif}), (\ref{ju-uniform}) and (\ref{ju-bit}-\ref{ju-bp}) are not projective (i.e. the projection onto a (closed) subspace of the state Hilbert space).
In addition, \citep{Unruh18} introduces classic variables and Boolean predicates into programming language and logic, which significantly enhances the expressiveness and practicability. 
However, introducing classic elements does not bring about a deeper understanding of the relationship between quantum programs while makes the logic itself much more complicated.
%(2) The equivalence of quantum walks cannot be verified in that logic, although the judgment there can be expressed because its precondition and postcondition are both projective.

Later on, \citep{LU19} also considers general observables as predicates and remove all classical variables. Its expressiveness is the same as ours, but the proof rules are different (see discussion below).

\subsection*{Inference Rules}

\noindent\textbf{\cite{Unruh18} vs. rqPD (Section \ref{sec-projective})}: 
Since \citep{Unruh18} introduces classical variables, the proof rules are quite different from ours. Here we mainly focus on the difference between {\bf if} and {\bf while} rules. We believe that Unruh's proof rules \textsc{JointMeasure} together with \textsc{JointIf} and \textsc{JointWhile} for reasoning about {\bf if} and {\bf while} statements with quantum measurements as guards is weaker than our rules (IF-P) and (LP-P). 
The reasons are given as follows.

The guards in {\bf if} and {\bf while} statements considered in \citep{Unruh18} are boolean expressions while {\bf measure} statement is introduced to extract information from quantum registers. Our {\bf if} and {\bf while} statements (with quantum measurements as guards) can be regarded as a combination of {\bf measure} and {\bf if}, {\bf while} statements in \cite{Unruh18}. Note that the rule \textsc{JointMeasure} (see the arXiv version \cite{Unruh18fullversion};  which is the general form of rule \textsc{JointMeasureSimple} discussed in the conference version \cite{Unruh18}) is quite restrictive; more precisely, the preconditions must satisfy: 1. Two measurements can be converted through isometries $u_1$, $u_2$ (expressed as condition $C_e$); 2. The states over the measured quantum registers need to be equal up to the basis transforms $u_1$ and $u_2$ (expressed as condition $u_1Q_1^\prime\equiv_{\textsf{quant}} u_2Q_2^\prime$). Such preconditions are significantly stronger than the preconditions given in our (IF-P) and (LP-P) rules. In fact, we only need to ensure that two measurements produces the same output distribution if two measured states can be lifted to the precondition (see Definition \ref{proj-measure-condition}).

It is worth mentioning that, Unruh's definition of the valid judgments requires that the coupling of inputs and outputs must be separable, so \textsc{Frame} rule holds directly. In contrast, our definition of validity allows the entangled couplings, so a separability condition is added to our (Frame) rule. In addition, there is no (Frame) rule in \cite{LU19}.

\vspace{0.2cm}

\textbf{\cite{LU19} vs. rqPD (Section \ref{sec:qrhl})}: The inference rules in \cite{LU19} are just those basic construct-specific rules presented in Section \ref{subsec-basic}. However, the proof of soundness is somewhat different because of the difference between the definitions of valid judgments. Moreover, we present more rules using measurement condition (see Fig. \ref{fig 4.5}), which enable us to capture more relational properties between quantum programs. For instance, although the symmetry in Example \ref{exam-1} and the uniformity of Quantum Bernoulli Factory (Example \ref{Bernoulli}) can be expressed in \cite{LU19}, their rules are not strong enough for reasoning about these judgments. In opposite, the example of Quantum Zeno Effect used in \cite{LU19} can also be verified using our logic.

\subsection*{Complexity}One of the major differences between our logic and Unruh's one~[\citealp{Unruh18}; \citealp{LU19}] comes from the treatment of entanglement in relational judgments:
\begin{enumerate}
	\item Our definition of valid judgment $P_1\sim P_2:A\Rightarrow B$
	quantifies over all inputs in $\mathcal{H}_{P_1\langle 1\rangle}
	\otimes\mathcal{H}_{P_2\langle 2\rangle}$ including entanglements
	between $P_1$ and $P_2$.  In contrast, Unruh~[\citealp{Unruh18}; \citealp{LU19}] only
	allows separable inputs between $P_1$ and $P_2$.
	\item Unruh~[\citealp{Unruh18}; \citealp{LU19}] also requires that couplings for the outputs
	of programs $P_1$ and $P_2$ must be separable. This is not required in our logic. 
\end{enumerate}

The above differences render essentially different complexities of checking the validity of a given judgment in applications. 
It is known that deciding the separability of a given quantum state is NP-hard with respect to the dimension \cite{Gur03, Gha10}; thus, we believe that deciding the existence of
separable couplings in some subspaces is also NP-hard. If so, this will prevent the
possibility of efficiently automatically checking validity defined
in [\citealp{Unruh18}; \citealp{LU19}] even if the semantic function is known. 
However, quantum Strassen theorem proved in the previous
work \cite{quantumstrassen} implies that it has only a polynomial time
complexity with respect to the dimension to deciding the existence of couplings used in our projective judgments. With the same techniques of SDP (semi-definite programming), the validity of general judgment can also check within polynomial time with respect to the dimension.

\section{Validity of Judgments: Separability versus Entanglements}
In this subsection, we will explore more details between separability and entanglements.
Let us first consider item (1). Intuitively, in our logic, to test the relationship between $P_1$ and $P_2$, we run them in parallel within an environment where entanglement between the two parties are allowed. However, in Unruh's logic, such an entanglement is not provided in the experiment. As one can imagine, our logic can be used to reveal some subtler relational properties of quantum programs. To see this formally, let us introduce the notion of validity with separable
inputs (but not necessarily with separable coupling for outputs) for
general Hermitian operators (rather than projective ones) as the
preconditions and postconditions.

\begin{defn}\label{def-judgment-sep} Judgment (\ref{syntax-judge}) is valid with separable inputs, written: $$\models_\mathit{S} P_1\sim P_2: A\Rightarrow B$$ if it is valid under separability condition $\Gamma= [\mathit{var}(P_1\langle 1\rangle),\mathit{var}(P_2\langle 2\rangle)]$; that is,  $$[\mathit{var}(P_1\langle 1\rangle),\mathit{var}(P_2\langle 2\rangle)]\models P_1\sim P_2: A\Rightarrow B.$$\end{defn}

The relationship between the general validity $\models$, projective validity $\models_P$ and validity $\models_S$ with separable inputs is clarified in the following:
\begin{prop}
	\label{prop-sep}
	For projective predicates $A$ and $B$: \begin{equation*}
	\begin{array}{cll}
	(i) &\models_P P_1\sim P_2: A\Rightarrow B\ \not\Rrightarrow\ \models_S P_1\sim P_2: A\Rightarrow B;\\
	(ii) &\models_P P_1\sim P_2: A\Rightarrow B\ \not\Lleftarrow\ \models_S P_1\sim P_2: A\Rightarrow B; \end{array}
	\end{equation*} and for general quantum predicates $A$ and $B$,
	\begin{equation*}
	\begin{array}{cll}
	(iii) &\models P_1\sim P_2: A\Rightarrow B\ \Rrightarrow\ \models_S P_1\sim P_2: A\Rightarrow B;\\
	(iv) &\models P_1\sim P_2: A\Rightarrow B\ \not\Lleftarrow\ \models_S P_1\sim P_2: A\Rightarrow B. \end{array}
	\end{equation*}
\end{prop}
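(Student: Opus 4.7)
The plan is to establish (iii) directly from the definitions, and then construct explicit two-qubit counterexamples for (i), (ii), and (iv). Part (iii) is immediate: requiring a separable input is strictly stronger than quantifying over arbitrary inputs, so the defining inequality of $\models$ already gives the defining inequality of $\models_S$ (equivalently, this is an instance of the (Weaken) rule applied to the separability condition $\Gamma = [\mathit{var}(P_1\langle 1\rangle), \mathit{var}(P_2\langle 2\rangle)]$).

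For parts (i), (ii), and (iv), my approach is to work on two qubits and exploit the gap between separable and entangled states via the maximally entangled state $|\Psi\rangle = \frac{1}{\sqrt 2}(|00\rangle + |11\rangle)$. The key technical lemma I will prove first is a quantitative bound:
\begin{equation*}
\tr(|\Psi\rangle\langle\Psi|\, \rho) \le \tfrac{1}{2}\tr(\rho) \qquad \text{for every separable }\rho\in\mathcal{D}^\le(\mathcal{H}_1\otimes\mathcal{H}_2).
\end{equation*}
This follows from the Peres--Horodecki (PPT) criterion in the $2\times 2$ case, exactly as in the proof of Fact~\ref{separable-state}: write $\rho$ in matrix form with diagonal entries $x,y,z,w$ and a single off-diagonal block parameter $v$, use $\rho^{T_2}\sqsupseteq 0$ to get $yz\ge |v|^2$, and then $\tr(|\Psi\rangle\langle\Psi|\rho)=\tfrac{1}{2}(x+w+v+v^\ast)\le\tfrac{1}{2}(x+w+2\sqrt{yz})\le\tfrac{1}{2}(x+y+z+w)$.

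With this lemma in hand, the three counterexamples are:
\begin{itemize}
\item For (i), reuse the counterexample from Proposition~\ref{prop-comp}(2): take $P_1 \equiv q:=X[q]$, $P_2\equiv\mathbf{skip}$, and $A=B=|\Psi\rangle\langle\Psi|$. For any $\rho = \lambda|\Psi\rangle\langle\Psi|\in A$, both marginals are $\tfrac{\lambda}{2}I$, which is invariant under $X$, so $\rho$ itself witnesses the output lifting $\llbracket P_1\rrbracket(\tr_{\langle 2\rangle}\rho)\,B^{\#}\,\llbracket P_2\rrbracket(\tr_{\langle 1\rangle}\rho)$; hence $\models_P$ holds. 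But the \emph{separable} input $\rho=|00\rangle\langle 00|$ produces outputs $|10\rangle\langle 10|$ and $|00\rangle\langle 00|$ whose unique coupling is $|10\rangle\langle 10|$, giving $\tr(A\rho)=\tfrac{1}{2}>0=\tr(B\sigma)$, so $\models_S$ fails.
\item For (ii), take $P_1\equiv P_2\equiv q:=|0\rangle$ and $A=B=|\Psi\rangle\langle\Psi|$. For separable input $\rho$ the outputs are both $\tr(\rho)|0\rangle\langle 0|$ with unique coupling $\sigma=\tr(\rho)|00\rangle\langle 00|$, and $\tr(B\sigma)=\tfrac{1}{2}\tr(\rho)\ge\tr(A\rho)$ by the lemma, so $\models_S$ holds. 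But with entangled input $\rho=|\Psi\rangle\langle\Psi|\in A$, the unique coupling of the outputs is $|00\rangle\langle 00|$, whose support is not contained in $B$, refuting $\models_P$.
\item For (iv), take $P_1\equiv P_2\equiv\mathbf{skip}$, $A = \tfrac{1}{3}I\otimes I + \tfrac{2}{3}|\Psi\rangle\langle\Psi|$ and $B=\tfrac{2}{3}I\otimes I$ (separability of $A$ follows from Fact~\ref{separable-state}). For separable input $\rho$, the lemma gives $\tr(A\rho)\le \tfrac{2}{3}\tr(\rho)=\tr(B\rho)$ with $\rho$ itself being a valid coupling, so $\models_S$ holds. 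But for input $|\Psi\rangle\langle\Psi|$ we have $\tr(A\rho)=1$, while every coupling $\sigma$ of the marginals $\tfrac{1}{2}I,\tfrac{1}{2}I$ satisfies $\tr(B\sigma)=\tfrac{2}{3}$, so the defining inequality of $\models$ is violated.
\end{itemize}

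The hard part is designing the counterexample for (iv): we need $A$ to be both \emph{separable as an operator} and yet to detect entanglement in the input so that $\tr(A\rho)$ can exceed the separable upper bound. The mixture $\tfrac{1}{3}I + \tfrac{2}{3}|\Psi\rangle\langle\Psi|$ is engineered so that its value saturates the PPT-derived bound $\tfrac{2}{3}\tr(\rho)$ on all separable states (matching the postcondition $B$) while strictly exceeding it on the entangled $|\Psi\rangle\langle\Psi|$; verifying separability of this $A$ via PPT and checking that it lies below $I$ (so it is a legitimate quantum predicate) is the main book-keeping obstacle. The other counterexamples are straightforward once the PPT-based lemma is available.
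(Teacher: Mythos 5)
Your proposal is correct and follows essentially the same route as the paper: the identical PPT-based lemma bounding $\tr(|\Psi\rangle\langle\Psi|\rho)\le\tfrac{1}{2}\tr(\rho)$ on separable states, the reuse of the Proposition~\ref{prop-comp}(2) counterexample for (i), the initialisation program with $A=B=|\Psi\rangle\langle\Psi|$ for (ii), and the same operator $A=\tfrac{1}{3}I\otimes I+\tfrac{2}{3}|\Psi\rangle\langle\Psi|$ from Fact~\ref{separable-state} for (iv). No gaps to report.
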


Clauses (ii) and (iv) clearly indicate that our intuition is correct; that is, a judgment with entangled inputs is strictly stronger than the same judgment with separable inputs.

Now let us turn to items (2). It was clearly pointed out in \cite{Unruh18} that
the first try there is to define validity of judgments without the
requirement of separable couplings for the outputs of $P_1$ and
$P_2$. But separability for the couplings for outputs was finally imposed to the definition of validity of judgments in \cite{Unruh18}.
A nice benefit is that the rule (SC) for sequential composition can be easily achieved for such a notion of validity. Our design decision for preserving rule (SC) is to allow entanglement occurring in both inputs and the couplings for outputs. A benefit of our definition is that the entanglement in these couplings enables us to reveal subtler relational properties between quantum programs. To be more precise, let us consider two lossless programs $P_1, P_2$ and judgment $P_1\sim P_2:A\Rightarrow B$. For an (even separable) $\rho$, let $\lambda=\tr(A\rho)$, then Proposition \ref{entangled-witness} shows that there can be an entangled witness for $\lambda$-lifting $\sm{
	P_1} (\tr_{\langle 2\rangle}(\rho)) B^\#\sm{
	P_2} (\tr_{\langle 1\rangle}(\rho))$. Indeed, it is the case even for separable postcondition $B$.

\subsection*{Inference Rules}
Of course, different definitions of valid judgments imply different inference rules in our logic and Unruh's one.
We already said that without the separability requirement on the couplings for outputs, rule (SC) is not true for validity $\models_S$ with separable inputs.
A series of inference rules are derived in \cite{Unruh18} by imposing the separability requirement on the couplings for outputs.
Here, we would like to point out how this difficulty can be partially solved without this separability. Indeed, a weaker version of (SC), namely (SC-Sep) shown in Figure \ref{fig 6.1}, can be obtained by combining $\models_S$ with our general validity $\models$. 
Furthermore, other inference rules in our rqPD are sound for $\models_S$ except that the rule (Frame) should be slightly modified to (Frame-Sep) in Figure \ref{fig 6.1}.
\begin{figure}[h]\centering
	\begin{equation*}\begin{split}
	&({\rm SC\text{-}Sep})\ \ \
	\frac{\vdash_\mathit{S} P_1\sim P_2: A\Rightarrow B\ \ \ \vdash P_1^\prime\sim P_2^\prime: B\Rightarrow C}{\vdash_\mathit{S} P_1;P_1^\prime\sim P_2;P_2^\prime: A\Rightarrow C}\\
	&({\rm Frame\text{-}Sep})\ \ \
	\frac{\Gamma\vdash_\mathit{S} P_1\sim P_2: A\Rightarrow B}{\Gamma\cup\left\{[V\langle 1\rangle, V\langle 2\rangle, \mathit{var}(P_1),\mathit{var}(P_2)]\right\} \vdash_\mathit{S} P_1\sim P_2: A\otimes C\Rightarrow B\otimes C}
	\end{split}\end{equation*}
	\caption{Rules for Separable Inputs.}\label{fig 6.1}
\end{figure}

\end{document}